\numberwithin{equation}{section} 
\newcommand\nbd\nobreakdash
\newcommand{\cD}{\mathcal{D}}
\newcommand{\cU}{\mathcal{U}}
\DeclareMathOperator{\deficit}{deficit}
\DeclareMathOperator{\sdeficit}{square\_deficit}
\DeclareMathOperator{\odeficit}{obstruction\_deficit}
\DeclareMathOperator{\tdeficit}{total\_deficit}
\def\R{\cal{R}}
\DeclareMathOperator{\Er}{\mathcal{E}_{\rho}}
\newcommand{\YES}{{{\normalfont\textsc{Yes}}}}
\newcommand{\NO}{{{\normalfont\textsc{No}}}}
\newcommand{\GSED}{{\normalfont\textsf{GSED}}}
\newcommand{\FGSED}{{\normalfont\textsf{FGSED}}}
\newcommand{\NEEXP}{{\normalfont\textsf{NEEXP}}}
\newcommand{\FP}{{\normalfont\textsf{FP}}}
\newcommand{\FEXP}{{\normalfont\textsf{FEXP}}}
\newcommand{\EXPNEXP}{{\normalfont\textsf{EXP\textsuperscript{\textsf{NEXP}}}}}
\newcommand{\PNEEXP}{{\normalfont\textsf{P\textsuperscript{\textsf{NEEXP}}}}}
\newcommand{\PSPACE}{{\normalfont\textsf{PSPACE}}}
\newcommand{\NEXP}{{\normalfont\textsf{NEXP}}}
\newcommand{\NP}{{\normalfont\textsf{NP}}}
\newcommand{\EXP}{{\normalfont\textsf{EXP}}}
\newcommand{\EXPGSED}{{\normalfont\textsf{EXP\textsuperscript{\textsf{GSED}}}}}
\newcommand{\QMA}{{\normalfont\textsf{QMA}}}
\newcommand{\QMAEXP}{{\normalfont\textsf{QMA}\textsubscript{\textsf{EXP}}}}
\newcommand{\QMAEEXP}{{\normalfont\textsf{QMA}\textsubscript{\textsf{EEXP}}}}
\newcommand{\EXPQMAEXP}{{\normalfont\textsf{EXP\textsuperscript{\textsf{QMA}\textsubscript{\textsf{EXP}}}}}}
\newcommand{\PQMAEXP}{{\normalfont\textsf{P\textsuperscript{\textsf{QMA}\textsubscript{\textsf{EXP}}}}}}
\begin{document}
	\title{Computational Complexity of the Ground State Energy Density Problem }
	\author[1]{James D. Watson}
	\author[1]{Toby S. Cubitt}
	\affil[1]{Department of Computer Science, University College London, UK}

	\date{}

	\maketitle

	\begin{abstract}
          We study the complexity of finding the ground state energy density of a local Hamiltonian on a lattice in the thermodynamic limit of infinite lattice size.
          We formulate this rigorously as a function problem, in which we request an estimate of the ground state energy density to some specified precision; and as an equivalent promise problem, \GSED, in which we ask whether the ground state energy density is above or below specified thresholds.

          The ground state energy density problem is unusual, in that it concerns a single, fixed Hamiltonian in the thermodynamic limit, whose ground state energy density is just some fixed, real number.
          The only input to the computational problem is the precision to which to estimate this fixed real number, corresponding to the ground state energy density.
          Hardness of this problem for a complexity class therefore implies that the solutions to all problems in the class are encoded in this single number (analogous to Chaitin's constant in computability theory).

          This captures computationally the type of question most commonly encountered in condensed matter physics, which is typically concerned with the physical properties of a single Hamiltonian in the thermodynamic limit.
          We show that for classical, translationally invariant, nearest neighbour Hamiltonians on a 2D square lattice, $\PNEEXP\subseteq\EXPGSED\subseteq \EXPNEXP$, and for quantum Hamiltonians $\PNEEXP\subseteq\EXPGSED\subseteq \EXPQMAEXP$.
          With some technical caveats on the oracle definitions, the \EXP{} in some of these results can be strengthened to \PSPACE.
          We also give analogous complexity bounds for the function version of \GSED.

	\end{abstract}

	\pagebreak

	\tableofcontents

	\pagebreak

	\section{Introduction}
        The connection between computational complexity theory and many-body physics dates back over 40 years.
Barahona's \cite{Barahona1982} proof of \NP-completeness of the ground state energy problem for classical many-body models with local interactions\footnote{Namely, the 2D Ising model with fields.} --- or ``local Hamiltonians'' for short --- on a finite number of particles (spins), established the ground state energy as one of the canonical physical quantities for which computational complexity yields insight.

The \emph{Hamiltonian} is the function mapping a state of the particles to its corresponding energy.
The ground state is then the minimum energy state of the system, and the ground state energy that minimum energy value.
The problem of estimating the ground state energy is often formulated as an equivalent (up to polynomial-time computation) decision problem known as the Local Hamiltonian problem: given a Hamiltonian and an energy threshold, decide whether the ground state energy is above or below that threshold.

Nearly 20 years later, Kitaev \cite{Kitaev2002} proved \QMA-completeness (the quantum analogue of \NP-completeness) for quantum local Hamiltonians on a finite number of quantum particles.
There has been a plethora of papers following --- too many to comprehensively list here --- building on Barahona and Kitaev's seminal results.
These have extended hardness of the ground state energy problem to ever more restrictive classes of Hamiltonian, with specific, physically-motivated types of local interaction, and with restricted patterns of local interaction.
In particular,
amongst many other related results, we now know that the classical and quantum ground state energy problems remain \NP- and \QMA-complete when restricted to nearest-neighbour interactions on a finite 2D square lattice and a finite 1D chain, respectively~\cite{Barahona1982,Aharonov_Gottesman_Irani_Kempe_2007}.
Properties beyond the ground state energy have been studied, including density of states \cite{Brown_Flammia_Schuch_2011}, expectation values on low energy subspaces \cite{Ambainis_2014}, the energy of excited states \cite{Jordan_Gosset_Love_2010}, detecting energy barriers \cite{Gharibian_Sikora_2018}, determining whether a system is frustrated, and many others.

The input to all of the above problems is a description of a local Hamiltonian on a finite number of particles, and the complexity-theoretic hardness is a function of varying the Hamiltonian.
However, many-body and condensed matter physicists are more often interested in properties of a many-body system in the \emph{thermodynamic limit} of infinitely many particles.
Most many-body physics properties, such as phase transitions, phase diagrams, spectral gaps, etc., are only well-defined theoretically in this limit.
Moreover, in experimental physics, these models often arise as idealisations of physical materials, where a typical sample will contain such a large number of atoms that the properties of the material are well-approximated by the infinite limit.

Furthermore, they are typically interested in computing the physical properties of a \emph{single} Hamiltonian -- or a family of Hamiltonians parametrised by a small, constant number of parameters.
Often, the local interactions have some regular structure, such as translational invariance where all the local interactions take the same form.
The standard formulation of the ground state energy problem does not capture this type of question.

\subsection{Related work}
There are a small number of results proving hardness of estimating the ground state energy for a translationally invariant Hamiltonian where the local interaction is fixed, and the only input to the problem is the lattice size.
Here, since a lattice of size $2^n$ can be specified in $n$ bits, the natural complexity class is \NEXP{} (or \QMAEXP \ in the quantum case), rather than \NP.
The Wang tiling completion problem is known to be \NEXP{}-complete~\cite{Papadimitriou,Gottesman-Irani}, which can trivially be translated to the ground state energy problem for a single, fixed, translationally invariant, nearest-neighbour, classical Hamiltonian on a 2D square lattice, where the state at some of the boundaries is fixed (fixed boundary conditions).
As the interaction is fixed, the only remaining problem input is the size of the lattice.
Remarkably, this alone suffices for the hardness result.
Gottesman and Irani~\cite{Gottesman-Irani} also extended these results to more natural types of boundary condition.
They went on to prove the analogous \QMAEXP-completeness result for quantum Hamiltonians on a 1D chain.
However, these results still concern Hamiltonians on finite numbers of particles; indeed, the problem input is the number of particles the Hamiltonian acts on.

In the thermodynamic limit, the ground state energy is no longer a meaningful quantity; it typically has infinite magnitude, and is not physically measurable.
In this setting, the more relevant quantity is the ground state energy \emph{density}: the minimum energy \emph{per particle}.
Just as the ground state energy is a key starting point for studying the physics of finite many-body systems, the ground state energy density (GSED) is a key starting point for physics in the thermodynamic limit.
Methods of approximating the ground state energy density in condensed matter systems have been the subject of much study in the physics literature~\cite{Perdew_et_al_1992,Hollenberg_Witte_1994}.

Less is known about the computational complexity of the ground state energy density problem, than for the ground state energy.
Gottesman and Irani~\cite{Gottesman-Irani} proved that the ground state energy density problem for translationally invariant, nearest-neighbour, quantum Hamiltonians on a 1D chain with a $\Omega(1/2^n)$ promise-gap is \NEXP{}-complete.
Here, the input is a description of the local interaction of the system, and the complexity is a function of varying over the Hamiltonian.
Meanwhile, as a stepping stone to their undecidability result for the spectral gap, \cite{Cubitt_PG_Wolf_Nature,Cubitt_PG_Wolf_Undecidability} proved that deciding whether the ground state energy density is 0 or strictly positive, with no promise gap, is undecidable,
Their result holds for quantum, translationally invariant, nearest neighbour Hamiltonians on a 2D square lattice with a fixed local dimension.
\cite{Bausch_1D_Undecidable} later extended this undecidability result to 1D chains (again as a stepping stone to the spectral gap problem) and \cite{Bausch_Cubitt_Watson_2019} extends to to 2D systems for which the local interaction are analytic in the input parameter.

However, as with most ground state energy complexity results, these results still have as input the description of the Hamiltonian, and the hardness is a result of varying the Hamiltonian.

\subsection{The Ground State Energy Density problem}
If we restrict to a single, fixed Hamiltonian in the thermodynamic limit, it may seem that there are no input parameters left, and complexity theory can have nothing to say!
However, this is not quite the case.
We can still ask about the complexity of estimating the ground state energy density to a given precision, where the only input is the precision required.
(See \cref{sec:main_results} for precise problem definitions.)
Arguably, this is the problem formulation closest to that often encountered in condensed matter physics.

If we learn the ground state energy density to precision $2^{-n}$, then we can hope to learn the first $n$ bits of its binary representation.
An $n$ bit string can encode the solutions to at most $n$ different decision problems.
But an index into this bit string, specifying the index of the decision problem we are interested in, requires only $\log n$ bits.
Therefore, the natural complexity class for \GSED{} is \NEEXP, or related doubly-exponential time complexity classes.
(At least for hardness results.)

In this work, drawing on techniques developed in \cite{Cubitt_PG_Wolf_Undecidability}, we prove upper and lower bounds on the complexity of this Ground State Energy Density (\GSED) problem: we show that \GSED{} is \NEEXP-hard under exponential-time Turing reductions, and contained in \EXPNEXP.
In fact, we prove the following slightly stronger results for the natural promise-problem formulation of \GSED, for a fixed, classical, translationally invariant, nearest-neighbour Hamiltonian on a 2D square lattice:
\begin{equation}
  \PNEEXP \subseteq \EXPGSED \subseteq \EXPNEXP
\end{equation}
The natural promise-problem formulation of \GSED{} takes as input two energy density thresholds $\alpha$ and $\beta$ with $\beta-\alpha = \Omega(2^{-n})$, and
outputs whether the ground state energy density is above $\beta$ or below $\alpha$.

The analogous complexity bounds for the function problem formulation of \GSED{} readily follow from this.
For the function problem formulation, the input is the precision $\epsilon$, and the output is an estimate of the ground state energy density to precision $\epsilon$.

For quantum Hamiltonians, a very similar argument to the classical case establishes the analogous upper bound of \EXPQMAEXP \ for the quantum \GSED{} problem.
The same lower bound as above follows trivially from the fact that classical Hamiltonians are a special case of quantum.
However, we are not able to prove \QMAEEXP-hardness of the quantum problem.
(We comment on this briefly in \cref{sec:conclusions}.)

The ground state energy density of the specific Hamiltonian we construct is a single, real number $\Er$.
Our hardness results imply the solutions to \emph{all} instances of \NEEXP{}-complete problem are encoded in the digits of this single number, with successive digits of $\Er$ giving the solution to successive instances of a canonical \NEEXP-complete problem.
In this sense, the ground state energy density of this Hamiltonian is somewhat reminiscent of Chaitin's constant~\cite{Chaitin1975}, but encoding solutions to problems in a certain complexity class, rather than the Halting problem.

\subsection{Proof techniques}
We draw on the construction and proofs in \cite{Cubitt_PG_Wolf_Undecidability}, used to prove undecidability of the spectral gap for quantum Hamiltonians.
However, in order to obtain our \GSED{} results, we apply those techniques in a quite different way.

\cite{Cubitt_PG_Wolf_Undecidability} showed how to encode an arbitrary Quantum Turing Machine into a translationally invariant \emph{quantum} Hamiltonian overlayed on an aperiodic tiling due to Robinson~\cite{Robinson_1971}, in such a way that the ground state energy density is related to the output of the computation.
However, we instead follow Robinson's original construction~\cite{Robinson_1971} to encode a classical Turing Machine (TM) into a classical Hamiltonian, overlayed on the same aperiodic tiling.
Robinson's aperiodic tiling forms a nested pattern on squares of all possible sizes of the form $4^n$ (see \cref{sec:tiling_preliminaries}.
He then encodes a TM in each square, such that the ground state encodes infinitely many copies of the same TM running on all possible finite tape lengths of the form $4^n$, with the density of each tape length falling off as $1/16^n$.

\cite{Robinson_1971} (and \cite{Cubitt_PG_Wolf_Undecidability}) use this construction to encode a universal TM in order to prove undecidability results.
Here, we instead use this to encode a \NEEXP{} machine, such that the ground state picks up an additional $O(1)$ energy contribution if the computation rejects.
Furthermore, we use an idea from \cite{Papadimitriou,Gottesman-Irani} to first run a binary counter TM which computes the length of the tape it is running on in binary, and feeds this as input to the \NEEXP{}  machine.
In this way, instead of all copies of the TM carrying out the same computation (albeit with different bounds on the length of tape available), the copies \NEEXP{}  machine are each computing different problem instances.
Specifically, the copies running on tape length $4^n$ are computing problem instance $n$, for all possible values of $n$.

Since the density of TMs for each $n$ falls off as $1/16^n$, the energy from this instance rejecting contributes to the $4n$'th digit of the ground state energy density $\Er$, when  this is expressed as a binary fraction.
The solution to the $n$'th \NEEXP{} problem can therefore be extracted from $\Er$ by binary search.

An more extensive overview of the necessary background is given in \cref{sec:preliminaries}.
The rigorous proof following the above argument is given in \cref{sec:proofs}.

\subsection{Robinson robustness}
However, for this proof to go through, one must show that the ground state --- and hence its energy density --- is indeed of the desired form.
This is non-trivial, as configurations that break the aperiodic tiling can prevent the encoded TMs from ``running'', thereby avoiding the energy contribution from the computation rejecting.
On the infinite lattice, this could potentially reduce the energy by an unbounded amount.

In fact, proving that the \emph{quantum} \GSED{} problem is \NEEXP-hard is significantly more straightforward (given the results of \cite{Cubitt_PG_Wolf_Undecidability}) than the stronger classical hardness result we prove in his paper (which trivially implies hardness of the quantum case).
To prove the quantum result directly, we can follow exactly the same construction as we do here for the classical result (see \cref{sec:proofs}), but using the construction of \cite{Cubitt_PG_Wolf_Undecidability} to encode the \NEEXP{}  machine into a quantum Hamiltonian, rather than a classical one.
In the quantum case, the Robinson tiling ``rigidity'' results already proven in \cite[Section 5]{Cubitt_PG_Wolf_Undecidability} then suffice to show that the ground state has the required form, and the argument goes through.
However, the previously known Robinson rigidity results are too weak to prove this in the classical case.

The reason the quantum case is easier to prove is that the energy contribution from rejecting computations itself falls off as $1/4^n$.
Thus the sum of the energy contributions over all infinitely many values of $n$ is still $O(1)$.
This makes it easier to prove that the reduction in energy from avoiding these contributions, is outweighed by the energy penalty from breaking the aperiodic tiling.
It suffices to prove that breaking the aperiodic tiling pattern at one site can destroy at most one square of \emph{of each size} in aperiodic pattern.
Thus it can only prevent one TM for each value of $n$ from ``running''.
This could still affect infinitely many of the TMs.
But as long as it only affects one of each size, the total energy contribution is summable and bounded by $O(1)$.
(See \cite[Section 5]{Cubitt_PG_Wolf_Undecidability}.)

However, in the classical case, the energy contribution from rejecting computations is $O(1)$, independent of $n$.
The sum of the energy contributions over all $n$ is therefore infinite.
Thus breaking the aperiodic tiling could potentially result in a unbounded reduction in the ground state energy.
The results of \cite[Section 5]{Cubitt_PG_Wolf_Undecidability} are too weak to rule this out.
Instead, we must prove a Robinson rigidity result that is, in some sense, infinitely stronger than what was previously known.
We must prove that breaking the tiling pattern at one site can destroy at most $O(1)$ squares \emph{in total} in the tiling pattern.

Proving this stronger bound requires more sophisticated techniques than the corresponding bound in \cite[Section 5]{Cubitt_PG_Wolf_Undecidability}.
We use an intricate combination of combinatorial and geometric arguments to relate the number of destroyed squares to Delaunay triangulations of defects appearing in the aperiodic tiling pattern.
The properties of Delaunay triangulations then allow us to prove the requisite strong bound.

This new robustness result for Robinson tilings may be of independent interest, and a self-contained proof is given in \cref{Sec:Robinson_Robustness}.


	\section{Main Results}\label{sec:main_results}
	Define the energy density of the finite lattice as
\begin{definition}[Ground State Energy Density]
	Consider a translationally invariant Hamiltonian defined on an $L\times H$ lattice, $H^{\Lambda(L \times H)})$.
	The \emph{ground state energy density} is defined as
	\begin{equation}
	\Er(L,H) := \frac{\lambda_0(H^{\Lambda(L \times H)})}{LH}.
	\end{equation}
	The \emph{thermodynamic limit} of the ground state energy density is defined as the limiting value as the lattice width and height bare taken to infinity:
	\begin{equation}
	\Er:= \lim\limits_{L,H\rightarrow\infty} \Er(L,H).
	\end{equation}
	If the ground state energy density is referred to without qualification, then it is referring to the thermodynamic limit case.
\end{definition}
This limit is well defined \cite{Cubitt_PG_Wolf_Undecidability}.
We now consider some useful definitions for the computational problems.
For all these definitions we will be referring to the infinite lattice case.

We can cast the problem of finding $\Er$ as a computational promise problem similar in spirit to the local Hamiltonian problem:
\begin{samepage}
  \begin{definition}[Ground State Energy Density (\GSED) promise problem]
    \label{Def:Decision_Problem_GSED_2}
    \textbf{Problem Parameters:} A fixed, translationally invariant, nearest-neighbour Hamiltonian on a $2D$ infinite square lattice of $d$-dimensional spins. \\
    \textbf{Input:} Two real numbers $\beta$ and $\alpha$, such that $\beta - \alpha = \Omega(2^{-p(n)})$, for some integer $n$ and polynomial $p(n)$.\\
    \textbf{Output:} Determine whether $\Er>\beta$ (\NO{} instance) or $\Er<\alpha$ (YES instance). \\
    \textbf{Promise:} The ground state energy density does not lie between in the interval $[\alpha,\beta]$.
  \end{definition}
\end{samepage}

This is perhaps more naturally thought of in terms of the corresponding function problem:
\begin{definition}[Ground State Energy Density (\FGSED) function problem]
  \label{Def:GSED_Error_Function_Definition}
  \textbf{Problem Parameters:} A fixed, translationally invariant, nearest-neighbour Hamiltonian acting on an $2D$ infinite lattice of $d$-level spins. \\
  \textbf{Input:} An error bound $\epsilon$, specified in binary.  \\
  \textbf{Output:} An approximation to the ground state energy density, $\tilde{\Er}$ such that  $|\Er - \tilde{\Er}| \leq \epsilon$.
\end{definition}

The promise and function problems are equivalent up to $\log$-space computation, by standard binary search arguments.

We will often restrict \GSED{} in \cref{Def:Decision_Problem_GSED_2} to classical Hamiltonians, rather than general (quantum) Hamiltonians.
When we wish to highlight this distinction, we refer to these as \emph{classical \GSED} and \emph{quantum \GSED}, respectively.


	The main results of this work are as follows:
	\begin{theorem} \label{Theorem:NEXP_Hardness}
		$\PNEEXP\subseteq\EXPGSED\subseteq\EXPNEXP${} for classical \GSED.
	\end{theorem}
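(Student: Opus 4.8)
The plan is to prove the two inclusions separately; the lower bound $\PNEEXP\subseteq\EXPGSED$ is the substantial part, and within its proof the strengthened Robinson-rigidity estimate of \cref{Sec:Robinson_Robustness} is the crux, while the upper bound $\EXPGSED\subseteq\EXPNEXP$ is a direct simulation. For the lower bound I would first construct the fixed Hamiltonian. Fix a language $L^{\ast}$ that is \NEEXP-complete under polynomial-time many-one reductions, together with a nondeterministic machine deciding it. Following \cite{Cubitt_PG_Wolf_Undecidability}, but encoding a \emph{classical} rather than a quantum machine and borrowing the binary-counter idea of \cite{Papadimitriou,Gottesman-Irani}, I would build a single translationally invariant nearest-neighbour classical Hamiltonian $H^{\ast}$ on the 2D square lattice whose low-energy configurations are pinned to Robinson's aperiodic tiling \cite{Robinson_1971}: inside each level-$n$ Robinson square of side $4^{n}$ a binary counter first writes the side length, hence the index $n$, onto the tape, after which the machine for $L^{\ast}$ is simulated on the $n$-th instance under a fixed enumeration of its instances chosen so that the computation fits in the time and space $O(4^{n})$ available in such a square. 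A fixed $O(1)$ energy penalty is charged to every level-$n$ square whose encoded computation fails to reach an accepting state; since the tableau may follow any nondeterministic branch, a ground state escapes this penalty on level-$n$ squares exactly when the $n$-th instance is a \YES{} instance. There are $\Theta(16^{-n})$ level-$n$ squares per unit area, so --- absent tiling defects --- $\Er$ acquires a clean base-$16$ digit structure: after a suitable normalisation its $n$-th digit equals $1$ if the $n$-th instance rejects and $0$ otherwise, and in particular that bit is recoverable from $\Er$ to precision $\Theta(16^{-n})$.

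The delicate point --- and the main obstacle --- is to show that $\Er$ genuinely has this form, namely that configurations breaking the aperiodic pattern cannot depress the energy density below this value. A single tiling defect can stop several Robinson squares from ``running'' and so let them dodge their penalties; in the classical construction each dodged penalty is $\Omega(1)$ regardless of its level (unlike the quantum case of \cite[Section~5]{Cubitt_PG_Wolf_Undecidability}, where it decays as $16^{-n}$ and is summable), so a priori a single defect could save an unbounded amount of energy. I would therefore prove the strengthened rigidity result of \cref{Sec:Robinson_Robustness}: a single defect destroys at most $O(1)$ Robinson squares \emph{in total}, over all levels. The proof is combinatorial-geometric --- associate to the defect set a Delaunay triangulation and bound the number of destroyed squares via local features of that triangulation. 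Granting it, a configuration with defect density $\rho$ pays $\Omega(\rho)$ energy per unit area from the violated tiling constraints while saving at most $O(\rho)$ per unit area from dodged penalties; choosing the per-defect constraint weight larger than the total penalty an $O(1)$ bundle of squares can carry forces the infimal energy density to be attained by defect-free Robinson configurations, which on optimising their nondeterministic branches realise exactly the value above, confirming the claimed digit structure of $\Er$.

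The lower bound then follows in the standard way. Let $A\in\PNEEXP$ be decided by a polynomial-time machine $M$ with an oracle $B\in\NEEXP$, and let $g$ be a polynomial-time many-one reduction from $B$ to $L^{\ast}$. On input $x$, an exponential-time machine simulates $M(x)$; whenever $M$ queries $B$ on a string $q$ with $|q|=\mathrm{poly}(|x|)$, it computes $g(q)$, locates the index $N\le 2^{\mathrm{poly}(|x|)}$ to which $g(q)$ is assigned by the enumeration, and recovers the $N$-th base-$16$ digit of $\Er$ by one call to the \FGSED{} oracle (equivalently \GSED) at precision $o(16^{-N})$ --- an oracle query of description length $2^{\mathrm{poly}(|x|)}$, which is affordable in \EXP{} and meets the promise-gap requirement. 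Caching the recovered prefix of $\Er$ answers all of $M$'s polynomially many queries within total time $2^{\mathrm{poly}(|x|)}$, and since the recovered digit equals $[\,g(q)\notin L^{\ast}\,]=[\,q\notin B\,]$ up to a fixed complementation, $M$'s output is reproduced. Hence $A\in\EXPGSED$, proving $\PNEEXP\subseteq\EXPGSED$.

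For the upper bound I would invoke the effective convergence rate for the thermodynamic limit established in \cite{Cubitt_PG_Wolf_Undecidability}, $|\Er(L,L)-\Er|=O(1/L)$ (the boundary contributes $O(L)$ to a $\Theta(L^{2})$ total energy). To evaluate what a \GSED{} oracle returns on input $(\alpha,\beta)$ of bit-length $\ell$, for which $\beta-\alpha=\Omega(2^{-\mathrm{poly}(\ell)})$, it suffices to compute $\Er(L,L)$ for $L=2^{\mathrm{poly}(\ell)}$ to additive error $o(\beta-\alpha)$ and compare the result with $(\alpha+\beta)/2$. Computing $\lambda_{0}$ of the fixed classical Hamiltonian on the $L\times L$ lattice is a minimisation over $d^{L^{2}}$ spin configurations, so ``does some configuration have energy below $t$?'' is decided by nondeterministically guessing the $L^{2}=2^{\mathrm{poly}(\ell)}$-site configuration and evaluating its energy in time $\mathrm{poly}(L^{2})$ --- a \NEXP{} computation in the query length --- and a binary search over $t$ uses $\mathrm{poly}(\ell)$ such queries. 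An \EXP{} machine running this procedure issues \NEXP-oracle queries of at most exponential length, which is exactly what an \EXPNEXP{} machine may do; hence any \EXP{} machine with a \GSED{} oracle is simulated by one with a \NEXP{} oracle, giving $\EXPGSED\subseteq\EXPNEXP$.
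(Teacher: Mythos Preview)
Your proposal is correct and follows essentially the same route as the paper: the same Robinson-tiling Hamiltonian with a binary-counter front end feeding a fixed \NEEXP{} machine, the same strengthened rigidity bound (total square deficit $O(|D|)$ via the Delaunay-triangulation argument of \cref{Sec:Robinson_Robustness}) to pin the ground-state energy density to $\Er=\tfrac{1}{4}\sum_{n} i_n 16^{-n}$, the same digit-extraction by precision queries for the lower bound, and the same $O(1/L)$ finite-size convergence plus ``guess the $L\times L$ configuration'' argument for the upper bound. The only cosmetic differences are that the paper extracts the digits $i_n$ by an explicit sequential binary search on $(\alpha_m,\beta_m)$ pairs rather than a single \FGSED{} call, and that it shows $\GSED\in\NEXP$ directly (guess the ground state, evaluate, compare to $\alpha,\beta$) rather than interposing a binary search on the energy threshold; neither affects the argument.
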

	Here \NEEXP{} is defined analogously with \NP, but the verifying TM is allowed doubly exponential time to run and the witness can be doubly exponentially long.
	We expect that the \EXPNEXP{}  upper bound presented here is tight and there is potentially room to improve the lower bound.
	The above theorem implies:
	\begin{corollary}\label{Corollary:NEEXP-Hardness}
		\GSED{} is \NEEXP-hard under exponential time Turing reductions, for a classical, translationally invariant, nearest-neighbour Hamiltonian.
	\end{corollary}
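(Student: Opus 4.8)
The plan is to read this off directly from the lower\nbd bound half of \cref{Theorem:NEXP_Hardness}. Unwinding definitions, the statement ``\GSED{} is \NEEXP-hard under exponential-time Turing reductions'' means precisely that every language in \NEEXP{} is decided by some exponential-time oracle Turing machine making queries to \GSED; that is, $\NEEXP\subseteq\EXPGSED$. So the only thing to check is that this containment is implied by what \cref{Theorem:NEXP_Hardness} already provides, namely $\PNEEXP\subseteq\EXPGSED$.

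First I would note the trivial inclusion $\NEEXP\subseteq\PNEEXP$: given $L\in\NEEXP$, a polynomial-time machine equipped with a \NEEXP{} oracle decides $L$ by issuing a single oracle query on its own input and returning the reply. Composing this with $\PNEEXP\subseteq\EXPGSED$ from \cref{Theorem:NEXP_Hardness} yields $\NEEXP\subseteq\PNEEXP\subseteq\EXPGSED$, which is exactly the claimed \NEEXP-hardness under exponential-time Turing reductions. (The upper bound $\EXPGSED\subseteq\EXPNEXP$ plays no role here.)

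The one point deserving care is that \GSED{} is a \emph{promise} problem, so for the reduction to be well defined every query issued by the exponential-time machine must respect the promise: it must come with thresholds $\alpha,\beta$ satisfying $\beta-\alpha=\Omega(2^{-p(n)})$ and with $\Er\notin[\alpha,\beta]$. This is already ensured by the construction underlying $\PNEEXP\subseteq\EXPGSED$ in \cref{Theorem:NEXP_Hardness}, where the queries are the binary-search queries against the explicit Hamiltonian built there and the gap between query thresholds is exactly the promise gap engineered into that reduction, so nothing further is needed. Consequently the corollary is a purely definitional unwinding; the real work --- in particular establishing $\PNEEXP\subseteq\EXPGSED$ via the strengthened Robinson rigidity result --- lives entirely inside \cref{Theorem:NEXP_Hardness}, and there is no separate obstacle at the level of the corollary itself.
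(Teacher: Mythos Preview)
Your proposal is correct and matches the paper's approach: the paper presents \cref{Corollary:NEEXP-Hardness} as an immediate consequence of \cref{Theorem:NEXP_Hardness} with no separate argument, and your unwinding of the definitions via $\NEEXP\subseteq\PNEEXP\subseteq\EXPGSED$ is exactly the trivial implication intended. Your additional remark about the promise being respected is a welcome clarification but not something the paper spells out either.
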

	We also prove:
	\begin{theorem}\label{Theorem:Results:GSED_NEXP}
		Classical $\GSED\in\NEXP$.
	\end{theorem}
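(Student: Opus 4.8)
The plan is to use a logspace reduction of the infinite-lattice problem to a finite-lattice computation, exploiting the fact that for a classical, translationally invariant, nearest-neighbour Hamiltonian the finite-size ground state energy density converges to $\Er$ at a polynomial, surface-to-volume rate. Concretely, writing $e^{\mathrm{per}}_L := \lambda_0(H^{\mathrm{per}}_{L\times L})/L^2$ for the ground state energy density on the $L\times L$ torus, I would first establish
\[
  \Er \;\le\; e^{\mathrm{per}}_L \;\le\; \Er + \frac{c}{L},
\]
where $c$ is a constant depending only on the fixed interaction $h$ (one may take $c = 2\|h\|_\infty$). The lower bound is immediate: any torus configuration extends periodically to a configuration of $\mathbb{Z}^2$ of the same energy density per site, and $\Er$ lower-bounds the energy density of every configuration, so in particular it lower-bounds the minimum over torus configurations. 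For the upper bound I would take a configuration of the infinite lattice of energy density at most $\Er+\varepsilon$, average the open-boundary energies of its $L\times L$ subsquares over all translates lying inside a large region, conclude that some translate has box energy at most $(\Er+\varepsilon)L^2 + o(L^2)$, and then close that box into a torus at the cost of $2L$ wrap-around bonds of magnitude at most $\|h\|_\infty$; letting the region grow and then $\varepsilon\to 0$ gives the claimed bound. This step uses only that $\Er$ is well defined, which is cited.

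Given this approximation, the $\NEXP$ machine is straightforward. On input $(\alpha,\beta)$ with $\beta-\alpha = \Omega(2^{-p(n)})$ and input length $m$ (so $\beta-\alpha \ge 2^{-m}$), set $L := \lceil 2c/(\beta-\alpha)\rceil$, so that $c/L \le (\beta-\alpha)/2$ and $L \le 2^{O(m)}$. Nondeterministically guess a configuration $\sigma\colon\{1,\dots,L\}^2 \to \{1,\dots,d\}$ on the torus, using $L^2\log d = 2^{O(m)}$ bits of nondeterminism; deterministically evaluate its torus energy $E(\sigma)$, which is a sum of $2L^2$ values of the fixed, constant-size interaction and so is computable in time $2^{O(m)}$ and (for rational-valued interactions) representable exactly by $O(m)$-bit integers after clearing denominators; and accept iff $E(\sigma)/L^2 < (\alpha+\beta)/2$, a comparison that reduces to arithmetic on $O(m)$-bit integers. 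Total running time and nondeterminism are $2^{O(m)}$, so this is an $\NEXP$ machine.

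Correctness follows from the two-sided bound together with the promise $\Er\notin[\alpha,\beta]$ in \cref{Def:Decision_Problem_GSED_2}. For soundness, an accepting branch exhibits a torus configuration, hence a configuration of $\mathbb{Z}^2$, of energy density $E(\sigma)/L^2 < (\alpha+\beta)/2 < \beta$, whence $\Er < \beta$; with the promise this forces $\Er < \alpha$, a \YES{} instance. For completeness, on a \YES{} instance $\Er < \alpha$ the optimal torus configuration has energy density $e^{\mathrm{per}}_L \le \Er + c/L < \alpha + (\beta-\alpha)/2 = (\alpha+\beta)/2$ by the choice of $L$, so the branch guessing it accepts; on a \NO{} instance $\Er > \beta$ every torus configuration has energy density at least $e^{\mathrm{per}}_L \ge \Er > \beta > (\alpha+\beta)/2$, so every branch rejects. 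Finally, since the promise problem \GSED{} and the function problem \FGSED{} are equivalent up to logspace, the same containment yields the corresponding statement for \FGSED.

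I expect the one genuinely non-routine step to be the finite-size estimate $e^{\mathrm{per}}_L \le \Er + O(1/L)$: the near-optimal configurations one must use need not be periodic --- indeed they are aperiodic for the Hamiltonians of interest in this paper --- so one cannot simply read $e^{\mathrm{per}}_L$ off a periodic ground state, and the averaging-over-translates argument together with the open-versus-periodic boundary bookkeeping has to be carried out carefully. The design and resource analysis of the $\NEXP$ machine, by contrast, are routine.
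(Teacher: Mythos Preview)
Your proposal is correct and follows essentially the same approach as the paper: establish an $O(1/L)$ finite-size approximation to $\Er$ (the paper's \cref{Lemma:E_rho_Approximation}), then guess an $L\times L$ configuration as witness and evaluate its energy in $2^{O(n)}$ time. The only cosmetic difference is that you work on the torus and obtain the one-sided bound $\Er\le e^{\mathrm{per}}_L\le \Er+c/L$ via averaging-over-translates plus wrap-around bonds, whereas the paper uses open boundaries and a grid-cutting argument to get the two-sided bound $|\Er(L)-\Er|\le c/L$; either version feeds into the same \NEXP{} verifier.
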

	\Cref{Corollary:NEEXP-Hardness} and \cref{Theorem:Results:GSED_NEXP} are not in conflict with each other.
	Allowing exponential-time Turing reductions (as opposed to the polytime Turing reductions usually considered) allows exponentially harder problems to be solved.

	The fact we are considering \EXPGSED{} rather than \GSED{} with polytime reductions is fundamental to the problem being about estimating the the ground state energy density for \emph{a particular Hamiltonian}, where the problem instances differ only in the precision to which that same ground state energy density should be computed
    (rather than each problem instance corresponding to a different Hamiltonian).
    We show that, using our hardness construction, one should not expect $\NP\subseteq \mathsf{P}^{\GSED}$ unless the polynomial hierarchy collapses to $\Sigma_2^p$.

	We can also consider the case of quantum Hamiltonians:
	\begin{theorem}
          \PNEEXP$\subseteq$\EXPGSED$\subseteq$\EXPQMAEXP for quantum \GSED{}.
	\end{theorem}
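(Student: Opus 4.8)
I would prove the two inclusions separately. The left one is essentially immediate, and the right one is a transcription of the classical upper bound in \cref{Theorem:NEXP_Hardness} with a single substitution. For $\PNEEXP\subseteq\EXPGSED$: the fixed classical Hamiltonian used to prove \cref{Theorem:NEXP_Hardness} is in particular a diagonal --- hence bona fide quantum --- translationally invariant, nearest-neighbour Hamiltonian on the 2D square lattice, and its ground state energy density is unchanged when it is regarded as a quantum Hamiltonian. Consequently the classical \GSED{} oracle invoked in that proof is a special case of the quantum \GSED{} oracle of \cref{Def:Decision_Problem_GSED_2}, and the very same exponential-time Turing reduction from an arbitrary $\PNEEXP$ language goes through verbatim with the quantum oracle. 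So no new work is needed for the lower bound.

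For $\EXPGSED\subseteq\EXPQMAEXP$ I would follow the argument that establishes $\EXPGSED\subseteq\EXPNEXP$ in the classical case, observing that it is agnostic to classicality everywhere except at one step. Concretely: (i) invoke the finite-size bounds $|\Er-\Er(L,H)|=O(1/L+1/H)$ --- which hold for any translationally invariant nearest-neighbour Hamiltonian whose local terms have norm $O(1)$, by tiling a finite optimal configuration to upper-bound $\Er$ and restricting an infinite near-optimal state to a box to lower-bound it, exactly as in the classical analysis and in \cite{Cubitt_PG_Wolf_Undecidability} --- to select a square lattice of side $L=H=2^{\Theta(p(n))}$, exponential in $n$, for which this error is below a constant fraction of the promise gap $\beta-\alpha=\Omega(2^{-p(n)})$; (ii) note that $\Er(L,H)=\lambda_0(H^{\Lambda(L\times H)})/(LH)$ is then determined, to the required accuracy, by $O(p(n))$ bits of $\lambda_0$ of the Hamiltonian restricted to this finite lattice; (iii) recover those bits by binary search, each comparison ``$\lambda_0(H^{\Lambda(L\times H)})\le t$?'' being a query to the finite-lattice Local Hamiltonian problem; (iv) compare the resulting estimate of $\Er$ against $\alpha$ and $\beta$ and output \YES{} or \NO{}. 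The outer machine writes down the (exponentially large but succinctly specified) lattice instances, runs $\mathrm{poly}(n)$ adaptive binary-search queries, and performs the arithmetic, all in exponential time.

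The single step that changes relative to the classical case is (iii): the finite-lattice problem is now the \emph{quantum} Local Hamiltonian problem on $L\cdot H=2^{\Theta(p(n))}$ spins, with a fixed nearest-neighbour local term and fixed boundary conditions specified succinctly by the binary encodings of $L$ and $H$, to inverse-exponential precision. This succinctly specified, exponential-size Local Hamiltonian problem lies in \QMAEXP: the witness is the exponentially large ground state, and the verifier performs Kitaev-style energy estimation in exponential time, exactly as in the \QMAEXP-completeness results of \cite{Gottesman-Irani} scaled up. Substituting \QMAEXP{} for \NEXP{} in the classical argument then yields $\EXPGSED\subseteq\EXPQMAEXP$, and together with the first paragraph this proves the theorem.

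The thing to verify carefully, rather than a genuine obstacle, is that the finite-size sandwich of $\Er$ really is effective and two-sided with the stated $O(1/L+1/H)$ rate and uses nothing beyond the bound $\|h_e\|=O(1)$ on the interaction together with the value $\lambda_0$ of finite restrictions, so that the reduction to a finite-lattice \QMAEXP{} query is legitimate and the returned estimate genuinely certifies which side of $[\alpha,\beta]$ the real number $\Er$ lies on. In contrast to the hardness direction --- where the classical case required the new Robinson robustness theorem --- this upper bound needs no new ideas beyond the observation that the classical finite-size argument never used classicality; the only genuinely quantum ingredient is the standard fact that succinct Local Hamiltonian is in \QMAEXP.
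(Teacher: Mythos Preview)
Your proposal is correct and follows essentially the same approach as the paper. The lower bound is handled identically (classical Hamiltonians are a special case of quantum, so \cref{Theorem:PNEEXP_Containment} carries over verbatim), and the upper bound uses the same finite-size approximation $|\Er-\Er(L)|=O(1/L)$ from \cref{Lemma:E_rho_Approximation} to reduce to a finite-lattice problem of exponential size, which lies in \QMAEXP.

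The one minor structural difference: the paper proves $\GSED\in\QMAEXP$ directly as a single \namecref{Lemma:GS_in_QMAEXP} (\cref{Lemma:GS_in_QMAEXP}), taking the ground state of the $L\times L$ restriction as witness and running quantum phase estimation on $e^{iH^{\Lambda(L)}}$ to estimate $\lambda_0$ in one shot; the containment $\EXPGSED\subseteq\EXPQMAEXP$ then follows by replacing each \GSED{} oracle call with one \QMAEXP{} call, exactly as in \cref{Lemma:NEXP-Completeness}. You instead perform the binary search for $\lambda_0$ externally in the \EXP{} machine, issuing $\mathrm{poly}(n)$ separate \QMAEXP{} queries to the succinct Local Hamiltonian problem per \GSED{} call. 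Both routes are valid and yield the same inclusion; the paper's is marginally tidier in that it isolates $\GSED\in\QMAEXP$ as a standalone fact, while yours makes the parallel to the classical \EXPNEXP{} argument more explicit.
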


	For the function problem, one readily obtains the corresponding complexity bounds:
	\begin{theorem}
		$\FGSED \in \FP^{\NEXP}$ for classical $\FGSED$.
	\end{theorem}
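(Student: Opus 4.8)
The plan is to solve \FGSED{} by binary search using the promise problem \GSED{} (\cref{Def:Decision_Problem_GSED_2}) as an oracle, and then to replace that oracle by an \NEXP{} oracle via \cref{Theorem:Results:GSED_NEXP}.

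Since the Hamiltonian is fixed with nearest-neighbour terms of bounded operator norm, one can fix at the outset a dyadic rational $c$ with $\Er\in[-c,c]$. Given the input precision $\epsilon$ in binary, say with $n$ bits so that $\epsilon\ge 2^{-n}$, maintain a dyadic interval $[L,U]$ known to contain $\Er$, starting from $[-c,c]$. While $U-L>2\epsilon$, set $m=(L+U)/2$ and query the oracle on thresholds $\alpha=m-\epsilon/2$, $\beta=m+\epsilon/2$; their separation $\beta-\alpha=\epsilon=\Omega(2^{-n})$ is a legal \GSED{} input. On a \YES{} answer set $U:=m+\epsilon/2$; on a \NO{} answer set $L:=m-\epsilon/2$. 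Each iteration replaces $U-L$ by at most $(U-L)/2+\epsilon/2$, so $(U-L)-\epsilon$ halves every step and after $O(\log(c/\epsilon))=\mathrm{poly}(n)$ iterations we reach $U-L\le 2\epsilon$; output $\tilde\Er=(L+U)/2$, which satisfies $|\Er-\tilde\Er|\le (U-L)/2\le\epsilon$.

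Correctness rests on a robustness observation that removes the need to respect the \GSED{} promise at every query. Unfolding $\GSED\in\NEXP$: there is a nondeterministic machine $M$ that accepts whenever $\Er<\alpha$ and rejects whenever $\Er>\beta$, with unspecified behaviour when $\alpha\le\Er\le\beta$. Hence an accepting computation of $M$ on $(\alpha,\beta)$ rules out $\Er>\beta$ and so certifies $\Er\le\beta$, while a rejecting computation certifies $\Er\ge\alpha$ -- precisely the two updates used above. Therefore the invariant $\Er\in[L,U]$ is preserved even when a query happens to land inside the current gap, so the promise never has to be verified. As each query string has length $\mathrm{poly}(n)$, there are $\mathrm{poly}(n)$ adaptive queries, and all remaining work is polynomial-time dyadic arithmetic, the whole procedure is an \FP{} computation with an \NEXP{} oracle, i.e.\ $\FGSED\in\FP^{\NEXP}$. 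I do not anticipate a real obstacle here: the heavy lifting is \cref{Theorem:Results:GSED_NEXP}, which we assume, and the only points needing care are the promise-robustness bookkeeping just described and checking that the number and bit-length of the oracle queries remain polynomial in the bit-length of $\epsilon$.
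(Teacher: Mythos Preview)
Your proof is correct and follows essentially the same approach as the paper: binary search using \GSED{} queries, then invoking $\GSED\in\NEXP$ (\cref{Theorem:Results:GSED_NEXP}) to replace the oracle. Your treatment is in fact more careful than the paper's terse argument, since you explicitly handle the possibility that a query $(\alpha,\beta)$ fails the promise by working with the \NEXP{} language directly and observing that an accepting answer still certifies $\Er\le\beta$ while a rejecting answer certifies $\Er\ge\alpha$; the paper's proof glosses over this point.
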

	We also get the bound
	\begin{lemma}
		$\FP^{\NEEXP} \subseteq \FEXP^{\FGSED}\subseteq \FEXP^{\NEXP}$, for $\FGSED$ for a fixed classical, translationally invariant, nearest neighbour Hamiltonian.
	\end{lemma}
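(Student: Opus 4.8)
The plan is to lift the decision bounds \cref{Theorem:NEXP_Hardness} and \cref{Theorem:Results:GSED_NEXP} to the function setting, using the $\log$-space interreducibility of the promise problem \GSED{} and the function problem \FGSED{} recorded after \cref{Def:GSED_Error_Function_Definition}. The first step is the black-box observation that $\FEXP^{\GSED}=\FEXP^{\FGSED}$. A single \GSED{} query with thresholds $\alpha<\beta$ is answered by one \FGSED{} query at precision $(\beta-\alpha)/2$ followed by a comparison to $(\alpha+\beta)/2$; conversely a single \FGSED{} query at precision $\epsilon$ is answered by binary search using $O(\log(1/\epsilon))$ \GSED{} queries on thresholds of $O(\log(1/\epsilon))$ bits. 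Since an \FEXP{} machine may write exponentially long queries and make exponentially many of them, and since $\log(1/\epsilon)$ is bounded by the length of the query (as $\epsilon$ is written in binary), both simulations stay inside \FEXP{} with the relevant oracle.

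For the upper bound, take $f\in\FEXP^{\FGSED}=\FEXP^{\GSED}$. By \cref{Theorem:Results:GSED_NEXP} classical \GSED{} is decided on its promise by a nondeterministic machine running in time $2^{\mathrm{poly}(k)}$ in the length $k$ of its own input; fix a corresponding \NEXP{} language $L$ agreeing with \GSED{} on all promise instances. Replacing each \GSED{} oracle call by a call to $L$ turns the \FEXP{} machine into an \FEXP{} machine with an \NEXP{} oracle computing the same $f$: its \GSED{} queries have length at most $2^{\mathrm{poly}(n)}$, and on such an input $L$ is allowed time $2^{\mathrm{poly}(2^{\mathrm{poly}(n)})}$, which is exactly what an \NEXP{} oracle queried on an exponential-length string provides. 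Hence $f\in\FEXP^{\NEXP}$.

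For the lower bound, let $f\in\FP^{\NEEXP}$ be computed by a polynomial-time oracle machine $M$ with an \NEEXP{} oracle; then $|f(x)|\le\mathrm{poly}(|x|)$ and all of $M$'s queries have polynomial length. The bit-graph language $B=\{(x,i,b): \text{the $i$-th bit of }f(x)\text{ is }b\}$ is decidable by simulating $M$ on $x$ (polynomially many polynomial-length queries to the \NEEXP{} oracle) and reading off bit $i$, so $B\in\PNEEXP$. By \cref{Theorem:NEXP_Hardness}, $B\in\EXPGSED$, i.e.\ there is an exponential-time machine $N$ with a \GSED{} oracle deciding $B$. An exponential-time machine with a \GSED{} oracle now computes $f(x)$ outright: it determines $|f(x)|$, loops over the polynomially many indices $i$, runs $N$ to recover each bit, and concatenates. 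Thus $f\in\FEXP^{\GSED}=\FEXP^{\FGSED}$, giving $\FP^{\NEEXP}\subseteq\FEXP^{\FGSED}$.

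The step I expect to require the most care — though it is entirely standard and is already invoked implicitly in \cref{sec:main_results} — is the promise/oracle bookkeeping: converting the promise problem \GSED{} into a genuine oracle language usable by an \FEXP{} (resp.\ \EXP) machine, and ensuring that every \GSED{} query generated during a binary search lies strictly outside the promise interval of the fixed real number $\Er$ (handled, as usual, by running the search to slightly finer precision and perturbing the thresholds off the search grid, so that answers on any residual non-promise query are still consistent with a valid approximation). Everything else is routine simulation and padding, so no further obstacles are anticipated.
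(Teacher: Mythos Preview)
Your argument is correct. The upper bound is essentially identical to the paper's: both convert an \FGSED{} call into a binary search of \GSED{} queries and then invoke $\GSED\in\NEXP$.

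For the lower bound the two proofs diverge. The paper argues directly: an $\FEXP^{\FGSED}$ machine makes a single \FGSED{} call on the specific Hamiltonian of \cref{Def:Tiling_Hamiltonian} at precision $\epsilon$ with $|\epsilon|=O(\exp(n))$; by \cref{Lemma:Ground_State_Energy_Density} the returned estimate of $\Er$ carries the answers to all polynomial-length \NEEXP{} queries in its digits, which the \FEXP{} machine then reads off. Your route is more black-box: you package $f\in\FP^{\NEEXP}$ into its bit-graph language, place that in $\PNEEXP\subseteq\EXPGSED$ via \cref{Theorem:NEXP_Hardness}, and then reconstruct $f$ bit by bit inside $\FEXP^{\GSED}=\FEXP^{\FGSED}$. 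Your approach has the advantage of deriving the function-class inclusion purely from the decision-class inclusion, without re-opening the hardness construction; the paper's approach is shorter and more concrete, exploiting that a single high-precision estimate of $\Er$ already encodes everything needed. Your closing remark about promise bookkeeping is apt and applies equally to the paper's argument, which also leaves this implicit.
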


	\section{Preliminaries}\label{sec:preliminaries}
	Let $\mathcal{B}(\mathcal{H})$ be the space of bounded linear operators on a complex Hilbert space $\mathcal{H}$.
	Define $\Lambda(L \times H) := \{1,\dots, L\} \times \{1,\dots,H\}$ to be the square lattice of width $L$, height $H$, with $L,H\in \mathbb{N}$.
	We attach to each site $i\in \Lambda(L \times H)$ in the lattice a Hilbert space $\mathcal{H}_i \cong \mathbb{C}^d$.
	Given a string $x\in \{0,1\}^n$, then $|x|=n$ will denote the binary length of the string.
	For a given Hamiltonian $H$, we will denote its eigenvalues as $\lambda_i(H)$, such that $\lambda_0(H)\leq\lambda_1(H)\leq\lambda_2(H) \leq \dots$.

	Given a lattice $\Lambda(L\times H)$, a Hamiltonian $H=\sum_{i} h_i$ is \emph{nearest-neighbour} if $h_i\in \mathcal{B}(\mathbb{C}^d\otimes \mathbb{C}^d)$ such that each $h_i$ acts non-trivially only on neighbouring pairs of lattice sites.
	We write the interaction between neighbouring sites as $h_{\langle i,j\rangle}$.
	Furthermore, \emph{translational invariance} implies $h_{\langle i,j\rangle}=h\in\mathcal{B}(\mathcal{H})$ for any $i,j$.
	By a \emph{classical Hamiltonian}, we mean a Hamiltonian which is diagonal in the standard basis.
	To distinguish general Hamiltonians from classical Hamiltonians we will often call them \emph{quantum Hamiltonians}.

	\subsection{Complexity Classes}

	\begin{definition} \emph{\NEXP{} or \textsf{NEXPTIME}}

		A language $L$ is in \NEXP{} if there exists a positive constant $k$ and a deterministic Turing Machine $M$ such
		that for each instance $x$ and a classical witness $w$ such that $|w| = O(2^{|x|^k})$, on input $(x, w)$, $M$ halts in $O(2^{|x|^k})$ steps and
		\begin{itemize}
			\item if $x \in L,$  $\exists w$ such that $M$ accepts $(x, w)$ with probability 1.
			\item if $x \not\in L$ then $\forall w$, $M$ accepts $(x, w)$ with probability 0.
		\end{itemize}
	\end{definition}
	\noindent
	We note that \NEXP = \textsf{NTIME}$(2^{cn})$ \cite{Papadimitriou_1994}.
	\begin{definition} \emph{\NEEXP{} or \textsf{N2EXP}}

		A language $L$ is in \NEEXP \space if there exists a positive constant $k$ and a deterministic Turing Machine $M$ such
		that for each instance $x$ and a classical witness $w$ such that $|w| = O(2^{2^{|x|^k}})$, on input $(x, w)$, $M$ halts in $O(2^{2^{|x|^k}})$ steps and
		\begin{itemize}
			\item if $x \in L,$  $\exists w$ such that $M$ accepts $(x, w)$ with probability 1.
			\item if $x \not\in L$ then $\forall w$, $M$ accepts $(x, w)$ with probability 0.
		\end{itemize}
	\end{definition}

	\begin{definition} \emph{\NEEXP{} or \textsf{N2EXP}}
	
	A language $L$ is in \NEEXP \space if there exists a positive constant $k$ and a deterministic Turing Machine $M$ such
	that for each instance $x$ and a classical witness $w$ such that $|w| = O(2^{2^{|x|^k}})$, on input $(x, w)$, $M$ halts in $O(2^{2^{|x|^k}})$ steps and
	\begin{itemize}
		\item if $x \in L,$  $\exists w$ such that $M$ accepts $(x, w)$ with probability 1.
		\item if $x \not\in L$ then $\forall w$, $M$ accepts $(x, w)$ with probability 0.
	\end{itemize}
\end{definition}
	We also define $\mathsf{QMAEEXP}$ the same way as $\mathsf{QMA}$, but allowing for a doubly-exponentially long witness and circuit runtime.

	Throughout, we will make use of oracle classes: these are the set of problems solvable by a Turing Machine with access to an oracle solving some problem (or class of problems).
	\begin{definition}[Oracle Turing Machines \cite{Arora_Barak_2010}]
		An oracle Turing machine is a TM, $M$, that has a special read/write tape we call $M$'s oracle tape and three special states $q_{query}$, $q_{yes}$, $q_{no}$.
		To execute $M$, we specify in addition to the input a language
		$O \subset \{0, 1\}^*$ that is used as the oracle for $M$.
		Whenever during the execution $M$ enters the state
		$q_{query}$, the machine moves into the state $q_{yes}$ if $q \in O$ and $q_{no}$ if $q \not\in O$, where $q$ denotes the contents of the special oracle tape.
		 Note that, regardless of the choice of $O$, a membership query to $O$
		counts only as a single computational step.
		If $M$ is an oracle machine, $O \subset \{0, 1\}^{*}$ a language, and
		$x \in \{0, 1\}^*$, then we denote the output of M on input x and with oracle $O$ by $M^O(x)$.
	\end{definition}

	\begin{definition}[Oracle Classes \cite{Arora_Barak_2010}]
		For every $O \subset \{0, 1\}^{*}$, $\mathsf{P}^O$ is the set of languages decided by a polytime deterministic
		TM with oracle access to $O$ and \NP$^O$ is the set of languages decided by a polytime nondeterministic TM with oracle access to $O$.
		Similarly for $\PSPACE^O$ and
		$\EXP^O$.
	\end{definition}
	For the particular case of $\PSPACE^O$ machines, the \PSPACE{} machine can execute exponentially many computational steps.
        So there is a subtlety as to whether the space bound also applies to the oracle tape or not.
        Multiple possible definitions for what the $\PSPACE$ machine has access to with regards to the oracle tape have been considered in the literature~\cite{Fortnow_94}.
	We discuss the different results we get depending on the choice of definition in \cref{Sec:PSPACE_Results}.

	\begin{definition}[Oracle Function Classes \cite{Papadimitriou}]
		For every $O \subset \{0, 1\}^{*}$, $\FP^O$ is the set of functions $f:\{0,1\}^*\rightarrow \{0,1\}^*$ that can be computed by a polytime deterministic TM with oracle access to $O$.
		$\FEXP^O$ is similarly the set of functions computed by an exponential time deterministic TM with oracle access to $O$.
	\end{definition}

	\section{Tiling Preliminaries}\label{sec:tiling_preliminaries}
	Wang tilings will play a central role in this work.
	\begin{definition}[Wang Tiles]
		\emph{Wang tiles} are unit length square tiles with markings on each of the four edges.
		For a given  set of Wang tiles $\{t_i\}_{i=1}^n$, the markings define horizontal matching rules $\mathcal{R}_{Horz}$ (respectively, vertical matching rules $\mathcal{R}_{Vert}$) such that two tiles $t_i,t_j$ can only be placed next to each other horizontally (vertically) if $(t_i,t_j)\in \mathcal{R}_{Horz}$ $((t_i,t_j)\in \mathcal{R}_{Vert})$.
	\end{definition}
	\noindent We now consider specific sets of Wang tiles that we will employ throughout this work.

	\subsection{Robinson Tiles}

	Robinson's tiling~\cite{Robinson_1971} is based on a set of 5 basic tiles, shown in figure \cref{Fig:Robinson_Tiles}, with the rule that one tile can be placed next to another only if the arrow heads on the first tile correctly join with the arrow tails on the adjacent tile.
	I.e.\ the tiling rules enforce the condition that \emph{arrow heads on one tile must meet arrow tails of the same type on its neighbour in the appropriate direction}.

	\begin{figure}[h!]
		\centering
		\includegraphics[width=0.9\textwidth]{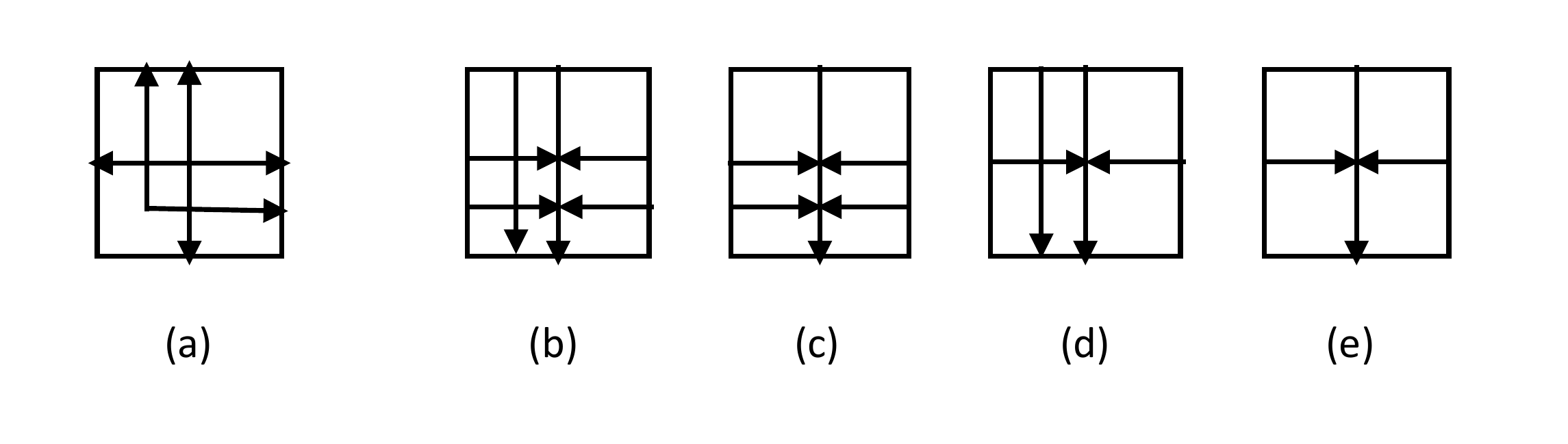}
		\caption{The five Robinson tiles we will use.
		Image taken from \cite{Cubitt_PG_Wolf_Undecidability}. }
		\label{Fig:Robinson_Tiles}
	\end{figure}

	Tile (a) in \cref{Fig:Robinson_Tiles} has arrows on all sides of the tile and is known as a \emph{cross} and in this depiction is said to face up and to the right.
	The other 4 tiles are known as \emph{arms}.
	Each of the arms has a principle arrow across the centre of the tile and which indicates its direction (all the tiles depicted in \cref{Fig:Robinson_Tiles} are facing downwards).
	Arrow markings can be either red or green.
	On a given arm the horizontal and vertical arrows must have different colours and on cross tiles we force all arrow markings to have the same colour.
        The Robinson tile set includes all rotations and reflections of these basic tiles.

	When these tiles are augmented with certain additional markings, described in~\cite{Robinson_1971,Cubitt_PG_Wolf_Undecidability}, the tiling rules force a pattern of interlocking, nested squares to form in any valid tiling of the plane (see \cref{Fig:3-7-Square}(c)).
	The series of squares have side lengths $3,7,15,31, \dots, 2^n-1$, for $n\in \mathbb{N}$ (see \cref{Fig:Big-Square}).
	Robinson adds additional coloured markings to the tiles, such that for odd $n$ the borders formed by the double-arrow tile markings running along the edges of the squares are green, and for even $n$ they are red.
	We direct the reader to \cite{Robinson_1971} and \cite{Cubitt_PG_Wolf_Undecidability} for more detailed discussions of tiling pattern and how it is formed.
		\begin{figure}[h!]
		\centering
		\includegraphics[width=1\textwidth]{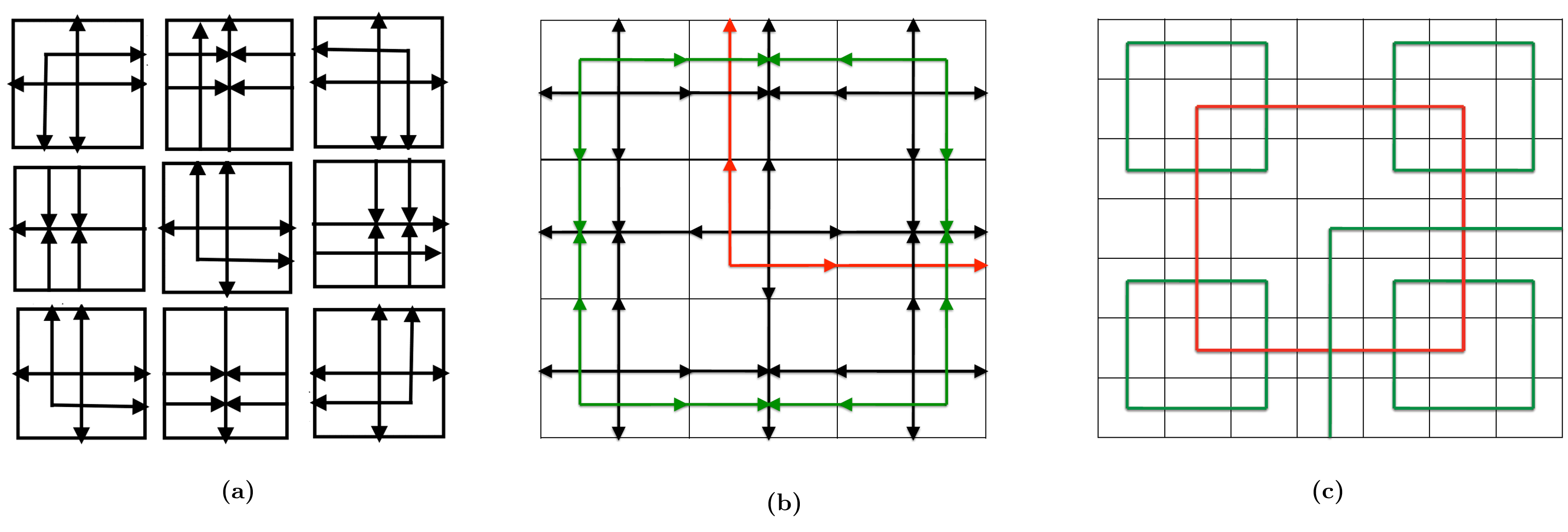}
		\caption{ (a) A possible tiling arrangement to create a 3-square.
			(b) shows the same square once the coloured arrows have been introduced.
			(c) shows a 7-square having combined several 3-squares.
			Images (b) and (c) taken from \cite{Cubitt_PG_Wolf_Undecidability}. }
		\label{Fig:3-7-Square}
	\end{figure}

	\begin{figure}[h!]
		\centering
		\includegraphics[width=0.7\textwidth]{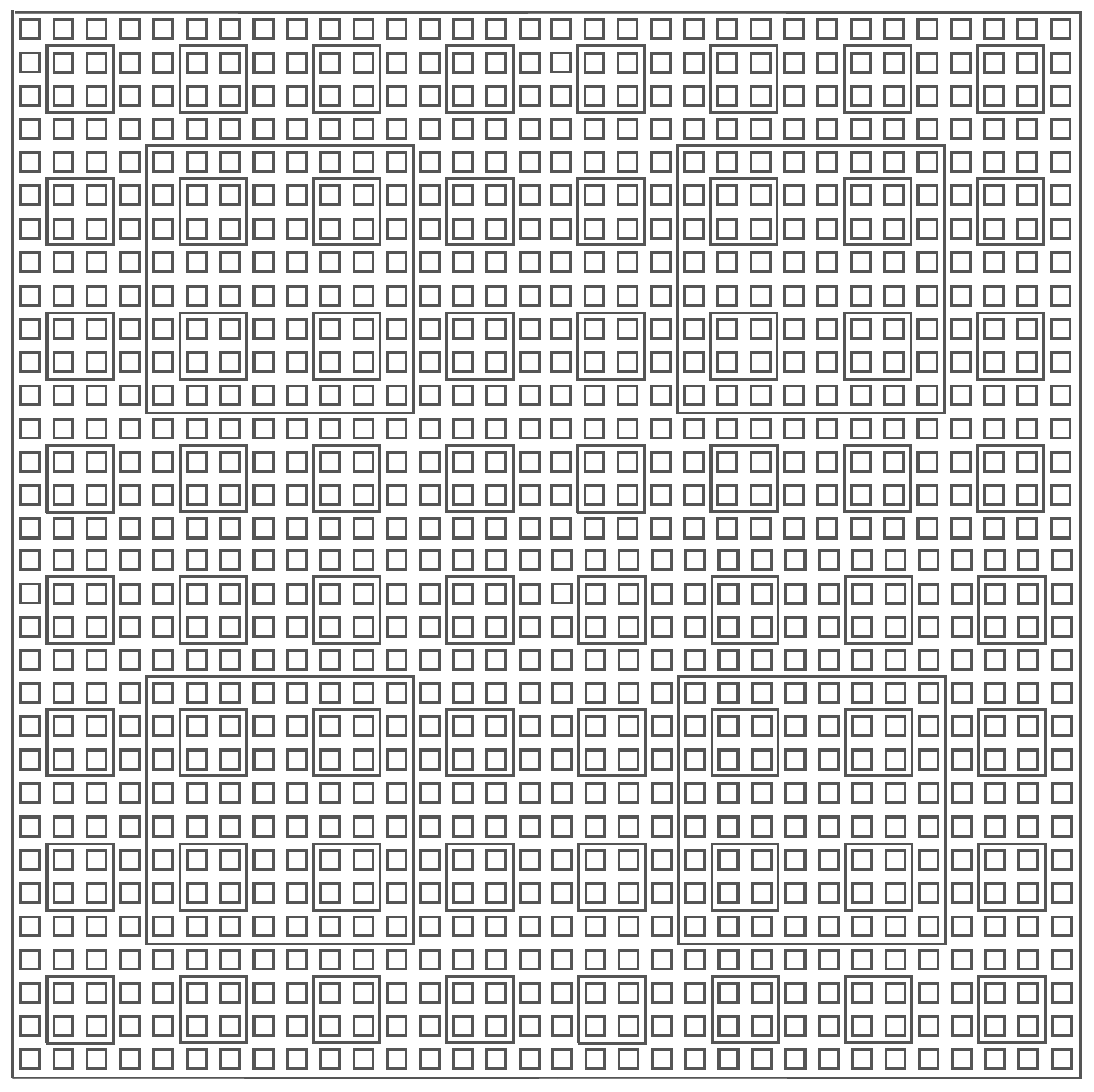}
		\caption{ A Robinson tiling pattern showing only red borders.
			Image modified from \cite{Cubitt_PG_Wolf_Undecidability}.}
		\label{Fig:Big-Square}
	\end{figure}

	For our purposes we will mostly focus on red borders, and refer to these as just \emph{borders}.
        The interior of the border is referred to as a \emph{square}.
        We refer to a red border of side $4^n-1$ as an \emph{$n$-border}.
        When green borders are referenced, this will always be made explicit.

	Consider \cref{Fig:Robinson_Tiles+Dashes}.
	Let $R_v^{r}, R_v^{l}, R_h^{u}, R_h^{d}$ be the sets of Robinson tiles which contain tiles of type $(b)$, $(c)$, and $(d)$ markings, where the double-arrow markings going across the entire tile are red, and where the arrow markings going across the entire tile are respectively facing right, left, up or down.
	Let $R_X$ be the set of red crosses, and let $R_X^{UR},R_X^{UL}, R_X^{DL}, R_X^{DR}$ be the cross tiles that have double arrow markings facing up-right, up-left, down-left and down-right respectively.

	\begin{definition}[$n$-borders]\label{def:n-border}
	Consider a $(4^n-1) \times (4^n-1)$ subset of a tiling grid, not including its interior.
	Then the region forms $2$-border if for every point along the left vertical edge, right vertical edge, bottom horizontal edge, and along the top horizontal edge satisfies $\Lambda(p)\in R_h^{l},R_h^{r},R_h^{d},R_h^{u}$, respectively.
	Furthermore, the tile in the top-right corner $R_X^{DL}$, top-left corner is $R_X^{DR}$, bottom-left corner is $R_X^{UR}$, and bottom right corner is $R_X^{UL}$.
	\end{definition}

	Finally note that Robinson tiles allows for two half-planes to be translated relative to each other without violating any of the tiling rules.
	We wish to avoid this and hence use the modified set of Robinson tilings introduced in \cite{Cubitt_PG_Wolf_Undecidability}, such that the final set of tiles is all rotations and reflections of those shown in \cref{Fig:Robinson_Tiles+Dashes}.
	It is shown in \cite{Cubitt_PG_Wolf_Undecidability} that these tiles produce the same pattern of nested squares, but prevent any two half-planes from be translated relative to each other.

	\begin{figure}[h!]
		\centering
		\includegraphics[width=0.8\textwidth]{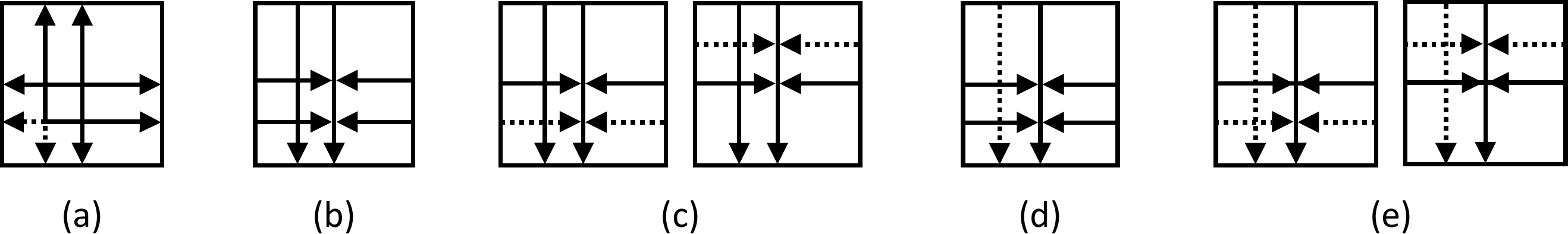}
		\caption{The standard Robinson tiles with additional dashed markings added in to prevent slippage between planes.
		Image modified from \cite{Cubitt_PG_Wolf_Undecidability}.}
		\label{Fig:Robinson_Tiles+Dashes}
	\end{figure}

	\subsection{Encoding Turing Machines with Tiles}
        It is well known that the evolution of a classical Turing Machine can be encoded as a set of Wang tiles \cite{Berger_1966,Robinson_1971}.
        To see this, consider a particular TM.
        The TM tape at a particular time step is a set of tape cells with symbols written in them, where one particular cell has the TM head over it.
        The TM will then evolve deterministically according to its transition rules.

        Now consider an $L\times L$ tiling grid.
        It is possible to construct a set of Wang tiles such that the tiling pattern simulates the TM's evolution for $L$ steps.
        The tile set is chosen to be tiles with all possible combinations of Turing Machine tape cell markings, plus TM head and state markings.
        The evolution of the TM can then be encoded as a tiling of a square lattice, where rows of tiles represent the configuration of the TM tape, together with the head location and current internal state, at a particular time step.
        Adjacent rows encode the TM configuration at successive time steps.
        The correct TM evolution is then enforced by tiling rules.
        (See \cref{Fig:TM_Encoded_As_Tiles} for an example of such an encoding, and see \cite{Berger_1966, Robinson_1971, Gottesman-Irani, Cubitt_PG_Wolf_Undecidability, Bausch2015} for some further detailed discussions on this topic.)

		\begin{figure}[t]
			\centering
			\includegraphics[width=0.6\textwidth]{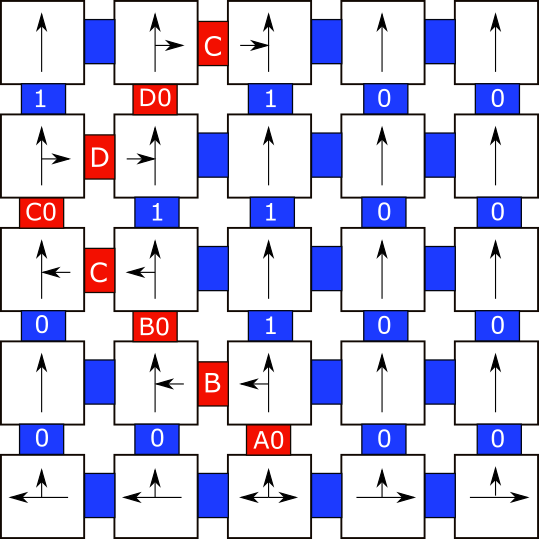}
			\caption{The evolution of a classical TM can be represented by Wang tiles, where colours of adjacent tiles have to match, and arrow heads have to meet arrow tails.
				Here the evolution runs from the bottom of the square to the top.
				The red labels between adjacent rows represent the position and state of the TM head, and the red labels between adjacent columns represent movement of the TM head after it has acted on the cell.}
			\label{Fig:TM_Encoded_As_Tiles}
		\end{figure}

		\subsection{Encoding Turing Machines in the Robinson Tiling} \label{Sec:Encode_TM}

		In this section we review how the tiling-encoding of TMs can be combined with the Robinson tiling to create a new set of tiles which, when the plane is tiled according to the tiling rules, encodes the evolution of a separate TM within each $n$-square in the Robinson tiling pattern.
		This construction was introduced in \cite{Robinson_1971} to prove undecidability of the tiling of a 2D plane.

		Encoding the evolution of a TM directly within the interior of a $n$-border is not possible as the Robinson tiling pattern is composed of $m$-squares nested within other $n$-squares, $m<n$.
		Thus TMs would overlaps with each other.
		\cite{Robinson_1971} circumvents this problem by identifying a sub-grid within each Robinson $n$-border which allows a TM to be encoded without overlapping with the smaller $m$-squares, $m<n$, nested within.

		\begin{definition}[Free Rows/Columns and Free Squares, \cite{Robinson_1971}] \label{Def:Free_Rows/Columns}

			A \emph{free row/column} of square  is a row/column in a Robinson $n$-border that stretches across the border's interior uninterrupted by any of the $m$-borders with $m<n$.

			A \emph{free square or tile} is a square in the grid that is both in a free row and a free column.
			Within an $n$-square there are exactly $2^n +1$ free rows/columns.
		\end{definition}

		\begin{lemma}[Encoding TM in Robinson Tiling, \cite{Robinson_1971}] \label{Lemma:UTM_on_Robinson_Square}
		Consider any classical Turing Machine which can have its evolution be encoded in a $(2^n+1)\times (2^n+1)$ grid of Wang tiles.
		Then the evolution of this TM
		can be encoded in the free rows and columns of an $n$-square in a Robinson Tiling.

	\end{lemma}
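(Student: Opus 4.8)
The plan is to follow Robinson's construction~\cite{Robinson_1971} (as reviewed in~\cite{Cubitt_PG_Wolf_Undecidability}). I would build an \emph{augmented} tile set whose tiles are pairs (modified Robinson tile, computation-layer symbol) for a fresh finite alphabet, and whose matching rules are the conjunction of the Robinson rules with new rules acting only on the computation layer. On the $(2^n+1)\times(2^n+1)$ array of free tiles of each $n$-square, the computation layer carries the data of the standard Wang-tile encoding of the TM's space--time diagram (\cref{Fig:TM_Encoded_As_Tiles}): each free row records a tape configuration together with the head position and internal state, and vertically consecutive free rows record successive time steps. On non-free tiles the layer carries inert ``wire'' data whose only job is to relay the TM data between consecutive free tiles, and on the forced border tiles it carries data pinning down the initial configuration. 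The lemma then splits into \emph{soundness} --- in any valid augmented tiling the computation layer on the free grid of each $n$-square is forced to equal the TM's evolution --- and \emph{consistency} --- the augmented tile set still tiles the plane.

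For the wiring I would first record the structural fact that lets a \emph{single finite} augmented tile set work at all scales at once: in a valid Robinson tiling each interior tile lies in the interior of a unique smallest nested square $S_k$, and since the free rows/columns of any larger square avoid every $m$-border with $m$ below that square's level, they avoid $S_k$ altogether, so no larger square's wires ever pass through $S_k$. Hence each tile plays a role only at its own level $k$, and the same finite alphabet can be reused verbatim at every level (orientations and the half-plane slippage of the plain Robinson tiles being pinned down by the modified tiles of~\cite{Cubitt_PG_Wolf_Undecidability}). Next, within a fixed $n$-square, every non-free tile lies in at most one of a free row or a free column --- otherwise it would be a free tile --- so it relays a signal in at most one direction and horizontal and vertical wires never meet. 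I place a horizontal wire along each maximal run of non-free tiles between two consecutive free tiles of a free row, and a vertical wire along each such run between consecutive free tiles of a free column; the computation-layer rules force the wire datum to be constant along each run, so the edge colour emitted by one free tile is exactly the colour received by the next free tile in that row (and likewise vertically). On the free tiles themselves I impose precisely the matching rules of the standard TM encoding, now read ``through the wires'' at the induced free-adjacency, so the free grid with its induced adjacency becomes isomorphic to a standard $(2^n+1)\times(2^n+1)$ TM-encoding grid.

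Initialization uses the forced structure of an $n$-border (\cref{def:n-border}): its four corners and four edge segments are each forced to specific Robinson tiles, so I attach computation-layer markings to them whose matching rules pin the bottom free row to the TM's initial configuration (start state over the first cell, blank or fixed input symbols elsewhere) and pin the leftmost/rightmost free tiles of each row to blank tape. Since the TM is deterministic, the standard Wang-tile TM-encoding result together with this forced bottom row yields soundness: the computation layer on the free grid of $S_n$ is uniquely determined and equals the first $2^n+1$ steps of the TM's evolution. For consistency, I start from any valid Robinson tiling (one exists) and define the computation layer by putting, on the free tiles of each $S_k$, the true entries of that TM's space--time diagram; on each wire, the relayed value; and a dummy symbol on any tile with no role. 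One checks directly that this satisfies the Robinson rules (inherited), the wire rules (by construction), the free-tile rules (they are exactly the constraints the true evolution satisfies) and the border rules (by choice of initial configuration), so valid augmented tilings exist.

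The part I expect to be the real work is the explicit wiring of the second paragraph and its interaction with the self-similar Robinson pattern: exhibiting, uniformly in $n$, concrete channel tiles that carry the TM data between free tiles along paths confined to the free rows/columns --- so that level-$n$ wires never collide with the nested sub-squares, with each other, or with the smaller-scale copies of the construction --- and verifying that the new rules neither leak information along a wire nor exclude any valid Robinson tiling. This is exactly the bookkeeping performed in~\cite{Robinson_1971,Cubitt_PG_Wolf_Undecidability}; conceptually nothing lies beyond it, but carrying it out with one finite tile set is the crux.
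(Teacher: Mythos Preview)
Your plan follows Robinson's construction and is correct at the architectural level, but it omits one specific mechanism that the paper's exposition (following~\cite{Robinson_1971}) treats as essential: the \emph{obstruction signals}. Your ``wires'' correspond to what the paper calls \emph{Turing Machine signals}---they relay TM data along free rows and columns between consecutive free tiles. But this presupposes that each tile already knows, via purely local rules, whether it sits at a free position, in a free row only, in a free column only, or in neither. That information is not determined by the underlying Robinson tile alone; it depends on whether any smaller red border lies in the same row or column, which is a non-local property of the configuration.

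Robinson's device for making this local is a separate signal layer: every red border emits obstruction signals outward from its exterior edges, these propagate until absorbed by the next enclosing red border, and a tile is declared free precisely when it carries no obstruction signal in either direction (see \cref{Fig:Obstruction_Signals} and \cref{Def:Obstruction_Tiling}). Only after this layer has pinned down the free/non-free status of every cell can the TM-signal wires be forced to run exactly where they should and be absorbed exactly at free tiles. Without obstruction signals (or some equivalent mechanism tied to the red borders), nothing in your rules prevents a tile from carrying a ``free'' computation symbol when it is not actually free, so your soundness claim---that the computation layer on the free grid is \emph{forced} to equal the TM evolution---has a gap. Your structural observation that each tile lies in a unique smallest $S_k$ and that larger-level free rows avoid $S_k$ is correct, and it is indeed why one finite alphabet suffices for all levels; but it is a global statement about valid Robinson tilings, not itself a local rule, and it does not supply the missing local test for free-ness.
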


	\noindent We will use the details of Robinson's construction of \cref{Lemma:UTM_on_Robinson_Square} later, hence we provide some exposition here.

	Consider a Robinson $n$-border.
	Following \cite{Robinson_1971}, to demarcate where the free tiles are so that we can encode a Turing Machine in them, introduce a new kind of marking on the tiles called an `obstruction signal'.
	These signals are designed so they are emitted and absorbed from the outside of a red border and while also being absorbed by the inside of a border, as seen in \cref{Fig:Obstruction_Signals}.
	In terms of tiles, these markings are formed by adding an additional set of markings such that tiles of type (b) in \cref{Fig:Robinson_Tiles+Dashes} with a red double-arrow ``emit'' the obstruction signals from one side and ``absorb'' them on both sides.
	Tiles that do not emit or absorb obstruction signals force them to propagate in the same direction.
	The obstruction signals are only emitted from the outer edges of a red Robinson border.
	A \emph{free tile} is one which does not have an obstruction signal going across it in either direction.
	In our new tile set, we only encode the Turing Machine tape, head and state symbols in the free tiles.

	\begin{figure}[h!]
		\centering
		\includegraphics[width=0.5\textwidth]{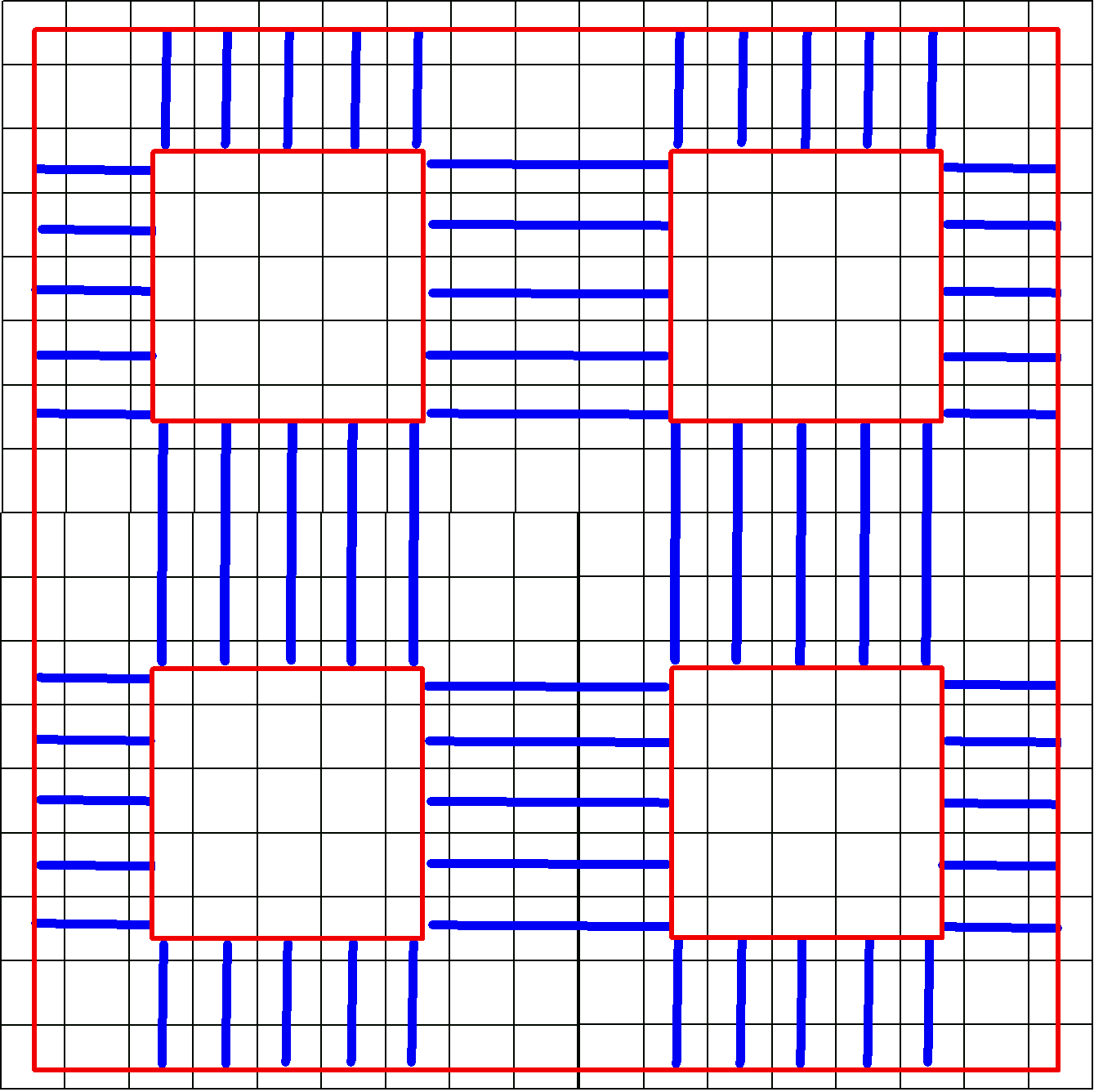}
		\caption{ The obstruction signals for a red $2^4$-square are shown in blue. Each of the tiles within the $2^2$-squares emits a signal outwards. The free rows are the rows in which there are no obstruction signals running horizontally (for example the central row). The free columns are the columns in which there are no obstruction signals running vertically (for example the central column).  }
		\label{Fig:Obstruction_Signals}
	\end{figure}

	\paragraph{Transmitting Signals between Free Tiles}

	Thus we are able to encode the evolution of a Turing Machine in these free tiles, effectively creating a $(2^n + 1)\times (2^n +1)$ square for it to run in.
	There is a problem in that the free tiles are not spatially close to each other.
	To solve this, \cite{Robinson_1971} implicitly introduces a new set of tile markings: Turing Machine signals.
	These signals can be emitted and absorbed by free tiles and run along free rows and columns.
	Otherwise they are absorbed by tiles with double arrowed red markings: tiles of types (a), (b), or (c), shown in \cref{Fig:Robinson_Tiles+Dashes}, on the sides of tiles parallel to the red double arrow lines.
	Tiles which are not free tiles, and do not absorb the TM signals, force the TM signals to propagate across them.
	These signal markings allow the tiling to transmit the necessary conditions between spatially distant free tiles.

	\paragraph{Initialising the TM}
	Finally, boundary conditions are needed to force the correct initial configuration of the Turing Machine.
	To ensure this, \cite{Robinson_1971} introduces a further set of tile markings that interact with the Turing Machine markings.
	The markings are chosen so that every arm tile which is both horizontally facing and forms the bottom border of an $n$-border, and which \emph{does not} absorb an obstruction signal, must emit a Turing Machine signal upwards.
	Choose this signal to be $s_0$ which will force the tiles in the initial layer at free positions to be blank, so that the initial tape configuration is entirely blank.

	The exception to this is in the centre of the edge where the tile will emit a Turing Machine signal $s_0q_0$ indicating the Turing Machine head starts there.
	Similarly, choose the tiling markings so any arms in the top, left and right parts of the square's border will absorb any stray Turning Machine signals along their inner edges.

	\section{Robinson Tiling Robustness}\label{Sec:Robinson_Robustness}

	In this section we prove a series of results demonstrating that if a region $R\subseteq \mathbb{Z}^2$ is tiled with Robinson tiles, but tiling defects are allowed to occur (i.e.\ points between adjacent tiles at which the matching rules are not satisfied), then only a finite number of Robinson squares can be destroyed per defect.
	Similarly bounds were proven in \cite{Miekisz_1997, Cubitt_PG_Wolf_Undecidability}, but are not strong enough for our purposes.

	\subsection{Robinson Border Deficit Bound}

\begin{lemma}\label{no_border_overlap}
  In any tile configuration $T$, borders cannot overlap.
\end{lemma}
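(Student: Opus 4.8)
The plan is to reduce the statement to a single local fact about the Robinson tile set. First I would pin down the combinatorial meaning of the objects. An $n$-border is the set of tiles carrying one closed loop of red double-arrow markings that bounds a $(4^n-1)\times(4^n-1)$ axis-aligned square: four straight edge-runs made of the red-double-arrow arm tiles (types $(b),(c),(d)$ of \cref{Fig:Robinson_Tiles+Dashes} whose double arrow crosses the whole tile), joined at the four corners by the corner crosses $R_X^{UR},R_X^{UL},R_X^{DL},R_X^{DR}$. The local fact, read off by inspecting the finitely many tiles of \cref{Fig:Robinson_Tiles+Dashes}, is: on any Robinson tile the red double-arrow markings touch at most two of the tile's four sides, and if they touch exactly two sides then either those sides are opposite (the tile carries a single straight double arrow — an edge tile) or the sides are adjacent (the tile is one of the four corner crosses). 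In particular no tile has red double-arrow markings reaching three or four of its sides. I will call two distinct borders $B_1\neq B_2$ \emph{overlapping} if their closed interior squares are neither disjoint nor nested one inside the other; equivalently, the two square curves cross.

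Next I would derive a contradiction from overlap. Writing $\overline{R_1},\overline{R_2}$ for the closed square regions, the standard elementary argument for axis-aligned rectangles shows that if $\overline{R_1}\cap\overline{R_2}\neq\emptyset$ while neither contains the other, then a vertical portion of one frame and a horizontal portion of the other pass through a common lattice tile $p$. Now I case on the position of $p$ in each frame. If $p$ is interior to an edge-run of both borders, it must carry a straight horizontal red double arrow (from one border) and a straight vertical red double arrow (from the other), i.e.\ red double-arrow markings on all four of its sides — impossible by the local fact. If $p$ is a corner tile of exactly one border and interior to an edge-run of the other, it must carry a straight double arrow across in one direction together with a corner-$L$ double arrow, i.e.\ markings on three sides — again impossible. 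The only remaining possibility is that $p$ is a corner tile of both $B_1$ and $B_2$; but then $R_1$ and $R_2$ sit in the quadrants around $p$ fixed by these corners, so they are disjoint, or (when the corners open the same way) one curve simply continues into the other's region without crossing — in either subcase $B_1$ and $B_2$ are not overlapping, contrary to assumption. Hence no such tile $p$ exists and borders cannot overlap.

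The step I expect to be the real work is the middle case analysis: carefully enumerating how two axis-aligned square curves can meet — transversal crossing, crossing through a corner of exactly one border, coincident corner tiles, and collinear edge-runs — and verifying in each case that it either forces a tile with red double-arrows on three or four sides or else is incompatible with the interiors overlapping. Two collinear edge-runs, for instance, never make the curves cross (the shared tiles there are ordinary edge tiles carrying one double arrow across two opposite sides), and so on their own never produce an overlap, which is why that subcase is dismissed. Throughout, the argument only ever examines the markings printed on a single tile, hence makes no use of the matching rules and remains valid for an arbitrary configuration $T$, including configurations containing tiling defects, as the lemma requires.
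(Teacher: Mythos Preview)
Your argument is correct in spirit, but it is considerably more elaborate than the paper's proof, which is a single sentence appealing directly to \cref{def:n-border}. That definition does not merely say that an $n$-border carries red double-arrow markings; it pins down \emph{exactly} which tile must sit at every cell of the border (each edge cell lies in a specific one of $R_v^l,R_v^r,R_h^u,R_h^d$, and each corner is a specific one of $R_X^{UR},R_X^{UL},R_X^{DL},R_X^{DR}$). So if two borders share a lattice cell, one simply checks that the two tile-type requirements at that cell disagree, and one is done --- no need for your local fact about how many sides a red double arrow touches, and no need for the four-case geometric analysis of how axis-aligned squares can meet.

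Your route does work, with one small wrinkle: in the ``$p$ is a corner of both'' subcase you conclude the regions are disjoint, but in fact if the two corner orientations differ you already have a tile conflict at $p$ (e.g.\ $R_X^{DL}$ versus $R_X^{UR}$), which is the cleaner way to dispose of that subcase and also covers the borderline situation where the closed squares share only the single corner cell. More generally, the paper uses ``overlap'' to mean ``share a lattice cell'' (this is what is needed later, e.g.\ in \cref{ND_bound}), which is slightly stronger than your ``regions neither disjoint nor nested''; the definitional argument handles both readings at once, whereas your geometric argument would need a couple of extra lines to rule out nested-but-touching and adjacent-but-touching configurations.
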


\begin{proof}
  This follows immediately from \cref{def:n-border} of $n$-borders: if two borders were to overlap, the lattice cell where they overlap would necessarily contain the wrong tile for one or other (or both) of the putative borders.
\end{proof}

Throughout this section, all lengths and distances are with respect to the $\ell_\infty$ metric.

\subsubsection{Domains and Undomains}
\begin{definition}[Defect set, defect graph]
  A \keyword{defect set} $D$ is a finite set of points on the dual lattice $\Z_2^*$.
  A \keyword{defect graph} $G=(D,E)$ is the complete graph on $D$ embedded as a line graph in $\R_2$, with vertices at all defects in $D$ and edges $E$ formed by straight lines between all pairs of vertices.
\end{definition}

\begin{definition}[Tile configuration]
  A \keyword{tile configuration} is an assignment of a Robinson tile to each point in the lattice $\Z_2$.
  The defect set of a tile configuration $T$ is the set of all points in $Z_2^*$ between non-matching tiles in $T$.
\end{definition}

\begin{definition}[Tiling]
  A \keyword{Robinson tiling} (or just \keyword{tiling}) is a defect-free tile configuration.
\end{definition}

\begin{definition}[Border intersection]
  We say that a border \keyword{intersects a defect} if the border contains two points in $\Z_2$ that are either side of a point in the defect set.

  We say that a border \keyword{intersects an edge} if the edge passes through or along the side of a lattice cell containing a border tile.
\end{definition}

\begin{definition}[$n$-domain, $n$-undomain]\label{def:n-domain}
  Let $D$ be a defect set, $G=(D,E$) its defect graph.

  We define an \keyword{$n$-undomain} $\cU\subset\Z_2$ to be a maximal connected region of the lattice such that any $m$-border with $m\geq n$ that overlaps $\cU$ necessarily either intersects a defect in $D$, or intersects an edge in $E$ of length $\leq 4^n$.

  We define an \keyword{$n$-domain} to be a maximal connected region $\cD\subset\Z_2$ of the lattice that does not overlap any $n$-undomain.

\end{definition}

We define a defect to be contained in a domain if it is adjacent to any point in $\Z_2$ contained in the domain (i.e.\ considered as regions of the lattice, domains are \emph{closed} -- they contain their boundaries).
In contrast, we define a defect to be contained in an undomain if it is strictly contained in the \emph{interior} of the undomain (i.e.\ undomains are \emph{open} -- they do not include their boundaries).

The following property of $n$-domains and undomains is immediate from the definitions.
\begin{lemma}\label{n-domain_partition}
  The set of all $n$-domains and $n$-undomains partition the region being tiled, with $n$-domains separated by $n$-undomains and vice versa.
\end{lemma}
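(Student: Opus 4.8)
The plan is to reduce both notions to connected components of one explicitly described vertex set, after which the lemma is just a statement about components of a graph. Fix the level $n$, the defect set $D$ with defect graph $G=(D,E)$, and the region $R$ being tiled. Let $V\subseteq R$ be the set of lattice points $p$ such that every $m$-border with $m\geq n$ passing through $p$ (i.e.\ containing $p$ in one of its border tiles) either intersects a defect in $D$ or intersects an edge of $E$ of length $\leq 4^n$. Equivalently, $p\in V$ precisely when the singleton $\{p\}$ satisfies the defining property of an $n$-undomain in \cref{def:n-domain}; write $P$ for that property.

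The first step is the equivalence: a connected region $S\subseteq R$ satisfies $P$ if and only if $S\subseteq V$. For the forward implication, if $S$ satisfies $P$ and $p\in S$, then every $m$-border with $m\geq n$ through $p$ overlaps $S$, hence intersects a defect or a short ($\leq 4^n$) edge, so $\{p\}$ satisfies $P$ and $p\in V$. For the reverse implication, if $S\subseteq V$ and an $m$-border $B$ with $m\geq n$ overlaps $S$, choose a lattice point $q\in S$ lying in a border tile of $B$; since $q\in V$ and $B$ is a border with $m\geq n$ through $q$, $B$ intersects a defect or a short edge. (This uses only the reading of ``overlaps'' as ``shares a lattice point'', which I would fix once and for all at the outset.)

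Given the equivalence, the $n$-undomains are exactly the connected components of $V$: every connected region satisfying $P$ lies inside $V$, hence inside a single component of $V$, while each component of $V$ is itself a region satisfying $P$, so the maximal ones are precisely the components. Consequently the union of all $n$-undomains equals $V$, so a connected region ``does not overlap any $n$-undomain'' exactly when it is contained in $R\setminus V$; hence the $n$-domains are exactly the connected components of $R\setminus V$. The lemma now follows from elementary component bookkeeping: $V$ and $R\setminus V$ partition $R$; the components of $V$ (resp.\ of $R\setminus V$) are pairwise disjoint and cover $V$ (resp.\ $R\setminus V$); and no component of $V$ is lattice-adjacent to a different component of $V$, since otherwise their union would be a connected subset of $V$ violating maximality. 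Thus any lattice path between two distinct $n$-domains must leave $R\setminus V$ and pass through $V$, i.e.\ through an $n$-undomain, and symmetrically with the two roles exchanged.

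I do not expect a genuine obstacle; the content is bookkeeping and the care needed is purely definitional. The points to be careful about are: pinning down ``overlaps'', ``through'', and ``maximal connected region'' consistently, as above; the vacuous case, where a point through which no $m$-border with $m\geq n$ passes automatically lies in $V$ and hence in an $n$-undomain, which is consistent and harmless; and the observation that the open-versus-closed convention distinguishing $n$-domains from $n$-undomains bears only on how defects on the dual lattice are attributed to the two kinds of region, and is irrelevant to the partition of the lattice points themselves. With these settled the argument is a few lines.
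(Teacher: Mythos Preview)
Your proof is correct. The paper itself gives no proof of this lemma, stating only that it is ``immediate from the definitions.'' Your argument supplies precisely the unpacking that claim invites: identifying the pointwise set $V$ whose connected components are the $n$\nbd-undomains, and observing that the $n$\nbd-domains are then the components of the complement. This is the natural reading of \cref{def:n-domain}, so there is no meaningful difference in approach---you have simply written out what the paper leaves implicit.
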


We will need to establish some further basic properties.

\begin{lemma}\label{undomain_edge_distance}
  A lattice cell contained in an $n$-undomain has an edge of length $\leq 4^n$ within distance $\leq 4^n$.
\end{lemma}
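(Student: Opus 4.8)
The plan is to turn membership of the cell $c$ in an $n$-undomain into a statement about \emph{every} $n$-border whose frame passes through $c$, and then argue by contradiction via a covering argument. First I would observe that if $c$ lies in an $n$-undomain $\cU$, then every $n$-border whose frame contains $c$ overlaps $\cU$ (they share the cell $c$), so by \cref{def:n-domain} (taking $m=n$) it must intersect a defect of $D$ or an edge of $E$ of length $\le 4^n$. An $n$-border has $\ell_\infty$-diameter $<4^n$ and contains $c$, so any defect or edge it intersects lies within $\ell_\infty$-distance $<4^n$ of $c$; hence if even one $n$-border through $c$ intersects an edge of length $\le 4^n$, that edge already witnesses the lemma. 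So suppose, for contradiction, that no edge of $E$ of length $\le 4^n$ lies within distance $\le 4^n$ of $c$. Then no $n$-border through $c$ can intersect such an edge, so by the above \emph{every} $n$-border through $c$ intersects a defect, and it lies in the set $P$ of defects within distance $<4^n$ of $c$. Moreover any two defects of $P$ are more than $4^n$ apart (the segment joining two defects within distance $\le 4^n$ would be an edge of $E$ of length $\le 4^n$ lying within $<4^n$ of $c$, by convexity of $\ell_\infty$-balls), so covering the box of radius $4^n$ about $c$ by four closed sub-boxes of side $4^n$ gives $|P|\le 4$.

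Next I would set up the parametrisation. Parametrise an $n$-border by its bottom-left cell; writing $L:=4^n-2$, the $n$-borders whose frame contains a given cell $x$ are precisely those whose parameter lies (in the lattice) on $\partial S_x$, where $S_x$ is the axis-aligned $L\times L$ square having $x$ as a corner; and those whose frame contains \emph{both} cells $x_1(d),x_2(d)$ flanking a defect $d$ are parametrised by $A_d:=\partial S_{x_1(d)}\cap\partial S_{x_2(d)}$ (restricted to the lattice). Two observations: (a) since $x_1(d),x_2(d)$ are adjacent, $S_{x_1(d)}$ and $S_{x_2(d)}$ are unit translates of each other, so $A_d$ contains only $2L$ lattice points, whereas $\partial S_c$ contains $4L$; (b) since the defects of $P$ are pairwise more than $4^n$ apart and each $x_1(d)$ lies within $\tfrac12$ of $d$, the squares $S_{x_1(d)}$ (over $d\in P$) are pairwise more than unit distance apart, hence so are the sets $A_d$. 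Now, by what was shown above, the lattice points of $\partial S_c$ are covered by $\bigcup_{d\in P}A_d$; but those lattice points form a cycle with unit steps, so by (b) consecutive ones must lie in the same $A_d$, forcing \emph{all} of them into a single $A_{d_0}$ — contradicting (a). This contradiction proves the lemma.

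The step I expect to need the most care is the parametrisation claim: that "$n$-borders through $c$" traces out the boundary of a square while "$n$-borders through $c$ and through a fixed defect" traces out a strictly smaller set (essentially two short parallel segments), since this is exactly what makes the connectedness argument bite. The remaining ingredients — the $\ell_\infty$-diameter estimate, the bound $|P|\le 4$, the unit-distance separation of the $A_d$, and the lattice-point counts $2L$ versus $4L$ — are elementary bookkeeping, though one must keep track of the $O(1)$ offsets between a defect and its flanking cells; this is also where one checks that the constant in the statement is exactly $4^n$.
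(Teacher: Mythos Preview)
Your proof is correct and follows the same overall line as the paper's: reduce to the case where every $n$-border through $c$ hits a defect, and then show that two of those defects must be within $4^n$ of each other, so that the edge between them witnesses the lemma. The paper simply asserts this last step without justification; you have supplied a careful argument for it via the parametrisation by bottom-left cells and the covering/connectedness argument on $\partial S_c$.

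Two remarks on streamlining. First, the packing bound $|P|\le 4$ is never actually used: once you know the sets $A_d$ are pairwise at distance $>1$ while consecutive lattice points of the cycle $\partial S_c$ are at unit distance, connectedness alone forces all of $\partial S_c$ into a single $A_{d_0}$, and the cardinality comparison $4L>2L$ finishes it. You can drop the bound on $|P|$ entirely. Second, the same idea can be phrased more directly as a sliding argument, avoiding the explicit parametrisation: move the $n$-border one cell at a time around the cycle of positions through $c$. Consecutive borders have combined $\ell_\infty$-diameter $\le 4^n-1$, so if the hit defect ever changes between two consecutive positions, the old and new defects are within $4^n$ of each other and you are done. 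The defect cannot remain constant over the whole cycle, since that would force both cells flanking the defect to lie on \emph{every} $n$-border through $c$, i.e.\ both to equal $c$, which is impossible. This is the content the paper's one-line assertion is gesturing at, and it is exactly what your covering argument encodes.
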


\begin{proof}
  Consider any $n$-border running through the lattice cell.
  Recall that each side of an $n$-border is $4^n-1$ cells long, so the entire $n$-border is within distance $\leq 4^n$ of the lattice cell in question.
  Since the latter is contained in an $n$-undomain, by \cref{def:n-domain} the $n$-border must intersect a defect or an edge of length $\leq 4^n$, which is therefore within distance $\leq 4^n$ of the lattice cell.

  If it intersect an edge, we are done.
  Thus it remains to consider the case in which \emph{every} $n$-border running through the lattice cell intersects a defect.
  But for this to be the case, at least two of those defects must necessarily be within distance $\leq 4^n$ of each other
  thus the edge connecting them fulfils the requirements of the \namecref{undomain_edge_distance}.
\end{proof}

\begin{lemma}\label{domain_containment}
  An $n+1$-domain is contained (not necessarily strictly) within one $n$-domain.
\end{lemma}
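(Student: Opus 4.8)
The plan is to reduce the statement to a monotonicity property of the defining condition in \cref{def:n-domain} and then dualise. Concretely, I would first show that every $n$-undomain is contained in some $(n+1)$-undomain, and then deduce that an $(n+1)$-domain, which by definition overlaps no $(n+1)$-undomain, can overlap no $n$-undomain either, and hence lies inside a single $n$-domain by \cref{n-domain_partition}.

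For the first step, note that the property defining an $n$-undomain is \emph{stronger} than the one defining an $(n+1)$-undomain. If $R\subseteq\Z_2$ is a connected region such that every $m$-border with $m\geq n$ overlapping $R$ intersects a defect or an edge of $E$ of length $\leq 4^n$, then in particular every $m$-border with $m\geq n+1$ overlapping $R$ intersects a defect or an edge of length $\leq 4^n\leq 4^{n+1}$, which is exactly the $(n+1)$-undomain condition. I would also record that this condition is closed under unions of overlapping connected regions (an $m$-border overlapping a union overlaps one of the pieces), so every connected region satisfying it is contained in a unique maximal such region, i.e.\ an $(n+1)$-undomain. Hence any $n$-undomain, being connected and satisfying the $n$-condition, also satisfies the $(n+1)$-condition, and is therefore contained in an $(n+1)$-undomain.

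For the second step, let $\cD$ be an $(n+1)$-domain. If $\cD$ shared a lattice cell $c$ with some $n$-undomain, then by the first step $c$ would lie in an $(n+1)$-undomain, contradicting the fact that $\cD$ overlaps no $(n+1)$-undomain. So $\cD$ meets no $n$-undomain; and since by \cref{n-domain_partition} the $n$-domains and $n$-undomains partition the tiled region with distinct $n$-domains separated by $n$-undomains, the connected set $\cD$ must lie within one $n$-domain. I expect the only delicate point to be the bookkeeping of which cells ``belong to'' an undomain versus a domain — recall that in \cref{def:n-domain} undomains are treated as open and domains as closed — so I would make sure the notion of ``overlap'' used at level $n$ and at level $n+1$ is interpreted consistently (as sharing a lattice cell), so that nothing slips through on the boundaries. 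Apart from this, the argument is purely combinatorial and uses no geometry of the Robinson tiling.
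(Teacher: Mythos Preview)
Your argument is correct and takes a genuinely different route from the paper's own proof. The paper argues pointwise and geometrically: it fixes a lattice cell $p$ in the $(n+1)$-domain, exhibits an $(n+1)$-border through $p$ that avoids all defects and short edges, and then assumes for contradiction that $p$ lies in an $n$-undomain. It then places two $n$-borders sharing a side with this $(n+1)$-border (one inside, one outside) and uses explicit distance estimates---that $n$-borders have side $4^n-1$ and that two defects produced this way are within $4^{n+1}$ of each other---to force the original $(n+1)$-border to hit a defect or a short edge after all.

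Your approach bypasses all of this geometry. You observe directly that the $n$-undomain condition (every $m$-border with $m\geq n$ overlapping the region hits a defect or an edge of length $\leq 4^n$) formally implies the $(n+1)$-undomain condition (the set of $m$ shrinks, the allowed edge length grows), so every $n$-undomain sits inside an $(n+1)$-undomain; then you dualise through \cref{n-domain_partition}. This is shorter and uses nothing about Robinson borders beyond the definition of undomains. The paper's proof, on the other hand, makes the role of the scale $4^n$ concrete and foreshadows the kind of metric reasoning (e.g.\ \cref{undomain_edge_distance}, \cref{adjacent_edge_dist}) used later in the section. Your remark about the open/closed convention for undomains versus domains is well taken but, as you note, harmless here: the containment is at the level of lattice cells, where \cref{n-domain_partition} already gives a genuine partition.
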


\begin{proof}
  It suffices to prove that any lattice cell in the $n+1$-domain is contained in an $n$-domain, since by \cref{def:n-domain} these must then constitute (part of) the same $n$-domain.

  To that end, consider a lattice cell $p$ in the $n+1$-domain.
  By \cref{def:n-domain}, there must exist at least one way that an $n+1$-border can run through $p$ without intersecting any defects or any edges of length $\leq 4^{n+1}$.

  Assume for contradiction that $p$ is \emph{not} contained in an $n$-domain.
  Hence, by \cref{n-domain_partition}, it is contained in an $n$-undomain.
  Thus, by \cref{def:n-domain}, any $n$-border running through $p$ must intersect a defect or an edge of length $\leq 4^n$.
  Consider two such $n$-borders, one within and one outside the $n+1$-border, with a common side running along part of the $n+1$-border.
  If any of the defects or edges intersected by these $n$-borders lie along their common side, then the original $n+1$-border also intersects that defect or edge, contradicting the original condition on the $n+1$-border.
  So the $n$-borders can only intersect defects or edges along sides they do not have in common.
  Thus each of the two $n$-borders must must intersect a \emph{different} defect or edge.

  If the $n$-border in the interior (exterior) of the $n+1$-border intersects a defect along one of the sides not common to both $n$-borders, then there is a defect inside (outside) the $n+1$-border within a distance $\leq 4^n$ of the $n+1$-border.
  Now consider the case that the $n$-border in the interior (exterior) intersects an edge of length $\leq 4^n$ along one of the sides not common to both $r$-borders.
  Since this edge cannot intersect the $n+1$-border, the defects it runs between must be in the interior (exterior) of the $n+1$-domain.
  Thus again we have a defect inside (outside) the $n+1$-border.
  Furthermore, since the edge has length $\leq 4^n$, the defect is at distance $\leq 4^n$ from the $n$-border, thus $\leq 2\cdot 4^n$ from the $n+1$-border.

  Thus, in all cases, we have a pair of defects, one inside and one outside the $n+1$-border, separated by a distance $\leq 2\cdot 2\cdot 4^n = 4^{n+1}$.
  But this implies the edge running between that pair of defects intersects the $n+1$-border and has length $\leq 4^{n+1}$, which is again in contradiction to the original condition on the $n+1$-border.

  Therefore, the only possibility is that $p$ \emph{is} in fact contained in an $n$-domain, as required.
\end{proof}

The following corollary is immediate from \cref{domain_containment}:
\begin{corollary}\label{domain_tree}
  The set of all domains equipped with the set inclusion relation, $(\{\cD^{(n)}_i\}_{n,i}, \subseteq)$, forms a tree, which we refer to as the \keyword{domain tree}.
\end{corollary}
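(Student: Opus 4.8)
The plan is to read off the tree structure directly from \cref{domain_containment} and \cref{n-domain_partition}, without needing any new geometric input. Set inclusion is trivially a partial order on any family of sets, so the only thing to establish is that the domains $\subseteq$-containing any fixed domain form a finite chain (equivalently, that every non-root domain has a well-defined parent, namely the smallest domain properly containing it); a partial order with this property together with a unique $\subseteq$-maximal element is, by definition, a (rooted) tree, with an edge from each non-root domain to its parent.

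First I would record that, for each fixed $n$, distinct $n$-domains are disjoint. This is immediate from \cref{n-domain_partition}: the $n$-domains and $n$-undomains partition the tiled region with $n$-domains separated by $n$-undomains, so two overlapping $n$-domains would, by the maximality clause of \cref{def:n-domain}, be the same region. Now fix an $n$-domain $\cD^{(n)}_i$. Iterating \cref{domain_containment} produces a chain $\cD^{(n)}_i\subseteq\cD^{(n-1)}_{i_{n-1}}\subseteq\dots\subseteq\cD^{(0)}_{i_0}$, where $\cD^{(k)}_{i_k}$ is the unique $k$-domain containing $\cD^{(k+1)}_{i_{k+1}}$. I claim these are precisely the domains containing $\cD^{(n)}_i$. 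Suppose $\cD$ is an $m$-domain with $\cD^{(n)}_i\subseteq\cD$. If $m\le n$, then $\cD^{(m)}_{i_m}$ from the chain above also contains $\cD^{(n)}_i$, so $\cD$ and $\cD^{(m)}_{i_m}$ overlap and, being $m$-domains, must coincide. If $m>n$, then iterating \cref{domain_containment} places $\cD$ inside some $n$-domain, which overlaps $\cD^{(n)}_i$ and hence equals it; thus $\cD\subseteq\cD^{(n)}_i\subseteq\cD$, giving $\cD=\cD^{(n)}_i$.

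Consequently the ancestors of any domain form a finite totally ordered chain, its parent is the next element up in this chain, and the chain terminates at a $\subseteq$-maximal domain, which serves as the root; hence $(\{\cD^{(n)}_i\}_{n,i},\subseteq)$ is a tree. I do not expect a genuine obstacle here: the substantive content is already carried by \cref{domain_containment}. The only points that need a little care are the clean derivation of disjointness of same-level domains from \cref{n-domain_partition}, and ruling out inclusions between domains of incomparable level — both handled by the case analysis above — together with a brief remark identifying the root (the $\subseteq$-maximal, i.e.\ lowest-level, domain).
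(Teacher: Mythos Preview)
Your argument is correct and simply spells out what the paper declares ``immediate from \cref{domain_containment}'': you use \cref{n-domain_partition} to get disjointness of same-level domains and then iterate \cref{domain_containment} to show the domains above any fixed one form a finite chain, which is exactly the intended reasoning. One small caveat: your final sentence asserts a unique $\subseteq$-maximal domain serving as ``the root'', but in general the lowest-level domains need not be a single region, so the structure is a priori a forest rather than a single rooted tree; this does not affect anything the paper later uses (only the branch structure matters), and the paper's own statement is equally loose on this point.
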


\subsubsection{Border deficit}
\begin{definition}[Border deficit]\label{def:border_deficit}
  The \keyword{$n$-border deficit}, $\deficit_n(T,S,R)$, in a region $S$ of a tile configuration $T$ with respect to Robinson tiling $R$, is the (magnitude of the) difference between the total number of complete $n$-borders of $T$ within $S$, and the number of complete $n$-borders of $R$ within $S$.

  The \keyword{total border deficit}, $\deficit(T)$, of a tile configuration $T$ is the difference between the total number of complete borders in $T$ and the number of complete borders in a Robinson tiling of the same region, maximised over Robinson tilings.
\end{definition}

The following are immediate from the definition:

\begin{lemma}\label{n-deficit}
  Let $T$ be a tile configuration of a region $S$.
  Then
  \begin{equation}
    \deficit(T) = \max_R \sum_n \deficit_n(T,S,R).
  \end{equation}
\end{lemma}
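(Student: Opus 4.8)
The plan is to pass to border-count notation and split the equality into sub-additivity (one direction) and a coherence statement about the extremal Robinson tiling (the other). Write $T_n$ for the number of complete $n$-borders of $T$ in $S$ and, for a Robinson tiling $R$ of $S$, write $R_n$ for its number of complete $n$-borders; since $S$ is finite, only finitely many of these are nonzero. In this notation \cref{def:border_deficit} gives $\deficit_n(T,S,R)=|T_n-R_n|$, and the total deficit compares the total border counts $\sum_n T_n$ and $\sum_n R_n$, maximised over $R$.

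For the ``$\le$'' direction I would use the triangle inequality: for every Robinson tiling $R$,
\begin{equation}
  \Bigl|\sum_n T_n-\sum_n R_n\Bigr|=\Bigl|\sum_n (T_n-R_n)\Bigr|\;\le\;\sum_n|T_n-R_n|=\sum_n\deficit_n(T,S,R),
\end{equation}
so maximising the left-hand side over $R$ yields $\deficit(T)\le\max_R\sum_n\deficit_n(T,S,R)$.

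For the reverse direction I would take a Robinson tiling $R^\ast$ achieving $\max_R\sum_n\deficit_n(T,S,R)=\sum_n|T_n-R^\ast_n|$ and show it may be chosen so that the signs of the differences $T_n-R^\ast_n$ agree across all $n$ --- equivalently, that among the Robinson tilings of $S$ there is one dominating $T$ at every scale at once (or one dominated by $T$ at every scale at once). Given this coherence, the absolute values can be stripped uniformly, so $\sum_n|T_n-R^\ast_n|=\bigl|\sum_n(T_n-R^\ast_n)\bigr|=\bigl|\sum_n T_n-\sum_n R^\ast_n\bigr|\le\deficit(T)$, and together with the previous display this proves the claimed equality.

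The coherence claim is where the content lies and is the step I expect to be the obstacle. The idea should be that a defective configuration cannot produce a net surplus of complete borders at any scale: by \cref{no_border_overlap} the complete $n$-borders of any configuration are pairwise disjoint, and the Robinson pattern already realises complete $n$-borders at the maximal density consistent with disjointness and with the nested hierarchy it enforces, so a Robinson tiling of $S$ placed to realise that density dominates $T$ simultaneously at every scale; since $\deficit(T)$ is a maximum over \emph{all} Robinson tilings of $S$, we are free to use this dominating tiling. (If \cref{def:border_deficit} is instead read as already comparing border counts scale-by-scale, the lemma reduces to a direct unpacking of the two definitions and no coherence argument is needed.)
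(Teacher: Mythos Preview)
The paper provides no proof at all: it simply states ``The following are immediate from the definition'' before this lemma and \cref{deficit_union}. So your proposal is already far more careful than the paper's treatment, and your final parenthetical --- that under a scale-by-scale reading of \cref{def:border_deficit} the lemma is a direct unpacking --- is almost certainly what the authors have in mind.

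Your ``$\le$'' direction via the triangle inequality is correct and is in fact the only direction ever used downstream (see the proof of \cref{weak_bound}, where the lemma is invoked and then the right-hand side is bounded from above). The ``$\ge$'' direction, under your strict reading of the definitions, rests on the coherence claim that some Robinson tiling dominates $T$ at every scale simultaneously. You flag this as the obstacle, and rightly so: your sketch does not actually establish it, and the heuristic you offer is shaky. The Robinson hierarchy maximises $n$-border density \emph{subject to the nested structure}, but \cref{def:n-border} imposes no condition on the interior of an $n$-border, so a defective configuration is not bound by that hierarchy. By \cref{no_border_overlap} alone, disjoint $(4^n-1)$-squares could in principle be packed roughly four times more densely than in the Robinson pattern, so there is no a priori reason $T_n\le R_n$ at every scale for any single $R$. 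If some scales show a surplus and others a deficit, the absolute values do not collapse and the equality can fail under the strict reading.

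In short: the paper treats this as definitional bookkeeping; your careful reading exposes that the stated equality is not literally forced by the definitions as written, but only the easy inequality is needed for the rest of the argument.
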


\begin{lemma}\label{deficit_union}
  Let $T$ be a tile configuration, $A$ and $B$ be arbitrary (not necessarily disjoint) subregions.
  Then
  \begin{equation}
    \deficit_n(T,A\cup B,R) \leq \deficit_n(T,A,R) + \deficit(T,B,R).
  \end{equation}
\end{lemma}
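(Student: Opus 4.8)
The plan is to reduce the claim to a purely set-theoretic bookkeeping statement about complete $n$-borders. By \cref{def:border_deficit}, $\deficit_n(T,A\cup B,R)$ is the magnitude of the difference between the number of complete $n$-borders of $T$ lying in $A\cup B$ and the number of complete $n$-borders of $R$ lying in $A\cup B$. Write $t_X$ and $r_X$ for the number of complete $n$-borders of $T$ and of $R$, respectively, that lie entirely within a region $X$; then $\deficit_n(T,X,R)=|t_X-r_X|$. The key observation is that "lying entirely within" is monotone but not additive: a border counted in $t_{A\cup B}$ lies in $A\cup B$ but need not lie in $A$ alone or in $B$ alone, so in general $t_{A}+t_{B}\le t_{A\cup B}$ can fail in both directions once $A$ and $B$ overlap. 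The right framing is: the set of complete $n$-borders of $T$ within $A\cup B$ is contained in the union of (those within $A$) and (those within $B$)? That is false too — a border straddling $A$ and $B$ but contained in neither is the obstruction. So the cleaner route is to argue via the ground tiling $R$ simultaneously.

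First I would fix the Robinson tiling $R$ realising both sides and work with signed counts. Set $\delta_X := t_X - r_X$ (signed). Since $R$ is a genuine Robinson tiling, every complete $n$-border of $R$ within $A\cup B$ that is \emph{not} within $A$ and \emph{not} within $B$ must straddle the boundary between $A$ and $B$; by \cref{no_border_overlap} and the geometry of $n$-borders (each side has fixed length $4^n-1$), such straddling borders of $R$ can be matched against straddling borders of $T$ in the same location, because in a region where $T$ agrees with $R$ the borders coincide, and where $T$ disagrees with $R$ the disagreement is charged to a defect. The point is that the \emph{excess} of $R$-borders over $T$-borders localised to the straddling region is nonnegative, so it can only help the inequality. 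More concretely: let $\calA,\calB$ denote the sets of complete $n$-borders of $T$ within $A$, within $B$ respectively, and $\calA_R,\calB_R$ the analogous sets for $R$. Every complete $n$-border of $T$ within $A\cup B$ is within $A$ or within $B$ \emph{provided} it does not straddle; I would handle straddling borders by noting each such border, being complete, occupies a full $(4^n-1)\times(4^n-1)$ frame, and assign it to whichever of $A$ or $B$ contains its top-left corner, at the cost of a corresponding $R$-border in the same frame (which exists because $R$ is border-dense in the relevant sense, or because if it does not exist the $R$-count side already dominates).

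Honestly, the slickest argument — and the one I would actually write — is: the complete $n$-borders of $R$ within $A\cup B$ split as those within $A$, plus those within $B$ not already within $A$, plus those within neither; call the last quantity $s\ge 0$. Likewise for $T$ with $s'\ge 0$. Then $r_{A\cup B} = r_A + r_{B\setminus A} + s$ and similarly for $t$, but crucially $r_{B\setminus A}\ge r_B - r_{A\cap B}\ge r_B - (\text{borders in } A)$... this is getting delicate. I would instead simply invoke: $t_{A\cup B}\le t_A + t_B + (\text{straddlers of }T)$ and $r_{A\cup B}\ge r_A + r_B - (\text{borders of }R\text{ double-counted in }A\cap B)$, and observe that for the \emph{magnitude}, $|t_{A\cup B}-r_{A\cup B}| \le |t_A - r_A| + |t_B - r_B|$ follows once one checks that double-counted $R$-borders in $A\cap B$ correspond to double-counted $T$-borders (they agree on overlaps away from defects), and straddling $T$-borders correspond to straddling $R$-borders. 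The main obstacle is exactly this: controlling borders that are entirely contained in $A\cup B$ but in neither $A$ nor $B$, and showing the $T$-versus-$R$ discrepancy among them is already nonpositive (so it drops out of the upper bound). I expect this to be a short argument once one notes that such a straddling border, being a complete frame, is tile-by-tile determined, so a $T$-straddler at a given location forces a matching $R$-straddler at that location unless a defect intervenes — and if a defect intervenes the border is not complete, contradiction. With that in hand the signed counts add sub-additively and the triangle inequality $|\delta_{A\cup B}|\le|\delta_A|+|\delta_B|$ closes the lemma; the asymmetry in the statement (the bare $\deficit$ on the $B$ term rather than $\deficit_n$) is then simply the observation that $\deficit_n(T,B,R)\le \deficit(T)$ and, more to the point, that one may freely weaken $\deficit_n(T,B,R)$ to $\deficit(T,B,R)$ on the right since the latter is a maximum over tilings and hence no smaller.
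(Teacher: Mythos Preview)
The paper does not actually prove this lemma: it is declared ``immediate from the definition'' in the sentence preceding \cref{n-deficit} and left at that. Your attempt is far more elaborate, and your instinct that something is delicate is correct --- the straddling-border issue you keep circling back to is not dispatched by the triangle inequality alone.

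Your proposed resolution, however, does not work. The central claim that ``a $T$-straddler at a given location forces a matching $R$-straddler at that location unless a defect intervenes'' is false. $R$ is a \emph{fixed} Robinson tiling whose $n$-borders sit at fixed periodic locations; $T$ is an arbitrary tile configuration whose complete $n$-borders can sit elsewhere entirely (e.g.\ take $T$ to be a shifted Robinson tiling). Nothing forces the borders of $T$ and $R$ to occupy the same frames, and defects are a feature of $T$ alone --- not of any relationship between $T$ and $R$ --- so the ``defect intervenes'' clause is irrelevant. The converse matching you actually need (that every straddling $R$-border is matched by a straddling $T$-border) fails for the same reason.

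Indeed, the obstruction you identified is fatal to the lemma as literally stated. Take $A,B$ disjoint so that exactly one $n$-border of $R$ lies in $A\cup B$ and it straddles both regions; let $T$ agree with $R$ except on a single tile of that border. Then $r_{A\cup B}=1$, $t_{A\cup B}=0$, while $r_A=t_A=r_B=t_B=0$, giving left side $1$ and right side $0$. So the paper's ``immediate'' glosses over a genuine gap. In the only place the lemma is invoked (the proof of \cref{weak_bound}) the regions are the $n$-domains and $n$-undomains, and there the straddling borders are separately absorbed into the $N_D$, $N_E$, and $N_B$ counts of \cref{undomain_count} and \cref{domain_count}, so the downstream argument survives even though the general inequality does not. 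Your closing remark about weakening $\deficit_n(T,B,R)$ to $\deficit(T,B,R)$ is beside the point: the $R$-dependent $\deficit(T,B,R)$ in the statement is almost certainly a typo for $\deficit_n(T,B,R)$, not the total deficit.
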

Note that this inequality may be strict even when the regions $A$ and $B$ are disjoint and $A\cup B$ is simply connected, since borders in a Robinson tiling of $A\cup B$ that straddle the boundary between $A$ and $B$ may contribute to the deficits on the right hand side, without contributing to the deficit on the left hand side.

We will make use of the following notation.
Let $T$ be a tile configuration, $D$ a defect set, $E$ a set of edges in a defect graph, $R$ a Robinson tiling, and $S\in\Z^2$ a region of the lattice.
$N_D(n,S,R)$ denotes the number of (complete or partial) $n$-borders of $R$ in $S$ that intersect a defect in $D$ when extended to a complete border.
$N_E(n,S,R)$ denotes the number of (complete or partial) $n$-borders of $R$ in $S$ that (when extended to a complete border) intersect an edge in $E$ of length $\leq 4^n$ , but do \emph{not} intersect a defect in $D$.
When we are considering a single defect $d$ or a single edge $e$, we will write $N_d$ (respectively $N_e$) instead of $N_{\{d\}}$ (respectively $N_{\{e\}}$).
$N_B(n,S,R)$ denotes the number of partial $n$-borders of $R$ that intersect the boundary of the region $S$, but do not intersect a defect in $D$.
$N_\partial(n,S,R)$ denotes the number of partial $n$-borders of $R$ in $S$ that intersect the boundary of the entire tile configuration $T$, but do not intersect a defect in $D$.

\begin{lemma}[Undomain border deficit]\label{undomain_count}
  Let $T$ be a tile configuration, $D$ its defect set, and $G=(D,E)$ its defect graph.
  Consider an $n$-undomain $\cU$ of $T$.
  The $n$-border deficit of $\cU$ with respect to Robinson tiling $R$ is bounded by
  \begin{equation}
    \deficit_n(T,\cU,R) \leq N_D(n,\cU,R) + N_E(n,\cU,R).
  \end{equation}
\end{lemma}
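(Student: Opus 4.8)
The deficit $\deficit_n(T,\cU,R)$ is by definition the magnitude $\lvert a-b\rvert$, where $a$ (respectively $b$) is the number of complete $n$-borders of $T$ (respectively of $R$) lying entirely inside $\cU$. The plan is therefore to bound each of $a$ and $b$ separately by $N_D(n,\cU,R)+N_E(n,\cU,R)$, after which the claim follows at once. I would reduce as much as possible to the defining property of undomains in \cref{def:n-domain}, using \cref{no_border_overlap} only where a surplus of $T$-borders must be excluded.

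The bound on $b$ is the heart of the argument and is essentially immediate. A complete $n$-border of $R$ contained in $\cU$ is in particular an $m$-border with $m=n\ge n$ that overlaps $\cU$, so by \cref{def:n-domain} it must either intersect a defect of $D$ or intersect an edge of $E$ of length $\le 4^n$. Partitioning these $R$-borders accordingly, the ones intersecting a defect are among the (complete or partial) $n$-borders of $R$ counted by $N_D(n,\cU,R)$, and each remaining one intersects a short edge while not intersecting a defect, hence is counted by $N_E(n,\cU,R)$. Thus $b\le N_D(n,\cU,R)+N_E(n,\cU,R)$, and since $a\ge 0$ this already delivers the ``missing borders'' half of the statement, $b-a\le b$, which is the regime that matters in the downstream global count.

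For the complementary bound $a\le N_D(n,\cU,R)+N_E(n,\cU,R)$ the starting point is the same: a complete $n$-border of $T$ inside $\cU$ is correctly tiled along its whole ring, so no defect of $D$ sits between two of its cells; being an $n$-border overlapping $\cU$, \cref{def:n-domain} forces it to intersect an edge $e\in E$ with $\ell_\infty$-length $\le 4^n$, whose two endpoints are defects of $D$ lying within $O(4^n)$ of the border. I would then charge each such $T$-border to a complete-or-partial $n$-border of $R$ that intersects one of those two defects (one exists by the regularity of the Robinson pattern, which places an $n$-border within $O(4^n)$ of every lattice point, and it lies at least partly in $\cU$), so that it is counted by $N_D$. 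By \cref{no_border_overlap} distinct $T$-borders are disjoint, and a segment of length $\le 4^n$ meets only boundedly many pairwise-disjoint $n$-borders, so this charging is $O(1)$-to-one.

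The easy and conceptually central part is the bound on $b$; the delicate part, and the main obstacle, is upgrading the charging argument for $a$ from $O(1)$-to-one to genuine injectivity, so as to obtain the exact unit coefficients on $N_D$ and $N_E$ rather than a constant multiple. This requires tracking the Robinson rigidity near a pair of defects of separation $\le 4^n$ carefully enough to pin down a \emph{distinct} $R$-border for each surplus $T$-border; alternatively one observes that in the application the surplus regime $a>b$ does not occur, so that only the bound on $b$ is actually needed.
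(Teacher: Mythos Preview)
Your bound on $b$ is exactly the paper's proof, in its entirety. The paper writes: ``By \cref{def:border_deficit} the $n$-border deficit cannot be greater than the number of $n$-borders in $R$. (At most you can lose all the borders.) By \cref{def:n-domain}, any $n$-border in $\cU$ must either intersect a defect, or intersect an edge of length $\leq 4^n$. The bound follows.'' That is precisely your second paragraph.

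Where you diverge is in taking the phrase ``magnitude of the difference'' in \cref{def:border_deficit} literally and therefore trying to bound $a$ as well. The paper does not do this: the parenthetical ``at most you can lose all the borders'' signals that, its formal definition notwithstanding, the deficit is being used throughout as the shortfall $\max(b-a,0)$, and the $a>b$ regime is simply not addressed. Your closing observation that ``in the application the surplus regime $a>b$ does not occur, so that only the bound on $b$ is actually needed'' is exactly the paper's implicit stance. So your third paragraph is extra work the paper never undertakes, and your honest admission that the charging there is only $O(1)$-to-one rather than injective is not a gap in the proof of this lemma as the paper intends it; it is a gap in a stronger statement the paper never claims.

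In short: the essential content of your proposal matches the paper, and you have been more scrupulous about the definition than the authors were.
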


\begin{proof}
  By \cref{def:border_deficit} of border deficit, the $n$-border deficit cannot be greater than the number of $n$-borders in $R$.
  (At most you can lose all the borders.)
  By \cref{def:n-domain} of $n$-undomains, any $n$-border in $\cU$ must either intersect a defect, or intersect an edge of length $\leq 4^n$.
  The bound follows.
\end{proof}

\begin{proposition}\label{domain_tiling}
  Let $T$ be a tile configuration, $D$ its defect set. 
  Within any $n$-domain $\cD$, $T$ contains the same periodic pattern of $m$-borders for all $m\leq n$ as a Robinson tiling of $\cD$, except where an $m$-border would intersect a point in $D$.
\end{proposition}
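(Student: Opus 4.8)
The plan is to prove this by induction on $n$, peeling off one scale at a time down the domain tree of \cref{domain_tree}, with Robinson's hierarchical forcing of the nested squares as the engine. The inductive hypothesis is that within any $(n-1)$-domain, $T$ already reproduces the canonical Robinson pattern of $m$-borders for every $m\leq n-1$, except at positions where such a border would pass through a point of the defect set $D$. By \cref{domain_containment}, an $n$-domain $\cD$ lies inside a single $(n-1)$-domain, so the hypothesis hands us the correct $m$-border pattern for all $m\leq n-1$ throughout $\cD$ for free; the only genuinely new work in the inductive step is to pin down the $n$-borders inside $\cD$. The base case ($n$ equal to the smallest border index) is the specialisation of the general argument to the regime in which the forcing radius is a constant: the smallest red borders are forced by purely local Robinson rules, so such a border appears in $T$ exactly when a bounded neighbourhood of it is defect-free, and one checks that any disrupting defect in that neighbourhood is witnessed either by the border itself running across a defect or by two defects joined by a short edge that the border crosses --- precisely the ``except where $D$ intersects'' clause.

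The inductive step divides into ruling out \emph{spurious} $n$-borders and ruling out \emph{missing} ones. A complete $n$-border of $T$ that meets $\cD$ forces, via the tiling rules, a definite scale-$(n-1)$ pattern in a neighbourhood of itself; that neighbourhood lies in the surrounding $(n-1)$-domain, where the inductive hypothesis fixes the scale-$(n-1)$ pattern, so the $n$-border can only occupy a canonical Robinson position --- otherwise the two patterns conflict, and since the region around the would-be border is forced to be defect-free this contradicts $\cD$ being an $n$-domain. Conversely, if the canonical Robinson tiling of $\cD$ places an $n$-border $b$ that does not intersect $D$ but $T$ lacks the corresponding complete border, then some tile on $b$ disagrees with its Robinson value, so there is a defect on $b$ or within the bounded-scale neighbourhood whose scale-$(n-1)$ structure forces $b$. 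Invoking \cref{def:n-domain} --- which ensures every point of $\cD$ admits a clean $m$-border for every $m\geq n$ --- together with \cref{undomain_edge_distance} to bound the relevant distances, I would argue this disrupting defect must in fact lie on $b$ (or be paired by a short edge crossing $b$), contradicting the assumption. A short final argument using \cref{no_border_overlap} and the already-established scale-$(n-1)$ correctness handles partial $n$-borders crossing the boundary of $\cD$ and fixes the global phase of the scale-$n$ pattern consistently across the connected region $\cD$.

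The main obstacle I expect is exactly this locality/propagation point: converting the geometric definition of $n$-(un)domains --- stated in terms of which border \emph{positions} cross defects or short edges --- into a genuine statement that Robinson's forcing cascade cannot be broken ``at a distance'' inside $\cD$. Concretely, one must determine the radius over which placing a level-$n$ supertile is forced by its level-$(n-1)$ substructure, show that radius is $\leq 4^n$, and thereby check that every defect capable of destroying a would-be $n$-border $b\subseteq\cD$ either lies on $b$ or is joined to another defect by an edge of length $\leq 4^n$ crossing $b$, so that it is absorbed by the thresholds built into \cref{def:n-domain}. Extracting that forcing radius from Robinson's construction and reconciling it cleanly with the domain definitions is where the real content lies; the inductive scaffolding around it is routine given \cref{domain_containment,n-domain_partition}.
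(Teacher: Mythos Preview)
Your inductive scheme via the domain tree is a workable route, but it is substantially more elaborate than what the paper does and it misses the one observation that collapses the problem. The paper's proof is essentially two steps. First, any two defects strictly inside an $n$-domain are separated by distance $>4^n$: if not, the edge joining them has length $\leq 4^n$, and the lattice cells it crosses form part of an $n$-undomain by \cref{def:n-domain}, contradicting their membership in $\cD$. Second, with defects this sparse, the paper simply invokes the Robinson rigidity argument already established as \cite[Lemma~47]{Cubitt_PG_Wolf_Undecidability}, which says the tiling rules force $m$-borders exactly as in a Robinson tiling except where (i)~a central cross is not connected to the others by defect-free horizontal/vertical paths, (ii)~an arm meets a defect, or (iii)~a central cross fails to be forced because of a defect between it and the surrounding border. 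Separation $>4^n \geq 4^m$ disposes of (i) for all $m\leq n$ outright, and (ii)--(iii) are precisely the ``except where an $m$-border intersects $D$'' clause in the statement.

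So the ``forcing radius'' analysis you flag as the main obstacle is not carried out inductively inside the proof; it is outsourced wholesale to the cited external lemma, and the domain machinery is used only to manufacture the defect-separation hypothesis that lemma needs. Your induction on $n$ would ultimately work, but the content of the inductive step is a re-derivation of that same lemma one level at a time. Two smaller points: your reading of \cref{def:n-domain} is slightly off --- a cell in an $n$-domain is only guaranteed \emph{some} clean $m$-border with $m\geq n$, not a clean one for \emph{every} such $m$ --- and your appeal to \cref{undomain_edge_distance} is in the wrong direction, since that lemma concerns cells in undomains, whereas you want a property of cells in $\cD$. Neither is fatal, but both would need tightening; the defect-separation observation is what you are really missing.
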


\begin{proof}
  It follows from \cref{def:n-domain} that any points of $D$ contained strictly within the $n$-domain must be separated by distance $>4^n$; if two defects are separated by less than this, the lattice cells through which the edge between them passes form (part of) an $n$-undomain.

  The argument in the proof of \cite[Lemma~47]{Cubitt_PG_Wolf_Undecidability} shows that, for each $m\leq n$, the Robinson tiles force $m$-borders to form exactly as in a Robinson tiling, except where:
  \begin{itemize}
  \item\label[condition]{domain_tiling:free-path}%
    there is no defect-free vertical and/or horizontal path connecting a central cross of an $m$-border to all the other $m$-border central crosses;
  \item%
    an arm meets a defect; or
  \item%
    defects prevent the cross at the centre of a (complete or partial) $m$-border being forced.
  \end{itemize}

  Since defects in the $n$-domain must by definition be separated by $>4^n$, which is greater than the size of an $m$-border for all $m\leq n$, all $m$-border central crosses are connected to each other by vertical or horizontal paths.
  So the first condition is always satisfied within an $n$-domain.

  If an arm meets a defect, then the corresponding Robinson tiling border intersects that defect, matching the condition in the \namecref{domain_tiling}.

  The central cross of an $m$-border is forced unless there is at least one defect between the surrounding $m$-border and its centre (see proof of \cite[Lemma~47]{Cubitt_PG_Wolf_Undecidability}).
  But in that case, the corresponding Robinson tiling $m+1$-border with that central cross at its corner intersects the defect, matching the condition in the \namecref{domain_tiling}.
\end{proof}


\begin{definition}[Robinson-compatible set]\label{def:Robinson_compatible}
  Let $R^{(n)}$ denote a tile configuration of an $n$-domain $\cD^{(n)}$ whose $m$-borders for all $m\leq n$ are in the same location as in a Robinson tiling of $\cD^{(n)}_i$.
  We say that a set of such tilings $\{R^{(n)}_i\}$ is \keyword{Robinson-compatible} if, for all $\cD^{(m)}_j\subseteq\cD^{(n)}_i$, the locations of the $l$-borders in $R^{(m)}_j$ and $R^{(n)}_i$ coincide for all $l\leq m$.
\end{definition}

\begin{corollary}\label{domain_count1}
  Let $T$ be a tile configuration, $D$ its defect set, $G=(D,E)$ its defect graph, and $\{\cD^{(n)}_i\}$ the set of all its domains.
  Let $R$ be a Robinson tiling.
  Then
  \begin{equation}
    \sum_{n,i}\deficit_n(T,\cD^{(n)}_i,R)
    \leq \max_{\{R^{(n)}_i\}} \sum_{n,i}\Bigl( N_D(n,\cD^{(n)}_i,R^{(n)}_i) + N_B(n,\cD^{(n)}_i,R^{(n)}_i) \Bigr),
  \end{equation}
  where the maximisation is over all Robinson-compatible sets $\{R^{(n)}_i\}$.
\end{corollary}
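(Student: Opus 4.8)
The plan is to decompose the global $n$-border deficit over domains and undomains, then push the undomain contribution onto the boundaries of neighbouring domains, where it becomes a $N_B$ term. First I would apply \cref{domain_tiling} to each $n$-domain $\cD^{(n)}_i$: it tells us that $T$ restricted to $\cD^{(n)}_i$ contains exactly the Robinson pattern of $m$-borders for all $m\leq n$, except where an $m$-border would hit a defect in $D$. Hence the only $n$-borders of a Robinson tiling of $\cD^{(n)}_i$ that fail to appear in $T$ are those that (i) intersect a defect, or (ii) are only partial within $\cD^{(n)}_i$ — i.e.\ they are cut off by the domain boundary. The first type is counted by $N_D(n,\cD^{(n)}_i,R^{(n)}_i)$ and the second by $N_B(n,\cD^{(n)}_i,R^{(n)}_i)$, giving $\deficit_n(T,\cD^{(n)}_i,R^{(n)}_i)\leq N_D(n,\cD^{(n)}_i,R^{(n)}_i)+N_B(n,\cD^{(n)}_i,R^{(n)}_i)$ locally. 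The point of stating the bound with a \emph{Robinson-compatible} family $\{R^{(n)}_i\}$, rather than the single fixed $R$, is that the patterns forced inside nested domains by \cref{domain_tiling} are automatically mutually consistent (the $l$-borders for $l\leq m$ coincide on $\cD^{(m)}_j\subseteq\cD^{(n)}_i$), so such a compatible family always exists; replacing $R$ by it can only help the maximisation.

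The substantive step is to pass from $\sum_{n,i}\deficit_n(T,\cD^{(n)}_i,R)$, where the deficit is measured against the \emph{global} reference tiling $R$, to the sum of local deficits measured against the compatible family. Here I would use \cref{deficit_union} together with the partition \cref{n-domain_partition}: the $n$-domains and $n$-undomains tile the region, so $\deficit_n(T,\text{whole},R)\leq\sum_i\deficit_n(T,\cD^{(n)}_i,R)+\sum_j\deficit_n(T,\cU^{(n)}_j,R)$. But \cref{domain_count1} as stated only bounds $\sum_{n,i}\deficit_n(T,\cD^{(n)}_i,R)$ — the sum over domains — so the undomain terms are not needed here; they are handled separately in the sequel. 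Thus it suffices to bound each $\deficit_n(T,\cD^{(n)}_i,R)$ by $\max_{R'}\bigl(N_D(n,\cD^{(n)}_i,R')+N_B(n,\cD^{(n)}_i,R')\bigr)$ over Robinson tilings $R'$ of $\cD^{(n)}_i$, and then observe that maximising each term independently over $R'$ is dominated by maximising over a single Robinson-compatible family $\{R^{(n)}_i\}$ — this last point is exactly where compatibility buys us the right to pull the maximisation outside the sum while keeping the nested patterns coherent (which matters for the downstream accounting, even if not strictly for this inequality alone).

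The main obstacle I anticipate is the bookkeeping around partial borders straddling domain boundaries, and in particular making sure nothing is double-counted or under-counted when an $m$-border of $R$ lies partly in a domain and partly in an adjoining undomain. The definition of $N_B$ (partial $n$-borders intersecting the boundary of the region, not intersecting a defect) is tailored to absorb precisely these, and the remark after \cref{deficit_union} flags that the union bound can be strict exactly because straddling borders get counted on the right but not the left — which is harmless, since we only need an upper bound. The other point requiring care is verifying that the family realising the per-domain maxima can be chosen Robinson-compatible: since \cref{domain_containment} gives the domain tree, one can choose $R^{(n)}_i$ top-down along the tree (from large $n$ to small, or via \cref{domain_tree}), at each step extending the already-fixed pattern on the parent to the child — the forced-pattern content of \cref{domain_tiling} guarantees no conflict arises. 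Once those two points are nailed down, the inequality follows by summing the local bounds over all $n,i$ and pulling out the (now joint) maximisation.
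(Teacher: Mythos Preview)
Your local bound via \cref{domain_tiling} is right, and you correctly identify that the passage from the fixed reference tiling $R$ to the local family $\{R^{(n)}_i\}$ is the substantive step. But your compatibility argument has a genuine gap.

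You write that ``maximising each term independently over $R'$ is dominated by maximising over a single Robinson-compatible family''. This inequality points the wrong way: a constrained maximum (over compatible families) is at most the sum of unconstrained maxima, not at least. So the chain
\[
\sum_{n,i}\deficit_n(T,\cD^{(n)}_i,R)\;\leq\;\sum_{n,i}\max_{R'}\bigl(N_D+N_B\bigr)\;\leq\;\max_{\{R^{(n)}_i\}\text{ compatible}}\sum_{n,i}\bigl(N_D+N_B\bigr)
\]
fails at the second step. Your proposed fix --- choose maximisers top-down along the domain tree --- does not rescue this: once the parent's $R'$ is fixed, the child's choice is constrained and need not be its individual maximiser.

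The paper avoids this by never passing through the independent maxima. Instead, it uses \cref{domain_tiling} to observe that within each $n$-domain, $T$ itself has the $m$-border pattern of some \emph{specific} Robinson tiling $R_T$; the deficit against the fixed $R$ is then bounded by $N_D+N_B$ computed in \emph{that} $R_T$ (using the elementary fact that the number of complete $n$-borders of $R$ in a region is at most the number of complete-or-partial $n$-borders of any other Robinson tiling $R'$ of the same region). Because nested domains contain the same tile configuration $T$, the same $R_T$ works simultaneously for all of them --- this is what produces a compatible family automatically, not a top-down choice of maximisers. The max over compatible families then trivially upper-bounds the value achieved by this particular family. You have all the ingredients; the missing move is to bound by the \emph{specific} tiling matching $T$ rather than by the per-domain maximiser.
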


\begin{proof}
  By \cref{domain_tiling}, the number of $n$-borders in an $n$-domain $\cD^{(n)}_i$ is the same as that in some Robinson tiling $R'$ of the same region, except where an $n$-border of $R'$ would intersect a defect or the boundary of $\cD^{(n)}_i$.
  Thus
  \begin{multline}\label{eq:T_borders}
    \text{\# complete $n$-borders of $T$ in $\cD^{(n)}_i$}\\
    \geq \min_{R'}\Bigl(\text{\# complete or partial $n$-borders of $R'$ in $\cD^{(n)}_i$ that do not}\\[-.5em]
                        \text{intersect any $d\in D$ or the boundary of $\cD^{(n)}_i$}\Bigr).\mspace{80mu}
  \end{multline}

  Meanwhile, the number of complete $n$-borders in a given Robinson tiling $R$ of a region is upper bounded by the maximum number of complete \emph{and} partial $n$-borders in any other Robinson tiling $R'$ of the same region:
  \begin{multline}\label{eq:R_borders}
    \forall\text{ Robinson tilings } R,R':\\
    \text{\# complete $n$-borders of $R$ in $\cD^{(n)}_i$}\mspace{150mu}\\
    \leq \text{\# complete and partial $n$-borders of $R'$ in $\cD^{(n)}_i$}.
  \end{multline}

  Thus, by \cref{def:border_deficit}, \cref{eq:T_borders,eq:R_borders} (where all quantities concern only the region $\cD^{(n)}_i$, which we drop from the expressions for brevity):
  \begin{align}
    \deficit_n&(T,\cD^{(n)}_i,R)\notag\\
    &= \Bigl(\text{\# complete $n$-borders of $R$}\Bigr) - \Bigl(\text{\# complete $n$-borders of $T$}\Bigr)\\
    \begin{split}
      &\leq \max_{R'}\Bigl( \text{\# complete and partial $n$-borders of $R'$}\\
      &\mspace{80mu}       - \text{\# complete or partial $n$-borders of $R'$ that do not}\\[-.5em]
      &\mspace{125mu}        \text{intersect any $d\in D$ or the boundary} \Bigr)
    \end{split}\raisetag{3.5em}\\
    \begin{split}
      &\leq \max_{R'}\Bigl(\text{\# complete or partial $n$-borders of $R'$}\\[-.5em]
      &\mspace{90mu} \text{that intersect defects or boundaries in $\cD^{(n)}_i$}\Bigr)
    \end{split}\raisetag{3.3em}\\
    &= \max_{R'} \Bigl( N_D(n,\cD^{(n)}_i,R') + N_B(n,\cD^{(n)}_i,R') \Bigr).
  \end{align}

  Now, any $m$-domain such that $\cD^{(m)}_j\subseteq\cD^{(n)}_i$ contains (a portion of) exactly the same tile configuration $T$ as $\cD^{(n)}_i$.
  So by \cref{domain_tiling} the $m$-border deficit of $\cD^{(m)}_j$ is bounded by the number of $m$-borders of the same Robinson tiling $R'$ that intersect defects or boundaries of $\cD^{(m)}_j$.
  Thus
  \begin{equation}
    \begin{split}
      \deficit_n&(T,\cD^{(n)}_i,R) + \deficit_m(T,\cD^{(m)}_i,R)\\
      &\leq \max_{R'} \Bigl( N_D(n,\cD^{(n)}_i,R') + N_B(n,\cD^{(n)}_i,R') \\
      &\mspace{100mu} +  N_D(m,\cD^{(m)}_j,R') + N_B(m,\cD^{(m)}_j,R') \Bigr).
    \end{split}
  \end{equation}

  Let $R^{(n)}_i$ be a tile configuration of $\cD^{(n)}_i$ that contains $l$-borders for all $l\leq n$ in the same locations as $R'$.
  Since $N_{D/B}(n,\cD^{(n)}_i,R')$ only count $n$-borders, we have
  \begin{equation}
    N_{D/B}(n,\cD^{(n)}_i,R') = N_{D/B}(n,\cD^{(n)}_i,R^{(n)}_i),
  \end{equation}
  and similarly for $R^{(m)}_j$ and $\cD^{(m)}_j$.
  Since $R^{(n)}_i$ and $R^{(m)}_j$ both have $l$-borders in the same locations as $R'$ for all $l\leq n$, they form a Robinson-compatible set by \cref{def:Robinson_compatible}.
  Thus,
  \begin{equation}
    \begin{split}
      \deficit_n&(T,\cD^{(n)}_i,R) + \deficit_m(T,\cD^{(m)}_i,R)\\
      &\leq \max_{\{R^{(n)}_i,R^{(m)}_j\}} \Bigl( N_D(n,\cD^{(n)}_i,R^{(n)}_i) + N_B(n,\cD^{(n)}_i,R^{(n)}_i) \\
      &\mspace{50mu} +  N_D(m,\cD^{(m)}_j,R^{(m)}_j) + N_B(m,\cD^{(m)}_j,R^{(m)}_j) \Bigr),
    \end{split}
  \end{equation}
  where the maximisation is over all Robinson-compatible $R^{(n)}_i$ and $R^{(m)}_j$.

  Applying this to all $\cD^{(n)}_i$ gives the claimed result.
\end{proof}

\begin{lemma}\label{boundary-borders}
  Let $D$ be a defect set, $G=(D,E)$ its defect graph, and $\cD$ an $n$\nbd-domain.
  Any partial $n$-border that intersects the boundary of $\cD$, when extended to a full border, either intersects the boundary of the overall region being tiled, intersects a point in $D$, or intersects an edge in $E$ of length $\leq 4^n$.
\end{lemma}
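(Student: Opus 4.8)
The plan is a short case analysis, using only \cref{n-domain_partition} and the definition of $n$-undomain in \cref{def:n-domain}. Write $\bar b$ for the full $n$-border extending the given partial one; it is a connected ring of lattice cells. Since the given border meets $\partial\cD$ and $n$-domains are closed, $\bar b$ contains a cell of $\cD$; and since the given border is only partial, $\bar b$ also contains a cell lying outside $\cD$. Either some cell of $\bar b$ lies outside the entire region being tiled, or not, and these two possibilities will give the three stated alternatives.

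In the first case $\bar b$ crosses the boundary of the tiled region, so $\bar b$ intersects that boundary and we are done. In the second case every cell of $\bar b$ lies in the tiled region. Travelling along the ring $\bar b$ from a cell in $\cD$ towards a cell outside $\cD$, let $c$ be the first cell encountered that is not in $\cD$; then $c$ lies in the tiled region and is adjacent to $\cD$, so by \cref{n-domain_partition} (distinct $n$-domains are separated by $n$-undomains, hence $c$ cannot belong to another $n$-domain) the cell $c$ lies in some $n$-undomain $\cU$. Thus $\bar b$ overlaps $\cU$, and since $\bar b$ is an $n$-border, the defining property of $\cU$ in \cref{def:n-domain}, applied with $m=n$, forces $\bar b$ to intersect a defect in $D$ or to intersect an edge in $E$ of length $\leq 4^n$. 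Combining the two cases yields one of the three stated alternatives.

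This argument is routine and I do not expect a genuine obstacle; the only point requiring attention is the bookkeeping at the domain--undomain interface. One uses that $n$-domains are closed, so that a border merely touching $\partial\cD$ still counts as meeting $\cD$; and one uses that the $n$-domains and $n$-undomains partition the tiled region with domains separated by undomains, so that the first cell by which $\bar b$ leaves $\cD$ while remaining in the tiled region is forced to lie in an $n$-undomain rather than in another domain. Once $\bar b$ is known to overlap an $n$-undomain, \cref{def:n-domain} delivers the last two alternatives immediately.
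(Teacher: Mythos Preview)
Your proposal is correct and follows essentially the same approach as the paper's proof: both observe that the extended border must leave $\cD$, hence (by \cref{n-domain_partition}) either exit the tiled region or enter an $n$-undomain, and in the latter case \cref{def:n-domain} immediately yields the remaining two alternatives. Your version is simply a more explicit unpacking of the two-line argument in the paper.
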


\begin{proof}
  By \cref{def:n-domain}, $n$-domains are bounded by $n$-undomains.
  Thus any partial border that intersects the boundary of $\cD$ must either meet an $n$-undomain, or the boundary of the region being tiled.
  By \cref{def:n-domain}, any partial $n$-border that intersects an $n$-undomain satisfies one of the claims of the \namecref{boundary-borders}.
\end{proof}

Using \cref{boundary-borders}, we can reformulate \cref{domain_count1}.
\begin{corollary}[Domain border deficit]\label{domain_count}
  Let $T$ be a tile configuration, $D$ its defect set, $G=(D,E)$ its defect graph, and $\{\cD^{(n)}_i\}$ the set of all its domains.
  Let $R$ be a Robinson tiling.
  Then
  \begin{equation}
    \begin{split}
      \max_R\sum_{n,i}&\deficit_n(T,\cD^{(n)}_i,R)\\
      &\leq \max_{\{R^{(n)}_i\}} \sum_{n,i}\Bigl( N_D(n,\cD^{(n)}_i,R^{(n)}_i) + N_E(n,\cD^{(n)}_i,R^{(n)}_i) + N_\partial(n,\cD^{(n)}_i,R^{(n)}_i) \Bigr),
    \end{split}\raisetag{4em}
  \end{equation}
  where the maximisation is over all Robinson-compatible sets $\{R^{(n)}_i\}$.
\end{corollary}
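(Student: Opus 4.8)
The plan is to derive \cref{domain_count} as a bookkeeping reformulation of \cref{domain_count1}: the only substantive change is that the boundary term $N_B$, counting partial $n$-borders that meet the boundary of an $n$-domain, is replaced by the pair of quantities $N_E$ and $N_\partial$ that will be easier to control downstream. First I would note that \cref{domain_count1} holds for \emph{every} Robinson tiling $R$, while its right-hand side is a maximum over Robinson-compatible sets $\{R^{(n)}_i\}$ and does not depend on $R$ at all. Hence taking $\max_R$ on the left costs nothing and already yields the left-hand side $\max_R\sum_{n,i}\deficit_n(T,\cD^{(n)}_i,R)$ of \cref{domain_count}. What then remains is to dominate $\sum_{n,i}N_B(n,\cD^{(n)}_i,R^{(n)}_i)$ term by term.

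The key step is to push each partial $n$-border counted by $N_B(n,\cD^{(n)}_i,R^{(n)}_i)$ through \cref{boundary-borders}. By definition such a border meets the boundary of the $n$-domain $\cD^{(n)}_i$ and meets no point of the defect set $D$, so \cref{boundary-borders} forces it, once extended to a full border, into one of three alternatives, of which the ``intersects a defect'' alternative is ruled out by the no-defect clause. Thus the border either meets the boundary of the whole tiled region --- exactly the borders $N_\partial(n,\cD^{(n)}_i,R^{(n)}_i)$ counts --- or meets an edge of $E$ of length $\le 4^n$ --- exactly the borders $N_E(n,\cD^{(n)}_i,R^{(n)}_i)$ counts; in both alternatives the no-defect condition carries over unchanged, since it is shared by the definitions of $N_B$, $N_E$ and $N_\partial$. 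So the set of borders counted by $N_B$ is contained in the union of those counted by $N_E$ and by $N_\partial$, giving $N_B \le N_E + N_\partial$ for each fixed $n$ and $i$ (a border satisfying both alternatives is harmlessly double-counted on the right). Summing over all domains, carrying the maximisation over Robinson-compatible sets through the sum, and chaining with the first paragraph gives the stated bound.

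The only real thing to get right --- and it is minor --- is that every $N_\bullet$ here is evaluated against the Robinson-compatible configurations $R^{(n)}_i$ rather than the global tiling $R$, so one has to verify that \cref{boundary-borders}, which is stated abstractly in terms of a defect set, its defect graph, and an $n$-domain, applies verbatim to the $n$-borders occurring in $R^{(n)}_i$. This is immediate from \cref{def:Robinson_compatible}: the $m$-borders of $R^{(n)}_i$ for $m\le n$ lie in genuine Robinson positions, so they are precisely the geometric objects \cref{boundary-borders} is about. No analytic estimate enters; all the content is already packaged in \cref{boundary-borders} together with the definitions of $N_D$, $N_B$, $N_E$ and $N_\partial$.
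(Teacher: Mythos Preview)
Your proposal is correct and follows exactly the approach the paper takes: the paper's proof is essentially the single line ``using \cref{boundary-borders}, we can reformulate \cref{domain_count1}'', and you have unpacked precisely this, observing that each partial border counted by $N_B$ is, by \cref{boundary-borders} and the no-defect clause, counted by $N_E$ or $N_\partial$, whence $N_B\le N_E+N_\partial$ termwise. Your remark about the $\max_R$ being free because the right-hand side of \cref{domain_count1} is independent of $R$ is also the right justification for the left-hand side.
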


\subsubsection{Weak border deficit bound}
We will need some lemmas to bound these quantities.

\begin{lemma}\label{ND_bound}
  Let $D$ be a defect set in a region $S$, and $R$ a Robinson tiling of that same region.
  Then
  \begin{equation}
    \sum_n N_D(n,S,R) \leq \abs{D}.
  \end{equation}
\end{lemma}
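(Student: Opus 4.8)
The plan is to reduce the claim to the single observation that, in the fixed Robinson tiling $R$, each lattice cell belongs to at most one border of any size, which is exactly the content of \cref{no_border_overlap} applied to the (defect-free) tile configuration $R$.

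First I would fix a defect $d \in D$ and let $p, q \in \Z^2$ be the two lattice cells flanking $d$. By the definition of border intersection, if an $n$-border of $R$ (extended to a complete border) intersects $d$, then that border contains both $p$ and $q$; in particular $p$ is one of its tiles. Since $R$ is a Robinson tiling and borders in any tile configuration cannot overlap (\cref{no_border_overlap}), the cell $p$ lies in at most one border of $R$. Hence at most one border of $R$ — across \emph{all} sizes $n$ — intersects $d$, i.e.\ $\sum_n N_d(n,S,R) \leq 1$, where $N_d$ counts borders intersecting the single defect $d$.

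Then I would finish by a double-counting step: every $n$-border counted by $N_D(n,S,R)$ intersects at least one defect of $D$, so $N_D(n,S,R) \leq \sum_{d \in D} N_d(n,S,R)$; summing over $n$ and exchanging the order of summation gives $\sum_n N_D(n,S,R) \leq \sum_{d\in D} \sum_n N_d(n,S,R) \leq \sum_{d\in D} 1 = \abs{D}$, as required.

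I do not anticipate a serious obstacle here; the only point needing slight care is the meaning of ``extended to a complete border'' for partial $n$-borders in $S$ — one should read this as the actual border of the ambient Robinson tiling $R$ (which is assumed to tile a region containing $S$), so that the no-overlap property of $R$ applies verbatim and the uniqueness-of-containing-border argument goes through unchanged.
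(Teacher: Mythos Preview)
Your proposal is correct and follows essentially the same approach as the paper: both arguments reduce to the observation that borders in a Robinson tiling do not overlap (\cref{no_border_overlap}), so each defect $d$ can be intersected by at most one border across all sizes, and then a double-counting/order-of-summation swap gives the bound $\abs{D}$. Your write-up is in fact slightly more careful than the paper's, since you correctly state the double-counting step as an inequality $N_D(n,S,R) \leq \sum_{d\in D} N_d(n,S,R)$ rather than an equality.
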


\begin{proof}
  Let $N_d(n)$ denote the number of $n$-borders in $R$ that intersect a specific $d\in D$, so that
  \begin{equation}
    \sum_n N_D(n,S,R) = \sum_n \sum_{d\in D} N_d(n).
  \end{equation}
  Now, borders in Robinson tilings do not overlap, so at most one border in $R$ can intersect $d$.
  Thus
  \begin{equation}
    \sum_n N_d(n) \leq 1.
  \end{equation}
  Putting this together, we have
  \begin{align}
    \sum_n N_D(n,S,R)
    = \sum_n \sum_{d\in D} N_d(n) = \sum_{d\in D} \sum_n N_d(n) \leq \sum_{d\in D} 1
    = \abs{D}.
  \end{align}
\end{proof}

\begin{lemma}\label{Nd_bound}
  Let $T$ be a tile configuration, $\cB$ a branch of its domain tree, $\{R^{(n)}\}$ a Robinson-compatible set for the domains $\cD^{(n)}\in\cB$, and $d$ a defect.
  Then
  \begin{equation}
    \sum_n N_d(n,\cD^{(n)},R^{(n)}) \leq 1.
  \end{equation}
\end{lemma}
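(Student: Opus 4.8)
The plan is to exploit two facts: a border of any size that intersects $d$ must contain both of the two lattice cells flanking $d$; and within a single Robinson tiling no two borders overlap (\cref{no_border_overlap}). Hence in any \emph{one} Robinson tiling at most one border, of any size, can intersect $d$. The remaining work is to reduce the sum over the whole branch $\cB$ to this single-tiling situation, and this is exactly where Robinson-compatibility does the heavy lifting.

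First I would record the easy reduction. By \cref{domain_containment} and \cref{domain_tree} the domains $\cD^{(n)}\in\cB$ form a nested chain, with $\cD^{(n)}\subseteq\cD^{(m)}$ whenever $m\le n$ and both appear in $\cB$. For each fixed $n$, by \cref{def:Robinson_compatible} the $m$-borders of $R^{(n)}$ for $m\le n$ lie where they would in a genuine Robinson tiling of $\cD^{(n)}$, so the completed $n$-borders having a fragment in $\cD^{(n)}$ are frames of a single non-self-overlapping periodic $n$-border grid; at most one of them contains the cells flanking $d$, giving $N_d(n,\cD^{(n)},R^{(n)})\le 1$. It therefore suffices to rule out $N_d(n,\cD^{(n)},R^{(n)})\ge 1$ and $N_d(m,\cD^{(m)},R^{(m)})\ge 1$ holding simultaneously for two distinct levels $m<n$.

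Suppose both hold; let $\hat B$ (resp.\ $\hat B'$) be the $n$-border (resp.\ $m$-border) whose completion passes through $d$, so each contains the two cells flanking $d$. The key step is that the border patterns of $R^{(n)}$ and $R^{(m)}$ are restrictions of one Robinson border pattern on the relevant region. Indeed, $R^{(n)}$ contains a fragment of an $n$-border, which by the rigidity of Robinson's hierarchical construction (as used in the proof of \cref{domain_tiling}) pins the phase of all the $m$-border, $\dots$, $n$-border grids; thus $R^{(n)}$'s border pattern extends uniquely to a Robinson border pattern $P_n$ on a neighbourhood containing $\hat B$ and $\hat B'$, and similarly $R^{(m)}$ extends to $P_m$. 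By Robinson-compatibility applied to $\cD^{(n)}\subseteq\cD^{(m)}$, the $l$-borders of $R^{(n)}$ and $R^{(m)}$ for $l\le m$ coincide on $\cD^{(n)}$, and since that region already fixes the $m$-border grid, $P_n$ and $P_m$ have identical $l$-borders for all $l\le m$ everywhere relevant. Hence $\hat B'$ is an $m$-border of $P_n$ while $\hat B$ is an $n$-border of $P_n$, both containing a cell flanking $d$; applying \cref{no_border_overlap} to the Robinson tiling $P_n$ forces $\hat B=\hat B'$, which is impossible for $m\neq n$. Combined with $N_d(n,\cdot,\cdot)\le 1$ at the unique contributing level, this yields $\sum_n N_d(n,\cD^{(n)},R^{(n)})\le 1$.

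I expect the main obstacle to be the bookkeeping around partial borders: a partial $n$-border counted by $N_d$ lies inside $\cD^{(n)}$, but its completion $\hat B$ may protrude past the domain boundary, so the cells flanking $d$ need not lie in $\cD^{(n)}$ and one cannot simply read off the relevant $m$-border of $R^{(n)}$ directly. Handling this cleanly is precisely why the rigidity/periodicity of the Robinson border pattern is needed — so that the patterns seen by the various $R^{(k)}$ along the branch can be regarded as restrictions of one consistent pattern — together with the observation that a border fragment in $R^{(n)}$ determines its completion via that pattern. Everything else is routine.
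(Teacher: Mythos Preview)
Your proposal is correct in spirit and rests on the same core observation as the paper: Robinson-compatibility forces all the $R^{(n)}$ along the branch to have their borders sitting inside one common Robinson border pattern, after which the non-overlap of borders (\cref{no_border_overlap}, equivalently \cref{ND_bound} for a singleton defect set) gives the bound of~$1$.

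Where you diverge from the paper is in the execution. You run a case split and a contradiction argument across two levels $m<n$, which forces you to extend each $R^{(n)}$ to an ambient pattern $P_n$ large enough to contain the \emph{completions} $\hat B,\hat B'$ of the relevant partial borders; as you yourself note, this is where the bookkeeping becomes delicate, because the completions may protrude past $\cD^{(n)}$. The paper sidesteps this entirely. Since the domains in a branch are totally ordered by inclusion, there is a maximal element $\cD^{(m_0)}$ (smallest level index) containing all the others; Robinson-compatibility then yields a single Robinson tiling $R$ of $\cD^{(m_0)}$ whose $n$-borders agree with those of each $R^{(n)}$ on $\cD^{(n)}\subseteq\cD^{(m_0)}$. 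One line,
\[
\sum_n N_d(n,\cD^{(n)},R^{(n)}) \;\le\; \sum_n N_d(n,\cD^{(m_0)},R) \;\le\; 1
\]
by \cref{ND_bound} applied to $D=\{d\}$ and $S=\cD^{(m_0)}$, finishes the job. The point is that by enlarging the region of consideration to the biggest domain in the branch rather than extending border patterns outward, the ``partial border completion'' issue never arises: everything already lives inside one fixed region with one fixed Robinson tiling. Your argument works, but the paper's route is shorter and removes exactly the obstacle you flagged.
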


\begin{proof}
  Since all $\cD^{(n)}$ are contained in the same branch of the domain tree, they form a totally ordered set under set inclusion, and there is a maximal $\cD^{(m)}\in\cB$ containing all the others, i.e.\ $\forall n: \cD^{(n)}\subseteq \cD^{(m)}$.
  By \cref{def:Robinson_compatible}, the locations of the $l$-borders in $R^{(n)}$ coincide for all $n\geq l$, and are in the same locations as in some Robinson tiling.
  Therefore, there exists a Robinson tiling $R$ of $\cD^{(m)}$ such that, for all $n$, the $n$-borders of $\cD^{(n)}$ are in the same locations as those of $R$.

  Thus
  \begin{equation}
    \sum_n N_d(n,\cD_i^{(n)},R_i^{(n)}) \leq \sum_n N_d(n,\cD^{(m)},R) \leq 1
  \end{equation}
  using \cref{ND_bound} applied to the defect set $\{d\}$ and region $\cD^{(m)}$.
\end{proof}

\begin{lemma}\label{d_branches}
  Let $T$ be a tile configuration, $d$ a defect.
  All domains $\cD\ni d$ are contained in at most two branches of the domain tree.
\end{lemma}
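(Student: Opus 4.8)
The plan is to use the tree structure of domains established in \cref{domain_tree} together with the containment property \cref{domain_containment}. Fix a defect $d$. Recall that $d$ is contained in a domain $\cD$ exactly when $d$ is adjacent to some lattice point of $\cD$; since domains are closed, $d$ can lie on the boundary shared between a domain and an adjacent undomain. First I would observe that for each level $n$, the point $d$ is adjacent to at most finitely many $n$-domains, and in fact the key quantitative input is: $d$ can be adjacent to at most two $n$-domains for each $n$. This is because, by \cref{n-domain_partition}, $n$-domains are separated by $n$-undomains, and $d$, being a single point of the dual lattice sitting between two lattice cells, can touch cells on ``both sides'' of at most one separating undomain; more carefully, the lattice cells adjacent to $d$ (there are at most four of them, arranged around the dual-lattice point) are partitioned among $n$-domains and $n$-undomains, and I would argue that at most two distinct $n$-domains can appear among them — essentially because an $n$-undomain containing $d$ in its interior would mean $d$ is in \emph{no} $n$-domain, while if $d$ is on the boundary of undomains, the geometry of four cells around a point allows it to straddle at most one undomain and hence touch at most two domains.

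Next I would climb the domain tree. By \cref{domain_containment}, every $(n{+}1)$-domain is contained in a unique $n$-domain, so the domains containing $d$, ordered by level, form chains in the tree: the $n$-domains containing $d$ have their parent $(n{-}1)$-domains also containing $d$ (since containment of $d$ is inherited upward — if $d$ is adjacent to a cell of $\cD^{(n+1)}$ and $\cD^{(n+1)}\subseteq\cD^{(n)}$, then $d$ is adjacent to a cell of $\cD^{(n)}$). Combining this with the ``at most two $n$-domains per level'' bound, the set $\{\cD \ni d\}$ forms a subtree of the domain tree in which every level has at most two nodes, and the nodes at each level are obtained from those at the level above. I would then argue this subtree has at most two leaves / at most two maximal root-to-leaf branches: if at some level there are two distinct $n$-domains $\cD_1, \cD_2 \ni d$, each has a unique parent $(n{-}1)$-domain containing $d$, and there are at most two such parents; once the two branches have ``split'' they cannot re-merge (the tree structure forbids it), and they cannot split further (that would require three $n$-domains at some level). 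Hence all domains containing $d$ lie in at most two branches.

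The main obstacle I anticipate is the careful geometric bookkeeping in the first step: precisely pinning down how many distinct $n$-domains a single dual-lattice point can be adjacent to, given the subtle open/closed convention (undomains open, domains closed) and the fact that $d$ sits between exactly two lattice cells if it is an edge-midpoint of the dual lattice, or among four cells if it is a vertex. One must handle the configuration where the two cells on either side of $d$ fall into different $n$-domains (giving two), versus where one or both fall into an $n$-undomain (in which case $d$ is contained in that undomain, not in any $n$-domain at that level), and rule out three or more. I would also need to be careful that the upward-inheritance of ``$d \in \cD$'' is genuinely monotone, using \cref{domain_containment} plus \cref{n-domain_partition}, so that the branches are well-defined chains rather than arbitrary subsets of tree nodes. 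Once these geometric facts are nailed down, the tree argument is routine.
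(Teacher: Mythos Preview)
Your approach is correct in spirit but substantially more elaborate than necessary, and the uncertainty you flag as the ``main obstacle'' is in fact a non-issue once you read the definition of a defect carefully.

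The paper's proof is two lines. A defect $d\in\Z_2^*$ is, by definition, a point of the dual lattice sitting \emph{between two non-matching tiles}; hence it is adjacent to exactly two lattice cells, not four. (Your worry about the vertex case does not arise.) Now observe that for any \emph{fixed} lattice cell $c$, the set of all domains containing $c$ forms a single chain in the domain tree: by \cref{n-domain_partition} each level has at most one $n$-domain containing $c$, and by \cref{domain_containment} these are nested. Hence one cell determines one branch, and two cells determine at most two branches. That is the entire argument.

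Your route---bounding the number of $n$-domains containing $d$ at each level, then arguing that the resulting subtree has at most two leaves because branches cannot split further or re-merge---reaches the same conclusion, but it replaces the direct ``per-cell'' observation with a ``per-level'' accounting that then requires an extra combinatorial step about tree shapes. The per-cell argument subsumes all of this: once you know each of the two cells adjacent to $d$ picks out a single branch, there is nothing left to prove about splitting or merging. So your plan would work, but you should recognise that the geometric bookkeeping you anticipate (open/closed conventions, four-cell configurations, ruling out three domains at a level) evaporates once you anchor the argument to the two specific cells rather than to the defect itself.
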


\begin{proof}
  A defect $d\in\Z_2^*$ is adjacent to two lattice sites in $\Z$.
  Since the domains form a tree by set-inclusion, all the domains containing a given lattice site are contained in a single branch of the domain tree.
\end{proof}

\begin{corollary}\label{ND_R_bound}
  Let $T$ be a tile configuration, $D$ its defect set, and $\{R^{(n)}_i\}$ a Robinson-compatible set for its domains $\{\cD^{(n)}_i\}$.
  Then
  \begin{equation}
    \sum_{n,i} N_D(n,\cD^{(n)}_i,R^{(n)}_i) \leq 2\abs{D}.
  \end{equation}
\end{corollary}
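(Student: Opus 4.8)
The plan is to reduce the bound to the single-defect bound of \cref{Nd_bound} by distributing $N_D$ over the individual defects and interchanging the order of summation, and then to control each per-defect sum using \cref{d_branches}. Concretely, since an $n$\nbd-border of $R^{(n)}_i$ in $\cD^{(n)}_i$ that (when completed) meets $D$ meets at least one defect, we have
\begin{equation}
  N_D(n,\cD^{(n)}_i,R^{(n)}_i) \leq \sum_{d\in D} N_d(n,\cD^{(n)}_i,R^{(n)}_i),
\end{equation}
and, both index sets being finite,
\begin{equation}
  \sum_{n,i} N_D(n,\cD^{(n)}_i,R^{(n)}_i) \;\leq\; \sum_{d\in D}\ \sum_{n,i} N_d(n,\cD^{(n)}_i,R^{(n)}_i).
\end{equation}
It therefore suffices to prove $\sum_{n,i} N_d(n,\cD^{(n)}_i,R^{(n)}_i)\le 2$ for each fixed $d\in D$.

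For this, I would first note that a border intersecting $d$ necessarily contains the two lattice cells $a,b\in\Z_2$ flanking the dual point $d$, so (modulo a subtlety about partial borders discussed below) a domain $\cD^{(n)}_i$ with $N_d(n,\cD^{(n)}_i,R^{(n)}_i)\neq 0$ is one of the domains containing $a$ or $b$. Since the domains form a tree under inclusion (\cref{domain_tree}), the set of domains containing a fixed lattice cell is totally ordered, i.e.\ forms (a sub-chain of) a single branch; hence, by \cref{d_branches}, all contributing domains lie in at most two branches $\cB_1,\cB_2$. Restricting a Robinson-compatible set to a branch leaves it Robinson-compatible and leaves at most one domain of each level, so \cref{Nd_bound} applies directly to each branch and gives $\sum_n N_d(n,\cD^{(n)},R^{(n)})\le 1$ there. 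Adding the (at most) two branch contributions gives $\sum_{n,i} N_d(n,\cD^{(n)}_i,R^{(n)}_i)\le 2$, and summing over $d\in D$ produces the claimed bound $2\abs{D}$.

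The routine ingredients are the two interchanges of sums and the branch restriction of \cref{Nd_bound}. The step I expect to require the most care --- and hence the main obstacle --- is verifying that \emph{every} domain $\cD^{(n)}_i$ contributing to $N_d$ really does lie in one of the two branches through $a$ or $b$. For a \emph{complete} $n$\nbd-border inside $\cD^{(n)}_i$ this is immediate, since then $a,b\in\cD^{(n)}_i$. But for a \emph{partial} $n$\nbd-border of $R^{(n)}_i$ cut off by $\partial\cD^{(n)}_i$ whose hypothetical completion meets $d$, the cells $a,b$ may lie just outside $\cD^{(n)}_i$ in the adjacent $n$\nbd-undomain (cf.\ \cref{boundary-borders}), so one must argue separately that such stray boundary contributions are still absorbed by the two branches through $a$ and $b$ --- or, failing that, route them instead into the $N_E$ and $N_\partial$ terms of \cref{domain_count}, at the cost of a larger absolute constant.
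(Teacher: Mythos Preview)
Your approach is essentially identical to the paper's: distribute $N_D$ over defects, interchange sums, restrict to domains containing the given defect, group those domains into at most two branches via \cref{d_branches}, and apply \cref{Nd_bound} along each branch. The paper even writes the first step as an equality $N_D=\sum_d N_d$, whereas your inequality is the correct direction (a border may hit several defects).

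The subtlety you flag about partial borders whose extension meets $d$ outside $\cD^{(n)}_i$ is real, and the paper does \emph{not} resolve it: it simply passes from $\sum_{n,i}N_d(n,\cD^{(n)}_i,R^{(n)}_i)$ to $\sum_{\cD^{(n)}_i\ni d}N_d(n,\cD^{(n)}_i,R^{(n)}_i)$ as an equality without comment, implicitly assuming that any domain contributing to $N_d$ contains $d$. So your ``main obstacle'' is precisely the step the paper takes for granted; you are being more careful than the paper, not less. If you want to close this gap yourself, your second suggested route---rerouting such stray partial-border contributions into the $N_E$ or $N_\partial$ counts via \cref{boundary-borders}---is the natural fix, and since those terms are bounded separately in \cref{NE_R_tight_bound} and the perimeter term, a slightly larger constant in \cref{border_deficit} would absorb them.
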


\begin{proof}
  Let $\cB$ denote a branch of the domain tree, and $\cB\ni d$ denote a branch containing at least one domain that contains $d$.
  \begin{align}
    \sum_{n,i} N_D(n,\cD^{(n)}_i,R^{(n)}_i)
    &= \sum_{n,i}\sum_{d\in D} N_d(n,\cD^{(n)}_i,R^{(n)}_i)\\
    &= \sum_{d\in D}\sum_{n,i} N_d(n,\cD^{(n)}_i,R^{(n)}_i)\\
    &= \sum_{d\in D}\sum_{\cD^{(n)}_i\ni d} N_d(n,\cD^{(n)}_i,R^{(n)}_i)\\
    &\leq \sum_{d\in D}\sum_{\cB\ni d}\sum_{\cD^{(n)}_i\in\cB} N_d(n,\cD^{(n)}_i,R^{(n)}_i)\\
    &\leq \sum_{d\in D}\sum_{\cB\ni d}\sum_{\cD^{(n)}_i\in\cB} N_d(n,\cD^{(n)}_i,R^{(n)}_i)\\
    &\leq \sum_{d\in D}\sum_{\cB\ni d} 1 \label{eq:Nd_bound}
      \leq \sum_{d\in D} 2 = 2\abs{D},
  \end{align}
  where \cref{eq:Nd_bound} follows from \cref{Nd_bound,d_branches}.
\end{proof}

\begin{lemma}\label{edge-intersection-bound}
  An edge of length $\leq 4^m$ can intersect at most 3 $n$-borders with $n\geq m$ in a Robinson tiling.
\end{lemma}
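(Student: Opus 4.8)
The plan is to work directly with the edge $e$ (a straight segment of $\ell_\infty$-length $\le 4^m$) and exploit two scaling features of the Robinson hierarchy: distinct borders are separated by a gap proportional to the smaller of their two levels, and a border nested immediately inside the next-larger border is inset from it by an amount proportional to its own level. The first feature bounds, for each fixed level $n\ge m$, the number of $n$-borders that $e$ can meet; the second bounds how many distinct levels $n\ge m$ can occur; together these give the bound $3$.

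The basic observation is that, because $e$ is a straight segment, $\|e(s)-e(t)\|_\infty\le\|e(1)-e(0)\|_\infty\le 4^m$ for all parameters $s,t$, so any two borders met by $e$ have their (disjoint, convex) bounding squares at $\ell_\infty$-distance $\le 4^m$; moreover $e$ meets these bounding squares in disjoint parameter-intervals, and the $\ell_\infty$-lengths of the pieces of $e$ lying strictly between consecutive such intervals sum to at most $4^m$. From Robinson's construction (see \cite{Robinson_1971,Cubitt_PG_Wolf_Undecidability}, and cf.\ \cref{no_border_overlap,domain_tiling}) I would extract two facts: \emph{(Separation)} the bounding squares of two distinct $k$-borders are at $\ell_\infty$-distance $\ge c_1\cdot 4^{k}$; and \emph{(Inset)} when a $k$-border lies inside a $(k+1)$-border, its border tiles are at $\ell_\infty$-distance $\ge c_2\cdot 4^{k}$ from the border tiles of the larger one; here $c_1>\tfrac{1}{2}$ and $c_2>\tfrac{1}{5}$ are fixed constants. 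Separation together with the previous sentence shows that if $e$ meets $N$ distinct $n$-borders then $(N-1)\,c_1\,4^{n}\le 4^m$, whence $N\le 2$ when $n=m$ and $N\le 1$ when $n\ge m+1$.

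To bound the number of levels, suppose $e$ meets an $n_1$-border $B_1$ and an $n_2$-border $B_2$ with $m\le n_1<n_2$. Since borders cannot overlap (\cref{no_border_overlap}) and the nested squares of a Robinson tiling form a laminar family, $B_1$ lies inside $B_2$; iterating Inset along the nesting chain from level $n_1$ to level $n_2$ gives $\operatorname{dist}_\infty(\partial B_1,\partial B_2)\ge c_2(4^{n_1}+\dots+4^{n_2-1})=\tfrac{c_2}{3}(4^{n_2}-4^{n_1})$, where $\partial B$ denotes the border-tile region of $B$. Both $\partial B_1$ and $\partial B_2$ contain a point of $e$, so this distance is $\le 4^m\le 4^{n_1}$, forcing $4^{n_2-n_1}\le 1+3/c_2<16$ and hence $n_2=n_1+1$; the same inequality with $n_1=m$, $n_2=m+2$ is violated, so $e$ cannot meet borders of both level $m$ and level $m+2$. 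A brief case distinction now leaves only: $e$ meets borders of a single level (at most two, and two only at level $m$), or of two consecutive levels, which must be $\{m,m+1\}$ (at most $2+1$ borders) or $\{k,k+1\}$ with $k>m$ (at most $1+1$ borders). In every case the total is at most $3$.

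The main obstacle is extracting the precise constants $c_1$ and $c_2$ from the (modified) Robinson construction and checking they lie in the required ranges — in particular that consecutively nested borders are inset by more than $\tfrac{1}{5}$ of the inner side length, and that distinct borders of the same level are separated by more than half that level's side length. This is bookkeeping of the positions of the crosses and arms at each level, together with \cref{domain_tiling}, rather than a conceptual difficulty; but it is where the exact value $3$ is pinned down, and a cruder accounting would only give a larger constant.
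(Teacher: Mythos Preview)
Your level-bounding argument has a gap at the step ``Since borders cannot overlap and the nested squares of a Robinson tiling form a laminar family, $B_1$ lies inside $B_2$.'' Laminarity only says that $B_1$ and $B_2$ are either nested or disjoint; nothing forces the smaller one to sit inside the larger one, and in a Robinson tiling most $n_1$-borders lie \emph{outside} any given $n_2$-border. Your inset-chain estimate therefore does not apply without further work. The repair is not hard: you have already shown that $e$ meets at most one $n_2$-border when $n_2>m$; since every $n_1$-border in a Robinson tiling is enclosed in some $n_2$-border $B_2'$, if $B_1\not\subset B_2$ then $e$, running from $B_1$ (inside $B_2'$) to $B_2$ (outside $B_2'$), must cross $\partial B_2'$ as well, contradicting the uniqueness of $B_2$. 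But this argument is missing from your write-up, and your separation and inset constants $c_1,c_2$ are left as unverified bookkeeping on which the exact value $3$ hinges.

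By contrast, the paper's proof is a two-line structural observation that sidesteps all of this. The $m$-borders repeat periodically with separation $4^m$, so $e$ meets at most two of them. Then, rather than bounding each higher level separately and counting levels, the paper uses the fact that the sides of \emph{all} $n$-borders with $n>m$ run along the gaps between consecutive $m$-borders, with at most one such side occupying each gap --- a direct consequence of the self-similar nesting in the Robinson pattern. Since $e$ spans at most one such gap, it meets at most one higher-level border in total, across all levels combined, giving $2+1=3$ immediately. Your approach would transfer more readily to hierarchies lacking this clean gap structure, but for the Robinson tiling it is considerably more work (and more places to go wrong) for the same constant.
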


\begin{proof}
  The $l$-borders in a Robinson tiling repeat periodically separated by distance $4^l$.
  Thus an edge of length $\leq 4^l$ can intersect at most 2 of them.

  $n$-borders with $n>l$ run along gaps between $l$-borders, and at most one such edge runs along each gap.
  Thus the edge can intersect at most one $n$-border with $n>l$.
\end{proof}

\begin{lemma}\label{NE_bound}
  Let $D$ be a defect set in a region $S$, $G=(D,E)$ its defect graph, and $R$ a Robinson tiling of $S$.
  Then
  \begin{equation}
    \sum_n N_E(n,S,R) \leq 3\abs{E}.
  \end{equation}
\end{lemma}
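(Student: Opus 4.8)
The plan is to reduce the claim, edge by edge, to the already-established intersection bound of \cref{edge-intersection-bound}. First I would note that, by the definition of $N_E$, each $n$-border counted in $N_E(n,S,R)$ must intersect \emph{some} edge $e\in E$ with length $\leq 4^n$; hence
\begin{equation}
  N_E(n,S,R) \leq \sum_{e\in E} N_e(n,S,R),
\end{equation}
where $N_e(n,S,R)$ denotes the number of $n$-borders of $R$ in $S$ that intersect the single edge $e$ (dropping the "does not intersect a defect" restriction only weakens the count, so this upper bound is safe). Summing over $n$ and swapping the order of summation, it therefore suffices to show $\sum_n N_e(n,S,R)\leq 3$ for every fixed edge $e$.

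Fix $e\in E$ and let $m=m(e)$ be the least integer with $\abs{e}\leq 4^m$, where $\abs{e}$ is the ($\ell_\infty$) length of the edge. For $n<m$ we have $\abs{e}>4^n$, so by the length restriction in the definition of $N_E$ no $n$-border with $n<m$ is ever counted: $N_e(n,S,R)=0$ for all $n<m$. For $n\geq m$, the edge $e$ has length $\leq 4^m$, so \cref{edge-intersection-bound} applies and tells us that $e$ intersects at most $3$ borders of size $\geq 4^m-1$ \emph{in total}, i.e.\ $\sum_{n\geq m}(\text{number of $n$-borders of $R$ intersecting }e)\leq 3$. Since each border of $R$ is an $n$-border for exactly one value of $n$ (its side length determines $n$), these counts for different $n$ are over disjoint sets of borders, and $N_e(n,S,R)$ is bounded by the number of $n$-borders intersecting $e$. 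Combining, $\sum_n N_e(n,S,R)=\sum_{n\geq m} N_e(n,S,R)\leq 3$.

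Putting the two steps together,
\begin{equation}
  \sum_n N_E(n,S,R) \leq \sum_n\sum_{e\in E} N_e(n,S,R) = \sum_{e\in E}\sum_n N_e(n,S,R) \leq \sum_{e\in E} 3 = 3\abs{E},
\end{equation}
as claimed. The proof is essentially bookkeeping once \cref{edge-intersection-bound} is in hand; the only point requiring care is aligning the length threshold "$\abs{e}\leq 4^n$" in the definition of $N_E$ with the hypothesis "$n\geq m$" of \cref{edge-intersection-bound}, and observing that the "at most $3$" there is a bound on the \emph{total} over all $n\geq m$, not a per-$n$ bound — which is exactly what lets the constant $3$ survive the summation over $n$.
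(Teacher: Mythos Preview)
Your proposal is correct and follows essentially the same approach as the paper: define a per-edge count $N_e(n)$ (zero when the edge is longer than $4^n$), bound $N_E\leq\sum_e N_e$, swap the sums, and invoke \cref{edge-intersection-bound} to get $\sum_n N_e(n)\leq 3$. Your explicit identification of the threshold $m=m(e)$ and the remark that the bound of~$3$ is a total over all $n\geq m$ rather than per~$n$ makes the use of \cref{edge-intersection-bound} slightly more transparent than the paper's version, but the argument is the same.
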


\begin{proof}
  For edges $e\in E$, define
  \begin{equation}
    N_e(n) :=
    \begin{cases}
      \text{number of $n$-borders in $R$ that intersect $e$} & \text{ length } e \leq 4^n\\
      0 & \text{otherwise}
    \end{cases}
  \end{equation}
  so that
  \begin{equation}
    N_E(n,S,R) \leq \sum_{e\in E} N_e(n).
  \end{equation}
  (The inequality is due to the fact that the same $n$-border may be intersected by more than one edge.)

  By \cref{edge-intersection-bound}, $e$ can intersect at most $3$ $n$-borders such that the length of $e$ is $\leq 4^n$.
  Thus
  \begin{equation}
    \sum_n N_e(n) \leq 3.
  \end{equation}
  Putting all this together, we have
  \begin{equation}
    \sum_n N_E(n,S,R) \leq \sum_n\sum_{e\in E} N_e(n)
    = \sum_{e\in E}\sum_n N_e(n) \leq \sum_{e\in E} 3 = 3\abs{E}.
  \end{equation}
\end{proof}

\begin{lemma}\label{Ne_bound}
  Let $T$ be a tile configuration, $\cB$ a branch of its domain tree, $\{R^{(n)}\}$ a Robinson-compatible set for the domains $\cD^{(n)}\in\cB$, and $e$ an edge in the defect graph of $T$.
  Then
  \begin{equation}
    \sum_n N_e(n,\cD^{(n)},R^{(n)}) \leq 3.
  \end{equation}
\end{lemma}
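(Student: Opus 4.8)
The plan is to prove this exactly as \cref{Nd_bound} was proved, with the single-edge counting bound playing the role that the single-defect bound \cref{ND_bound} played there; this lemma stands to \cref{NE_bound} as \cref{Nd_bound} stands to \cref{ND_bound}.

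First I would note that, since $\cB$ is a branch of the domain tree, the domains $\cD^{(n)}\in\cB$ are totally ordered by set inclusion: two distinct $n$-domains are disjoint (they are separated by $n$-undomains, by \cref{n-domain_partition}) and hence cannot both lie in a chain, so $\cB$ contains at most one domain of each level $n$, and the sum $\sum_n N_e(n,\cD^{(n)},R^{(n)})$ ranges over nested domains of strictly decreasing size. Let $\cD^{(m)}$ denote the maximal element of $\cB$, so that $\cD^{(n)}\subseteq\cD^{(m)}$ for every $n$. Next I would invoke Robinson-compatibility (\cref{def:Robinson_compatible}): the $l$-borders of the $R^{(n)}$ coincide for all $n\geq l$ and sit in the same locations as in some Robinson tiling, so there is a single Robinson tiling $R$ of $\cD^{(m)}$ whose $n$-borders, restricted to $\cD^{(n)}$, agree with those of $R^{(n)}$. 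Extending any (complete or partial) $n$-border of $R^{(n)}$ in $\cD^{(n)}$ that intersects $e$ to a complete border then yields an $n$-border of $R$ in $\cD^{(m)}$ intersecting $e$, which gives $N_e(n,\cD^{(n)},R^{(n)}) \leq N_e(n,\cD^{(m)},R)$ for each $n$. Finally, by \cref{edge-intersection-bound} the edge $e$, having length $\leq 4^m$, intersects at most $3$ borders of size $\geq m$ in $R$, i.e.\ $\sum_n N_e(n,\cD^{(m)},R)\leq 3$ (this is also just \cref{NE_bound} applied with edge set $\{e\}$ and region $\cD^{(m)}$). Chaining the two inequalities yields $\sum_n N_e(n,\cD^{(n)},R^{(n)})\leq 3$, as required.

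The only place that needs a little care — and which I expect to be the sole (minor) obstacle — is checking that passing from the possibly-partial $n$-borders living inside the small domains $\cD^{(n)}$ to complete $n$-borders of the single global tiling $R$ does not double count. Because a branch contains at most one $n$-domain for each $n$, and these domains are nested, the borders counted for distinct $n$ have distinct sizes and all belong to the one Robinson tiling $R$, so no border is counted twice and the bound $\sum_n N_e(n)\leq 3$ extracted from the proof of \cref{NE_bound} transfers directly. Everything else is a verbatim transcription of the argument for \cref{Nd_bound}.
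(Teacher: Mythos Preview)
Your argument is correct and follows essentially the paper's approach: reduce via Robinson-compatibility to a single common Robinson tiling, then apply the single-edge intersection bound. The paper does this slightly more directly, keeping the domains $\cD^{(n)}$ and writing $N_e(n,\cD^{(n)},R^{(n)}) = N_e(n,\cD^{(n)},R)$ rather than enlarging to the maximal domain, but your route through $\cD^{(m)}$ (mirroring \cref{Nd_bound}) is equivalent.

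One slip to fix: the clause ``the edge $e$, having length $\leq 4^m$'' is not justified with your choice of $m$. Your $m$ is the index of the maximal domain in $\cB$, which (since $(n{+}1)$-domains are contained in $n$-domains) is the \emph{smallest} level present in the branch; there is no reason the edge length should be bounded by $4^m$ for that $m$. The paper instead takes $m$ to be the smallest integer with $\text{length}(e)\leq 4^m$, uses the definition of $N_e$ to zero out all $n<m$, and then invokes \cref{edge-intersection-bound}. Your parenthetical appeal to \cref{NE_bound} with $E=\{e\}$ and region $\cD^{(m)}$ already rescues the argument independently of the length claim, so simply drop the ``having length $\leq 4^m$'' phrase or switch to the paper's choice of $m$.
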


\begin{proof}
  By \cref{def:Robinson_compatible}, the locations of the $l$-borders in $R^{(n)}$ with $l\leq n$ are in the same locations as in a Robinson tiling.
  In particular, for all $n$, the $n$-borders in $R^{(n)}$ located in the same place as some common Robinson tiling $R$.
  Thus $N_e(n,\cD^{(n)},R^{(n)}) = N_e(n,\cD^{(n)},R)$.

  Let $m$ be the smallest integer such that $e$ has length $\leq 4^m$.
  If $m>n$, then $N_e(n,\cD^{(n)},R) = 0$ by definition.
  Thus $\sum_n N_e(n,\cD^{(n)},R) = \sum_{n\geq m} N_e(n,\cD^{(n)},R)$.
  But by \cref{edge-intersection-bound}, an edge $e$ of length $\leq 4^m$ can intersect at most 3 $n$-borders with $n\geq m$ in a Robinson tiling, and the bound follows.
\end{proof}

\begin{definition}\label{def:adjacent_edge}
  We say that an $n$-domain $\cD$ is \keyword{adjacent} to an edge $e\in E$ (or conversely) \keyword{with respect to defect graph $G=(D,E)$} if there exists either a vertical or horizontal path from some lattice cell in $\cD$ to a lattice cell intersected by $e$, such that the path does not intersect any other defect or edge, nor pass through any $n$-domain other than $\cD$, along the way.
\end{definition}

The following is immediate from \cref{def:n-domain,def:adjacent_edge}:

\begin{corollary}\label{adjacent_intersection}
  Let $T$ be a tile configuration, $G=(D,E)$ its defect graph, and $\cD$ a $n$-domain.
  If an $n$-border in $\cD$ (when extended to a full border) is intersected by an edge in $E$ of length $\leq 4^n$, then either it also intersects a defect, or it is also intersected by an edge of length $\leq 4^n$ that is adjacent to $\cD$ with respect to $G$.
\end{corollary}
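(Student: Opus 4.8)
The \namecref{adjacent_intersection} should follow by tracing a path along the border itself, from a lattice cell of $\cD$ that the border passes through to the cell at which $e$ meets it, and checking that a suitably chosen such path is a legal witness for the adjacency of \cref{def:adjacent_edge}.

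If the $n$-border in question already intersects a defect we are in the first case and there is nothing to prove, so assume it does not, and write $B$ for the full square obtained by extending it. Since $B$ overlaps $\cD$, fix a lattice cell $c_0\in B\cap\cD$; by hypothesis $B$ also passes through a cell $c_e$ at which the edge $e$ (of length $\leq 4^n$) crosses it. I would walk along $B$ from $c_0$ in the direction of $c_e$ and let $c^\ast$ be the first cell encountered that is adjacent to a defect in $D$, is intersected by an edge of $E$, or lies in an $n$-undomain; such a cell is reached no later than $c_e$, which is intersected by $e$. Since $B$ is defect-free, $c^\ast$ is not adjacent to a defect. If $c^\ast$ lies in $\cD$ and is intersected by an edge $\bar e$ of length $\leq 4^n$, then by the minimality of $c^\ast$ the sub-arc of $B$ from $c_0$ to $c^\ast$ meets no defect and no edge before its endpoint, and -- having stayed out of every $n$-undomain -- passes through no $n$-domain other than $\cD$ by \cref{n-domain_partition}; hence this sub-arc certifies, via \cref{def:adjacent_edge}, that $\bar e$ is adjacent to $\cD$, while $B$ is crossed by $\bar e$, which is the second case. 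Otherwise $c^\ast$ lies in an $n$-undomain $\cU$ (or is met only on an edge longer than $4^n$, which one reduces to the undomain case). Then $B$ overlaps $\cU$, so \cref{def:n-domain} forces $B$ to meet a defect (in which case we are done) or a short edge $e'$; continuing the walk through $\cU$ -- which is permitted, since an undomain is not an $n$-domain -- to the first cell of $B$ lying on $e'$ then yields a path certifying that $e'$ is adjacent to $\cD$.

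The technical heart is this last case, and more generally controlling the path once it would leave $\cD$: one must guarantee that, before the continued walk reaches a short edge or a defect, it does not cross some other edge of $E$ nor pass through a different $n$-domain. This is precisely the role of the defining property of $n$-undomains in \cref{def:n-domain} -- an $n$-border overlapping an $n$-undomain necessarily meets a defect or an edge of length $\leq 4^n$ -- used together with \cref{n-domain_partition}, the tree structure of \cref{domain_tree}, and \cref{domain_tiling}, which pin down how $B$ can move between domains and undomains. Once these are in hand the \namecref{adjacent_intersection} is essentially immediate; the only delicate point is the bookkeeping of which ``event'' the walk encounters first and the verification that it ultimately delivers a \emph{short} edge rather than a long one.
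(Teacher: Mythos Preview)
The paper offers no proof of this \namecref{adjacent_intersection}; it simply declares it immediate from \cref{def:n-domain,def:adjacent_edge}. Your walk-along-the-border argument is surely the intended mechanism, and is already more careful than anything the paper supplies.

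That said, two genuine gaps remain. First, \cref{def:adjacent_edge} asks for ``either a vertical or horizontal path'' from a cell of $\cD$ to a cell meeting the edge. Read literally this is a single straight segment, whereas your sub-arc of the square $B$ will in general turn a corner; you never argue that the first obstruction is reached on the same side of $B$ as $c_0$, so the sub-arc does not directly certify adjacency under that reading.

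Second, your treatment of the walk once it leaves $\cD$ is incomplete in precisely the way you flag. When the walk first meets a long edge you assert this ``reduces to the undomain case'', but a cell crossed only by an edge of length $>4^n$ need not lie in an $n$-undomain, so there is no such reduction. And in your undomain case, \cref{def:n-domain} guarantees that $B$ meets a defect or short edge \emph{somewhere}, but not that it does so on the portion of $B$ the walk traverses before entering a further $n$-domain $\cD'$. Your appeals to \cref{domain_tree} and \cref{domain_tiling} concern nesting of domains across different levels $n$, not separation of distinct $n$-domains at the same level, and so do not rule this out. The ``delicate bookkeeping'' you identify is therefore a real gap in the argument rather than a formality---one which, to be fair, the paper itself does not engage with either.
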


\begin{lemma}\label{adjacent_edge_dist}
  Let $T$ be a tile configuration, $G=(D,E)$ its defect graph.
  If an edge $e\in E$ of length $\leq 4^n$ is adjacent to an $n$-domain $\cD$, then it must be within distance $\leq 4^n$ of $\cD$.
\end{lemma}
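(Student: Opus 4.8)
The plan is to argue directly from the definition of adjacency (\cref{def:adjacent_edge}) together with the basic geometry of $n$-domains. By hypothesis there is a vertical or horizontal path $\gamma$ from some lattice cell $p\in\cD$ to a lattice cell $q$ intersected by $e$, such that $\gamma$ does not pass through any other $n$-domain, nor meet any defect or edge other than (possibly) $e$ itself. The goal is to bound the $\ell_\infty$-distance from $p$ (equivalently, from $\cD$) to $q$ by $4^n$. First I would observe that, since $\gamma$ avoids all $n$-domains other than $\cD$, every lattice cell strictly between $p$ and $q$ along $\gamma$ lies in an $n$-undomain (by \cref{n-domain_partition}). Then I would invoke \cref{undomain_edge_distance}: each such cell has an edge of length $\leq 4^n$ within distance $\leq 4^n$. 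The subtlety is that a priori this ``nearby short edge'' could be some edge other than $e$; but $\gamma$ by definition does not intersect any edge except $e$, which restricts how those short edges can be positioned relative to $\gamma$.

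The key step is to turn ``$\gamma$ is long'' into a contradiction. Suppose for contradiction that the distance from $p$ to $q$ exceeds $4^n$, so $\gamma$ has length $>4^n$. Consider the cell $c$ on $\gamma$ at distance exactly $4^n$ from $p$ (or one such cell; along an axis-aligned path the $\ell_\infty$-distance is just the number of steps). This cell lies in an $n$-undomain, so by \cref{undomain_edge_distance} it has an edge $e'\in E$ of length $\leq 4^n$ within distance $\leq 4^n$. I would then argue that $e'$ must actually be $e$: any other edge $e'\neq e$ near $c$ would have to be crossed by $\gamma$ (since $\gamma$ runs for more than another $0$ steps past $c$ within the undomain, and an edge of length $\leq 4^n$ lying within distance $\leq 4^n$ of a point on an axis-aligned path that continues in the undomain must intersect the path) — contradicting that $\gamma$ meets no edge but $e$. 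Once we know $e'=e$, we have $e$ within distance $\leq 4^n$ of $c$, hence within distance $\leq 4^n + 4^n$? That is too weak, so instead I would run the argument more carefully: take $c$ to be the \emph{last} cell of $\gamma$ lying in the undomain before reaching $q$. That cell is at distance $\leq 4^n$ from $q$ is false in general; rather, I would use that $c$ is adjacent (within one step) to $q$, and $q$ is intersected by $e$, so the relevant short edge near $c$ is $e$ at distance $\leq 1$. Pushing back: every cell of $\gamma$ in the undomain has $e$ within distance $\leq 4^n$ (by the same ``it must be $e$'' argument), and in particular $p$'s neighbour on $\gamma$ does. Hence $e$ is within distance $\leq 4^n$ of $\cD$.

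The main obstacle, and the step I would spend the most care on, is the claim that the short edge guaranteed by \cref{undomain_edge_distance} near a cell of $\gamma$ is forced to be $e$ rather than some other edge of the defect graph. This requires a clean geometric argument: an axis-aligned path that stays inside a connected undomain, together with the fact that any edge of length $\leq 4^n$ within distance $\leq 4^n$ of a point far-enough along that path inside the undomain would have to cross the path (or else one could extend the path to escape the undomain without crossing that edge, contradicting maximality of the undomain or the no-other-$n$-domain condition). I would likely phrase this as: if such an edge $e'\neq e$ existed and did not cross $\gamma$, then the cells it runs between, being within $\leq 4^n$ of $\gamma$ and not separated from $\cD$ by any intervening $n$-domain or defect or other edge, would themselves have to lie in $\cD$ or force $\cD$ to be larger — in either case $e'$ is in fact adjacent to $\cD$ and we may simply replace $e$ by $e'$, or conclude directly. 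In the worst case the statement can be weakened to ``within distance $\leq 2\cdot 4^n$'', which would still suffice for the downstream counting arguments; but I expect the constant $4^n$ to be obtainable by choosing the witnessing path $\gamma$ to be shortest and reading off the bound from its first undomain cell.
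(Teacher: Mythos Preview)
Your approach follows exactly the skeleton of the paper's proof, which is only three sentences: invoke \cref{n-domain_partition} to say that $\cD$ is bordered by $n$-undomains, apply \cref{undomain_edge_distance} to the first undomain cell on the witnessing path (the cell adjacent to $\cD$) to get a short edge within distance $4^n$, and then assert that ``the \namecref{adjacent_edge_dist} follows by \cref{def:adjacent_edge}''. So at the level of strategy you are doing what the paper does.

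Where you go further is in noticing --- correctly --- that \cref{undomain_edge_distance} only produces \emph{some} short edge near that undomain cell, not the particular edge $e$; the paper's three-line proof does not explicitly address this. Unfortunately your attempts to close this gap do not work. The key claim in your second paragraph, that ``any other edge $e'\neq e$ near $c$ would have to be crossed by $\gamma$'', is false: $\ell_\infty$-proximity of a short edge to a point on an axis-aligned path in no way forces that edge to intersect the path. For a concrete picture, take $\gamma$ along the $x$-axis and let $e'$ be a short vertical segment sitting entirely at $y\geq 2$ but with $x$-coordinate near that of $c$; then $e'$ is within distance $\leq 4^n$ of $c$ yet never meets $\gamma$. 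The variants you try (taking $c$ to be the last undomain cell, or arguing that nearby cells of $e'$ ``would force $\cD$ to be larger'') run into the same issue and do not yield the bound. Your proposed fallback to $2\cdot 4^n$ is likewise not actually established by the argument you give, since the first step (identifying $e'$ with $e$) already fails.

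In short: you have located the same two ingredients the paper uses and you have put your finger on the one non-obvious step the paper leaves implicit; but the mechanism you propose for that step is incorrect. If you want to complete the argument, the honest route is to exploit more than just \cref{undomain_edge_distance}: work directly with the definition of an $n$-undomain along the cells of $\gamma$ and use that $\gamma$ crosses no other edge or defect to constrain which defects or short edges the $n$-borders through those cells can meet, rather than invoking \cref{undomain_edge_distance} as a black box and then trying to argue after the fact that the edge it hands you is $e$.
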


\begin{proof}
  \Cref{n-domain_partition} implies $\cD$ is surrounded by $n$-undomains.
  \Cref{undomain_edge_distance} implies the $n$-undomain cells adjacent to $\cD$ are within distance $\leq 4^n$ of an edge of length $\leq 4^n$.
  The \namecref{adjacent_edge_dist} follows by \cref{def:adjacent_edge}.
\end{proof}

\begin{lemma}\label{e_branches}
  Let $T$ be a tile configuration, $e$ an edge of length $\leq 4^n$ in its defect graph.
  All $m$-domains $\cD^{(m)}$ with $m\geq n$ that are adjacent to $e$, are contained in at most 4 branches of the domain tree.
\end{lemma}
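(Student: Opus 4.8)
The edge $e$ joins two defects $d_1,d_2\in D\subseteq\Z_2^*$, and each $d_i$ lies between exactly two lattice cells, giving a set $C$ of at most four lattice cells. By \cref{domain_containment} and \cref{domain_tree}, for any fixed lattice cell $c$ the domains containing $c$ form a chain under inclusion, hence all lie in a single branch of the domain tree. So the plan is to reduce to \cref{d_branches}: I will show that every $m$-domain with $m\ge n$ that is adjacent to $e$ contains a cell of $C$ adjacent to $d_1$ or to $d_2$, i.e.\ in the language of \cref{def:n-domain} it ``contains $d_1$ or $d_2$''. Granting this, \cref{d_branches} applied to $d_1$ puts all such domains that contain $d_1$ into at most two branches, and applied to $d_2$ puts the rest into at most two more, for a total of at most four.

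The key structural fact I would establish first is that every lattice cell whose interior $e$ passes through lies, for every $m\ge n$, in an $m$-undomain, and that all of these cells lie in one and the same $m$-undomain $\cU_e$. Indeed, any $m$-border running through such a cell contains a border tile there and therefore intersects $e$ (by the border/edge intersection criterion); since $\abs e\le 4^n\le 4^m$, the condition of \cref{def:n-domain} is met, so the cell lies in an $m$-undomain. Moreover two adjacent lattice cells cannot lie in distinct $m$-undomains: the union would be a larger connected region still satisfying the undomain condition, contradicting maximality. Hence the (connected) staircase of cells $e$ passes through lies in a single $m$-undomain $\cU_e$; in particular none of those cells lies in any $m$-domain with $m\ge n$, and the two extreme cells of this staircase are exactly the cells adjacent to $d_1$ and to $d_2$.

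Now take any $m$-domain $\cD$ with $m\ge n$ that is adjacent to $e$. By \cref{def:adjacent_edge} there is a straight axis-aligned, defect- and edge-avoiding path from a cell of $\cD$ to a cell $c'$ that $e$ intersects; by the previous paragraph $c'\in\cU_e$, and after the path last leaves $\cD$ it runs entirely inside $\cU_e$, reaching $c'$ without ever crossing $e$. Tracing $\cU_e$'s staircase from the point where this path abuts $\cD$, in the direction the path came in, one is forced out to one of the two extreme staircase cells, i.e.\ a cell of $C$ adjacent to $d_1$ or $d_2$; the surrounding region of $m$-domain cells is, up to other $m$-undomains lying in the way, the complement of the segment $e$, which has two ends and two sides, so $\cD$ must abut a cell of $C$. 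This last planar case analysis is the heart of the argument and the main obstacle: one must rule out an $m$-domain that is adjacent to $e$ only ``from the interior'' of its staircase without reaching an endpoint, using precisely the axis-aligned-and-avoiding nature of the path in \cref{def:adjacent_edge}, and one must check that the presence of further nearby defects cannot create more than the two $\pm$ sides at each of the two ends. With the claim of the first paragraph thereby verified, every $m$-domain ($m\ge n$) adjacent to $e$ contains $d_1$ or $d_2$, and \cref{d_branches} gives the bound of four branches.
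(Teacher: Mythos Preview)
Your approach is genuinely different from the paper's, and the reduction you propose --- that every $m$-domain with $m\ge n$ adjacent to $e$ must contain $d_1$ or $d_2$, so that \cref{d_branches} yields the bound of four --- is appealing. But the step you yourself flag as ``the heart of the argument and the main obstacle'' is a real gap, not a routine detail. Nothing you have written rules out an $m$-domain abutting the staircase of $e$ only in its middle, with all four cells adjacent to $d_1$ and $d_2$ lying in $m$-undomains (for instance because further short edges from nearby defects surround both endpoints). Your sketch --- ``tracing $\cU_e$'s staircase\ldots one is forced out to one of the two extreme staircase cells'' --- conflates reaching an undomain cell near an endpoint with that endpoint actually lying in $\cD$; and the appeal to ``two ends and two sides'' does not control how additional short edges near $d_1,d_2$ may carve up the surrounding domain structure. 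The completeness of the defect graph does make blocking edges plentiful, so your claim may well be true, but the burden of proof here is substantial and is not met by the sketch.

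The paper bypasses this issue with a packing-plus-heredity argument. First, \cref{adjacent_edge_dist} forces any $n$-domain adjacent to $e$ to lie within distance $4^n$ of $e$, and since an $n$-domain must (by \cref{def:n-domain}) be at least $4^n$ wide and tall, only a bounded number of $n$-domains can fit around $e$. Second, if an $(m{+}1)$-domain is adjacent to $e$ then so is the $m$-domain containing it (using \cref{domain_containment} and \cref{def:adjacent_edge}), so adjacent domains at all levels $m\ge n$ lie in branches rooted at those finitely many $n$-domains. Note that this argument actually produces the bound $10$ rather than $4$; the constant $4$ in the statement is not what is established, and the downstream applications in \cref{NE_R_bound} and \cref{NE_R_tight_bound} use $10$.
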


\begin{proof}
  \Cref{adjacent_edge_dist} implies that $n$-domains adjacent to $e$ must be within distance $\leq 4^n$.
  By \cref{def:n-domain}, an $n$-domain has to be at least $4^n$ wide and tall.
  Therefore, at most 10 adjacent $n$-domains fit around $e$.

  By \cref{domain_containment}, an $(m+1)$-domain is contained within an $m$-domain.
  If that $m$-domain is not adjacent to $e$, then by \cref{def:adjacent_edge} there exists no free path from the $m$-domain -- nor hence from the $m+1$-domain -- to $e$.
  Therefore, $m+1$-domains adjacent to $e$ must be contained in $m$-domains adjacent to $e$, and the \namecref{e_branches} follows.
\end{proof}

\begin{corollary}\label{NE_R_bound}
  Let $T$ be a tile configuration, $G=(D,E)$ its defect graph, and $\{R_i^{(n)}\}$ a Robinson-compatible set for its domains $\{\cD_i^{(n)}\}$.
  Then
  \begin{equation}
    \sum_{n,i} N_E(n,\cD^{(n)}_i,R^{(n)}_i) \leq 12\abs{E}.
  \end{equation}
\end{corollary}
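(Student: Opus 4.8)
The plan is to mirror the structure of \cref{ND_R_bound}, but using the edge analogues of the ingredients that went into that corollary. Recall that \cref{ND_R_bound} bounds $\sum_{n,i} N_D(n,\cD^{(n)}_i,R^{(n)}_i)$ by first decomposing the sum over defects, then using \cref{Nd_bound} (a single defect contributes at most $1$ along any branch) together with \cref{d_branches} (a defect lies in at most two branches). The analogous decomposition here is to sum over edges $e\in E$ rather than defects, to bound the contribution of a single edge along any single branch by $3$ using \cref{Ne_bound}, and to bound the number of branches an edge can be relevant to using \cref{e_branches}, which gives $4$. Multiplying, $3\times 4 = 12$, which is exactly the claimed constant.

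Concretely, I would argue as follows. By the definition of $N_E$, an $n$-border counted in $N_E(n,\cD^{(n)}_i,R^{(n)}_i)$ intersects some edge $e\in E$ of length $\leq 4^n$ (and does not intersect a defect). By \cref{adjacent_intersection}, we may moreover assume that $e$ is adjacent to $\cD^{(n)}_i$ with respect to the defect graph $G$, since if the intersecting edge is not adjacent then there is an adjacent edge of length $\leq 4^n$ also intersecting that same $n$-border (or the border intersects a defect, but we have excluded that case). Hence we can write
\begin{equation}
  \sum_{n,i} N_E(n,\cD^{(n)}_i,R^{(n)}_i)
  \leq \sum_{n,i} \sum_{\substack{e\in E\\ e\text{ adjacent to }\cD^{(n)}_i}} N_e(n,\cD^{(n)}_i,R^{(n)}_i),
\end{equation}
where the inequality accounts for the fact that a single $n$-border may be intersected by several such edges. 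Swapping the order of summation,
\begin{equation}
  \sum_{n,i} N_E(n,\cD^{(n)}_i,R^{(n)}_i)
  \leq \sum_{e\in E} \sum_{\substack{\cD^{(n)}_i\\ \text{adjacent to }e}} N_e(n,\cD^{(n)}_i,R^{(n)}_i).
\end{equation}
Now fix an edge $e$, and let $m$ be the smallest integer with $\mathrm{length}(e)\leq 4^m$; note $N_e(n,\cdot,\cdot)=0$ for $n<m$, so only $n$-domains with $n\geq m$ contribute. By \cref{e_branches}, all such adjacent $m'$-domains (for $m'\geq m$) lie in at most $4$ branches of the domain tree. Within each such branch, the domains $\cD^{(n)}$ form a totally ordered chain, and by \cref{Ne_bound} we have $\sum_n N_e(n,\cD^{(n)},R^{(n)}) \leq 3$. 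Summing over the at most $4$ branches gives $\sum_{\cD^{(n)}_i\text{ adjacent to }e} N_e(n,\cD^{(n)}_i,R^{(n)}_i) \leq 12$, and summing over $e\in E$ yields the claimed $\sum_{n,i} N_E(n,\cD^{(n)}_i,R^{(n)}_i) \leq 12\abs{E}$.

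The main subtlety — and the step I would be most careful about — is the reduction to \emph{adjacent} edges via \cref{adjacent_intersection}: one must check that replacing "$e$ of length $\leq 4^n$ intersecting the border" by "$e$ adjacent to $\cD^{(n)}_i$ of length $\leq 4^n$ intersecting the border" does not lose any border that was being counted, and does not require invoking a defect (which would put us in a different case). A second point worth stating explicitly is that \cref{Ne_bound} is applied per branch: since all $R^{(n)}$ in a branch have their $l$-borders ($l\leq n$) coinciding with a common Robinson tiling, \cref{edge-intersection-bound} applies within that fixed tiling and bounds the total over all $n$ in the branch by $3$. Everything else is bookkeeping parallel to the proof of \cref{ND_R_bound}.
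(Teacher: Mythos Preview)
Your proposal is correct and follows essentially the same route as the paper: invoke \cref{adjacent_intersection} to reduce to edges adjacent to the domain, swap the order of summation to sum over $e\in E$ first, then bound each edge's contribution by combining \cref{Ne_bound} ($\leq 3$ per branch) with \cref{e_branches} (at most $4$ branches). As a side note, the paper's own proof actually substitutes the constant $10$ (from the body of the proof of \cref{e_branches}) rather than the stated $4$, obtaining $30\abs{E}$ instead of $12\abs{E}$; your version is the one consistent with the statement of both \cref{e_branches} and the present corollary.
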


\begin{proof}
  Let $\cD$ denote a domain and $\cB$ a branch of the domain tree.
  We will (ab)use the notation $e\in\cD$ to denote that $e$ is adjacent to $\cD$, and $e\in\cB$ to denote that $\cB$ contains at least one domain adjacent to $e$.

  By \cref{adjacent_intersection}, if an $n$-border in $\cD$ intersects an edge of length $\leq 4^n$, then either it also intersects a defect, so is not counted in $N_E$ by definition.
  Or it also intersects an edge of length $\leq 4^n$ adjacent to $\cD$.
  Therefore
  \begin{equation}
    N_E(n,\cD^{(n)}_i,R^{(n)}_i) \leq \sum_{e\in\cD^{(n)}_i} N_E(n,\cD^{(n)}_i,R^{(n)}).
  \end{equation}
  Thus
  \begin{align}
    \sum_{n,i} N_E(n,\cD^{(n)}_i,R^{(n)}_i)
    &= \sum_{n,i}\sum_{e\in\cD^{(n)}_i} N_e(n,\cD^{(n)}_i,R^{(n)}_i)\\
    &= \sum_{e\in E}\sum_{\cD^{(n)}_i\ni e} N_e(n,\cD^{(n)}_i,R^{(n)}_i)\\
    &\leq \sum_{e\in E}\sum_{\cB\ni e}\sum_{\cD^{(n)}_i\in\cB} N_e(n,\cD^{(n)}_i,R^{(n)}_i)\\
    &\leq \sum_{e\in E}\sum_{\cB\ni e} 3 \label{eq:Ne_bound}
      \leq \sum_{e\in E} 10\cdot 3 = 30\abs{E},
  \end{align}
  where the inequalities in \cref{eq:Ne_bound} follow from \cref{Ne_bound,e_branches}, respectively.
\end{proof}

The following bound is instructive, but is not tight enough for our purposes.
\begin{proposition}[Weak border deficit bound]\label{weak_bound}
  Let $T$ be any tile configuration of a finite region $S\subset\Z_2$ of perimeter $L$.
  Let $D$ denote its defect set, and $G=(D,E)$ its defect graph.
  The total border deficit of $T$ is bounded by
  \begin{equation}
    \deficit(T) \leq 3\abs{D} + 33\abs{E} + L.
  \end{equation}
\end{proposition}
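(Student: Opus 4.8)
The plan is to estimate $\deficit(T)$ one size-class at a time, splitting each $\deficit_n$ into a contribution from the $n$-domains and one from the $n$-undomains, and then summing the per-size estimates already established. By \cref{n-deficit}, $\deficit(T)=\max_R\sum_n\deficit_n(T,S,R)$; since $S$ is finite this is a finite sum (each $\deficit_n$ vanishes once $4^n-1$ exceeds the diameter of $S$) and there are only finitely many domains and undomains of each size. Fix a Robinson tiling $R$. For each $n$, \cref{n-domain_partition} partitions $S$ into $n$-domains $\{\cD^{(n)}_i\}$ and $n$-undomains $\{\cU^{(n)}_j\}$; iterating the subadditivity of \cref{deficit_union} over this finite partition, summing over $n$, taking $\max_R$, and using $\max(f+g)\le\max f+\max g$ gives
\begin{equation}
  \deficit(T)\;\le\;\max_R\sum_{n,i}\deficit_n(T,\cD^{(n)}_i,R)\;+\;\max_R\sum_{n,j}\deficit_n(T,\cU^{(n)}_j,R),
\end{equation}
so it suffices to bound the two pieces separately.

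For the undomain piece: for fixed $n$ the $\cU^{(n)}_j$ are pairwise disjoint, so every $n$-border of $R$ lies in at most one of them, whence $\sum_j N_D(n,\cU^{(n)}_j,R)\le N_D(n,S,R)$ and likewise for $N_E$; applying \cref{undomain_count} and then \cref{ND_bound,NE_bound} gives $\sum_{n,j}\deficit_n(T,\cU^{(n)}_j,R)\le\abs{D}+3\abs{E}$ for every $R$. For the domain piece, \cref{domain_count} bounds $\max_R\sum_{n,i}\deficit_n(T,\cD^{(n)}_i,R)$ by $\max_{\{R^{(n)}_i\}}\sum_{n,i}\bigl(N_D(n,\cD^{(n)}_i,R^{(n)}_i)+N_E(n,\cD^{(n)}_i,R^{(n)}_i)+N_\partial(n,\cD^{(n)}_i,R^{(n)}_i)\bigr)$ over Robinson-compatible sets, and \cref{ND_R_bound,NE_R_bound} bound the first two sums by $2\abs{D}$ and $30\abs{E}$. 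The remaining ingredient is $\sum_{n,i}N_\partial(n,\cD^{(n)}_i,R^{(n)}_i)\le L$: each partial border counted in $N_\partial$ contains a border tile in a lattice cell of $\cD^{(n)}_i$ on the boundary of $S$, and I will show these cells are distinct over all $(n,i)$. Two counted borders of the same size lie in disjoint $n$-domains (or, in the same $n$-domain, are themselves disjoint by \cref{no_border_overlap}), hence occupy disjoint cells; and a cell shared by a counted $n$-border and a counted $m$-border with $n<m$ would lie in both $\cD^{(n)}_i$ and $\cD^{(m)}_j$, forcing $\cD^{(m)}_j\subseteq\cD^{(n)}_i$ (\cref{domain_containment}), so by Robinson-compatibility (\cref{def:Robinson_compatible}) the $n$- and $m$-border placements in $\cD^{(n)}_i$ come from a single Robinson tiling — in which borders do not overlap (\cref{no_border_overlap}), a contradiction. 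Thus the counted border tiles inject into the at most $L$ boundary cells of $S$, giving $\sum_{n,i}N_\partial\le L$, so the domain piece is $\le 2\abs{D}+30\abs{E}+L$. Adding the two pieces yields $\deficit(T)\le 3\abs{D}+33\abs{E}+L$.

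The only genuinely new step is the bound $\sum_{n,i}N_\partial\le L$; everything else is bookkeeping layered on the per-size lemmas. The main obstacle there is the cross-scale injectivity argument — in particular making precise what it means for a partial border of $R^{(n)}_i$ to "intersect the boundary" so that it genuinely pins down a boundary cell of its own domain, and verifying that Robinson-compatibility really does preclude a single boundary cell from being charged by borders of two different sizes.
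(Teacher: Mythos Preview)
Your proposal is correct and follows essentially the same route as the paper: split via \cref{n-deficit,n-domain_partition,deficit_union} into a domain sum and an undomain sum, bound the undomain sum by $\abs{D}+3\abs{E}$ via \cref{undomain_count,ND_bound,NE_bound}, and bound the domain sum by $2\abs{D}+30\abs{E}+L$ via \cref{domain_count,ND_R_bound,NE_R_bound}. The one place you diverge is the $\sum_{n,i}N_\partial\le L$ step: the paper dispatches this in a single phrase (``trivially bounding the $N_\partial$ term by the total perimeter of the region being tiled''), whereas you supply a cross-scale injectivity argument via \cref{domain_containment} and Robinson-compatibility; your caution about making that step precise is reasonable, but the paper offers no further detail there either.
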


\begin{proof}
  By \cref{n-deficit} we have that
  \begin{align}
    \deficit(T)
    &= \max_R\sum_n \deficit_n(T,S,R)\\
    &\leq \max_R\sum_{n,i}\deficit_n(T,\cD^{(n)}_i,R) + \max_R\sum_{n,i}\deficit_n(T,\cU^{(n)}_i,R),\label{eq:deficit_bound}
  \end{align}
  where the inequality follows from \cref{n-domain_partition,deficit_union}.

  Applying \cref{domain_count} to the first term in \cref{eq:deficit_bound}, we obtain
  \begin{align}
    \max_R&\sum_{n,i}\deficit_n(T,\cD^{(n)}_i,R)\\
    &\leq \max_{\{R^{(n)}_i\}}\sum_{n,i}\Bigl( N_D(n,\cD^{(n)}_i,R^{(n)}_i) + N_E(n,\cD^{(n)}_i,R^{(n)}_i)
                                             + N_\partial(n,\cD^{(n)}_i,R^{(n)}_i) \Bigr)\\
    &\leq 2\abs{D} + 30\abs{E} + L, \label{eq:domain_deficit}
  \end{align}
  where the final inequality follows from \cref{ND_R_bound,NE_R_bound}, and trivially bounding the $N_\partial$ term by the total perimeter of the region being tiled.

  Applying \cref{undomain_count} to the second term in \cref{eq:deficit_bound}, we have
  \begin{align}
    \max_R&\sum_{n,i}\deficit_n(T,\cU^{(n)}_i,R)\\
    &\leq \max_R\sum_{n,i}\Bigl( N_D(n,\cU^{(n)}_i,R) + N_E(n,\cU^{(n)}_i,R) \Bigr)\\
    &\leq \max_R\Bigl( \sum_n N_D(n,\bigcup_i\cU^{(n)}_i,R) + \sum_n N_E(n,\bigcup_i\cU^{(n)}_i,R) \Bigr)
      \label{eq:U_are_disjoint}\\
    &\leq \max_R\Bigl( \sum_n N_D(n,S,R) + \sum_n N_E(n,S,R) \Bigr)
      \label{eq:expand_region}\\
    &\leq \abs{D} + 3\abs{E}. \label{eq:undomain_deficit}
  \end{align}
  In \cref{eq:U_are_disjoint}, we have used the fact from \cref{n-domain_partition} that $n$-undomains are disjoint.
  In \cref{eq:expand_region} we have used the obvious fact
  that expanding the region of consideration in $N_{D/E}$ cannot decrease these quantities.
  The final inequality follows by \cref{ND_bound,NE_bound}.

  Putting \cref{eq:domain_deficit,eq:undomain_deficit} together with \cref{eq:deficit_bound} gives the claimed bound.
\end{proof}

We would like a bound on the border deficit that scales as $O(\abs{D})$, whereas in \cref{weak_bound} $\abs{E}=\abs{D}^2$.
However, the above bound over-counts significantly, because a border that intersects some edge in $E$ will also intersect many other edges in $E$.
In the following, we tighten the bound by showing that it is sufficient to only count edges from a suitably chosen sparse subgraph of $G$.

\subsubsection{Tighter border deficit bound}

\begin{definition}[$n$-frames]\label{def:n-frame}
  For an $n$-border with corners at coordinates $(i,j)$, $(i+4^n-1,j+4^n-1)$, we define its associated \keyword{$n$-frames} to be the five $(4^n-1)\times (4^n-1)$ squares formed by the border itself, and by the four squares of the same size directly adjacent to the border,
  with corners at coordinates: $(i,j)$, $(i+4^n-1,j+4^n-1)$; $(i,j-4^n+1)$, $(i+4^n-1,j)$; $(i-4^n+1,j)$, $(i,j+4^n-1)$; $(i,j+4^n-1)$, $(i+4^n-1,j+2\cdot (4^n-1))$; $(i+4^n-1,j)$, $(i+2\cdot (4^n-1),j+4^n-1)$.

  We say that an edge \keyword{cuts} an associated $n$-frame if it has one vertex within the $n$-frame, and the other outside it.
\end{definition}

\begin{lemma}\label{intersection-implies-cutting}
  Let $G=(D,E)$ be a defect graph.
  If an edge $e\in E$ of length $\leq 4^n$ intersects an $n$-border, then it cuts at least one of its associated $n$-frames.
\end{lemma}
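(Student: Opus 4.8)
The plan is to argue by contradiction, exploiting that the five associated frames form a plus\nobreakdash-shaped neighbourhood of the border and that the edge is barely longer than one frame. Write $e=\overline{ab}$ with endpoint defects $a,b$ and $\ell_\infty$\nobreakdash-length $\|a-b\|_\infty\le 4^n$, and let $Q_0$ denote the frame that is the border itself, and $Q_d,Q_u,Q_l,Q_r$ the four adjacent frames below, above, left and right. Since $e$ intersects the $n$-border, there is a point $p$ on the segment $e$ lying in (or on the boundary of) a lattice cell $c$ carrying a border tile; as $p\in e$ we have $\|a-p\|_\infty\le 4^n$ and $\|b-p\|_\infty\le 4^n$. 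The cell $c$ lies on one of the four sides of the border, or at one of its four corners; since the rotations and reflections fixing the border permute $\{Q_d,Q_u,Q_l,Q_r\}$ transitively and fix $Q_0$, it suffices to treat (i) $c$ on the bottom side and (ii) $c$ the bottom\nobreakdash-left corner cell. The key structural fact, read off from Definition~\ref{def:n-frame}, is that $Q_d$ shares its top row with the bottom row of $Q_0$; hence in case (i) $c\subseteq Q_0\cap Q_d$, and in case (ii) $c\subseteq Q_0\cap Q_d\cap Q_l$. In particular $p\in Q_0\cap Q_d$.

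Now suppose for contradiction that $e$ cuts none of the five frames, so that for each frame $Q$ the endpoints $a,b$ are either both in $Q$ or both outside $Q$. First I would rule out both endpoints lying outside $Q_0$: in that case the segment $e$ would enter and leave $Q_0$ (it passes through $p\in Q_0$), but its length is $\le 4^n$, only one unit more than the side $4^n-1$ of $Q_0$, so the two points where $e$ crosses $\partial Q_0$ lie either on the same side of $\partial Q_0$ or on two sides meeting at a corner; a short calculation then places one endpoint inside one of $Q_d,Q_l,Q_r$ with the other outside that same frame --- a cut, contradiction. The same reasoning rules out both endpoints lying outside $Q_d$. Hence $a,b\in Q_0\cap Q_d$, i.e.\ both lie in the bottom row of the border together with $c$. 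A final case analysis, again using the length bound together with the location of $c$ within the bottom side (and, in case (ii), with $Q_l$ already containing $c$), then produces a frame among $\{Q_l,Q_r\}$ containing exactly one of $a,b$, i.e.\ the cut we sought --- the final contradiction proving the lemma.

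The step I expect to be the main obstacle is this last geometric case analysis and, throughout, the careful handling of endpoints that lie exactly on the shared boundaries of overlapping frames. Adjacent frames overlap along a full row or column, and a defect is a dual\nobreakdash-lattice point --- strictly between consecutive frame boundaries in one coordinate but possibly exactly on one in the other --- so the in/out conventions for ``within the frame'' must be fixed so as to be consistent with Definition~\ref{def:n-frame} (in particular so that the border cell $c$ counts as lying inside both $Q_0$ and the adjacent frame that shares its row). Once those conventions and the corner sub\nobreakdash-cases are pinned down, the argument is routine; Lemma~\ref{intersection-implies-cutting} then serves to replace ``the border intersects a short edge'' by ``the edge cuts one of five bounded frames'', which is the reformulation needed to keep the $N_E$\nobreakdash-type sums under control in the tighter border\nobreakdash-deficit bound.
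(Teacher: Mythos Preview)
Your contradiction strategy is workable in spirit, but it takes a detour that the paper avoids entirely, and the detour is where your acknowledged ``main obstacle'' lives. The paper's proof is a direct three-case split on the positions of the two endpoints relative to the central frame $Q_0$ alone: (i) both in the \emph{interior} of the border, in which case the segment lies in that convex open region and cannot touch the perimeter at all, contradicting the hypothesis; (ii) one inside and one outside, which is already a cut of $Q_0$; (iii) both outside $Q_0$, in which case failing to cut any of $Q_d,Q_u,Q_l,Q_r$ forces the two endpoints either into the interior of the \emph{same} outer frame or outside all five frames simultaneously --- and in both of those sub-cases a segment of $\ell_\infty$-length $\le 4^n$ simply cannot reach the perimeter of $Q_0$. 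That is the whole argument; there is no need to localise the intersected border cell, no symmetry reduction, and no final case analysis on $Q_l,Q_r$.

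The specific gap in your plan is steps~5--7. Your step~4 (ruling out ``both outside $Q_0$'') is essentially the paper's case~(iii), and once you have it you are done \emph{provided} you also invoke case~(i). Instead you press on to step~5, claiming ``the same reasoning rules out both outside $Q_d$''. But the reasoning of step~4 relied on $Q_0$ having four adjacent frames in the plus; $Q_d$ has only one ($Q_0$) among the five, so the analogue does not go through: if both endpoints lie in $Q_0\setminus Q_d$ the edge can still enter and leave $Q_d$ near its top without forcing a cut of any of the five frames. Even granting step~5, your conclusion ``both in $Q_0\cap Q_d$'' places the endpoints on the one-dimensional shared edge of the two frames, and there the edge can run along the bottom row of border cells without either endpoint lying in $Q_l$ or $Q_r$ --- so step~7 has no purchase. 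The fix is not to sharpen step~7 but to drop steps~5--7 and replace them with the paper's case~(i): once both endpoints are inside $Q_0$, argue that if they are in the open interior the segment misses the perimeter, contradicting the intersection hypothesis. Your concern about boundary conventions is legitimate and affects both proofs equally, but it is orthogonal to the structural simplification.
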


\begin{proof}
  An edge with both vertices in the interior of an $n$-border cannot intersect that border.
  If one of the edge's vertices is inside the $n$-border and the other outside, then it necessarily cuts the $n$-frame corresponding to the border itself.

  So consider an edge with both vertices outside the $n$-border.
  If the edge has one vertex in the interior of an $n$-frame, and the other vertex outside of that $n$-frame, then it necessarily cuts that $n$-frame.
  Thus if the edge does not cut any $n$-frame, it must either have both vertices within the interior of the same associated $n$-frame, or both vertices outside of any associated $n$-frame.
  But in both these cases, an edge of length $\leq 4^n$ cannot intersect the associated $n$-border in the first place.
\end{proof}

We denote by $N^F_E(n,S,R)$ the number of $n$-frames in a region $S$ of a Robinson tiling $R$ that are cut by an edge of length $\leq 4^n$.
The following bound follows immediately from \cref{intersection-implies-cutting}, together with the fact that by \cref{def:n-frame} an $n$-frame can be associated with at most two different $n$-borders in a Robinson tiling.

\begin{corollary}\label{NF}
  \begin{equation}
    N_E(n,S,R) \leq 2N^F_E(n,S,R).
  \end{equation}
\end{corollary}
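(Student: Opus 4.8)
The plan is to exhibit a map from the $n$-borders counted by $N_E(n,S,R)$ to the cut $n$-frames counted by $N^F_E(n,S,R)$ whose fibres have size at most $2$, and then conclude by a double-counting argument.

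First I would fix an $n$-border $b$ contributing to $N_E(n,S,R)$. By definition $b$, extended to a complete border, is intersected by some edge $e\in E$ of length $\leq 4^n$. By \cref{intersection-implies-cutting}, this edge $e$ cuts at least one of the five $n$-frames associated with $b$; choose one such frame and call it $\phi(b)$. Since $\phi(b)$ is then cut by an edge of length $\leq 4^n$, it is an $n$-frame counted by $N^F_E(n,S,R)$, so $\phi$ is a well-defined map from the set of $n$-borders counted by $N_E(n,S,R)$ into the set of $n$-frames counted by $N^F_E(n,S,R)$. (The ``does not intersect a defect'' clause in the definition of $N_E$ only restricts the domain of $\phi$ further, which can only help.)

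Next I would bound the preimages. If $\phi(b)=f$, then by construction $f$ is one of the five $n$-frames associated with $b$, so in particular $b$ is an $n$-border to which the square $f$ is associated. But by \cref{def:n-frame}, in a Robinson tiling a fixed $(4^n-1)\times(4^n-1)$ square is an associated $n$-frame of at most two distinct $n$-borders. Hence $\abs{\phi^{-1}(f)}\leq 2$ for every $f$, and summing over the image of $\phi$ gives $N_E(n,S,R)\leq 2\,N^F_E(n,S,R)$.

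I do not expect a genuine obstacle here: all the geometric content sits in the already-established \cref{intersection-implies-cutting} and in \cref{def:n-frame}, and what remains is the counting bookkeeping above. The only point that deserves a careful word is that when $b$ lies in $S$ its chosen frame $\phi(b)$ should also count as lying ``in $S$''; this is fine if ``$n$-frame in $S$'' is read in the same partial-overlap sense in which ``(complete or partial) $n$-border in $S$'' is read throughout, and in any case the corollary is only ever applied with $S$ taken to be the whole region being tiled.
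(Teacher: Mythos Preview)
Your proposal is correct and is essentially the same argument as the paper's: the paper states the corollary as an immediate consequence of \cref{intersection-implies-cutting} together with the fact (from \cref{def:n-frame}) that an $n$-frame can be associated with at most two distinct $n$-borders in a Robinson tiling, which is exactly the double-counting you have written out with the map $\phi$.
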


\begin{lemma}\label{edge-cutting-bound}
  An edge of length $\leq 4^l$ can cut at most $\lfloor 4^{l-m}/2\rfloor + 6$ $n$-frames with $n\geq m$ in a Robinson tiling.
\end{lemma}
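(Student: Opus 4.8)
The plan is to refine the argument of \cref{edge-intersection-bound}, tracking \emph{frames} rather than just borders and keeping careful control of the constants. Throughout, let $e$ be the edge, with endpoints $p,q$ at $\ell_\infty$-distance $\abs{p-q}\le 4^l$; the bound is only non-trivial beyond the additive constant when $l\ge m$, which I therefore focus on. First I would record the geometric structure: by \cref{def:n-frame} every $n$-frame is an axis-aligned $(4^n-1)\times(4^n-1)$ square, and in a Robinson tiling the $n$-borders repeat periodically with period $4^n$, so at each scale $n$ the left/right sides of the $n$-frames lie on a periodic family of vertical lines of period $4^n$ consisting of only a bounded number of lines per period (and similarly the top/bottom sides lie on a periodic family of horizontal lines). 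I would also record the trivial reformulation of ``cut'': $e$ cuts an $n$-frame $F$ iff exactly one of $p,q$ lies in $F$, equivalently iff $\partial F$ separates $p$ from $q$; in particular a cut $n$-frame must have $p$ or $q$ within distance $\abs{p-q}\le 4^l$ of its boundary, so it lies within distance $4^l+4^n$ of the segment $\overline{pq}$, which localises the count.

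Next I would split the sum over scales $n\ge m$ into the two ranges $n>l$ and $m\le n\le l$. For $n>l$ the edge is shorter than one side of an $n$-frame, and I would argue exactly as in the second half of the proof of \cref{edge-intersection-bound} --- where the coarse $n$-borders run one per gap along the gaps left by the finest relevant borders, so an edge of length $\le 4^l$ meets at most one such gap in each coordinate direction --- that only $O(1)$ frames \emph{in total} over all $n>l$ can be cut; carefully accounting for the five frames per border and for the fact (from \cref{def:n-frame}) that each frame serves at most two borders yields the additive constant $6$. For $m\le n\le l$ I would, for each fixed $n$, combine the periodic family of $n$-frame boundary lines with the localisation above: a displacement of magnitude $\le 4^l$ crosses at most $O(4^{l-n})$ of the vertical boundary lines and likewise for the horizontal ones, and because consecutive same-scale $n$-frames overlap, a cut frame is pinned down by a consecutive \emph{pair} of such crossings, which is where the factor $\tfrac12$ enters. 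Summing the resulting geometric-type series over $m\le n\le l$ --- dominated by the term $n=m$, whose integer crossing-count produces the floor --- gives the $\lfloor 4^{l-m}/2\rfloor$ term, and adding the two contributions yields the claimed bound.

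The main obstacle is precisely this constant-chasing. One has to pin down exactly \emph{which} vertical and horizontal lines carry $n$-frame sides (there are several per period, coming from the five frames per border together with the two-borders-per-frame identification of \cref{def:n-frame}), then show that the resulting count at the coarsest scale $n=m$ really is at most $\lfloor 4^{l-m}/2\rfloor$ rather than some larger multiple of $4^{l-m}$, and finally check that once the near-boundary frames of the two endpoints are included, the $n>l$ tail still contributes no more than $6$. Everything else is bookkeeping on top of \cref{def:n-frame,edge-intersection-bound} and the periodicity of Robinson tilings.
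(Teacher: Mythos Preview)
Your plan is more elaborate than the paper's and, as written, does not reach the stated constant. The paper does \emph{not} decompose into the ranges $n>l$ and $m\le n\le l$ and sum a per-scale geometric series. It instead works at the single smallest scale $n=m$: the key observation is that the boundary of \emph{every} $n$-frame with $n>m$ lies in one of the gaps between consecutive $m$-borders, and each such gap contains at most one larger-frame boundary in total (over all larger scales). Since the $m$-borders repeat periodically, an edge of length $\le 4^l$ meets at most $2+\lfloor 4^{l-m}/2\rfloor$ such gaps, hence crosses at most that many larger-frame boundaries, hence cuts at most that many larger frames. Adding the $O(1)$ cut frames coming from the two endpoints themselves gives the bound in one step, with nothing to sum.

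Your route would yield the correct order $O(4^{l-m})$, but your claim that the series ``dominated by the term $n=m$\ldots gives the $\lfloor 4^{l-m}/2\rfloor$ term'' is wrong: even granting a per-scale count of roughly $\tfrac12\,4^{l-n}$, the sum $\sum_{n=m}^{l}\tfrac12\,4^{l-n}\approx \tfrac{2}{3}\,4^{l-m}$ exceeds its leading term by an amount that is not $O(1)$. So your decomposition proves something like $C\cdot 4^{l-m}+C'$ with $C>\tfrac12$, not the stated bound. This does not actually matter for the downstream applications in \cref{NE_tight_bound,NFe_bound}, which only invoke the lemma with $l=m$ (where the floor term vanishes and only the additive constant survives); but to obtain the lemma \emph{as stated} you need the paper's one-shot gap-counting argument rather than a scale-by-scale sum.
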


\begin{proof}
  The edge's two vertices may be located within different $n$-frames, thus the edge can cut up to 2 $n$-frames.

  Boundaries of $m$-frames with $m>n$ in a Robinson tiling can only occur in the gaps between $n$-borders.
  The $n$-borders in a Robinson tiling have sides of length $4^n-1$, and repeat periodically separated by distance $4^n$.
  Thus there are at most $2 + \lfloor 4^l/2\cdot 4^n\rfloor$ such gaps along a length $4^l$.
  Therefore, the edge can cross at most this many $m$-frame boundaries.
  For each such $m$-frame boundary, at most one of the edge's vertices can be in the interior of the $m$-frame ending at that boundary, thus this also upper-bounds the number of $m$-frames it cuts.
  The bound in the \namecref{edge-cutting-bound} follows.
\end{proof}

\begin{definition}[Delaunay triangulation]\label{def:Delaunay}
  A \keyword{Delaunay triangulation} $\Delta(P)$ of a set of points $P$ in $\R_2$ is a triangulation of $P$ such that no point in $P$ is inside the circumcircle of any triangle in $\Delta(P)$.

  A Delaunay triangulation of $P$ always exists unless $P$ are colinear.
  In the case of colinear $P$, in an abuse of notation we define $\Delta(P)$ to be the line graph connecting $P$.

  $\Delta(P)$ has the following properties~\cite{Lee_Schachter_1980}:
  \begin{enumerate}
  \item\label{def:Delaunay:circle}%
    If there exists a circle passing through $p_1,p_2\in P$ that does not contain any points from $P$ in its interior, then the edge $(p_1,p_2)$ is in $\Delta(P)$.
  \item\label{def:Delaunay:edges}%
    $\Delta(P)$ has at most $3\abs{P}-6$ edges.
  \end{enumerate}
\end{definition}
The second property follows from Euler's formula and the fact that $\Delta(P)$ is planar.

\begin{lemma}\label{Delaunay_intersection}
  Let $G=(D,E)$ be a defect graph, and $\Delta = (D,E_\Delta)$ a Delaunay triangulation of $D$.
  If an $n$-frame is cut by an edge $e\in E$ of length $\leq 4^n$, then it is also cut by an edge $e_\Delta\in E_\Delta$ of length $\leq 4^n$.

  Furthermore,
  \begin{enumerate}
  \item\label[part]{Delaunay_intersection:undomain} If $e$ is contained in an $n$-undomain $\cU$, then $e_\Delta$ is also contained in $\cU$.
  \item\label[part]{Delaunay_intersection:domain} If $e$ is adjacent to an $n$-domain $\cD$ with respect to $G$, then $e_\Delta$ is adjacent to $\cD$ with respect to $\Delta$.
  \end{enumerate}
\end{lemma}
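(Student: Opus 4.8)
The plan is to obtain $e_\Delta$ from $e$ by a \emph{descent} argument built on the empty‑circle characterisation recorded in \cref{def:Delaunay}. Since $G=(D,E)$ is the \emph{complete} graph on $D$, the hypothesis says precisely that there are two defects $p_1\in F$ and $p_2\notin F$ with $\|p_1-p_2\|_\infty\le 4^n$; and if the diametral disk of a pair of defects has empty interior then that pair is an edge of $\Delta$. I will show that, starting from $(p_1,p_2)$, one can walk down to such a pair while always keeping one endpoint in $F$ and one outside.

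Formally, maintain a pair $(a,b)$ of defects with $a\in F$, $b\notin F$, initialised to $(p_1,p_2)$. If $(a,b)\in E_\Delta$, stop. Otherwise the diametral disk of $\overline{ab}$ contains a defect $r\ne a,b$ in its interior; since $\angle arb>\pi/2$ this yields $\|a-r\|_2<\|a-b\|_2$ and $\|r-b\|_2<\|a-b\|_2$. Replace $b$ by $r$ if $r\notin F$, and $a$ by $r$ if $r\in F$: either way the new pair still cuts $F$, and its Euclidean length has strictly decreased. As $D$ is finite the process halts, necessarily at a Delaunay edge $e_\Delta=(q_1,q_2)$ that cuts $F$. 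Because the Euclidean length never grew, $\|q_1-q_2\|_2\le\|p_1-p_2\|_2$. The one genuinely computational point is to sharpen this to $\|q_1-q_2\|_\infty\le 4^n$: this follows once one checks that every defect visited by the descent remains within $\ell_\infty$-distance at most $\tfrac{1}{2}\cdot 4^n$ of the point where $\overline{p_1p_2}$ crosses $\partial F$, which can be arranged by always choosing $r$ to be the interior defect of the diametral disk that is closest in $\ell_\infty$ to that crossing point.

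For \cref{Delaunay_intersection:undomain}: if $e$ lies in the $n$-undomain $\cU$, then $p_1,p_2\in\cU$, and since $\|p_1-p_2\|_\infty\le 4^n$ the whole region swept out by the descent lies in $\cU$ as well. Indeed, \cref{def:n-domain} together with \cref{undomain_edge_distance} forces the diametral disk of any edge of length $\le 4^n$ with an endpoint in $\cU$ to stay inside $\cU$, since any $m$-border with $m\ge n$ that meets that disk already meets a defect or an edge of length $\le 4^m$. Hence every $r$ encountered, and therefore $e_\Delta$ itself, lies in $\cU$. For \cref{Delaunay_intersection:domain}: suppose $e$ is adjacent to the $n$-domain $\cD$ with respect to $G$, witnessed by a vertical or horizontal path $\pi$ from a cell of $\cD$ to a cell met by $e$ that avoids all other defects, all edges of $G$, and all other $n$-domains. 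Because the descent displaces the edge only within a bounded neighbourhood of its crossing point with $\partial F$ — a neighbourhood which, by \cref{n-domain_partition}, is fenced off from every other $n$-domain by $n$-undomains — one can extend or locally reroute $\pi$ so that it ends at a cell met by $e_\Delta$ without crossing any further defect, any other $n$-domain, or any edge of $\Delta$; avoiding edges of $\Delta$ is only easier than avoiding edges of the larger graph $G$ that \cref{def:adjacent_edge} asks us to avoid. This gives adjacency of $e_\Delta$ to $\cD$ with respect to $\Delta$.

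The main obstacle is the interplay between the length bound and the domain/undomain bookkeeping: each \emph{single} step of the descent is justified cleanly by the empty‑circle property, but to secure \emph{simultaneously} $\|e_\Delta\|_\infty\le 4^n$ and ``$e_\Delta$ inhabits the same undomain, and is adjacent to the same domain, as $e$'' one must pin down \emph{where} the witnessing defects $r$ may lie, i.e.\ prove the descent is genuinely local — confined to an $O(4^n)$ neighbourhood of $e$ that neither leaves $\cU$ nor crosses the undomains surrounding $\cD$ — and moreover arrange the constants so that this neighbourhood has $\ell_\infty$-diameter at most $4^n$. Everything else is a routine unwinding of \cref{def:n-domain,def:adjacent_edge} and the properties of Delaunay triangulations in \cref{def:Delaunay}.
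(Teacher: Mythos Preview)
Your core descent via the empty-circle property is exactly the paper's approach, and your termination argument (strictly decreasing Euclidean length on a finite set) is the same. However, the treatment of \cref{Delaunay_intersection:domain} has a genuine gap, and your selection rule for $r$ is not the one that makes the argument go through.

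The paper chooses $d$ to be the defect in the diametral disk that is \emph{closest to the line segment $e$} (not closest to the crossing point with $\partial F$). This specific choice is what drives the adjacency argument. The issue you need to confront is this: the free path $\pi$ from $\cD$ terminates at $e$, but the replacement edge $e'$ may lie on the far side of $e$ from $\cD$; to reach $e'$ you must extend $\pi$ \emph{across} $e$, which is forbidden by \cref{def:adjacent_edge} as long as $e$ remains in the graph. Your ``locally reroute $\pi$'' does not address this. The paper's fix is to delete $e$ after each step, obtaining a strictly descending chain $G\supset G'\supset G''\supset\cdots\supseteq\Delta$, and to verify adjacency of $e'$ to $\cD$ \emph{with respect to the smaller graph $G'$}. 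The choice of $d$ closest to $e$ then guarantees that $e_1=(d,d_1)$ and $e_2=(d,d_2)$ are the very next edges the extended path meets after crossing $e$, and that $d$ does not sit in the semicircle on $\cD$'s side (else $\pi$ would have hit $e_1$ or $e_2$ before $e$). Your selection rule gives no such control.

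For \cref{Delaunay_intersection:undomain}, invoking \cref{undomain_edge_distance} is a detour: that lemma goes the wrong direction. The paper argues directly from \cref{def:n-domain}: since $e_1,e_2$ have length $<4^n$, the lattice cells they pass through lie in an $n$-undomain by definition, hence the whole triangle $d,d_1,d_2$---and in particular $e'$---stays in $\cU$. Finally, you correctly flag the $\ell_\infty$ versus $\ell_2$ length issue, but your proposed fix (confine all visited defects to an $\ell_\infty$-ball of radius $\tfrac12\cdot 4^n$ about the crossing point) is asserted without proof; the paper simply uses that any point in the open diametral disk is strictly closer to both endpoints than they are to each other, and carries the length bound through the iteration.
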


\begin{proof}
  Let $e=(d_1,d_2)$ denote the edge in question.
  Note that, since $e$ cuts the $n$-frame, by \cref{def:n-frame} it must have one vertex $d_1$ within the $n$-frame, and one vertex $d_2$ outside it.

  We find $e_\Delta$ recursively.
  If $e$ is contained in $E_\Delta$, then set $e_\Delta=e$ and we are done.

  Otherwise, consider the circle $C$ with diameter $e$.
  Since $e\notin E_\Delta$, by \cref{def:Delaunay}\labelcref{def:Delaunay:circle} $C$ must contain another defect in its interior.
  Let $d\in\cD$ be the defect in the interior of $C$ closest to $e$.
  Note that the edges $e_1=(d,d_1)$ and $e_2=(d,d_2)$ must be strictly shorter than $e$, so have length $<4^n$.
  Therefore, by \cref{def:n-domain}, the region enclosed by the triangle $d,d_1,d_2$ -- and in particular the edges $e_1$ and $e_2$ -- are contained in the same $n$-undomain $\cU$ as $e$, fulfilling the requirements of \cref{Delaunay_intersection:undomain}.
  Moreover, since $d_1$ is within the $n$-frame and $d_2$ outside of it, one of the edges $e_1$ or $e_2$ must cut the $n$-frame.
  Denote this edge $e'$, and let $G'=(D,E\setminus e)$ be the subgraph with $e$ deleted.
  Note that $\Delta \subseteq G' \subset G$.

  If $e$ is adjacent to $\cD$ with respect to $G$, then by \cref{def:adjacent_edge} there must be a free vertical or horizontal path from $\cD$ to $e$ (i.e.\ a path that does not cross another defect, edge or $n$-domain).
  $d$ cannot be contained in the semicircle through which the free path runs, or the path would necessarily cross one of $e_1$ or $e_2$ before reaching $e$.
  Since $d$ is the defect in $C$ closest to $e$, edges $e_1$ and $e_2$ must be the next edges after $e$ that are crossed if a free path from $\cD$ is extended beyond $e$.
  Thus $e_1$ and $e_2$ are adjacent to $\cD$ with respect to $G'$.
  Therefore, as well as cutting the $n$-frame, $e'$ is also adjacent to $\cD$ with respect to $G'$ in this case, fulfilling the requirements of \cref{Delaunay_intersection:domain}.

  If $e'\in E_\Delta$, then we can set $e_\Delta=e'$ and we are done.
  Otherwise, we can repeat the preceding argument for $e'$ to obtain a new edge $e''\in G''$ with $\Delta\subseteq G''\subset G' \subset G$.
  Iterating this gives a strictly descending chain of finite subgraphs lower-bounded by $\Delta$. Since $\Delta$ is non-empty, this process must eventually terminate at a suitable $e_\Delta$.
\end{proof}

Fix a Delaunay triangulation $\Delta = (D,E_\Delta)$.
We denote by $N^F_{E_\Delta}(n,S,R)$ the number of $n$-frames in region $S$ of Robinson tiling $R$ that are cut by an edge of length $\leq 4^n$ in $E_\Delta$.
As usual, we write $N^F_e$ for $N^F_{\{e\}}$.
The following bound follows immediately from \cref{NF,Delaunay_intersection}.

\begin{corollary}\label{NF_Delta}
  \begin{equation}
    N_E(n,S,R) 
    \leq 2N^F_{E_\Delta}(n,S,R).
  \end{equation}
\end{corollary}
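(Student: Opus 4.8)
The plan is to simply chain together the two bounds that have just been established. By \cref{NF} we have $N_E(n,S,R) \leq 2N^F_E(n,S,R)$, so it suffices to show $N^F_E(n,S,R) \leq N^F_{E_\Delta}(n,S,R)$, i.e.\ that every $n$-frame cut by an edge of $E$ of length $\leq 4^n$ is also cut by an edge of the fixed Delaunay triangulation $E_\Delta$ of length $\leq 4^n$. But this is precisely the first assertion of \cref{Delaunay_intersection}: given an $n$-frame cut by some $e\in E$ with $\abs{e}\leq 4^n$, that lemma produces an edge $e_\Delta\in E_\Delta$ with $\abs{e_\Delta}\leq 4^n$ that also cuts the same $n$-frame. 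Hence the map sending each cut $n$-frame to (a choice of) such an $e_\Delta$ witnesses $N^F_E(n,S,R)\leq N^F_{E_\Delta}(n,S,R)$, and combining with \cref{NF} gives $N_E(n,S,R)\leq 2N^F_E(n,S,R)\leq 2N^F_{E_\Delta}(n,S,R)$, as claimed.

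There is essentially no obstacle here: the corollary is a one-line composition of \cref{NF} with the main conclusion of \cref{Delaunay_intersection}, exactly as the surrounding text indicates (``The following bound follows immediately from \cref{NF,Delaunay_intersection}''). The only mild care needed is to note that the inequality is at the level of counts of $n$-frames rather than of edges, so one should phrase the argument in terms of the set of $n$-frames in $S$ that are cut by a short edge, observe that $E_\Delta\subseteq E$ so any such frame counted on the left is also counted on the right once \cref{Delaunay_intersection} supplies the witnessing Delaunay edge, and then invoke \cref{NF}. I do not anticipate needing parts \labelcref{Delaunay_intersection:undomain} or \labelcref{Delaunay_intersection:domain} of \cref{Delaunay_intersection} for this corollary — those refinements are presumably used later when the $N^F_{E_\Delta}$ terms are bounded domain-by-domain and undomain-by-undomain — so the proof is just the unconditional first statement of that lemma plugged into \cref{NF}.
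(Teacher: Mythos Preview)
Your proposal is correct and matches the paper's approach exactly: the paper simply states that the bound ``follows immediately from \cref{NF,Delaunay_intersection}'', and your argument spells out precisely this composition, using \cref{NF} to pass from $N_E$ to $2N^F_E$ and the main assertion of \cref{Delaunay_intersection} to conclude $N^F_E(n,S,R)\leq N^F_{E_\Delta}(n,S,R)$. Your observation that parts~\labelcref{Delaunay_intersection:undomain} and~\labelcref{Delaunay_intersection:domain} are not needed here is also correct.
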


\begin{proposition}\label{NE_tight_bound}
  Let $T$ be a tile configuration, $G=(D,E)$ its defect graph, $S$ a region of the lattice, and $R$ a Robinson tiling of that region.
  Then
  \begin{equation}
    \sum_n N_E(n,S,R) \leq 36 \abs{D}.
  \end{equation}
\end{proposition}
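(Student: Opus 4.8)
The plan is to chain together the bounds established above. By \cref{NF_Delta}, the number of $n$-borders intersected by short edges is controlled by the number of $n$-frames cut by Delaunay edges: $N_E(n,S,R)\le 2N^F_{E_\Delta}(n,S,R)$. Summing over $n$, it therefore suffices to show $\sum_n N^F_{E_\Delta}(n,S,R)\le 18\abs{D}$. The first step is to pass to a per-edge sum: an $n$-frame counted in $N^F_{E_\Delta}(n,S,R)$ is, by definition, cut by at least one edge of $E_\Delta$ of length $\le 4^n$, so $N^F_{E_\Delta}(n,S,R)\le\sum_{e\in E_\Delta}N^F_e(n,S,R)$ (the inequality possibly being strict, since a single cut frame may be cut by several Delaunay edges). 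Exchanging the order of summation, it is then enough to bound $\sum_n N^F_e(n,S,R)$ uniformly over $e\in E_\Delta$.

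Next I would fix an edge $e\in E_\Delta$ and let $m$ be the least integer with $\abs{e}\le 4^m$. For $n<m$ we have $\abs{e}>4^n$, so $N^F_e(n,S,R)=0$ by definition; hence $\sum_n N^F_e(n,S,R)=\sum_{n\ge m}N^F_e(n,S,R)$, which is exactly the total number of $n$-frames with $n\ge m$ that $e$ cuts in the Robinson tiling $R$. Applying \cref{edge-cutting-bound} with $l=m$ bounds this total by $\lfloor 4^{m-m}/2\rfloor+6=\lfloor 1/2\rfloor+6=6$. Thus $\sum_n N^F_e(n,S,R)\le 6$ for every Delaunay edge $e$.

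Finally I would use the edge count for Delaunay triangulations, $\abs{E_\Delta}\le 3\abs{D}-6\le 3\abs{D}$ (property~2 in \cref{def:Delaunay}), to conclude
\begin{align*}
  \sum_n N_E(n,S,R)
  &\le 2\sum_n\sum_{e\in E_\Delta}N^F_e(n,S,R)
   = 2\sum_{e\in E_\Delta}\sum_n N^F_e(n,S,R)\\
  &\le 2\cdot 6\cdot 3\abs{D}
   = 36\abs{D},
\end{align*}
which is the claimed bound.

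I do not anticipate a genuine obstacle: all the heavy lifting — the intersection-implies-cutting geometry (\cref{intersection-implies-cutting}), the Delaunay-rerouting lemma (\cref{Delaunay_intersection}) feeding into \cref{NF_Delta}, and the periodicity-based cutting bound (\cref{edge-cutting-bound}) — has already been done. The only subtle point is to apply \cref{edge-cutting-bound} at the \emph{matched} scale $l=m(e)$ for each edge, so that the $\lfloor 4^{l-m}/2\rfloor$ term contributes $0$; applying it at a larger $l$ would over-count, and this matching is precisely what makes the final sum come out linear in $\abs{D}$ rather than quadratic as in the weak bound of \cref{weak_bound}.
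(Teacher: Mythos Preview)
Your proof is correct and follows essentially the same route as the paper: reduce to Delaunay edges via \cref{NF_Delta}, bound the per-edge frame count by $6$ using \cref{edge-cutting-bound} at the matched scale, and finish with $\abs{E_\Delta}\le 3\abs{D}$. If anything, you are more explicit than the paper in justifying why $\sum_n N^F_e(n,S,R)\le 6$ requires applying \cref{edge-cutting-bound} with $l=m(e)$.
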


\begin{proof}
  Fix a Delaunay triangulation $\Delta = (D,E_\Delta)$.
  For edges $e\in E_\Delta$, define
  \begin{equation}
    N_e(n) :=
    \begin{cases}
      \text{number of $n$-frames that are cut by $e$} & \text{ length } e \leq 4^n\\
      0 & \text{otherwise}
    \end{cases}
  \end{equation}
  so that
  \begin{equation}
    N^F_{E_\Delta}(n,S,R) \leq \sum_{e\in E_\Delta} N_e(n).
  \end{equation}
  By \cref{edge-cutting-bound}, an edge $e$ of length $\leq 4^n$ can cut at most 6 $n$-frames.
  Thus
  \begin{equation}
    \sum_n N_e(n) \leq 6.
  \end{equation}
  Putting this together with \cref{NF_Delta}, we have
  \begin{align}
    \sum_n N_E(n,S,R) &\leq 2 \sum_n N^F_{E_\Delta}(n,S,R) \leq 2 \sum_n\sum_{e\in E_\Delta} N_e(n)\\
    &\leq 2 \sum_{e\in E_\Delta}\sum_n N_e(n) \leq 2\sum_{e\in E_\Delta} 6 = 12 \abs{E_\Delta}.
  \end{align}
  Using the fact (from \cref{def:Delaunay}\labelcref{def:Delaunay:edges}) that $\abs{E_\Delta} < 3\abs{D}$, we arrive at the claimed bound.
\end{proof}

\begin{lemma}\label{NFe_bound}
  Let $T$ be a tile configuration, $\cB$ a branch of its domain tree, $\{R^{(n)}\}$ a Robinson-compatible set for the domains $\cD^{(n)}\in\cB$, and $e$ an edge in a Delaunay triangulation of the defect set of $T$.
  Then
  \begin{equation}
    \sum_n N^F_e(n,\cD^{(n)},R^{(n)}) \leq 6.
  \end{equation}
\end{lemma}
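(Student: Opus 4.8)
The plan is to mirror the proof of \cref{Ne_bound} almost verbatim, replacing ``$n$-border'' by ``$n$-frame'' throughout and using \cref{edge-cutting-bound} in place of \cref{edge-intersection-bound}.

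First I would use Robinson-compatibility to collapse the family $\{R^{(n)}\}$ to a single reference tiling. By \cref{def:Robinson_compatible}, for every $n$ the $l$-borders of $R^{(n)}$ with $l\leq n$ sit exactly where some Robinson tiling places them, and these placements are mutually consistent across the family; hence there is one Robinson tiling $R$ such that, for all $n$, the $n$-borders of $R^{(n)}$ coincide with the $n$-borders of $R$. Since an $n$-frame is determined entirely by the position of an $n$-border (\cref{def:n-frame}), this yields $N^F_e(n,\cD^{(n)},R^{(n)}) = N^F_e(n,\cD^{(n)},R)$ for all $n$, and reduces the claim to a statement about the single tiling $R$.

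Next I would discard the scales smaller than the length of $e$. Letting $m$ be the smallest integer with $e$ of length $\leq 4^m$, we have $N^F_e(n,\cdot,\cdot)=0$ for every $n<m$ by definition, so $\sum_n N^F_e(n,\cD^{(n)},R^{(n)}) = \sum_{n\geq m} N^F_e(n,\cD^{(n)},R)$. As each $\cD^{(n)}$ is a subregion of the region tiled by $R$, the term $N^F_e(n,\cD^{(n)},R)$ is at most the number of $n$-frames of $R$ cut by $e$, and summing over $n\geq m$ counts frames of distinct sizes, hence distinct frames. Therefore $\sum_{n\geq m} N^F_e(n,\cD^{(n)},R)$ is bounded above by the total number of $n$-frames with $n\geq m$ that $e$ cuts in $R$.

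Finally I would invoke \cref{edge-cutting-bound} with $l=m$: an edge of length $\leq 4^m$ cuts at most $\lfloor 4^{m-m}/2\rfloor + 6 = 6$ $n$-frames with $n\geq m$ in a Robinson tiling. Chaining the three steps gives $\sum_n N^F_e(n,\cD^{(n)},R^{(n)}) \leq 6$, as claimed. I do not anticipate a genuine obstacle here; the only point requiring a little care is the first step, namely verifying that Robinson-compatibility simultaneously pins down all the $n$-frame positions, so that \cref{edge-cutting-bound} — which is stated for a single Robinson tiling — can be applied to the whole family at once.
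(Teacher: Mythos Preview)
Your proposal is correct and follows essentially the same argument as the paper: collapse to a single Robinson tiling via Robinson-compatibility, discard scales $n<m$, then apply \cref{edge-cutting-bound} with $l=m$ to get the bound of~6. The paper's proof is line-for-line the same, explicitly noting (as you do) that it mirrors \cref{Ne_bound}.
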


\begin{proof}
  The proof is very similar to that of \cref{Ne_bound}.

  By \cref{def:Robinson_compatible}, the $l$-borders in $R^{(n)}$ with $l\leq n$ are in the same locations as in a Robinson tiling.
  In particular, for all $n$, the $n$-borders -- and hence the $n$-frames -- in $R^{(n)}$ are located in the same place as some common Robinson tiling $R$.
  Thus $N^F_e(n,\cD^{(n)},R^{(n)}) = N^F_e(n,\cD^{(n)},R)$.

  Let $m$ be the smallest integer such that $e$ has length $\leq 4^m$.
  If $m>n$, then $N^F_e(n,\cD^{(n)},R) = 0$ by definition.
  Thus $\sum_n N^F_e(n,\cD^{(n)},R) = \sum_{n\geq m} N^F_e(n,\cD^{(n)},R)$.
  But by \cref{edge-cutting-bound}, an edge $e$ of length $\leq 4^m$ can cut at most 6 $n$-frames with $n\geq m$ in total.
  The bound follows.
\end{proof}

\begin{proposition}\label{NE_R_tight_bound}
  Let $T$ be a tile configuration, $G=(D,E)$ its defect graph, and $\{R^{(n)}_i\}$ a Robinson-compatible set for its domains $\{\cD^{(n)}_i\}$. Then
  \begin{equation}
    \sum_{n,i} N_E(n,\cD^{(n)}_i,R^{(n)}_i) \leq 360\abs{D}.
  \end{equation}
\end{proposition}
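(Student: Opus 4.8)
The plan is to establish \cref{NE_R_tight_bound} as the domain-wise, Delaunay-refined counterpart of \cref{NE_tight_bound}, in precisely the way that \cref{NE_R_bound} is the domain-wise counterpart of \cref{NE_bound}. The only new ingredient over \cref{NE_R_bound} is to replace the crude ``one contribution per edge of $E$'' counting (which costs a factor $\abs E=\abs D^2$) by the $n$-frame / Delaunay-edge counting of \cref{NE_tight_bound}, while keeping track of \emph{domain-adjacency} so that the branch-counting of \cref{e_branches} still applies.

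Fix a Delaunay triangulation $\Delta=(D,E_\Delta)$ of the defect set, and first bound $N_E(n,\cD^{(n)}_i,R^{(n)}_i)$ one domain at a time. By definition, each $n$-border counted by this quantity is intersected by an edge of $E$ of length $\leq 4^n$ but by no defect; by \cref{adjacent_intersection} it is therefore intersected by an edge of length $\leq 4^n$ that is adjacent to $\cD^{(n)}_i$ with respect to $G$. By \cref{intersection-implies-cutting} that edge cuts one of the border's associated $n$-frames, and by \cref{Delaunay_intersection:domain} the same $n$-frame is cut by an edge $e_\Delta\in E_\Delta$ of length $\leq 4^n$ that is adjacent to $\cD^{(n)}_i$ with respect to $\Delta$. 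Since $R^{(n)}_i$ is Robinson-compatible, its $n$-frames are those of a Robinson tiling, so each is associated to at most two $n$-borders by \cref{def:n-frame}; hence
\begin{equation}
  N_E(n,\cD^{(n)}_i,R^{(n)}_i)\ \leq\ 2\!\!\sum_{\substack{e\in E_\Delta\\ e\text{ adjacent to }\cD^{(n)}_i}}\!\! N^F_e(n,\cD^{(n)}_i,R^{(n)}_i).
\end{equation}

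Next I would sum over all $n$ and $i$, exchange the order of summation so that the outer sum is over Delaunay edges $e\in E_\Delta$ (of which there are fewer than $3\abs D$ by \cref{def:Delaunay}\labelcref{def:Delaunay:edges}) and the inner sum over domains adjacent to $e$, and then group those domains by the branches of the domain tree that contain them. \Cref{NFe_bound} bounds the contribution of each branch by $\sum_n N^F_e(n,\cD^{(n)},R^{(n)})\leq 6$, and \cref{e_branches} bounds by a constant (ten, exactly as in the proof of \cref{NE_R_bound}) the number of branches containing a domain adjacent to a fixed Delaunay edge. Combining,
\begin{equation}
  \sum_{n,i} N_E(n,\cD^{(n)}_i,R^{(n)}_i)\ \leq\ 2\sum_{e\in E_\Delta}10\cdot 6\ =\ 120\,\abs{E_\Delta}\ \leq\ 120\cdot 3\abs D\ =\ 360\abs D ,
\end{equation}
which is the claim.

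I expect the only delicate point to be the first step: carrying the label ``adjacent to $\cD^{(n)}_i$'' all the way from the arbitrary graph edge that merely \emph{intersects} a border, through \cref{intersection-implies-cutting}, down to a \emph{Delaunay} edge via \cref{Delaunay_intersection:domain}. This is what legitimises the subsequent branch-counting via \cref{e_branches} and \cref{NFe_bound}; dropping it would only recover a bound in $\abs E$, as in the weaker \cref{NE_R_bound}. Everything after that step is bookkeeping --- one exchange of summation order, the two per-edge constants $6$ and $10$, and the Euler-formula bound $\abs{E_\Delta}<3\abs D$.
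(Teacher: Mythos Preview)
Your proposal is correct and follows essentially the same route as the paper's proof: the chain \cref{adjacent_intersection} $\to$ \cref{intersection-implies-cutting} $\to$ \cref{Delaunay_intersection}\labelcref{Delaunay_intersection:domain} to pass to Delaunay edges adjacent to the domain, then the branch-counting via \cref{NFe_bound} and \cref{e_branches}, and finally $\abs{E_\Delta}<3\abs D$. The only cosmetic difference is that the paper factors out the ``$\leq 2$'' (from an $n$-frame being associated to at most two $n$-borders) via \cref{NF_Delta} at the end, whereas you absorb it at the start; the arithmetic $2\cdot 10\cdot 6\cdot 3=360$ is identical.
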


\begin{proof}
  By \cref{adjacent_intersection}, if an $n$-border in an $n$-domain $\cD$ is intersected by an edge in $E$ of length $\leq 4^n$, then either it is also intersected by a defect in $D$ hence is not counted in $N_E$ by definition.
  Or it is also intersected by an edge $e$ of length $\leq 4^n$ adjacent to $\cD$ with respect to $G$.
  \Cref{intersection-implies-cutting} in turn implies that $e$ cuts one of the $n$-frames associated with that $n$-border.
  But by \cref{Delaunay_intersection}, this implies the $n$-frame is also cut by an edge $e\in E_\Delta$ of length $\leq 4^n$ that is adjacent to $\cD$ with respect to $\Delta$.
  Thus, (ab)using the notation $e\in\cD$ to denote that $e$ is adjacent to $\cD$ with respect to $E_\Delta$,
  \begin{equation}
    N^F_{E_\Delta}(n,\cD^{(n)}_i,R^{(n)}_i) = \sum_{e\in\cD^{(n)}_i} N^F_e(n,\cD^{(n)}_i,R^{(n)}_i)
  \end{equation}
  where the sum is over edges in $E_\Delta$.

  The argument is now similar to \cref{NE_R_bound}:
  \begin{align}
    \sum_{n,i} N^F_{E_\Delta}(n,\cD^{(n)}_i,R^{(n)}_i)
    &= \sum_{n,i}\sum_{e\in \cD^{(n)}_i} N^F_e(n,\cD^{(n)}_i,R^{(n)}_i)\\
    &= \sum_{e\in E_\Delta}\sum_{\cD^{(n)}_i\ni e} N^F_e(n,\cD^{(n)}_i,R^{(n)}_i)\\
    &\leq \sum_{e\in E_\Delta}\sum_{\cB\ni e}\sum_{\cD^{(n)}_i\in\cB} N^F_e(n,\cD^{(n)}_i,R^{(n)}_i)\\
    &\leq \sum_{e\in E_\Delta}\sum_{\cB\ni e} 6 \label{eq:NFe_bound}
      \leq \sum_{e\in E_\Delta} 10\cdot 6 = 60\abs{E_\Delta},
  \end{align}
  where the inequalities in \cref{eq:NFe_bound} follow from \cref{NFe_bound} and \cref{e_branches}, respectively.

  By \cref{NF_Delta},
  \begin{equation}
    N_E(n,\cD^{(n)}_i,R^{(n)}_i) \leq 2 N^F_{E_\Delta}(n,\cD^{(n)}_i,R^{(n)}_i).
  \end{equation}
  Combining this with the above bound, and using the fact from \cref{def:Delaunay}\labelcref{def:Delaunay:edges} that $\abs{N_{E_\Delta}}\leq 3\abs{D}$, gives the desired result.
\end{proof}

\begin{theorem}\label{border_deficit}
  Let $T$ be a tile configuration of a finite subregion of $\Z_2$ with perimeter of length $L$.
  Let $D$ denote its defect set.
  The border deficit of $T$ is bounded by
  \begin{equation}
    \deficit(T) \leq 399\abs{D}+L.
  \end{equation}
\end{theorem}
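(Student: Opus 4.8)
The plan is to follow exactly the decomposition used in the proof of \cref{weak_bound}, but to substitute the sharper bounds \cref{NE_tight_bound} and \cref{NE_R_tight_bound}, which scale as $O(\abs{D})$, for the crude bounds \cref{NE_bound} and \cref{NE_R_bound}, which scale as $O(\abs{E})=O(\abs{D}^2)$. Concretely, I would start from \cref{n-deficit}, which gives $\deficit(T)=\max_R\sum_n\deficit_n(T,S,R)$, and then use \cref{n-domain_partition} together with subadditivity of the deficit (\cref{deficit_union}) to split the sum over all $n$-domains and all $n$-undomains:
\begin{equation}
  \deficit(T) \leq \max_R\sum_{n,i}\deficit_n(T,\cD^{(n)}_i,R) + \max_R\sum_{n,i}\deficit_n(T,\cU^{(n)}_i,R).
\end{equation}

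For the undomain term, I would apply \cref{undomain_count} to bound each $\deficit_n(T,\cU^{(n)}_i,R)$ by $N_D(n,\cU^{(n)}_i,R)+N_E(n,\cU^{(n)}_i,R)$; since $n$-undomains are pairwise disjoint by \cref{n-domain_partition}, and enlarging the region of consideration cannot decrease the $N_{D/E}$ counts, this sum is at most $\sum_n N_D(n,S,R)+\sum_n N_E(n,S,R)$, which is bounded by $\abs{D}+36\abs{D}=37\abs{D}$ using \cref{ND_bound} and \cref{NE_tight_bound}.

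For the domain term, I would invoke \cref{domain_count} to bound it by $\max_{\{R^{(n)}_i\}}\sum_{n,i}\bigl(N_D(n,\cD^{(n)}_i,R^{(n)}_i)+N_E(n,\cD^{(n)}_i,R^{(n)}_i)+N_\partial(n,\cD^{(n)}_i,R^{(n)}_i)\bigr)$, the maximisation running over Robinson-compatible sets. The first summand is at most $2\abs{D}$ by \cref{ND_R_bound}; the second is at most $360\abs{D}$ by \cref{NE_R_tight_bound}; and the $N_\partial$ term, which counts partial $n$-borders meeting the outer boundary of the tiled region, is trivially at most its perimeter $L$. Adding the domain contribution $362\abs{D}+L$ to the undomain contribution $37\abs{D}$ yields $399\abs{D}+L$, as claimed.

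The proof itself introduces no new ideas: all the substance is already in the earlier results, in particular the Delaunay-triangulation argument underlying \cref{Delaunay_intersection}, \cref{NE_tight_bound} and \cref{NE_R_tight_bound}. The only points requiring any care are (i) ensuring the $N_\partial$ correction appears only in the domain half of the split — it is absent from the undomain bound, which is obtained directly from \cref{undomain_count} — and (ii) checking that the asymmetry in \cref{deficit_union} causes no trouble: borders straddling a domain/undomain interface are already accounted for by the boundary terms $N_B$/$N_\partial$ inside \cref{domain_count}, so nothing is double- or under-counted. I expect the only ``obstacle'' to be getting the constants to add up correctly, which is purely a matter of bookkeeping.
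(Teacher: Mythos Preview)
Your proposal is correct and matches the paper's own proof essentially verbatim: the paper simply states that the argument is ``exactly as for \cref{weak_bound}, but using the tighter bounds from \cref{NE_tight_bound,NE_R_tight_bound} in place of \cref{NE_bound,NE_R_bound}, respectively,'' which is precisely the substitution you carry out, with the same decomposition and the same arithmetic $2+360+L$ (domains) plus $1+36$ (undomains) $=399\abs{D}+L$.
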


\begin{proof}
  Exactly as for \cref{weak_bound}, but using the tighter bounds from \cref{NE_tight_bound,NE_R_tight_bound} in place of \cref{NE_bound,NE_R_bound}, respectively.
\end{proof}

\subsection{Square deficit bound}

\begin{definition}
  An \keyword{inner border} of an $n$-border in a Robinson tiling is an $m$-border (necessarily with $m\leq n$) located in the interior of the $n$-border, and not contained in the interior of any other border.
\end{definition}

\begin{definition}
  An \keyword{$n$-square} is a tile configuration of a $(4^n-1)\times (4^n-1)$ region of the lattice containing an $n$-border around the perimeter, inner $m$-borders in the same locations as in a Robinson tiling, and no other borders and no defects in the region between the $n$-border and the inner $m$-borders.

  We call the region of the lattice between an $n$-border and the locations where the inner $m$-borders would be in a Robinson tiling, including the $n$-border and the inner $m$-borders themselves, the \keyword{$n$-square region}. The \keyword{interior} of the $n$-square region is the region excluding the $n$-border and the inner $m$-border locations.
\end{definition}

The square deficit is defined analogously to the border deficit (\cref{def:border_deficit}).
\begin{definition}[Square deficit]\label{def:square_deficit}
  The \keyword{$n$-square deficit}, $\sdeficit_n(T,S,R)$, in a region $S$ of a tile configuration $T$ with respect to Robinson tiling $R$, is the (magnitude of the) difference between the total number of complete $n$-squares of $T$ within $S$, and the number of complete $n$-squares of $R$ within $S$.

  The \keyword{total square deficit}, $\sdeficit(T)$, of a tile configuration $T$ is the difference between the total number of complete squares in $T$ and the number of complete squares in a Robinson tiling of the same region, maximised over Robinson tilings.
\end{definition}

To extend the border deficit bound of \cref{border_deficit} to the square deficit, we need a simple lemma.
\begin{lemma}\label{intact_squares}
  Let $T$ be a tile configuration, and consider an $n$-border in $T$.
  If $T$ does \emph{not} contain an $n$-square corresponding to the $n$-border, then either one of its inner $m$-borders is missing, or there is a defect in the interior of the $n$-square.
\end{lemma}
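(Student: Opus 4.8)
The plan is to unpack \cref{def:n-domain}'s companion definition of an $n$-square into its constituent requirements and verify, using the rigidity of Robinson tilings, that every way those requirements can fail is accounted for by one of the two alternatives in the statement. Spelled out, an $n$-square associated to the given $n$-border requires: (i) the $n$-border itself around the perimeter; (ii) the inner $m$-borders present in the same locations as in a Robinson tiling; (iii) no further borders in the $n$-square region; and (iv) no defects in the $n$-square region. By hypothesis the $n$-border is present, so (i) holds, and if $T$ does not contain the corresponding $n$-square then (ii), (iii) or (iv) fails.

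If (ii) fails, then some inner $m$-border is not present as a complete border where a Robinson tiling would have one, i.e.\ an inner $m$-border is missing --- the first alternative. If (iv) fails, there is a defect in the $n$-square region; a defect lying on (between a correct tile of) the $n$-border or an inner $m$-border forces one of those border tiles to be incorrect, hence breaks that border and again gives the first alternative, while any other defect of the $n$-square region lies in its interior, giving the second alternative. It therefore remains to treat the case in which (ii) and (iv) hold but (iii) fails: there is an additional border $B$ somewhere in the $n$-square region, necessarily disjoint from the $n$-border and the inner $m$-borders by \cref{no_border_overlap}.

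To handle this last case I would invoke the Robinson rigidity argument already used in \cref{domain_tiling} (the argument of \cite[Lemma~47]{Cubitt_PG_Wolf_Undecidability}): a complete $n$-border, together with the absence of defects in the region between it and its inner $m$-borders, forces the connective pattern of arms, crosses and green borders in that region to be exactly that of a Robinson tiling, which contains no further red borders and in particular no $B$. Hence the presence of $B$ witnesses a deviation from the Robinson pattern in the $n$-square region, and by the same rigidity argument such a deviation can only be caused either by a defect lying in the region between the $n$-border and its inner $m$-borders (the second alternative) or by defect propagation emanating from the interior of an inner $m$-border, which forces that inner $m$-border's tiles to deviate and so breaks it (the first alternative). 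Either way we land in one of the two stated alternatives, which completes the argument.

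The only genuine point of care --- and it is a mild one, since the heavy lifting is done by \cref{domain_tiling} and the cited rigidity result --- is making precise the claim that the connective pattern of the $n$-square region is forced inward from the $n$-border alone: one has to check that the forcing of the arms and inner $m$-border outlines propagates through the connective region without needing any input from the interiors of the inner $m$-borders, and to pin down exactly where a stray defect must sit for it to permit the extra border $B$. This is entirely local and follows the same template as the proofs earlier in this section, so I do not anticipate any real difficulty.
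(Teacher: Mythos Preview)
Your approach is essentially the paper's: both hinge on the Robinson rigidity result behind \cref{domain_tiling} to rule out extra borders in the $n$-square region. The paper simply runs the contrapositive more directly---if there are no defects in the $n$-square region it is a defect-free $n$-domain, so \cref{domain_tiling} forces the full border pattern and the $n$-square is intact; hence failure of the $n$-square forces either a defect in the interior or a missing inner border.

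One point to tighten in your case~(iii): the phrase ``defect propagation emanating from the interior of an inner $m$-border, which forces that inner $m$-border's tiles to deviate'' is not right. The interiors of the inner $m$-borders lie \emph{outside} the $n$-square region by definition, so defects there are irrelevant to whether the $n$-square is intact, and in any case a defect elsewhere does not alter the tiles constituting a border that is, by your hypothesis~(ii), already present. The cleaner argument is the one you sketch just before: if (ii) and (iv) hold, the interior is defect-free and bounded by the correct $n$-border and inner $m$-borders, so rigidity forces every tile in the interior and hence excludes any stray border $B$---i.e.\ (iii) cannot fail once (ii) and (iv) hold, and no further case analysis is needed.
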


\begin{proof}
  Assume first that there are no defects within the region corresponding to the $n$-square.
  Then the interior of the $n$-square is an $n$-domain that contains no defects, so by \cref{domain_tiling} it must contain the same pattern of $m$-borders for all $m\leq n$ as some Robinson tiling.
  The only such pattern consistent with the $n$-border itself has all outer borders in the correct locations, thus the $n$-square is intact.

  If there are no defects in the interior of the $n$-square, then by definition the $n$-square is intact unless at least one of its inner borders is missing.
\end{proof}

\begin{theorem}\label{square_deficit}
  Let $T$ be a tile configuration of a finite subregion of $\Z_2$ with perimeter of length $L$.
  Let $D$ denote its defect set.
  The square deficit of $T$ is bounded by
  \begin{equation}
    \sdeficit(T) \leq 799\abs{D}+2L.
  \end{equation}
\end{theorem}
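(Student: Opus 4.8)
The plan is to reduce the square deficit to the border deficit of \cref{border_deficit} plus a term linear in $\abs{D}$, using \cref{intact_squares}. Fix a Robinson tiling $R$ realising the maximum in \cref{def:square_deficit}. Exactly as in the proof of \cref{domain_count1}, for each $n$ the number of complete $n$\nbd-squares of $R$ in $S$ minus the number in $T$ is at most the number of $n$\nbd-square regions of $R$ whose tile configuration in $T$ fails to be an $n$\nbd-square; call these the \emph{destroyed} $n$\nbd-squares. Summing over $n$, $\sdeficit(T)$ is bounded by the total number of destroyed squares, so it suffices to count these.

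By \cref{intact_squares}, every destroyed $n$\nbd-square falls into at least one of three cases: (i) the associated $n$\nbd-border of $R$ is absent from $T$; (ii) that $n$\nbd-border is present in $T$ but one of its inner $m$\nbd-borders ($m\leq n$) is absent from $T$; or (iii) the $n$\nbd-border and all its inner $m$\nbd-borders are present in $T$, but there is a defect in the interior of the $n$\nbd-square region. I would bound the three contributions in turn. For (i), the relevant quantity is the number of $R$\nbd-borders (of every size) destroyed in $T$, which is exactly what the proof of \cref{border_deficit} controls, giving at most $399\abs{D}+L$. For (ii), note that in a Robinson tiling each $m$\nbd-border is an inner border of at most one larger border, so the map sending a case-(ii) destroyed square to one of its absent inner borders is injective; hence the case-(ii) total is again at most the number of destroyed $R$\nbd-borders, i.e.\ at most $399\abs{D}+L$. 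For (iii), the interiors of the $n$\nbd-square regions of $R$ — over all scales $n$ and all positions — are pairwise disjoint (they are the nested ``annular'' pieces of the Robinson pattern at each scale, which follows from the nested structure together with \cref{no_border_overlap}), so each defect lies in the interior of at most one such region, and the case-(iii) total is at most $\abs{D}$. Adding the three bounds yields $\sdeficit(T)\leq 2(399\abs{D}+L)+\abs{D}=799\abs{D}+2L$, as claimed.

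The main obstacle I anticipate is the bookkeeping hidden in cases (i) and (ii): one must check that ``number of $R$\nbd-borders destroyed in $T$'' is genuinely bounded by $399\abs{D}+L$, rather than only the net (signed) border deficit $\deficit(T)$ — that is, one must rule out the possibility that $T$ carries many spurious borders not present in $R$ which would inflate the count of destroyed $R$\nbd-borders relative to $\deficit(T)$. This is dealt with by revisiting \cref{domain_tiling,domain_count,NE_tight_bound,NE_R_tight_bound}: within each domain $T$ agrees with a Robinson tiling except at defects, and the quantities $N_D$, $N_E$ and $N_\partial$ appearing there are in fact upper bounds on the number of $R$\nbd-borders destroyed in $T$ at each scale, not merely on $\abs{\#_R-\#_T}$; moreover the same Robinson-compatible family $\{R^{(n)}_i\}$ can be used for both cases (i) and (ii), so the two counts are consistent. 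A secondary (but routine) point is to verify carefully, directly from the definitions of $n$\nbd-square region and inner border, the disjointness claim invoked in case (iii) and the ``at most one enclosing border'' claim invoked in case (ii).
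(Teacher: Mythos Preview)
Your proposal is correct and follows essentially the same route as the paper: the three-case decomposition via \cref{intact_squares} (missing outer border, missing inner border, defect in the interior), the observation that each $m$-border is an inner border of at most one larger border so case~(ii) is bounded by the border deficit, and the disjointness of $n$-square region interiors so case~(iii) contributes at most $\abs{D}$, combining to $2\deficit(T)+\abs{D}\leq 799\abs{D}+2L$. Your flagged bookkeeping concern about destroyed-versus-net border counts is legitimate and the paper glosses over it; your proposed resolution through \cref{domain_tiling} is the right one.
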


\begin{proof}
  Every missing border in $T$, of which there are at most $\deficit(T)$, implies a missing square.
  In addition, by \cref{intact_squares}, any $n$-border in $T$ that is either missing an inner $m$-border or contains a defect in the interior of its $n$-square region, is missing its corresponding $n$-square.

  Let $B_n$ denote the number of missing inner $m$-borders (for all $m\leq n$) of $n$-borders, and $D_n$ denote the number of defects in the interior of an $n$-square region.
  Then the $n$-square deficit is bounded by
  \begin{equation}
    \sdeficit_n(T) \leq \deficit_n + B_n + D_n.
  \end{equation}

  An $m$-border can be an inner border for at most one $n$-border, thus $\sum B_n \leq \deficit(T)$.
  Since $n$-borders cannot overlap by \cref{no_border_overlap}, the interiors of $n$-square regions do not overlap either.
  Thus a defect can be contained in at most one such region, and $\sum_n D_n \leq \abs{D}$.

  Summing over $n$ and using \cref{border_deficit}, we have
  \begin{align}
    \sdeficit(T) &= \sum_n\sdeficit_n(T) \leq \sum_n\deficit_n + B_n + D_n \\
    &\leq 2\deficit(T) + \abs{D} \leq 799\abs{D} +2L,
  \end{align}
  as claimed.
\end{proof}


	\subsection{Obstruction Signal Bound}
	We now bound the deficit of squares which have correct obstruction tilings.
	As discussed in \cref{Sec:Encode_TM}, obstruction signals are used to demarcate free tiles.
	We now add these markings to the modified Robinson tiles: we make a small change relative to the obstruction markings in \cite{Robinson_1971} and choose all obstruction signals to have a direction: the horizontal  left-to-right or downwards.
	This new set of tiles modified Robinson tiles + obstruction markings are labelled \emph{obstruction tiles}.
	\begin{definition}[Correct Obstruction Tiling] \label{Def:Obstruction_Tiling}
		A complete Robinson square has a \emph{correct obstruction tiling} if:
		\begin{enumerate}
			\item a tile has no obstruction signals iff it is in both a free row and free column. \label{Obstruction_1}
			\item a tile has horizontal (vertical) obstruction signals run across it iff it is in a free column (row). \label{Obstruction_2}
			\item all tiles not contained a free row or column of the $n$-border, and not contained within another $m$-border, have obstruction signals running across them  both horizontally and vertically. \label{Obstruction_3}
		\end{enumerate}
	\end{definition}

Consider the deficit in the number of squares present with correct obstruction tilings in the case of defects versus defect-free tilings:
\begin{lemma}\label{Lemma:obstruction_deficit}
	Let $T$ be a tile configuration of a finite subregion of $\Z_2$ with perimeter of length $L$,
	and $D$ its defect set.

	Define the \emph{obstruction deficit} of $T$, $\odeficit(T)$, to be the difference between the total number of complete Robinson squares in $T$ with a correct internal obstruction tiling, and the number of these in a Robinson tiling of the same region, maximised over Robinson tilings.

	The obstruction deficit of $T$ is bounded by
	\begin{equation}
	\odeficit(T) \leq 800\abs{D}+2L.
	\end{equation}
\end{lemma}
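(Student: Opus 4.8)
The plan is to reduce to the square deficit bound of \cref{square_deficit} and pay an extra $\abs{D}$ for squares that are geometrically complete but carry the wrong obstruction markings. In any defect\nbd-free Robinson tiling the obstruction markings are part of the tile set and are forced exactly as the underlying Robinson tiles are, so every complete Robinson square of such a tiling automatically has a correct internal obstruction tiling (in the sense of \cref{Def:Obstruction_Tiling}). Hence the only complete Robinson squares of $T$ that fail to have a correct obstruction tiling, beyond those already ``missing'' and thus counted by $\sdeficit(T)$, are complete $n$\nbd-squares whose obstruction markings are nonetheless incorrect. Writing $X$ for the number of such squares, the same bookkeeping as in the proof of \cref{square_deficit} (splitting the count into missing squares versus surviving-but-wrong squares, relative to the Robinson tiling that maximises the deficit) gives
\begin{equation}
  \odeficit(T) \leq \sdeficit(T) + X .
\end{equation}

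The main work is then to show $X \leq \abs{D}$. The key point is that inside a complete $n$\nbd-square the region between the $n$\nbd-border and its inner $m$\nbd-borders is, by definition, free of defects and of stray borders, so by \cref{domain_tiling} the Robinson tiles there --- and with them all the obstruction markings lying in that region --- are uniquely forced; moreover the $n$\nbd-border and the inner $m$\nbd-borders, being complete, emit and absorb the obstruction signals along their edges exactly as in a Robinson tiling. Consequently a complete $n$\nbd-square with an incorrect obstruction tiling must contain at least one defect, and --- since the region between its borders is defect\nbd-free and the borders themselves contain no defects --- this defect must sit strictly inside one of the inner $m$\nbd-borders. I would charge each obstruction\nbd-incorrect complete square to such an interior defect. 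Because complete borders contain no defects and, by \cref{no_border_overlap}, cannot overlap, any two complete squares sharing a common defect must be nested through the border hierarchy; but if a defect lying inside an inner $m$\nbd-border of a complete square $Q$ were also charged to a strictly larger complete square $Q'\supseteq Q$, then that defect would have to lie in the defect\nbd-free region between $Q'$'s borders or on one of $Q'$'s complete borders --- impossible. Hence distinct obstruction\nbd-incorrect complete squares receive distinct charges, and $X \leq \abs{D}$. Combining the two bounds gives $\odeficit(T) \leq \sdeficit(T) + \abs{D} \leq (799\abs{D}+2L)+\abs{D} = 800\abs{D}+2L$, as claimed.

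The step I expect to be the main obstacle is the middle one: making rigorous the assertion that a complete $n$\nbd-square with a defect\nbd-free ``region between'' has its obstruction signals uniquely forced, so that any obstruction error is traceable to a defect in an inner\nbd-border interior, and that such a defect cannot simultaneously account for a second complete square's error. This requires a careful case analysis of how the obstruction signals (and, where the initialisation signals of \cref{Sec:Encode_TM} feed into them, those as well) propagate relative to the nested Robinson square hierarchy --- analogous to, but more delicate than, the arguments underlying \cref{domain_tiling} and \cref{intact_squares} --- in particular pinning down exactly which defects can disrupt an otherwise\nbd-complete square's obstruction pattern and verifying the resulting charge is injective.
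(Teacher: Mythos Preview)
Your overall scaffolding matches the paper's: write $\odeficit(T)\le\sdeficit(T)+X$ where $X$ counts complete Robinson squares whose obstruction tiling is wrong, then show $X\le\abs{D}$. The gap is in where you place the culprit defect and how you charge it.

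Your own forcing argument says that if the region between the $n$-border and its inner borders is defect-free and the borders themselves are intact, the obstruction markings there are uniquely forced. Taken at face value this already gives $X=0$. You then locate the culprit defect ``strictly inside one of the inner $m$-borders'' --- but by \cref{Def:Obstruction_Tiling} the obstruction tiling of an $n$-square concerns only tiles \emph{outside} the inner borders, and those tiles you have just argued are forced correctly. So a defect buried inside an inner border cannot disturb the enclosing $n$-square's obstruction pattern at all; your charging rule is vacuous, and the injectivity step (which suddenly requires the charged defect to lie in the region between $Q'$'s borders) contradicts the rule you set up a sentence earlier. The ``main obstacle'' you flag is therefore not a missing rigor step but an internal inconsistency.

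The paper resolves this by reading ``complete Robinson square'' in the Robinson sense only: the region between borders is free of Robinson defects and stray borders, but may still carry obstruction-layer mismatches. An incorrect obstruction signal must then terminate either at such a defect in its path, or at a stray red double-arrow tile --- and the latter would itself force a defect since the Robinson structure in the square is intact. Either way the culprit sits in the $n$-square region \emph{between} the borders. These regions are pairwise disjoint (any smaller square's region lies strictly inside one of the larger square's inner borders, hence outside the larger square's region), so each defect spoils at most one complete square's obstruction tiling. This gives $X\le\abs{D}$ directly, with no nesting-based injectivity argument needed.
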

\begin{proof}
\Cref{def:square_deficit} and \cref{square_deficit}, bound the square deficit.
Given any complete square, obstruction signals which are emitted by the smaller interior borders can terminate if (a)~there is a defect in their path, or (b)~they end on a tile with double red arrows.
Case~(a) implies there is a defect in the interior of the square.
Case~(b) implies there a tile with a double red arrow marking which is horizontal (vertical) in a free column (row), which must immediately result in a defect if contained in a square.
Conversely, a tile without obstruction signals which is not placed in a free row must be adjacent to a tile with obstruction signals, and thus must cause a defect.
It follows that if any of the conditions from \cref{Def:Obstruction_Tiling} are not met, then there must be an interior defect.
Furthermore, if there is a complete $n$-square with a defect, the tiling outside the $n$-border is unaffected as its $n$-border is identical to the case without a defect.
Thus
\begin{align}
\odeficit(T) \leq  \sdeficit(T) +  |D|
\end{align}

\end{proof}

	\subsection{Robinson + TM Tiling Bound}

	Finally, the obstruction tiles need to be combined with the obstructions tiles, which are themselves a combination of the Robinson and obstruction tiles.
	We use the full set of tiles described in \cref{Sec:Encode_TM} which include the Robinson markings, obstruction markings, and Turing Machine signals.
	\begin{definition}
		An $n$-square has a correct TM encoding if its $(2^n+1)\times (2^n+1)$ free tiles encode the correct evolution of a TM from some fixed initial state according to the TM's transition rules.
	\end{definition}

	\begin{lemma} \label{Corollary:Total_Deficit}
		Let $T$ be a tile configuration of a finite subregion of $\Z_2$ with perimeter of length $L$, $D$ its defect set.

		Define the \emph{total deficit} of $T$, $\tdeficit(T)$, to be the difference between the total number of complete Robinson squares in $T$ with a correct internal Turing Machine tiling, and the number of these in a Robinson tiling of the same region, maximised over Robinson tilings.

		The total deficit of $T$ is bounded by
		\begin{equation}
		\tdeficit(T) \leq 801\abs{D}+2L.
		\end{equation}
	\end{lemma}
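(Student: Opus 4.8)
The plan is to mimic the proofs of \cref{square_deficit,Lemma:obstruction_deficit} essentially verbatim: each time an extra layer of structure is demanded of a complete Robinson square — first the inner $m$-borders, then the obstruction markings, and now the Turing Machine encoding — the only thing that can spoil that structure inside a given square is a defect lying in that square's interior, and since $n$-square regions do not overlap (a consequence of \cref{no_border_overlap}), each defect is charged to at most one spoiled square. So one more $\abs{D}$ is added at each stage, and here we expect $\tdeficit(T)\leq\odeficit(T)+\abs{D}$.

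First I would note that in a defect-free Robinson tiling every complete square automatically carries a correct obstruction tiling and a correct TM encoding, since both are forced by the tiling rules (this is the content of \cref{Lemma:UTM_on_Robinson_Square} and the construction recalled in \cref{Sec:Encode_TM}). Hence the number of correct-TM squares in a Robinson tiling equals the number of correct-obstruction squares there, which equals the total number of complete squares. In $T$, on the other hand, a square with a correct TM encoding necessarily has a correct obstruction tiling too — the TM tape is encoded precisely in the free tiles, which are the tiles the obstruction signals demarcate — so the count of correct-TM squares in $T$ is the count of correct-obstruction squares in $T$ minus the number of squares that have a correct obstruction tiling but an incorrect TM encoding. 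Plugging these two observations into the definitions of $\tdeficit$ and $\odeficit$ gives
\begin{equation}
  \tdeficit(T) \leq \odeficit(T) + \#\bigl\{\,n\text{-squares of }T\text{ with correct obstruction tiling but incorrect TM encoding}\,\bigr\}.
\end{equation}

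The crux is to bound the last term by $\abs{D}$, i.e.\ to show that a complete $n$-square with a correct obstruction tiling but an incorrect TM encoding must contain a defect in the interior of its $n$-square region. Here I would invoke the construction of \cref{Sec:Encode_TM} in detail: given a correct obstruction tiling, the free rows and columns are exactly those carrying no obstruction signal; the bottom border of the $n$-square emits the initialisation signal $s_0$ into the first free row (and $s_0q_0$ at the centre), forcing the initial tape to be blank with the head at the centre, while the other three borders absorb any stray TM signal reaching them; and the TM signal markings transmit each free tile's contents to its neighbouring free tiles along free rows and columns, so that the tiling rules force each successive free row to encode the correct successor TM configuration. Consequently, once the $n$-border and the obstruction tiling are intact, the entire contents of the free tiles are uniquely determined, and any deviation from the correct TM evolution must create a matching-rule violation strictly inside the $n$-border — that is, a defect in the interior of the $n$-square region. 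Since these interiors are pairwise disjoint (again by \cref{no_border_overlap}), each defect lies in at most one of them, so the last term above is $\leq\abs{D}$; combining with $\odeficit(T)\leq 800\abs{D}+2L$ from \cref{Lemma:obstruction_deficit} gives $\tdeficit(T)\leq 801\abs{D}+2L$.

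The main obstacle I anticipate is making the ``forced evolution'' step genuinely rigorous: one has to verify, by a case analysis over the augmented tile set of \cref{Sec:Encode_TM}, that \emph{every} possible way the forced pattern could fail — a wrong tape or head symbol in a free tile, a TM signal that is misrouted, spuriously emitted, or spuriously absorbed, an initialisation signal that is not as prescribed — really does force a matching violation rather than merely an unexpected-but-locally-consistent tile, and moreover that this violation sits strictly in the interior of the $n$-border, so that it is not already counted in $\deficit(T)$ as a broken border. This is the same flavour of argument as in the proof of \cref{Lemma:obstruction_deficit}, just one structural level higher.
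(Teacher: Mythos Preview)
Your proposal is correct and follows essentially the same approach as the paper: both establish the key inequality $\tdeficit(T)\leq\odeficit(T)+\abs{D}$ by arguing that, once the obstruction tiling of a complete $n$-square is correct, any failure of the TM encoding forces a matching-rule violation in the square's interior, and then charge each such defect to at most one spoiled square. Your write-up is in fact somewhat more explicit than the paper's (which is rather terse) about the disjointness step and about why correct-TM squares in a Robinson tiling coincide with complete squares.
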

	\begin{proof}
		All tiles with no obstruction markings present must have TM markings (but not TM signal markings) and visa-versa.
		Thus, assuming an $n$-square has correct obstruction markings, TM markings only appear on free tiles, and the TM signals only appear on tiles with obstruction markings going horizontally \textbf{or} vertically, but not both.

		All tiles with only a horizontal (vertical) obstruction markings have a TM signal running vertically (horizontally).
		Since by \cref{Def:Obstruction_Tiling}, such tiles only appear in the appropriate free column (row), the TM signals only run along the free columns (rows).
		The TM signals propagate until they reach a free square, at which point they may change.
		If a TM signal changes between tiles, not mediated by a free tile, there must be a defect.
		Thus
		\begin{align}
		\tdeficit(T) \leq \odeficit(T) + |D|.
		\end{align}

	\end{proof}

	\section{Proofs of GSED Complexity Results}\label{sec:proofs}

	\subsection{Classical hardness for \PNEEXP}
	In this section we set out to prove the following theorem:
	\begin{theorem}
		$\PNEEXP{}\subseteq\EXPGSED{}$, for \GSED{} as defined in \cref{Def:Decision_Problem_GSED_2}, for a classical, nearest-neighbour, translationally invariant Hamiltonian.
	\end{theorem}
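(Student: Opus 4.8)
The plan is to construct a single fixed classical, translationally invariant, nearest-neighbour Hamiltonian $H$ on the 2D square lattice whose ground state energy density $\Er$ encodes, digit by digit, the answers to a canonical $\PNEEXP$-complete (equivalently, $\NEEXP$-complete via the $4^n$-indexing trick) problem, and then show that an oracle for $\GSED$ decidable in exponential time lets an $\EXP$ machine read off those digits by binary search. Concretely: take a fixed $\NEEXP$-complete language $L$ with inputs enumerated so that the $n$th instance is $x_n$; design a classical Turing machine $M$ that, when run on a blank tape of length $2^n+1$, first runs the Papadimitriou/Gottesman-Irani binary-counter subroutine to compute $n$ in binary, then simulates the $\NEEXP$ verifier for $L$ on input $x_n$ with a nondeterministically-guessed witness (nondeterminism realised by free choice of certain tile markings in the encoding), and accepts or rejects within the available $\sim 2^{2^{O(n^k)}}$ steps. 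Here one must be careful that a tape of length $2^n+1$ inside an $n$-square genuinely affords doubly-exponential-in-$\mathrm{poly}$ time for $M$: the relevant tape length available in an $n$-square is $2^n+1$, so instance $n$ should be placed at the $n$-square of size roughly matching $2^{|x_n|^k}$, i.e.\ one reindexes so the $n$th problem instance sits in squares whose free region has side $\Theta(2^{2^{n^k}})$ — the density of such squares in the Robinson pattern still falls off geometrically.

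Next I would set up the energy bookkeeping. Overlay on the tiles (Robinson + obstruction + TM markings, exactly the tile set of \cref{Sec:Encode_TM}) a diagonal classical Hamiltonian that assigns energy $0$ to every matching tile pair and energy $\geq 1$ to every non-matching pair (so each tiling defect costs at least $1$), plus a local term that contributes an additional $O(1)$ energy to the designated "halting" free tile of an $n$-square exactly when the encoded computation on that square halts in the rejecting state. If the $n$th instance is a \YES{} instance, there is a witness causing acceptance, so a defect-free tiling exists with no reject-penalty on $n$-squares; if it is a \NO{} instance, every $n$-square in any defect-free tiling picks up the reject penalty. Since the density of $n$-squares (in the reindexed construction) is $\Theta(16^{-n})$ or whatever geometric rate the reindexing yields, instance $n$'s reject penalty contributes a fixed-position block of binary digits of $\Er$; knowing $\Er$ to the corresponding precision $2^{-\mathrm{poly}(2^n)}$ determines whether instance $n$ rejected. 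This is where \cref{square_deficit}, \cref{Lemma:obstruction_deficit} and especially \cref{Corollary:Total_Deficit} are essential: they guarantee $\tdeficit(T)\leq 801|D|+2L$, so introducing $|D|$ defects to "silence" reject-penalties costs $\Omega(|D|)$ in defect energy but can only remove $O(|D|)$ reject contributions, each of size $O(1)$; choosing the defect-energy scale large enough relative to the reject penalty makes breaking the tiling strictly non-profitable in the thermodynamic-limit density. This is the step that genuinely needs the strong (total, not per-size) robustness bound, and is the reason the classical case is harder than the quantum case discussed in the introduction.

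Finally I would assemble the reduction: given a language $A\in\PNEEXP$, an $\EXP$ machine simulates the polytime oracle machine for $A$; each of its (at most polynomially many, each of exponential length) queries to its $\NEEXP$ oracle is a query about some instance $x_n$, and the machine recovers the answer to $x_n$ by asking the $\GSED$ oracle $\mathrm{poly}(2^n)$ many promise queries — each of the form "$\Er>\beta$ or $\Er<\alpha$" with $\beta-\alpha=\Omega(2^{-\mathrm{poly}(2^n)})$ — to binary-search the relevant digit block of $\Er$; since $|x_n|$ is at most exponential in the original input length, the precision requested is at most doubly-exponentially small, so each threshold is writable in exponential space/time, and the overall simulation runs in $\EXP$. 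Handling the promise correctly (ensuring $\Er$ never lands inside a queried $[\alpha,\beta]$) requires separating the "true" value of the digit block from the promise window, which one arranges by making the reject penalty and the defect scale numerically incommensurate with the window endpoints, or by a standard gap-amplification of the penalty. The main obstacle throughout is precisely the density/robustness accounting — proving that the constructed $\Er$ really has the claimed digits, i.e.\ that no defect configuration can shift the relevant digit block — and this is exactly what \cref{Corollary:Total_Deficit} was built to supply; the rest is careful but routine reindexing and binary-search bookkeeping.
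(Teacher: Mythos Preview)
Your outline is essentially the paper's approach: Robinson tiling plus a tile-encoded non-deterministic TM in each square, reject penalties summed over squares to encode digits of $\Er$, the strong total-deficit bound (\cref{Corollary:Total_Deficit}) to rule out defect configurations, and an $\EXP$ binary search through the $\GSED$ oracle. However, two points in the proposal are off and one of them matters.

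First, the reindexing detour is both unnecessary and mis-stated. The paper avoids it by first reducing (via padding) the $\NEEXP$ oracle to a fixed $\NTIME(2^{2^{cn}})$-complete machine $M$, then letting $x_n$ be the $(n-n_0)$th string in lexicographic order, so that $|x_n|=\Theta(\log n)$ and $M$'s runtime on $x_n$ is $2^{2^{c|x_n|}}=2^{O(n)}$, which fits in the $(2^n+1)\times(2^n+1)$ free region of an $n$-square. This preserves the clean $16^{-n}$ density, giving $\Er=\tfrac14\sum_n i_n 16^{-n}$ exactly. Your reindexing would instead leave most square sizes unused (the TM there must be told to accept vacuously), and the densities no longer fall off geometrically in the instance index --- they fall off doubly exponentially --- so the digit bookkeeping becomes awkward, though still salvageable.

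Second, the claim that the polytime oracle machine's $\NEEXP$ queries are ``each of exponential length'' is wrong and, taken at face value, fatal. A $\PNEEXP$ machine on an $N$-bit input runs for $\poly(N)$ steps, so each query $y_i$ has $|y_i|=O(\poly(N))$. Its lexicographic index is therefore $k=O(2^{\poly(N)})$, and the relevant digit of $\Er$ sits at position $\Theta(k)$, so $\alpha,\beta$ have length $O(2^{\poly(N)})$ --- writable by an $\EXP$ machine. If queries really had exponential length, $k$ would be doubly exponential and the $\EXP$ machine could not write the thresholds.

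Finally, the promise handling is much simpler than you suggest. Because $\Er=\tfrac14\sum_n i_n 16^{-n}$ with $i_n\in\{0,1\}$, once $i_1,\dots,i_{m-1}$ are known the two possibilities for $i_m$ force $\Er$ into two intervals separated by a gap of order $16^{-m}$; the paper takes $\alpha_m,\beta_m$ to be the obvious endpoints of that gap, and the promise is automatically satisfied. No incommensurability or gap amplification is needed.
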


	\noindent To prove this result, we will show that it is possible to encode the outputs of a doubly-exponential time nondeterministic TM in the ground state energy density of a particular, fixed, classical Hamiltonian.

	~\newline
	\noindent
	\textbf{Specifying the Encoded TMs}

	We want to enumerate over all input strings for a TM deciding some language, encode these using tiles, and arrange for the TMs running on different inputs to be encoded within Robinson borders of different sizes.
	This is summed up as:
	\begin{lemma}[TMs in Robinson Squares] \label{Lemma:NEXP+Robinson_Tiling}
		Let $x_n\in \{0,1\}^*$ be the $(n-n_0)^{th}$ string in lexicographic order where $n_0$ is a fixed integer, and let $M$ be a non-deterministic TM.
		It is possible to construct a tile set such that all valid tilings of an
		$L \times L$ lattice consist of the pattern of nested squares formed by the Robinson tiling, such that within each complete $n$-border, $\forall n\geq n_0$, the tiles encode a valid computational evolution of $M(x_n)$ for time $2^{2^{c|x_n|}}$, $c\geq 1$.
	\end{lemma}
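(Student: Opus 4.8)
The plan is to instantiate the generic ``Turing machine in a Robinson square'' construction of \cref{Lemma:UTM_on_Robinson_Square} with a single classical machine $M'$ that, when run inside the free grid of an $n$-square, first discovers the value of $n$ and only then simulates $M(x_n)$. Concretely, let $M'$ be the machine that, started from the all-blank tape of length $2^n+1$ with its head at the centre (the initial configuration forced by the initialisation markings recalled in \cref{Sec:Encode_TM}), runs in two phases. In the \emph{counting phase} it sweeps from the centre to one end of the tape while maintaining, on an auxiliary track carried along with the head, a binary counter of the cells traversed --- the binary-counter trick of \cite{Papadimitriou,Gottesman-Irani}; on reaching the end the counter holds $2^{n-1}$, from which $M'$ extracts $n$, forms the instance index $j$ (say $j=n-n_0$), computes $x_n$ (the $j$-th binary string in length-lexicographic order, of length $\lfloor\log_2 j\rfloor$), writes $x_n$ in a fixed block near the centre, erases its scratch work, and repositions its head. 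In the \emph{simulation phase} it runs $M$ on input $x_n$, branching nondeterministically exactly as $M$ does.

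Next I would check that $M'$ fits inside the $(2^n+1)\times(2^n+1)$ grid of free tiles for every $n\ge n_0$, i.e.\ that its evolution uses at most $2^n+1$ tape cells and at most $2^n+1$ time steps. The counting phase uses $O(2^n)$ cells and, with the carried-counter implementation, at most $\tfrac12(2^n+1)+\operatorname{poly}(n)$ steps. Since $x_n$ is the $j$-th length-lexicographic string with $j=n-n_0$, we have $2^{|x_n|}\le n-n_0$, so $2^{2^{c|x_n|}}\le 2^{(n-n_0)^{c}}$, which for $c=1$ (the relevant case) is at most $2^{n-2}\le\tfrac14(2^n+1)$; hence, taking $n_0$ large enough that the $\operatorname{poly}(n)$ term is also dominated, the simulation phase run for its full allotment of $2^{2^{c|x_n|}}$ steps together with the counting phase stays within budget for all $n\ge n_0$. (In the complexity reduction downstream one additionally pads each instance before indexing it, so that this inequality continues to hold there.)

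It then remains to assemble the tile set and invoke \cref{Lemma:UTM_on_Robinson_Square}: combine the modified Robinson tiles with the anti-slippage dashes of \cite{Cubitt_PG_Wolf_Undecidability}, the obstruction markings, the TM-signal markings, and the tape/head/state and initialisation markings of \cref{Sec:Encode_TM} instantiated for $M'$. By \cref{Lemma:UTM_on_Robinson_Square}, in any valid tiling of the $L\times L$ lattice the Robinson tiles force the nested pattern of squares of sizes $4^n-1$, and the free tiles of every \emph{complete} $n$-border carry the computational evolution of $M'$ from the forced blank initial configuration --- which, by construction of $M'$, is exactly a valid evolution of $M(x_n)$ run for $2^{2^{c|x_n|}}$ steps. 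That this holds for \emph{all} valid tilings, and that only complete borders need carry a correctly initialised computation (borders truncated by the boundary of the $L\times L$ region need not), is inherited from Robinson's forcing argument together with the rigidity of the modified tiles.

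The only genuine work is the budget bookkeeping of the counting phase: because the free grid is square, merely measuring the tape length already consumes a constant fraction of the available time, so one must verify that the counting preamble plus the simulation of $M(x_n)$ really do fit inside $2^n+1$ steps. This is precisely what pins the doubly-exponential time bound to $|x_n|=\Theta(\log n)$ --- so that $2^{2^{c|x_n|}}$ is only singly exponential in $n$, comparable to the grid size --- and, downstream, is what forces the \NEEXP{} reduction to pad its inputs so that each instance is computed in a Robinson square of sufficient size. Everything else is routine composition of the tiling gadgets established in the preceding sections.
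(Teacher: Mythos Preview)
Your approach is essentially the paper's: combine the Robinson/obstruction/TM-signal layers of \cref{Sec:Encode_TM} via \cref{Lemma:UTM_on_Robinson_Square}, and inside each free grid run a machine that first determines the grid size, extracts $n$ and $x_n$, and then simulates $M(x_n)$. The one substantive discrepancy is the time budget. You try to fit everything into exactly $2^n+1$ rows by asserting that the counting sweep costs only $\tfrac12(2^n+1)+\operatorname{poly}(n)$ steps, but a single-head TM cannot carry a $\Theta(n)$-bit binary counter along at unit speed: shifting the counter by one cell costs $\Theta(n)$, and even a stationary counter incremented in place has amortised cost with a multiplicative constant strictly above~$1$, so the preamble plus a return sweep already threatens to exceed $2^n+1$ before any simulation begins. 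The paper does not attempt this tight fit; it invokes \cite[Section~3]{Gottesman-Irani}, which shows that an $L\times L$ bordered grid can encode a computation of length $kL$ and space $L$ for a constant $k=O(1)$, so the row-based binary counter $M_{BC}$ consumes $2^n+1$ of the $k(2^n+1)$ available steps to output the grid size, leaving $(k-1)(2^n+1)$ steps for computing $x_n$ and running $M$ (whence the exponent $2^{2^{(k-2)|x|}}\le 2^{(k-2)n}$ in the paper). If you intend the counter to be realised at the tiling level rather than as TM steps---e.g.\ labelling each free column with its index by local matching rules so the head merely reads off $n$ at the boundary---your $\tfrac12(2^n+1)+\operatorname{poly}(n)$ estimate would be honest, but that is a different mechanism than ``carried along with the head'' and should be spelled out.
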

	\begin{proof}
		As per \cref{Lemma:UTM_on_Robinson_Square}, we are able to encode a TM in the $(2^n+1)\times (2^n+1)$ grid of free tiles of a Robinson $n$-squares.
		Section 3 of \cite{Gottesman-Irani} proves that given a $L\times L$ grid with an appropriate border, it is possible to encode a computation of length $kL$ and space $L$, for $k=O(1)$.
		Here the Robinson $n$-borders provide such a border.

		We choose to encode a series of TMs as follows.
		This first TM is binary counter machine $M_{BC}$ which after time step $T$, has $T$ written in binary on the tape (see \cite{Patitz2014} or \cite[Section~3]{Gottesman-Irani} for an explicit construction of this machine).
		This outputs the square size $2^n+1$ in binary.
		Then run a TM computing $\log_4(y-1)-n_0$ on this output, which outputs $x\in \{0,1\}^*$, the $(n-n_0)^{th}$ string in lexicographic order.
		Finally encode a non-deterministic TM which takes input $x$ and runs for $2^{2^{(k-2)|x|}}$($\leq 2^{(k-2)n}$) steps.
		We can force $M$ to run for $2^{2^{(k-2)|x|}}$ steps by employing a counter to limit the number of steps to $2^{2^{(k-2)|x|}}$; if the TM halts before reaching end of the allotted time, the final time step is copied to the next time step.
		If the timer runs out before the full grid space is used, the final time step of the encoded TM is copied forwards until the grid is filled.
		Choose $n_0$ to be the smallest integer such that these TMs have enough space to operate properly on a grid of size $(2^{n_0}+1) \times (2^{n_0}+1)$.
	\end{proof}

	Note that, at this point, the tiling here can encode any computational path (even those which reject when there is an accepting path) of the nondeterministic TM $M$ as we have not constrained the output in any way.

	\subsubsection{Mapping Tiles to Hamiltonians}

	So far we have presented the problem in terms of a tiling problem and need to map this to a classical Local Hamiltonian problem.
	This is a standard technique (see \cite[Section~3]{Gottesman-Irani} or the appendix of \cite{Bausch2015} for a summary).
	Consider a set of Wang tiles $\mathcal{T}$ rules with horizontal constraints $\mathcal{R}_{Horz} \subseteq \mathcal{T} \times \mathcal{T}$ such that if $t_i$
	is placed to the left of $t_j$, then it must be the case that $(t_i,t_j)\in \mathcal{R}_{Horz}$ and likewise for the vertical tiling rules $\mathcal{R}_{Vert}$.

	Map every tile type $t_i\in \mathcal{T}$ to spin state of a classical particle $\ket{t_i}$.
	We then impose a Hamiltonian over the lattice such that if the tiling pair $(t_i,t_j)\not \in \mathcal{R}_{Horz}$ (or $(t_i,t_j)\not \in \mathcal{R}_{Vert}$ depending on the orientation), then we introduce the term $\ket{t_it_j}\bra{t_it_j}$ for all forbidden pairings $(t_i,t_j)$ over all points in the lattice.

	Thus we end up with a Hamiltonian composed of local interactions of the form
	\begin{align} \label{Eq:TILING_Hamiltonian}
	h_{k,k+1}^{col} &= \sum_{(t_i,t_j)\not\in \mathcal{R}_{Horz}}\ket{t_it_j}\bra{t_it_j}_{k,k+1}\\
	h_{k,k+1}^{row} &= \sum_{(t_i,t_j)\not\in \mathcal{R}_{Vert}}\ket{t_it_j}\bra{t_it_j}_{k,k+1},
	\end{align}

	We now map the tiling rules produced by \cref{Lemma:NEXP+Robinson_Tiling} to a Hamiltonian to get a nearest-neighbour, translationally invariant Hamiltonian.
	We add a term penalising rejecting instances of the verification computation; $\Pi_{NO}$ is an additional term we add in which assigns an energy penalty to \NO{} problem instances.

	We encapsulate the definition of the Hamiltonian in the following:
	\begin{definition}[Robinson + Computation Hamiltonian]\hfill\\
          \label{Def:Tiling_Hamiltonian}
	Let $h^{col,Rob}, h^{row, Rob} \in \mathcal{B}(\mathbb{C}^R \otimes \mathbb{C}^R)$ be the local terms which encode the local matching rules for the Robinson tiling, obstruction rules and TM rules from \cref{Lemma:NEXP+Robinson_Tiling}.
	Let $(\Pi_{NO})_{j,j+1}$ be a projector onto the reject state of the encoded TM, $M$, on a site in row $j$, and a Robinson border tile on the adjacent site in row $j+1$.
	Then the overall local terms are:
	\begin{align} \label{Eq:Hamilonian_Row}
	h^{row}_{i,i+1} &= \Lambda h^{row, Rob}_{i,i+1} \\
	h^{col}_{j,j+1} &= \Lambda h^{col, Rob}_{j,j+1} + (\Pi_{NO})_{j,j+1}\label{Eq:Hamilonian_Column}
	\end{align}
	where $\Lambda\in\N$ is a parameter that we will fix later.
	\end{definition}
	$\Pi_{NO}$ is constructed such that the energy penalty is only applied at the edge of a Robinson border where a TM has halted in the \NO{} state (i.e. once the TM has stopped running).
	$\Lambda$ characterises the energy penalty for breaking the Robinson tiling, the obstruction signals, or the TM signals.
	We will need to choose $\Lambda$ to be a sufficiently large constant to make it energetically unfavourable to break the Robinson tiling in the ground state.

	\begin{lemma} \label{Lemma:Robinson_Square_Energy}
		Define $H(4^n)|_{P}$ to be the Hamiltonian on a $(4^n-1)\times (4^n-1)$ region described by the local terms given in \cref{Eq:Hamilonian_Row,Eq:Hamilonian_Column},
		restricted to the subspace $P$ corresponding to defect-free tilings of the region that contain a complete Robinson $n$-border.
		Let $x\in \{0,1\}^*$ be the $(n-n_0)^{th}$ string in lexicographic order and let $M$ be a non-deterministic Turing Machine running for time $2^{2^{cm}}$ on inputs of length $m$, $c\geq 1$.

                Then for $n\geq n_0$, the ground state energy of $H(4^n)|_P$ is
		\begin{align}
		\lambda_0(H(4^n)|_{P}) = i_n :=
		\begin{cases}
		0 &  {} M(x){} \text{outputs \YES } \\
		1 & {} M(x){} \text{outputs \NO}.
		\end{cases}
		\end{align}
	\end{lemma}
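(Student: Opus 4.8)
The plan is to exploit that, on the subspace $P$, the Hamiltonian reduces to a sum of $\Pi_{NO}$ penalties, so that the ground state energy simply counts how many of them can simultaneously be avoided. Every local term $\Lambda h^{row,Rob}_{i,i+1}$ or $\Lambda h^{col,Rob}_{j,j+1}$ of \cref{Eq:Hamilonian_Row,Eq:Hamilonian_Column} is (a multiple of) a diagonal projector onto a configuration violating a Robinson, obstruction or TM-signal matching rule, i.e.\ onto a defect; since configurations in $P$ are defect-free, all these terms annihilate $P$ regardless of the value of $\Lambda$. Hence $H(4^n)|_P$ is a sum of diagonal, mutually commuting PSD projectors, so $\lambda_0(H(4^n)|_P)$ is a non-negative integer equal to the minimum over tile configurations in $P$ of the number of $\Pi_{NO}$ terms switched on; in particular $\lambda_0(H(4^n)|_P)\geq 0$, which supplies one half of each bound once a configuration of the claimed energy is exhibited.

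Next I pin down which configurations lie in $P$. By \cref{Lemma:UTM_on_Robinson_Square,Lemma:NEXP+Robinson_Tiling} and the obstruction and initialisation markings described in \cref{Sec:Encode_TM}, the presence of a complete $n$-border in a defect-free region forces the free tiles of the $n$-square to carry a legal computation history of the machine cascade of \cref{Lemma:NEXP+Robinson_Tiling}: the boundary markings of the $n$-border force the bottom free row blank with the head in state $q_0$ at its centre, successive free rows encode successive steps, and once the machine halts its configuration is copied up to the top free row. Because the binary counter writes the side length $2^n+1$ and the logarithm machine turns this into the $(n-n_0)$-th string in lexicographic order, the input handed to the nondeterministic machine is exactly $x$; and for $n\geq n_0$ the $\sim 2^n$ available free rows and columns provide enough space and time (by the computation-padding argument of \cite[Section~3]{Gottesman-Irani}) for $M(x)$ to run to completion. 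The only freedom left in a configuration in $P$ is the choice of a nondeterministic branch of $M(x)$, and the $\Pi_{NO}$ term associated with the $n$-border --- the one straddling the halted-TM tile in the top free row and the $n$-border tile directly above it --- is switched on exactly when that branch halts rejecting.

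Combining the two steps: if $M(x)$ outputs \YES, pick the configuration in $P$ encoding an accepting branch; no $\Pi_{NO}$ term associated with the $n$-border fires, so $\lambda_0(H(4^n)|_P)\leq 0$, hence, with the lower bound from the first paragraph, $\lambda_0(H(4^n)|_P)=0=i_n$. If $M(x)$ outputs \NO, then every branch of $M(x)$ rejects, so every configuration in $P$ switches on this $\Pi_{NO}$ term, and (choosing any branch and invoking the same bookkeeping) exactly this one, so $\lambda_0(H(4^n)|_P)=1=i_n$. The step I expect to be the main obstacle is the rigidity argument in the second paragraph --- that a defect-free region carrying a complete $n$-border genuinely forces the canonical Robinson interior with a correctly initialised and correctly halted TM, so that in the \NO{} case the penalty cannot be evaded by any legal tiling and in the \YES{} case the accepting branch really does realise zero energy --- together with the associated bookkeeping that the $\Pi_{NO}$ contribution attributed to this region is only the one at the $n$-border, with the nested smaller squares accounted for at their own levels rather than inflating the count here; this is precisely what the obstruction-signal and TM-initialisation machinery of \cref{Sec:Encode_TM} and \cref{Lemma:NEXP+Robinson_Tiling} are built to guarantee.
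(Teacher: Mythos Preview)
Your argument is correct and follows essentially the same route as the paper's proof: restrict to $P$, observe that valid tilings necessarily encode a computation history of $M(x)$ by \cref{Lemma:NEXP+Robinson_Tiling}, then split on whether an accepting branch exists. You are more explicit than the paper in noting that the $\Lambda h^{Rob}$ terms annihilate $P$ and that the remaining $\Pi_{NO}$ terms are PSD projectors (giving the lower bound $\lambda_0\geq 0$), and your flagging of the nested-square bookkeeping --- that the $\Pi_{NO}$ penalties of the inner $m$-borders must not be double-counted here --- is a genuine subtlety the paper's proof glosses over; downstream the paper effectively treats $\lambda_0(H(4^n)|_P)$ as the \emph{incremental} contribution of the $n$-border and absorbs small-square contributions into a constant shift, so your caution is well placed.
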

	\begin{proof}
		$H(4^n)|_{P}$ is restricted to the subspace of valid tiling configurations containing a complete Robinson $n$-border. Clearly, this border must run around the edge of the $(4^n-1)\times (4^n-1)$ region. By \cref{Lemma:NEXP+Robinson_Tiling} valid tilings encode the evolution of a non-deterministic TM $M(x)$, where $x$ is the $(n-n_0)^{th}$ string in lexicographic order.
		By restricting to the subspace $P$ we have ensured the encoded TM evolves correctly.

		If $x$ is a \YES{} instance, then $M(x)$ must have an accepting computational path, and so there must be a set of states that encode the correct evolution which finishes in an accepting state.
		Hence there is no energy penalty and the ground state is $0$.

		If $x$ is a \NO{} instance, then there is no accepting path.
		Any correct evolution of $M(x)$ therefore enters the rejecting state, and the tile marking the rejecting state of the TM picks up an energy penalty of $1$ from the term $(\Pi_{NO})_{k, k+1}$ (and no other state receives this energy penalty).

	\end{proof}

	\subsubsection{Robustness of the Ground State}
	We now want to find the ground state energy of the lattice with Hamiltonian from \cref{Def:Tiling_Hamiltonian}.
	The possible energy contributions come from tiling defects and energy penalties for \NO{} instances of the encoded computation.
	In the following, we use the square deficit bounds established in \cref{Sec:Robinson_Robustness} to show that it is energetically unfavourable to have too many tiling defects, regardless of how many \NO{} instances might be encoded in $n$-squares.

	\begin{lemma}[Robinson Square Bound]\label{Lemma:defect-free}
          The  number of $n$\nobreakdash-borders in a Robinson tiling of $\Lambda(L\times H)\subset \Z^2$ using modified Robinson tiles is bounded by $\ge (\lfloor H/2^{n+1}\rfloor-1) \bigl(\lfloor L/2^{n+1}\rfloor -1\bigr)$ and $\le \bigl(\lfloor H/2^{n+1}\rfloor +1 \bigr)(\lfloor L/2^{n+1}\rfloor+1)$ for all $n$.
	\end{lemma}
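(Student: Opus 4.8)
The plan is to reduce the statement to an elementary lattice-point count, using the fact — already established — that a defect-free Robinson tiling forces the standard nested-square pattern. Concretely, applying \cref{domain_tiling} with empty defect set shows that in any defect-free tiling by modified Robinson tiles the complete $n$-borders occur exactly at the cells whose lower-left corners form a translate $(x_0+2^{n+1}\Z)\times(y_0+2^{n+1}\Z)$ of the sublattice $(2^{n+1}\Z)^2$, each such border occupying a $(2^n-1)\times(2^n-1)$ block of the lattice (this is Robinson's construction; see \cite{Robinson_1971,Cubitt_PG_Wolf_Undecidability} and \cref{Fig:Big-Square}). The role of the \emph{modified} tiles is precisely to forbid relative slippage of half-planes, so that this really is a two-dimensional product pattern with one global phase $(x_0,y_0)$ (which may depend on the tiling, but will play no further role); and \cref{no_border_overlap} guarantees these borders are pairwise non-overlapping, so the complete $n$-borders of the tiling are in bijection with the phase points that fit inside the region.

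Given this, a complete $n$-border with lower-left corner $(a,b)$ lies entirely inside $\Lambda(L\times H)=\{1,\dots,L\}\times\{1,\dots,H\}$ precisely when $1\le a$, $a+2^n-2\le L$, $1\le b$ and $b+2^n-2\le H$ (the border spans $2^n-1$ consecutive columns and rows). Hence the number of complete $n$-borders inside $\Lambda(L\times H)$ equals $N_L\cdot N_H$, where $N_L$ is the number of elements of the progression $x_0+2^{n+1}\Z$ lying in $\{1,\dots,L-2^n+2\}$ and $N_H$ is defined analogously with $H$. The number of terms of an arithmetic progression with common difference $q=2^{n+1}$ in an interval of $M:=L-2^n+2$ consecutive integers differs from $M/q$ by less than $1$; since $M/q = L/2^{n+1}-(1/2-2^{-n})$ with $1/2-2^{-n}\in[0,1/2)$, we get $L/2^{n+1}-1<M/q\le L/2^{n+1}$, and therefore $L/2^{n+1}-2<N_L<L/2^{n+1}+1$. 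As $N_L\in\Z$ and $\lfloor L/2^{n+1}\rfloor\le L/2^{n+1}<\lfloor L/2^{n+1}\rfloor+1$, this forces $\lfloor L/2^{n+1}\rfloor-1\le N_L\le\lfloor L/2^{n+1}\rfloor+1$, and identically for $N_H$. Multiplying these inequalities (valid for $L,H\ge 2^{n+1}$, the only regime relevant to the thermodynamic-limit application) yields the claimed two-sided bound on $N_L N_H$.

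The main point to get right is the structural input rather than the arithmetic: one needs that a defect-free tiling by the modified Robinson tiles is forced into the periodic, product, nested-square pattern, with level-$n$ borders repeating with period exactly $2^{n+1}$ in each direction. That is inherited from Robinson's construction and is precisely the $D=\emptyset$ case of \cref{domain_tiling} together with \cref{no_border_overlap}; everything after that is the bookkeeping above, in which the unknown global phase and the ambiguity over which near-boundary borders are complete each cost at most one border per direction and are absorbed into the $\pm 1$ slack of the statement.
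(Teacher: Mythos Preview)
Your argument is correct and rests on the same idea as the paper's: the defect-free (modified) Robinson tiling forces the $n$-borders onto a product sublattice of period $2^{n+1}$, after which the count is an elementary integer-interval estimate with $\pm 1$ slack in each coordinate. The paper's own proof outsources the one-dimensional count to \cite[Lemma~48]{Cubitt_PG_Wolf_Undecidability} (phrased there as a count of top edges, then symmetrised to left edges) and combines the two; you instead count lower-left corners directly and bound the number of terms of the arithmetic progression in $\{1,\dots,L-2^n+2\}$. Both routes are the same computation in slightly different packaging; yours is more self-contained, theirs is shorter by citation. One cosmetic point: your caveat ``valid for $L,H\ge 2^{n+1}$'' is not actually needed for the lemma as stated, since whenever $\lfloor L/2^{n+1}\rfloor=0$ the lower bound in the statement is nonpositive and hence trivially true; you could note this rather than restricting the range.
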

	\begin{proof}
		A Robinson border is completely contained in an $L\times H$ lattice iff its top edge and its left edge are completely contained in the lattice.
		Lemma~48 of \cite{Cubitt_PG_Wolf_Undecidability} shows that the number of top edges of a Robinson $n$-square which are completely contained in the $L\times H$ lattice is $\ge \lfloor H/2^{n+1}\rfloor \bigl(\lfloor L/2^{n+1}\rfloor -1\bigr)$ and $\le \bigl(\lfloor H/2^{n+1}\rfloor +1 \bigr)\lfloor L/2^{n+1}\rfloor$.
		From this it is straightforward to see the number of left edges which are completely contained in the lattice is $\ge \bigl(\lfloor H/2^{n+1}\rfloor-1\bigr) \lfloor L/2^{n+1}\rfloor $ and $\le\lfloor H/2^{n+1}\rfloor  \bigl( \lfloor L/2^{n+1}\rfloor +1 \bigr)$.

		Combining these two bounds gives $\ge (\lfloor H/2^{n+1}\rfloor -1) \bigl(\lfloor L/2^{n+1}\rfloor -1\bigr)$ and $\le \bigl(\lfloor H/2^{n+1}\rfloor +1 \bigr)(\lfloor L/2^{n+1} \rfloor+1)$.
	\end{proof}

	We now want to check that the ground state of the Hamiltonian on the overall lattice is a tiling of the lattice with Robinson squares in which a verification TM is encoded as we expect, but potentially with a bounded number of defects.

	\begin{lemma} \label{Lemma:Energy_Bounds}
		Let $h^{row}, h^{col} \in \mathcal{B}(\mathbb{C}^{R}\otimes \mathbb{C}^{R})$ be the local interactions that encode the tiling rules given by \cref{Eq:Hamilonian_Row,Eq:Hamilonian_Column}.
		Let $H^{\Lambda(L \times L)}$ be the Hamiltonian with these local interactions on $\Lambda(L\times L)$.

                Then for sufficiently large $L$, the ground state energy $\lambda_0(H^{\Lambda(L \times L)})$ is contained in the interval
		\begin{align}
		\bigg[& \sum_{n=n_0}^{\lfloor \log_4(L/2) \rfloor} \bigg(\bigg\lfloor\frac{L}{2^{2n+1}}\bigg\rfloor -1\bigg)^2\lambda_0(H(4^n)|_{P}) +\Lambda|D| - k_1|D|-k_2L, \nonumber \\ &\sum_{n=n_0}^{\lfloor \log_4(L/2) \rfloor}\bigg(\bigg\lfloor\frac{L}{2^{2n+1}}\bigg\rfloor +1\bigg)^2\lambda_0(H(4^n)|_{P})+\Lambda|D| - k_1|D|-k_2L   \bigg]
		\end{align}
		for some constants $\Lambda$, $k_1$ and $k_2$ such that $\Lambda \gg k_1 + k_2$, and $|D|=O(L)$.
        \end{lemma}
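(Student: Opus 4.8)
The plan is to evaluate the energy of an arbitrary tile configuration as the sum of a tiling-defect part and a rejecting-computation part, and then bound each using the deficit estimates of \cref{Sec:Robinson_Robustness}. The Hamiltonian of \cref{Def:Tiling_Hamiltonian} is classical, so the ground state may be taken to be the product state of some tile configuration $T$; writing $D$ for the defect set of $T$, the terms \cref{Eq:Hamilonian_Row,Eq:Hamilonian_Column} contribute exactly $\Lambda$ for each dual-lattice point at which a matching rule is violated and exactly $1$ for each vertical adjacency at which $\Pi_{NO}$ fires, so $E(T)=\Lambda\abs{D}+N_{\mathrm{rej}}(T)$, where $N_{\mathrm{rej}}(T)$ is the number of activated $\Pi_{NO}$ terms. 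Since $\Pi_{NO}$ fires only where a rejecting TM-tile of $M$ sits just below a Robinson border tile, combining \cref{Lemma:Robinson_Square_Energy} with the total-deficit bound \cref{Corollary:Total_Deficit} shows that each complete $n$-square of $T$ carrying a correct TM encoding contributes exactly $i_n$ to $N_{\mathrm{rej}}(T)$, and that every remaining activation lies within $O(1)$ of a defect or of the lattice boundary; hence $N_{\mathrm{rej}}(T)$ differs from $\sum_n i_n\cdot\bigl(\text{number of }n\text{-squares of }T\text{ with a correct TM encoding}\bigr)$ by at most $O(\abs{D}+L)$.

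For the lower end of the interval, let $T^\ast$ be a ground state with defect set $D^\ast$. By \cref{Lemma:defect-free} a Robinson tiling of the $L\times L$ lattice contains at least $(\lfloor L/2^{2n+1}\rfloor-1)^2$ complete $n$-borders, and by \cref{Corollary:Total_Deficit} the number of correctly TM-encoded $n$-squares of $T^\ast$ is smaller than this, summed over all $n$, by at most $\tdeficit(T^\ast)\le 801\abs{D^\ast}+2L$. Every such correctly-encoded square whose instance $x_n$ is a \NO{} instance contributes $i_n=1$ to $N_{\mathrm{rej}}(T^\ast)$, so $N_{\mathrm{rej}}(T^\ast)\ge\sum_n(\lfloor L/2^{2n+1}\rfloor-1)^2 i_n-O(\abs{D^\ast}+L)$, and therefore
\begin{equation}
  \lambda_0=\Lambda\abs{D^\ast}+N_{\mathrm{rej}}(T^\ast)\ \ge\ \sum_{n=n_0}^{\lfloor\log_4(L/2)\rfloor}\Bigl(\bigl\lfloor L/2^{2n+1}\bigr\rfloor-1\Bigr)^2 i_n + \Lambda\abs{D^\ast} - k_1\abs{D^\ast} - k_2L
\end{equation}
for suitable constants $k_1,k_2$.

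For the upper end I would compare with a concrete defect-free state: a Robinson tiling of the $L\times L$ lattice (using the freedom in the unconstrained boundary tiles to keep any border truncated by the boundary in a non-rejecting TM state), in which an accepting computational path is chosen for $M(x_n)$ inside every complete $n$-square whose instance $x_n$ is a \YES{} instance --- permissible since \cref{Lemma:NEXP+Robinson_Tiling} does not constrain the path. This state has $\abs{D}=0$ and energy $\sum_n N_n i_n$, where by \cref{Lemma:defect-free} the number of complete $n$-borders satisfies $N_n\le(\lfloor L/2^{2n+1}\rfloor+1)^2$, and no complete $n$-border fits once $n>\lfloor\log_4(L/2)\rfloor$; since $i_n\le 1$ and the sum over $n$ converges geometrically,
\begin{equation}
  \lambda_0\ \le\ \sum_{n=n_0}^{\lfloor\log_4(L/2)\rfloor}\Bigl(\bigl\lfloor L/2^{2n+1}\bigr\rfloor+1\Bigr)^2 i_n + O(L),
\end{equation}
the upper end of the stated interval up to the $O(\abs{D}+L)$ slack absorbed in $k_1,k_2$. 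Combining this with the lower bound of the previous paragraph, and using that the two border-count sums differ by only $\sum_n 4\lfloor L/2^{2n+1}\rfloor i_n=O(L)$, gives $(\Lambda-k_1)\abs{D^\ast}=O(L)$; as $\Lambda>k_1$ --- which is ensured by $\Lambda\gg k_1+k_2$ --- this forces $\abs{D^\ast}=O(L)$.

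The step I expect to be the real obstacle is not the arithmetic but justifying the reduction $N_{\mathrm{rej}}(T)=\sum_n i_n\cdot(\text{number of }n\text{-squares with a correct TM encoding})+O(\abs{D}+L)$: a rejecting TM-tile adjacent to a border tile can occur even when the surrounding TM evolution is incorrect, and one must show that, outside a bounded neighbourhood of the defects and of the lattice boundary, this cannot happen. This is precisely what the obstruction-signal machinery feeding into \cref{square_deficit}, \cref{Lemma:obstruction_deficit} and \cref{Corollary:Total_Deficit} is designed to deliver, but it has to be chased through the tile matching rules with care. A secondary point is ensuring the $L\times L$ boundary contributes only $O(L)$ in both directions --- so that it is safely absorbed into the $k_2L$ term --- which is why the defect-free comparison state must be chosen with its boundary-truncated borders in non-rejecting states.
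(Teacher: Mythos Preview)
Your approach is essentially the same as the paper's: decompose the energy as $\Lambda\abs{D}$ plus rejecting-computation contributions, bound the latter via the total-deficit estimate \cref{Corollary:Total_Deficit}, and compare against a defect-free Robinson tiling with accepting paths chosen in \YES{} squares. The paper writes this as $E(\abs{D}\text{ defects}) - E(0\text{ defects}) \ge \Lambda\abs{D} - (N(\emptyset)-N(D)) \ge \Lambda\abs{D} - k_1\abs{D} - k_2L$, then argues by cases on $\abs{D}$ versus $L$; your derivation of $\abs{D^\ast}=O(L)$ by squeezing between the two bounds is slightly cleaner but equivalent.

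One remark: the step you flag as the real obstacle---controlling spurious $\Pi_{NO}$ firings so that $N_{\mathrm{rej}}(T)=\sum_n i_n\cdot(\text{correctly-encoded }n\text{-squares}) + O(\abs{D}+L)$ as a \emph{two-sided} estimate---is not actually needed. For the lower bound on $\lambda_0$, any extra $\Pi_{NO}$ activations only increase the energy, so $N_{\mathrm{rej}}(T^\ast)\ge\sum_n i_n\cdot(\text{correctly-encoded }n\text{-squares of }T^\ast)$ suffices, and the deficit bound does the rest. For the upper bound you exhibit a specific defect-free state in which no spurious firings occur by construction. So the worry about rejecting TM-tiles appearing next to border tiles in incorrectly-encoded regions, while legitimate in principle, never bites in either direction of the argument. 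The paper glosses over this by simply writing $E(\abs{D}\text{ defects})=\Lambda\abs{D}+N_{NO}(D)$ as an equality, which is imprecise for exactly the reason you identify, but the inequality it actually needs goes the right way.
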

	\begin{proof}

		From \cref{Lemma:Robinson_Square_Energy}, we see that in the ground state energy contribution from each sufficiently large, complete, Robinson $n$-square is $\lambda_0(H(4^n)|_{P}) \in \{0,1\}$.
		By \cref{Lemma:defect-free}, the number of $n$-borders of a given size in an $L\times L$ region with no defects is bounded by $\geq(\lfloor L/2^{2n+1} \rfloor -1)^2  $ and $ \leq (\lfloor L/2^{2n+1} \rfloor +1)^2$.

		Let $N(D)$ denote the number of borders correctly encoding the TM evolution for some tile configuration $T$ with defect set $D$. Let $N_{YES}(D)$, $N_{NO}(D)$ be the number of borders which encode \YES{} and \NO{} instances, respectively. Hence $N(D) = N_{YES}(D)+N_{NO}(D)$.
		Let $E(|D|\  \text{defects})$ be the energy of a configuration with $|D|$ defects.
		Then
		\begin{align}
		E(|D|\  \text{defects}) &= \Lambda|D| + N_{NO}(D)\\
		 E(0\  \text{defects}) &= N_{NO}(\emptyset)
		\end{align}
		Combining these:
		\begin{align}
		E(|D|\  \text{defects}) - E(0\  \text{defects}) &= \Lambda|D| -(N_{NO}(\emptyset) - N_{NO}(D)) \\
		E(|D|\  \text{defects}) - E(0\  \text{defects}) &\geq  \Lambda|D| -(N(\emptyset) - N(D))
		\end{align}
		where the fact $N(\emptyset) - N(D)\geq N_{NO}(\emptyset) - N_{NO}(D)$ has been used.

		\cref{Corollary:Total_Deficit} gives $\tdeficit(T) = N(\emptyset) - N(D)  \leq k_1|D|+k_2L$ for constants $k_1,k_2$, hence
		\begin{align}
			E(|D|\  \text{defects}) - E(0\  \text{defects}) &\geq  \Lambda|D| -(k_1|D|+k_2L)
		\end{align}

		Now choose the parameter $\Lambda$ to be constant such that $\Lambda\gg k_1+k_2$.
		If $|D|=\Omega(L)$, then for sufficiently large $L$,
		\begin{align*}
		E(|D|{} \text{defects}) - E(0{} \text{defects}) &\geq (\Lambda  -k_1-k_2)\Omega(L) k
		= \Omega(L).
		\end{align*}
		Thus, for sufficiently large $L$, the 0-defect case becomes the ground state.

		If $|D|=O(L^{1-o(1)})$, then for sufficiently large $L$ we have that
		\begin{align*}
		E(|D|{} \text{defects}) - E(0{} \text{defects}) &\geq \Lambda |D|-k_1|D|-k_2L
		= -O(L).
		\end{align*}
		Thus we see  the minimum lower bound occurs for $|D|=O(L^{1-o(1)})$

		There is one energy contribution that has been omitted. Some Robinson squares will be too small to have the TM's encoded in them run correctly.
		However, there are only finitely many square sizes for which this is the case, and each square size appears with constant density. So their contribution to to the ground state energy density is a constant which can be computed in constant time, and subtracted off with a 1-local term of the form $\sum_{i\in \Lambda(L\times L)}\alpha \1_i$. (Cf.~\cite{Cubitt_PG_Wolf_Undecidability}.)
              \end{proof}

                For simplicity of the exposition, we omit the above constant energy shift from the expressions and discussion, as it does not affect the analysis.

	\begin{lemma} \label{Lemma:Ground_State_Energy_Density}

		Consider an $L \times L$ lattice with a local Hamiltonian interactions given by \cref{Eq:Hamilonian_Row,Eq:Hamilonian_Column}, and let $H(4^n)|_{P}$ and $i_n$ be defined as in \cref{Lemma:Robinson_Square_Energy}.
		In the limit of $L \rightarrow \infty$, the ground state energy density is
		\begin{equation}\label{Eq:GSED_Expression}
		\Er = \frac{1}{4} \sum_{n=n_0}^{\infty} \frac{\lambda_0(H(4^n)|_{P})}{16^n} = \frac{1}{4} \sum_{n=n_0}^{\infty} \frac{i_n}{16^n}.
		\end{equation}
	\end{lemma}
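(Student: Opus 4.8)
The plan is to obtain the claimed expression by dividing the two-sided bound on $\lambda_0(H^{\Lambda(L\times L)})$ from \cref{Lemma:Energy_Bounds} by the number of sites $L^2$, and then checking that every correction term is $o(L^2)$, so that it disappears in the thermodynamic limit and only the geometric series survives. Throughout, write $N_L := \lfloor\log_4(L/2)\rfloor$ and $i_n := \lambda_0(H(4^n)|_P)\in\{0,1\}$, which is legitimate for $n\geq n_0$ by \cref{Lemma:Robinson_Square_Energy}.

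First I would substitute into \cref{Lemma:Energy_Bounds}: for all sufficiently large $L$, $\lambda_0(H^{\Lambda(L\times L)})$ lies between $\sum_{n=n_0}^{N_L}(\lfloor L/2^{2n+1}\rfloor-1)^2 i_n + (\Lambda-k_1)|D| - k_2L$ and $\sum_{n=n_0}^{N_L}(\lfloor L/2^{2n+1}\rfloor+1)^2 i_n + (\Lambda-k_1)|D| - k_2L$, with $|D|=O(L)$ and $\Lambda,k_1,k_2$ fixed constants. Since $|D|=O(L)$, the terms $(\Lambda-k_1)|D|-k_2L$ are $O(L)$; and expanding $(\lfloor L/2^{2n+1}\rfloor\pm1)^2$ shows the two endpoints differ by at most $\sum_{n\geq n_0}(4\lfloor L/2^{2n+1}\rfloor+2)=O(L)$. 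Hence $\lambda_0(H^{\Lambda(L\times L)}) = \sum_{n=n_0}^{N_L}\lfloor L/2^{2n+1}\rfloor^2 i_n + O(L)$. Next, using $2^{2n+1}=2\cdot4^n$ and $\lfloor L/2^{2n+1}\rfloor = L/2^{2n+1}+O(1)$ with the $O(1)$ uniform in $n$, I would replace $\lfloor L/2^{2n+1}\rfloor^2$ by $L^2/(4\cdot16^n)$; the resulting per-scale error is $O(L\,2^{-2n})+O(1)$, which summed over the $N_L-n_0+1 = O(\log L)$ relevant values of $n$ is $O(L)$. Accounting also for the finitely many undersized squares ($n<n_0$), whose contribution is a constant energy density already subtracted by the $1$-local term noted after \cref{Lemma:Energy_Bounds}, this yields $\lambda_0(H^{\Lambda(L\times L)}) = \tfrac{L^2}{4}\sum_{n=n_0}^{N_L} i_n/16^n + O(L)$, so that $\Er(L,L) = \tfrac14\sum_{n=n_0}^{N_L} i_n/16^n + O(1/L)$. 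Taking $L\to\infty$ (so $N_L\to\infty$), the geometric bound $\sum_{n>N_L} i_n/16^n \leq \sum_{n>N_L}16^{-n}\to0$ shows the truncated sums converge to $\sum_{n=n_0}^\infty i_n/16^n$, and $O(1/L)\to0$; since the thermodynamic limit of $\Er$ exists \cite{Cubitt_PG_Wolf_Undecidability} and is thus independent of how $L,H\to\infty$, I would conclude $\Er = \tfrac14\sum_{n=n_0}^\infty i_n/16^n = \tfrac14\sum_{n=n_0}^\infty\lambda_0(H(4^n)|_P)/16^n$, which is \cref{Eq:GSED_Expression}.

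The only genuinely delicate point is uniformity of the error estimates in $n$: one must check that each per-scale correction coming from the floor functions, the $\pm1$ slack in the border counts, and the defect and boundary terms is bounded by something like $O(L\,2^{-2n})+O(1)$ rather than merely $O(L)$, so that summing over the $O(\log L)$ length scales present in an $L\times L$ box still leaves a total of $O(L)=o(L^2)$. The other thing to keep straight is the bookkeeping convention (fixed earlier in the paper) that the constant-density contribution of the finitely many squares too small to run the encoded TMs has already been removed, which is what lets the sum in \cref{Eq:GSED_Expression} start at $n_0$.
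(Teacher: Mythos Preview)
Your proposal is correct and follows essentially the same approach as the paper: divide the two-sided energy bound from \cref{Lemma:Energy_Bounds} by $L^2$ and take $L\to\infty$, so that all $O(L)$ correction terms vanish and only the geometric series survives. The paper's own proof is considerably terser---it simply writes down the divided bounds and passes to the limit without spelling out the per-scale floor, $\pm 1$, and tail-sum estimates---so your version is a more explicit rendering of the same argument rather than a different one.
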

	\begin{proof}
		By \cref{Lemma:Energy_Bounds}, we have bounds on the ground state energy for the region:
		\begin{align}
                  \begin{split}
		&\sum_{n=n_0}^{\lfloor \log_4(L/2) \rfloor} \frac{1}{L^2}  \bigg(\bigg\lfloor\frac{L}{2^{2n+1}}\bigg\rfloor -1\bigg)^2 \lambda_0(H(4^n)|_{P})+ (\Lambda- k_1)O(L^{-1})+k_2L^{-1} \\
           &\leq  \Er(H^{\Lambda(L \times L)}) \\
		&\leq \sum_{n=n_0}^{\lfloor \log_4(L/2) \rfloor} \frac{1}{L^2}\bigg(\bigg\lfloor\frac{L}{2^{2n+1}}\bigg\rfloor +1\bigg)^2\lambda_0(H(4^n)|_{P}) )+ (\Lambda- k_1)O(L^{-1})+k_2L^{-1} \raisetag{5em}
                \end{split}
		\end{align}
		Taking the limit $L\rightarrow \infty$ gives
		\begin{align}
		\lim\limits_{L \rightarrow \infty}\Er(H^{\Lambda(L \times L)}) =\Er = \frac{1}{4} \sum_{n=n_0}^{\infty} \frac{\lambda_0(H(4^n)|_{P})}{16^n}
		= \frac{1}{4} \sum_{n=n_0}^{\infty} \frac{i_n}{16^n}.
		\end{align}
        \end{proof}

	We now prove the main theorem, which we restate here for convenience.
 	\begin{theorem}[\PNEEXP$\subseteq$\EXPGSED]\label{Theorem:PNEEXP_Containment}
		\PNEEXP$\subseteq$\EXPGSED, for \GSED{} as defined in \cref{Def:Decision_Problem_GSED_2}, for a classical, translationally invariant, nearest-neighbour Hamiltonian.
	\end{theorem}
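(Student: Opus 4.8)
The plan is to give an exponential-time Turing reduction from an arbitrary $\PNEEXP{}$ language to \GSED{}, exploiting the fact that the single fixed Hamiltonian already encodes the relevant answers in its ground state energy density. Concretely, take the non-deterministic machine $M$ of \cref{Lemma:NEXP+Robinson_Tiling} to be a decider for a fixed language $L^{*}$ with running time $2^{2^{c|x|}}$; then \cref{Lemma:Robinson_Square_Energy,Lemma:Ground_State_Energy_Density} give
\begin{equation}\label{eq:Erho_form}
  \Er=\frac14\sum_{n\geq n_0}\frac{i_n}{16^{n}},\qquad
  i_n=\begin{cases}0,& x_n\in L^{*},\\ 1,& x_n\notin L^{*},\end{cases}
\end{equation}
where $x_n$ is the $(n-n_0)$-th binary string in lexicographic order. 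Since $\NEEXP=\bigcup_c\NTIME\bigl(2^{2^{cn}}\bigr)$ has a complete problem under polynomial-time many-one reductions, I would fix such a complete $L^{*}$ (padding if needed so that its time bound has the form $2^{2^{c|x|}}$ with $c\ge 1$); then $\PNEEXP=\mathsf{P}^{L^{*}}$, so it suffices to simulate a polynomial-time oracle machine $N$ with oracle $L^{*}$ by an exponential-time machine with oracle access to \GSED.

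The core of the argument is a subroutine answering one query ``$q\in L^{*}$?''. First I would compute the index $n:=n_0+\operatorname{rank}(q)$, where $\operatorname{rank}(q)\le 2^{|q|+1}$ is the position of $q$ in the lexicographic enumeration; this takes time $\operatorname{poly}(|q|)$, $n$ has $O(|q|)$ bits, $x_n=q$, and by \cref{eq:Erho_form} we have $q\in L^{*}\iff i_n=0$. To extract $i_n$ I would observe from \cref{eq:Erho_form} that the ``head'' $\sum_{n_0\le m<n}i_m16^{-m}$ is an integer multiple of $16^{-(n-1)}$ and the ``tail'' $\sum_{m>n}i_m16^{-m}$ lies in $[0,\tfrac1{15}16^{-n}]$; hence $4\cdot16^{n}\Er=K+t$ with $K$ a non-negative integer (and $K<4\cdot16^{n}$, since $\Er<1$), $t\in[0,\tfrac1{15}]$, and $K\bmod 16=i_n$. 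One can then recover $K$, and hence $i_n$, by binary search over candidates $K'$: query \GSED{} with thresholds $\beta=(K'+\tfrac12)16^{-n}/4+\epsilon$, $\alpha=(K'+\tfrac12)16^{-n}/4-\epsilon$ for $\epsilon=\tfrac1{32}16^{-n}$. Since $K+t$ is at distance $\ge\tfrac12-\tfrac1{15}>\tfrac18$ from every half-integer, the \GSED{} promise $\Er\notin[\alpha,\beta]$ holds, the oracle answer decides whether $K\ge K'+1$ or $K\le K'$, and the gap $\beta-\alpha=2\epsilon=\Omega(16^{-n})$ has the required form. This costs $O(n)$ oracle calls, each with a threshold of $O(n)$ bits.

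It then remains to assemble the reduction and bound its cost. The oracle machine $N'$ simulates $N$ step for step, running the above subroutine on each $L^{*}$-query. Because $N$ runs for $\operatorname{poly}(|y|)$ steps and issues $\operatorname{poly}(|y|)$ queries, each of length $\le\operatorname{poly}(|y|)$, every index $n$ that arises is at most $2^{\operatorname{poly}(|y|)}$, so each binary search costs $2^{O(\operatorname{poly}(|y|))}$ \GSED{} calls with thresholds of $2^{O(\operatorname{poly}(|y|))}$ bits; hence $N'$ runs in time $2^{O(\operatorname{poly}(|y|))}$ and decides the same language as $N$. Thus $\PNEEXP\subseteq\EXPGSED$.

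I do not expect a further genuine obstacle at this stage: all the hard work has been front-loaded into \cref{Corollary:Total_Deficit,Lemma:Energy_Bounds,Lemma:Ground_State_Energy_Density} --- and ultimately into the strengthened Robinson-robustness bound \cref{square_deficit} --- which is precisely what rules out defect configurations that could lower the energy and corrupt \cref{eq:Erho_form}. The only delicate points in the present step are (i) that reading even a single bit of $\Er$ requires exponentially many \GSED{} queries, which is exactly why this is an exponential-time rather than polynomial-time Turing reduction, and (ii) that those queries must be positioned at half-integer multiples of $16^{-n}/4$, using the $\tfrac1{15}16^{-n}$ bound on the tail of \cref{eq:Erho_form}, so as to remain inside the \GSED{} promise.
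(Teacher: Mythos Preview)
Your proposal is correct and follows essentially the same approach as the paper: fix a \NEEXP-complete language, encode its decider as the machine $M$ of \cref{Lemma:NEXP+Robinson_Tiling}, and use the \GSED{} oracle to binary-search for the bits $i_n$ of $\Er$ as given by \cref{Lemma:Ground_State_Energy_Density}. The only cosmetic difference is in how the binary search is organised: the paper extracts $i_1,i_2,\dots,i_k$ sequentially (one \GSED{} query per digit, using the previously extracted digits to form the next threshold pair), whereas you bisect directly on the integer $K=\lfloor 4\cdot 16^{n}\Er\rfloor$ and read off $i_n=K\bmod 16$; both cost $O(n)=2^{O(\operatorname{poly}(|y|))}$ oracle calls with thresholds of that many bits, and both satisfy the \GSED{} promise by the same tail bound $\sum_{m>n}16^{-m}\le\tfrac{1}{15}16^{-n}$.
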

	\begin{proof}
		Consider any polytime bounded TM $M_1$. 
		We will show we can simulate $M_1^{\NEEXP}$ with $M_2^{\GSED}$ where $M_2$ is another exptime TM.
		If $M_1^{ \NEEXP}$ takes an $n$-bit input, it can then make $O(\poly(n))$ queries. Denote these queries by $\{q_i\}_{i=1}^{O(\poly(n))}$.
		Each individual query must have length $|q_i| = O(\poly(n))$.
		The $M_1$ machine then runs for an $O(\poly(n))$ time and produces some output.

		To simulate this, $M_2$ takes the $n$-bit input and calculates each of the queries which $M_1$ makes: $\{q_i\}_{i=1}^{O(\poly(n))}$.
		Each query $q_i$ is made to a \NEEXP{} oracle. 
		So $M_2$ takes each query $q_i$, and reduces it to an instance of determining the output of a doubly-exponentially time non-deterministic TM, $M$, on input $y_i$. 
		This reduction can be computed in polynomial time, as the problem of determining the output of double-exponential-time non-deterministic TMs is manifestly \NEEXP{}-hard. (Note by using padding arguments we can reduce any language in $\NEEXP$ to $\textsf{NTIME}(2^{2^{cn}})$ for some $c>1$ \cite{Papadimitriou_1994}).
		This defines a new set of inputs to the non-deterministic machine $M$, $\{y_i\}_{i=1}^{O(\poly(n))}$, such that $|y_i|=O(\poly(n))$.

		Now order the $\{y_i\}_i$ lexicographically and take the largest one. Suppose the largest string, $y_j$, is the $k^{th}$ string in lexicographic order. Then $k =O(2^{O(|y_j|)})=O(2^{\poly(n)})$.

		We will use the \GSED{} oracle for the Hamiltonian of \cref{Def:Tiling_Hamiltonian} to perform a binary search in order to obtain a sufficiently precise approximation to the ground state energy density $\Er$, such that we can extract the result of computing $M$ on all inputs up to $y_j$. 
		To do this, we need to query the $\GSED${} oracle on all the instances before it in lexicographic order, of which there are $k=O(2^{\poly(n)})$ many.

		By \cref{Lemma:Ground_State_Energy_Density}, outputs $i_n$ to the queries $\{y_i\}_i$ are encoded as
		\begin{equation}
		\Er = \frac{1}{4}\sum_{n=n_0}^{\infty}\frac{i_n}{16^n}.
		\end{equation}
		We extract the $i_k$ iteratively as follows. 
		Assume for simplicity that $n_0=1$. 
		(If this is not the case, $n$ can trivially be adjusted appropriately.)
		To determine the $i_1$, note that if $i_1=0$, then the maximum $\Er$ can be is
		\begin{equation}
		\Er = \frac{1}{4}\sum_{n=2}^{\infty} \frac{1}{16^n} = \frac{1}{960}
		\end{equation}
		and otherwise the minimum it can be is $1/64$.
		Hence we ask whether $\Er\geq \beta_1=1/64$ or $\Er\leq \alpha_1 = 1/960$.
		Thus
		\[i_1=
		\begin{cases}
		0 & \text{if } \Er<1/960 \\
		1 & \text{if } \Er>1/64.
		\end{cases}
		\]

		We can then perform a similar process for all $i_m$, $1\leq m<k$, assuming we have previously extracted $i_1,i_2,\dots ,i_{m-1}$.
		When extracting the $m^{th}$ instance, we have that either $\Er \leq \alpha_m$ or $\Er \geq \beta_m$, where
		\begin{align}
		\beta_m &= \frac{1}{4}\bigg( \frac{1}{16^m} + \sum_{n=1}^{m-1}\frac{i_n}{16^{n}}   \bigg)\nonumber \\
		\alpha_m &= \frac{1}{4} \bigg( \sum_{n=1}^{m-1} \frac{i_n}{16^n} +   \sum_{n=m+1}^{\infty} \frac{1}{16^n}  \bigg). \label{Eq:alphabeta_m}
		\end{align}

		Since $y_j$ is the $k^{th}$ string in lexicographic order, $k=O(2^{\poly(n)})$, the maximum precision we need to go to is $\Omega(2^{-2^{\poly(n)}})$, which is possible provided $\alpha_m, \beta_m$ can have binary length $|\alpha_m|, |\beta_m|=O(2^{\poly(n)})$.
		Since $M_2$ is an exponential time machine, it has time and space to write these strings to the oracle tape.
		Furthermore, $M_2$ only needs to make $O(2^{\poly(n)})$ queries.
		Thus $M_2^{\GSED}$ is able extract all the answers to the queries made by $M_1^{\NEEXP}$, and hence after making these queries and performing the relevant post-processing, output the solution.
	\end{proof}


	\subsection{Classical Containment in \EXPNEXP}
	We now need to show that for classical \GSED, as defined in \cref{Def:Decision_Problem_GSED_2},  \EXPGSED $\subseteq$ \EXPNEXP.
	The first step is to show that the ground state energy density of a finite $L\times L$ part of the lattice is a good estimate for the energy density of the full lattice~\cite{Cubitt_PG_Wolf_Undecidability}:
	\begin{lemma}\label{Lemma:E_rho_Approximation}
	Consider a translationally invariant, nearest-neighbour Hamiltonian on $\Lambda(L\times L)$ lattice defined by local terms $h^{row}_{i,i+1}, h^{col}_{j,j+1}$.
	Let $\Er(L)$ be the energy density of the Hamiltonian on this lattice, and
	let $\Er$ be the energy density in the $L\rightarrow \infty$ limit.
	Then
		\begin{align}
			|\Er(L)-\Er|=\frac{4 \max \left\{\norm{h^{row}_{i,i+1}},\norm{h^{col}_{i,i+1}}\right\}  }{L}.
		\end{align}
	\end{lemma}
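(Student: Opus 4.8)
Write $h_{\max} := \max\{\norm{h^{row}_{i,i+1}},\norm{h^{col}_{i,i+1}}\}$. The plan is to establish the two one-sided estimates $\Er \le \Er(L) + 4h_{\max}/L$ and $\Er \ge \Er(L) - 4h_{\max}/L$ by the standard block-decomposition argument for the thermodynamic limit. Since $\Er = \lim_{N\to\infty}\Er(N)$ exists \cite{Cubitt_PG_Wolf_Undecidability}, and any subsequence of a convergent sequence shares its limit, it suffices to fix $L$, compare $\Er(kL)$ with $\Er(L)$ for every $k\in\N$, and then let $k\to\infty$.

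For the upper bound, I would take a ground state $\ket{\phi}$ of $H^{\Lambda(L\times L)}$ and, on $\Lambda(kL\times kL)$, place a translated copy of $\ket{\phi}$ on each of the $k^2$ disjoint $L\times L$ blocks tiling the lattice, forming a product state $\ket{\Phi}$. By translational invariance the interaction terms lying wholly inside a single block sum, in $\ket{\Phi}$, to exactly $\lambda_0(H^{\Lambda(L\times L)})$; summed over all $k^2$ blocks this contributes $k^2\lambda_0(H^{\Lambda(L\times L)})$. The only other terms are those straddling the seams between adjacent blocks; since each block has at most $4L$ boundary interactions to its neighbours, there are at most $4k^2 L$ such terms, each of operator norm $\le h_{\max}$. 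Hence $\lambda_0(H^{\Lambda(kL\times kL)}) \le \bra{\Phi}H^{\Lambda(kL\times kL)}\ket{\Phi} \le k^2\lambda_0(H^{\Lambda(L\times L)}) + 4k^2 L\, h_{\max}$; dividing by $(kL)^2$ gives $\Er(kL)\le\Er(L)+4h_{\max}/L$, and $k\to\infty$ yields $\Er\le\Er(L)+4h_{\max}/L$.

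For the lower bound, I would decompose the Hamiltonian the other way: on $\Lambda(kL\times kL)$ write $H^{\Lambda(kL\times kL)} = \sum_{\mathrm{blocks}} H_{\mathrm{block}} + \sum_{\mathrm{seams}} h_{\mathrm{seam}}$, where each $H_{\mathrm{block}}$ is, by translational invariance, a copy of $H^{\Lambda(L\times L)}$. For the true ground state $\ket{\psi}$ of $H^{\Lambda(kL\times kL)}$ one has $\bra{\psi}H_{\mathrm{block}}\ket{\psi} = \mathrm{Tr}(\rho_{\mathrm{block}}H_{\mathrm{block}}) \ge \lambda_0(H^{\Lambda(L\times L)})$ by the variational principle applied to the reduced state $\rho_{\mathrm{block}}$, while $\bra{\psi}h_{\mathrm{seam}}\ket{\psi}\ge -h_{\max}$. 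Summing over the $k^2$ blocks and the $\le 4k^2 L$ seam terms gives $\lambda_0(H^{\Lambda(kL\times kL)})\ge k^2\lambda_0(H^{\Lambda(L\times L)}) - 4k^2 L\, h_{\max}$, i.e.\ $\Er(kL)\ge\Er(L)-4h_{\max}/L$; letting $k\to\infty$ gives $\Er\ge\Er(L)-4h_{\max}/L$. Combining the two bounds yields $\abs{\Er(L)-\Er}\le 4h_{\max}/L$.

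I expect the only point needing genuine care to be the bookkeeping of the seam interactions: one must check that each straddling term is counted once and controlled by $h_{\max}$, and observe that the crude per-block count of $4L$ boundary interactions (rather than a sharper count of the internal seams, which would give $2$) is precisely what produces the constant $4$ in the statement. The remaining ingredients — existence of the limit along the subsequence $kL$, the inequality $\mathrm{Tr}(\rho H)\ge\lambda_0(H)$, and the additivity of the block Hamiltonians under translational invariance — are routine.
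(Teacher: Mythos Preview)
Your proposal is correct and follows essentially the same block-decomposition argument as the paper. The only cosmetic difference is that the paper packages both directions into a single step by writing $H^{\Lambda(tL\times tL)} = H_{grid}(L,t) + (\text{seam terms})$, observing $\lambda_0(H_{grid}(L,t)) = t^2\lambda_0(H^{\Lambda(L\times L)})$ since the grid Hamiltonian is a direct sum of $t^2$ copies, and then invoking the operator-norm perturbation bound $\abs{\lambda_0(A)-\lambda_0(B)}\le\norm{A-B}$ with $\norm{\text{seam terms}}\le 4t^2 L\,h_{\max}$; you instead derive the two inequalities separately via variational arguments on product and reduced states, which is equally valid and arguably more transparent.
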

	\begin{proof}
	Let $H(L)$ be the Hamiltonian defined on $\Lambda(L\times L)$ and let $t\in \N$.
	Let $H_{grid}(L,t)$ be the Hamiltonian with the same local terms, but with the terms $h_{i,i+1}^{row}, h_{j,j+1}^{col}$ removed for $i,j\in t\mathbb{N}$.
	Then:
	\begin{align}\label{Eq:H_grid}
		H_{grid}(L,t)=H(tL) - \sum_{i\mod t=0} h^{row}_{i,i+1} - \sum_{j\mod t=0} h^{row}_{j,j+1}.
	\end{align}
	The interaction graph of $H_{grid}(L,t)$ is a set of $t^2$ squares of size $L\times L$.
	Hence equation \ref{Eq:H_grid} gives
	\begin{align*}
		\norm{H_{grid}(L,t)-H(tL)}\leq  4t^2L\max\left\{\norm{h^{row}_{i,i+1}},\norm{h^{col}_{i,i+1}}\right\}.
	\end{align*}
	It is straightforward to see that $\lambda_0(H_{grid}(L,t)) =t^2 \lambda_0(H(L))$.
	Combining these gives
	\begin{align*}
		|t^2\lambda_0(H(L)) - \lambda_0(H(tL))| \leq 4Lt^2 \max\left\{\norm{h^{row}_{i,i+1}},\norm{h^{col}_{i,i+1}}\right\}.
	\end{align*}
	Dividing through by $t^2L^2$ to get energy densities gives
	\begin{align}
	|\Er(L) - \Er| \leq \frac{4\max\left\{\norm{h^{row}_{i,i+1}},\norm{h^{col}_{i,i+1}}\right\}}{L}.
	\end{align}

	\end{proof}

	\begin{lemma}\label{Lemma:GS_in_NEXP}
	\GSED{} $\in$ \NEXP{} for any classical, nearest-neighbour, translationally invariant Hamiltonian, for \GSED{} as defined in \cref{Def:Decision_Problem_GSED_2}.
	\end{lemma}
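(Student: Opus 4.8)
The plan is to combine two ingredients: \cref{Lemma:E_rho_Approximation}, which tells us that the finite-lattice energy density $\Er(L)$ converges to $\Er$ at rate $O(1/L)$ with a constant depending only on the (fixed) local terms; and the elementary observation that, for a classical Hamiltonian, $\lambda_0(H^{\Lambda(L\times L)})$ is just the minimum of the classical energy function over spin configurations on the lattice. The latter means that ``some configuration has energy density below a given threshold'' is an existential statement a nondeterministic verifier can certify by exhibiting the configuration; the price is that the configuration, and hence the witness, is exponentially large in the input length — which is exactly the budget \NEXP{} allows.

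In detail: given an instance $(\alpha,\beta)$ with $\beta-\alpha=\Omega(2^{-p(n)})$, set $\epsilon:=(\beta-\alpha)/4$ and choose $L:=\lceil 4\max\{\norm{h^{row}_{i,i+1}},\norm{h^{col}_{i,i+1}}\}/\epsilon\rceil$, where an upper bound on the (fixed) norms is hard-coded into the verifier. Then $L$ is exponential in the input length, and by \cref{Lemma:E_rho_Approximation} we have $|\Er(L)-\Er|\leq\epsilon$. The \NEXP{} algorithm then guesses a classical configuration $c$ on $\Lambda(L\times L)$, which is a witness of size $O(L^2)=2^{O(\text{input length})}$, computes its energy $E(c)=\sum_{\langle i,j\rangle}\bra{c}h_{i,i+1}\ket{c}$ by summing the $O(L^2)$ local terms — each a constant-time lookup into the fixed local interaction, evaluated to enough bits of precision — and accepts iff $E(c)/L^2<(\alpha+\beta)/2$. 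All of this runs in time $2^{O(\text{input length})}$.

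For correctness: for any configuration $c$ we have $E(c)\geq\lambda_0(H^{\Lambda(L\times L)})$, so $E(c)/L^2\geq\Er(L)$, with equality for the true ground configuration $c^*$. On a \YES{} instance ($\Er<\alpha$), $E(c^*)/L^2=\Er(L)\leq\Er+\epsilon<\alpha+\epsilon<(\alpha+\beta)/2$, where the last inequality holds since $\epsilon<(\beta-\alpha)/2$; hence the guess $c^*$ is accepted. On a \NO{} instance ($\Er>\beta$), every $c$ satisfies $E(c)/L^2\geq\Er(L)\geq\Er-\epsilon>\beta-\epsilon>(\alpha+\beta)/2$, so no guess is ever accepted. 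Therefore \GSED{}$\in$\NEXP.

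The one conceptual point worth flagging is the asymmetry inherent in ground-state energies: a nondeterministic machine can certify that $\lambda_0$ is \emph{small} (by producing a low-energy configuration) but not that it is large. The argument works precisely because the \YES{} case of \GSED{} as formulated in \cref{Def:Decision_Problem_GSED_2} is the ``$\Er$ below the lower threshold'' case; had the roles been reversed we would land in \textsf{coNEXP} instead. The remaining issues — carrying out exact-enough rational arithmetic with $\alpha$, $\beta$, and $E(c)$ in exponential time — are routine, with the factor-of-two slack built into the threshold absorbing any finite-precision error in evaluating the (fixed) matrix elements.
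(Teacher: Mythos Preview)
Your proof is correct and follows essentially the same approach as the paper: guess a classical configuration on an exponentially-sized finite lattice as the witness, compute its energy by summing the $O(L^2)$ local terms, and use \cref{Lemma:E_rho_Approximation} to ensure the finite-lattice energy density is within the promise gap of $\Er$. Your write-up is in fact more careful than the paper's, making the acceptance threshold $(\alpha+\beta)/2$ explicit and spelling out the one-sided certification (that \YES{} corresponds to low energy, which is what a nondeterministic machine can witness).
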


	\begin{proof}
	$(\alpha, \beta)$ is the input of the problem, $\beta- \alpha = \Omega(2^{-q(n)})$.
	We show an \EXP{} machine will be able calculate $\Er(L)$ (using the notation of \cref{Lemma:E_rho_Approximation}) using a classical witness for $L=2^{p(n)}$, for a polynomial $p$.

	First compute the ground state energy of an $L\times L$ square of the lattice. Take as the witness the ground state of the Hamiltonian restricted to an $L\times L$ region of the lattice: $\ket{\psi}=\ket{\phi_1}\otimes\ket{\phi_2}\otimes \dots \ket{\phi_{L^2}}$, where $\ket{\phi_i}\in \mathbb{C}^{\abs{\mathcal{S}}}$ is the state of the spin at lattice site $i$.
	Now,
	\begin{align}
	\Er(L) = \frac{1}{L^2}\sum_{\langle i,j \rangle}\bra{\phi_i}\bra{\phi_j}h_{i,j}\ket{\phi_i}\ket{\phi_j}, \nonumber
	\end{align}
	where $\langle i,j \rangle$ denotes pairs of nearest-neighbours.
	$\bra{\phi_i}\bra{\phi_j}h_{i,j}\ket{\phi_i}\ket{\phi_j}$ can be computed in $O(1)$ time, and there are $O(L^2)$ such terms.
	Since $L=2^{p(n)}$, the estimate $\Er(L)$ can be computed in $O(L^2)=O(2^{2p(n)})$ time.
	By \cref{Lemma:E_rho_Approximation}, $|\Er(L)-\Er|=O(L^{-1})$, hence provided we choose $p(n)$ to be sufficiently large relative to $q(n)$, the approximation $\Er(L)$ allows us to determine $\Er > \beta$ or $\Er<\alpha$ for $\beta-\alpha=\Omega(2^{-q(n)})$.
	\end{proof}

	\begin{lemma}[\EXPNEXP{} Containment] \label{Lemma:NEXP-Completeness}
		\EXPGSED $\subseteq$ \EXPNEXP, for \GSED{} as defined in \cref{Def:Decision_Problem_GSED_2}, for a fixed, classical Hamiltonian.
	\end{lemma}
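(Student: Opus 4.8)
The plan is to derive this immediately from \cref{Lemma:GS_in_NEXP}, which already shows that classical \GSED{} for the fixed Hamiltonian under consideration lies in \NEXP. The only work is to make the oracle-replacement argument precise, the one mildly delicate point being the length of the oracle queries.

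First I would recall that \cref{Lemma:GS_in_NEXP} produces a nondeterministic exponential-time machine $N$ whose language $L_N\in\NEXP$ correctly accepts every \YES{} instance of \GSED{} and rejects every \NO{} instance; that is, $L_N$ agrees with \GSED{} on all inputs satisfying the \GSED{} promise. Then, given any \EXP{} machine $M$ with oracle access to \GSED{}, I would simulate it by the \EXP{} machine $M'$ that behaves identically but, whenever $M$ would query its \GSED{} oracle on a string $q$, instead queries the \NEXP{} oracle $L_N$ on $q$. Under the standard convention for oracle access to a promise problem, $M$ only ever queries instances satisfying the \GSED{} promise, so the answer returned by $L_N$ coincides with the answer \GSED{} would have given. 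Hence $M'$ is an \EXPNEXP{} machine deciding the same language as $M^{\GSED}$, which gives $\EXPGSED\subseteq\EXPNEXP$.

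The point I would check carefully is that this is a legitimate \EXPNEXP{} computation even though $M$, running for exponential time, may write \GSED{} queries of exponential length onto its oracle tape — indeed the hardness construction in the proof of \cref{Theorem:PNEEXP_Containment} does exactly this, querying thresholds $(\alpha_m,\beta_m)$ of length $O(2^{\mathrm{poly}(n)})$. This causes no difficulty: by the definition of an oracle Turing machine a membership query costs a single computational step regardless of the length of the query string, and an \NEXP{} oracle, being a language over all strings, answers queries of arbitrary length. The (doubly-exponential) work that would actually be needed to decide such a long \GSED{} instance from scratch is entirely absorbed into the oracle; $M'$ itself still runs in exponential time and makes at most exponentially many queries, so it is a genuine \EXPNEXP{} machine. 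I do not expect a serious obstacle here — the argument is essentially a one-line consequence of \cref{Lemma:GS_in_NEXP} once this query-length bookkeeping is spelled out — and, combined with \cref{Theorem:PNEEXP_Containment}, it completes the chain $\PNEEXP\subseteq\EXPGSED\subseteq\EXPNEXP$ of \cref{Theorem:NEXP_Hardness}.
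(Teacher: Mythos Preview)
Your proposal is correct and follows essentially the same approach as the paper: both invoke \cref{Lemma:GS_in_NEXP} to place \GSED{} in \NEXP, then replace each \GSED{} oracle call made by the \EXP{} machine with a call to the corresponding \NEXP{} language, noting that the exponentially long queries pose no problem since oracle calls cost a single step. Your treatment of the query-length bookkeeping and the promise-problem convention is, if anything, slightly more explicit than the paper's, but the argument is the same.
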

	\begin{proof}
		For \EXPGSED$\subseteq $\EXPNEXP{} we show that, given an exponential time TM $M_1$ with access to an oracle \GSED{}, its action can be simulated by an exponential time TM $M_2$ with oracle access to \NEXP.

		Consider the action of $M_1^{\GSED}$. If it takes an $n$-bit input, it may make $O(\exp(n))$ queries, each of length $O(\exp(n))$, before outputting an answer based on these query outcomes.
		Each query must be in the form of an $(\alpha, \beta)$ such that $\beta-\alpha = \Omega (2^{-2^{p(n)}})$ for some polynomial $p$.

                The $(\alpha, \beta)$ queries made by $M_1$ must have input length of $|q_i|=O(\exp(n))$
		By \cref{Lemma:GS_in_NEXP} determining whether $\Er > \beta$ or $\Er < \alpha$ for $\beta - \alpha = \Omega(2^{-|q_i|}) = \Omega(2^{-2^{-g(n)}})$ is contained in \NEXP.
                Thus $M_2^{\NEXP}$ can simulate the queries to \GSED{} by making querying the \NEXP{} oracle, and hence the entire action of $\EXP^{\GSED}$.
		\end{proof}

	\paragraph{Why not polytime Turing Reductions, $\mathsf{P}^{\GSED}$?}

	Naturally a question arises as to why we consider $\EXPGSED$ here, rather than $\mathsf{P}^{\GSED}$.
	Here we show that using our hardness construction, one cannot even hope to prove $\NP \subseteq\mathsf{P}^{\GSED}$ unless the polynomial hierarchy collapses to $\Sigma^P_2$.

	\begin{lemma}\label{Lemma:GSEDh_Containment}
	Let $\mathsf{P}^{\GSED_h}$ be the class of languages decided by a polynomial time oracle machine with access to a $\GSED$ oracle for the Hamiltonian of \cref{Def:Tiling_Hamiltonian} only.
	Let $\mathsf{P}_{\log}^O$ be the languages decided by a polytime oracle machine with oracle $O$ which is only able to make $\log(n)$ length queries to the oracle for an $n$-bit input.
	Then  $\mathsf{P}^{\GSED_h} \subset \mathsf{P}^{\NEEXP}_{\log}$.
	\end{lemma}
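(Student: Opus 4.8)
The crucial observation is that, for the \emph{specific} Hamiltonian of \cref{Def:Tiling_Hamiltonian}, a $\GSED_h$ oracle only ever reveals a bounded number of leading bits of the single real number $\Er$. By \cref{Lemma:Ground_State_Energy_Density}, $\Er=\tfrac14\sum_{m\ge n_0} i_m/16^m$ with $i_m\in\{0,1\}$, and (by \cref{Lemma:Robinson_Square_Energy}) $i_m=1$ exactly when the fixed doubly-exponential-time nondeterministic machine $M$ \emph{rejects} the $(m-n_0)^{th}$ string $x_m$ in lexicographic order. Since $x_m$ has length $|x_m|=O(\log m)$, each bit $i_m$ is decided by a single query of length $O(\log m)$ to the fixed $\NEEXP$ language $L_M:=\{x : M(x)\text{ accepts}\}$ (which lies in $\NEEXP$ because $M$ runs in doubly-exponential time); negating the oracle answer to obtain $i_m$ is harmless, as $\mathsf{P}^{\NEEXP}=\mathsf{P}^{\mathsf{coNEEXP}}$ and this identity is preserved under the $\log$-length-query restriction.

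Given this, the plan is to simulate an arbitrary polytime oracle machine $M_1^{\GSED_h}$ on an $n$-bit input by a polytime machine $M_2$ equipped with a $\NEEXP$ oracle that it queries only on strings of length $O(\log n)$, handling each $\GSED_h$ query as follows. Because $M_1$ is polynomially time bounded, each query $(\alpha,\beta)$ it writes has $|\alpha|,|\beta|=O(\poly(n))$, so the promise gap satisfies $\beta-\alpha=\Omega(2^{-\poly(n)})$. First I would fix $N=O(\poly(n))$ large enough that the tail satisfies $\tfrac14\sum_{m>N}16^{-m}<\tfrac12(\beta-\alpha)$. Then $M_2$, for each $m$ with $n_0\le m\le N$, computes $x_m$ (essentially the binary encoding of $m-n_0$, of length $O(\log m)=O(\log n)$, computable in polynomial time) and queries $L_M$ on $x_m$ to recover $i_m$. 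It forms $\widetilde{\Er}:=\tfrac14\sum_{m=n_0}^{N} i_m/16^m$; since the omitted tail is less than half the promise gap, comparing $\widetilde{\Er}$ with $\alpha$ and $\beta$ determines whether $\Er>\beta$ or $\Er<\alpha$, which is the answer to the $\GSED_h$ query. Passing this answer back into the simulation of $M_1$ and iterating over its $O(\poly(n))$ oracle queries, $M_2$ reproduces the output of $M_1^{\GSED_h}$ while making only $O(\poly(n))$ oracle queries, each of length $O(\log n)$, in overall polynomial time; hence $\mathsf{P}^{\GSED_h}\subseteq\mathsf{P}^{\NEEXP}_{\log}$.

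The one step needing care is verifying that $N=O(\poly(n))$ leading digits of $\Er$ always suffice to resolve a $\GSED_h$ query; this is precisely the geometric tail estimate above combined with the promise $\beta-\alpha=\Omega(2^{-\poly(n)})$, which in turn is forced exactly because the reduction is \emph{polynomial} time. This is the crux: the same argument fails for exponential-time reductions, since an exptime machine can demand precision $2^{-2^{\poly(n)}}$ and thereby read off $i_m$ for exponentially large $m$, which is why $\EXPGSED$ rather than $\mathsf{P}^{\GSED}$ is the natural class for the hardness results. The remaining ingredients — computing $x_m$ from $m$, bounding the query count, and the finite-precision arithmetic on $\widetilde{\Er}$ — are routine, so I do not anticipate any serious obstacle beyond bookkeeping.
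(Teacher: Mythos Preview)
Your proposal is correct and follows essentially the same approach as the paper: simulate each $\GSED_h$ query by computing enough leading digits $i_m$ of $\Er$ via $\NEEXP$ queries on the inputs $x_m$, which have length $O(\log m)=O(\log n)$ since only $N=O(\poly(n))$ digits are needed to resolve a promise gap of $\Omega(2^{-\poly(n)})$. If anything, your write-up is more careful than the paper's, which somewhat confusingly says ``the first $O(\log n)$ instances'' where it should say ``the first $O(\poly(n))$ instances (each of length $O(\log n)$)''; your tail-bound argument makes this precise.
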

	\begin{proof}
	Let $M_1^{\GSED_h}$ be a polytime TM with oracle access to a $\GSED$ oracle for the Hamiltonian defined in \cref{Def:Tiling_Hamiltonian} only.
	Let $M_2^{\NEEXP}$ an oracle machine which can only make $O(\log(n))$ length queries to the oracle.
	We will show the latter machine can simulate the former.

	$M_1^{\GSED_h}$ can make at most $O(\poly(n))$ length queries to the oracle, corresponding to $\alpha, \beta$ queries such that $\beta - \alpha =\Omega(2^{-p(n)})$ for some polynomial $p$.
	After making at most $\poly(n)$ queries, it performs some post-processing and finally outputs an answer.

	$M_2^{\NEEXP}$ can simulate this by simply calculating $\Er$ for the Hamiltonian in \cref{Def:Tiling_Hamiltonian} by querying the \NEEXP \ oracle for the first $O(\log(n))$ instances, and then computing an estimate for $\Er$, denoted $\tilde{\Er}$, using equation \cref{Eq:GSED_Expression}.
	By making sufficiently many queries to the \NEEXP \ oracle, one can make it so $|\tilde{\Er}- \Er|=O(2^{-q(n)})$ for some polynomial $q$.
	Thus by making $q(n)\gg p(n)$, $M_2$ can then simulate all the queries that $M_1^{\GSED_h}$ makes, do the same post-processing, and output the same answer.
\end{proof}

\begin{theorem}
	Using the notation defined in \cref{Lemma:GSEDh_Containment},
	if $\NP \subseteq\mathsf{P}^{\GSED_h}$, then the polynomial hierarchy collapses to $\Sigma^P_2$.
\end{theorem}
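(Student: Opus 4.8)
The plan is to chain two simple containments and then invoke the Karp--Lipton theorem. \Cref{Lemma:GSEDh_Containment} already gives $\mathsf{P}^{\GSED_h}\subseteq\mathsf{P}^{\NEEXP}_{\log}$, so it suffices to prove the oracle-agnostic inclusion $\mathsf{P}^{\NEEXP}_{\log}\subseteq\mathsf{P}/\poly$. Combining this with the hypothesis then yields $\NP\subseteq\mathsf{P}^{\GSED_h}\subseteq\mathsf{P}/\poly$, and the Karp--Lipton theorem forces the polynomial hierarchy to collapse to $\Sigma^P_2$.

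The only non-trivial step is $\mathsf{P}^{\NEEXP}_{\log}\subseteq\mathsf{P}/\poly$, and this rests on a counting observation rather than on any property of $\NEEXP$. Fix $L\in\mathsf{P}^{\NEEXP}_{\log}$, decided by a polynomial-time oracle machine $M$ with $\NEEXP$ oracle $A$ such that, on an input of length $n$, every query to $A$ has length at most $c\log n$ for a constant $c$ depending only on $M$. The number of binary strings of length at most $c\log n$ is $\sum_{i=0}^{\lfloor c\log n\rfloor}2^i\le 2n^c$, which is polynomial in $n$. Take as advice $\alpha_n$ the characteristic string of $A$ restricted to all such strings, so $|\alpha_n|=O(n^c)$. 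A polynomial-time machine with $\alpha_n$ hard-wired can then simulate $M^A$ step for step: any oracle query it encounters has length at most $c\log n$, so its answer is read off from $\alpha_n$ in polynomial time. Hence $L\in\mathsf{P}/\poly$. Note that the argument never uses that $A$ lies in $\NEEXP$, only that it is some fixed language, so the logarithmic bound on query length is doing all the work; and letting the constant $c$ (and hence the advice) depend on $M$ is harmless, since $\mathsf{P}/\poly$ advice may depend on the language.

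Assembling the pieces: under the hypothesis $\NP\subseteq\mathsf{P}^{\GSED_h}$ we obtain $\NP\subseteq\mathsf{P}^{\NEEXP}_{\log}$ by \cref{Lemma:GSEDh_Containment}, hence $\NP\subseteq\mathsf{P}/\poly$ by the previous paragraph, and the Karp--Lipton theorem gives $\mathsf{PH}=\Sigma^P_2$. I do not expect a genuine obstacle, as every ingredient is textbook; the only point requiring care is making the advice construction precise --- in particular, being explicit that the polynomial bound on the number of admissible queries is uniform over all oracles, so that nothing about $\NEEXP$ beyond ``it is a language'' is needed. It is also worth remarking, as a sanity check, that this is exactly why the statement is phrased with the specific oracle $\GSED_h$ and a polynomial-time reduction: \cref{Lemma:GSEDh_Containment} is what converts polynomially long $\GSED$ queries into logarithmically short $\NEEXP$ queries, which is precisely the feature exploited here.
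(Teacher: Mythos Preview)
Your proposal is correct and follows essentially the same argument as the paper: use \cref{Lemma:GSEDh_Containment} to get $\mathsf{P}^{\GSED_h}\subseteq\mathsf{P}^{\NEEXP}_{\log}$, observe that logarithmic-length queries leave only polynomially many possible queries so the answers can be supplied as advice (giving $\mathsf{P}^{\NEEXP}_{\log}\subseteq\mathsf{P}/\poly$), and then invoke Karp--Lipton. Your write-up is in fact more careful than the paper's, spelling out the counting and noting explicitly that the argument is oracle-agnostic.
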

\begin{proof}
	From \cref{Lemma:GSEDh_Containment}, $\mathsf{P}^{\GSED_h} \subseteq \mathsf{P}^{\NEEXP}_{\log}$.
	Now note $\mathsf{P}^{\NEEXP}_{\log}\subseteq P/poly$. 
	This is true because for an input of length $n$, $\mathsf{P}^{\NEEXP}_{\log}$ can make at most $O(\poly(n))$ different queries. 
	Hence we could simply give a TM a $O(\poly(n))$ length advice string giving the answers to each of these queries, such that the advice string only depends on the input length $n$.

	Thus $\mathsf{P}^{\GSED_h}\subseteq \mathsf{P}^{\NEEXP}_{\log}\subseteq P/poly$.
	However, it is known that if $\NP \subseteq P/poly$, then the polynomial hierarchy collapses to $\Sigma^P_2$ \cite{Karp_Lipton_1980}.

\end{proof}
This provides strong evidence that our hardness construction is not $\NP$-hard under polytime Turing reductions.

	\subsubsection{Improving the Hardness Result}\label{Sec:PSPACE_Results}

	We can improve our containment and hardness results by using a \PSPACE{}  oracle machine.
	There is, however, some controversy as to how a \PSPACE{}  oracle machine should have access to its oracle; in particular whether the input tape to the oracle has a polynomial space bound or not \cite{Buss_1988, Hartmanis_Chang_Kadin_Mitchell_1993, Fortnow_94}.
	Here we consider both of these definitions and show how they can be used to tighten our complexity bounds on \GSED.

	\begin{definition}[$1^{st}$ \PSPACE \ Oracle Machine Definition]\hfill\\
		A $\PSPACE^O${} \ oracle machine is a $\PSPACE${} \ machine with access to an oracle input tape, for which it can make \textbf{polynomial length} queries to the oracle.
	\end{definition}

	\noindent For this definition we get:
	\begin{theorem}
		$\PSPACE^{\NEEXP}\subseteq \EXPGSED$.
	\end{theorem}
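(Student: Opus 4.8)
The plan is to run the argument of \cref{Theorem:PNEEXP_Containment} with the outer deterministic polynomial-time machine replaced by a deterministic polynomial-space machine. Let $M_1$ be a $\PSPACE^{\NEEXP}$ machine under the $1^{st}$ oracle definition. On an $n$-bit input, $M_1$ has at most $2^{\poly(n)}$ distinct configurations, so it runs for at most $2^{\poly(n)}$ steps and issues at most $2^{\poly(n)}$ queries to the $\NEEXP$ oracle, each of \emph{polynomial} length. I would build an exponential-time oracle machine $M_2$ with access to the $\GSED$ oracle for the fixed Hamiltonian of \cref{Def:Tiling_Hamiltonian}, which simulates $M_1$ step by step, intercepting each oracle call.

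To answer a single $\NEEXP$ query $q$ of length $O(\poly(n))$: exactly as in \cref{Theorem:PNEEXP_Containment}, $M_2$ reduces $q$ in polynomial time (using padding) to the instance ``does the doubly-exponential-time nondeterministic machine $M$ encoded in the Hamiltonian accept $y$?'' with $|y|=O(\poly(n))$; if $y$ is the $k$-th string in lexicographic order then $k=O(2^{|y|})=O(2^{\poly(n)})$, and by \cref{Lemma:Ground_State_Energy_Density} the answer is the bit $i_k$ appearing in $\Er=\frac{1}{4}\sum_{n\geq n_0} i_n/16^n$. The machine $M_2$ maintains a cache of the bits $i_1,i_2,\dots$ already determined; whenever it needs $i_m$ for some $m\leq k$ that is not cached, it computes the thresholds $\alpha_m,\beta_m$ from \cref{Eq:alphabeta_m} using the cached bits $i_1,\dots,i_{m-1}$, issues one $\GSED$ query on $(\alpha_m,\beta_m)$, and records the answer. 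Each such pair has binary length $O(2^{\poly(n)})$ and promise gap $\Omega(16^{-m})=\Omega(2^{-2^{\poly(n)}})$, so it is a legal $\GSED$ input.

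For the bookkeeping I would check that all the exponents stay polynomial: $M_2$ simulates $\leq 2^{\poly(n)}$ steps of $M_1$; it performs arithmetic on $O(2^{\poly(n)})$-bit numbers (writable in exponential time and space); the cache holds at most $2^{\poly(n)}$ bits because every $y$ has polynomial length and hence every relevant index $k$ is at most $2^{\poly(n)}$; and, since each bit $i_m$ is extracted only once, $M_2$ makes at most $2^{\poly(n)}$ calls to $\GSED$ in total. Hence $M_2^{\GSED}$ is an exponential-time oracle computation deciding the same language as $M_1^{\NEEXP}$, which gives $\PSPACE^{\NEEXP}\subseteq\EXPGSED$.

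The only genuine difference from \cref{Theorem:PNEEXP_Containment} --- and the point I expect to require the most care --- is that $M_1$'s $\NEEXP$ queries may now be adaptive and exponentially numerous, so $M_2$ cannot precompute the finite list of inputs $\{y_i\}$ as in the polynomial-time case. This is precisely what the step-by-step simulation plus the cache are for: because query lengths are polynomial, the set of indices that can ever be requested has size $2^{\poly(n)}$, which keeps the cache within exponential space and the number of $\GSED$ calls exponential. The remaining obstacle is the routine but essential verification that the needed precision $\Omega(2^{-2^{\poly(n)}})$ corresponds to oracle inputs of length $O(2^{\poly(n)})$ that an exponential-time machine has the resources to write, so that the $\GSED$ promise condition $\beta-\alpha=\Omega(2^{-p(\cdot)})$ is met.
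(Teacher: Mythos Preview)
Your proposal is correct and takes essentially the same approach as the paper, whose proof is a single sentence pointing back to \cref{Theorem:PNEEXP_Containment} with $M_1$ upgraded from a polytime to a \PSPACE{} machine. Your step-by-step simulation with caching is simply a more explicit treatment of the adaptivity that the paper leaves implicit; since under the $1^{st}$ definition every query has polynomial length, the relevant indices are all bounded by $2^{\poly(n)}$, and either extracting bits on demand (your version) or extracting all $2^{\poly(n)}$ bits up front (the reading most consistent with the paper's argument) stays within the \EXP{} budget.
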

\begin{proof}
	Identical to the proof for \cref{Theorem:PNEEXP_Containment} except $M_1$ is now a $\PSPACE$ machine which needs to be simulated by the \EXPGSED \ oracle machine.
	\end{proof}

A potentially more interesting result occurs when we use the following definition:

	\begin{definition}[$2^{nd}$ \PSPACE \ Oracle Machine Definition]\hfill\\
	A $\PSPACE^O$ \ oracle machine is a $\PSPACE${} \ machine with access to a write only oracle input tape, for which it can make \textbf{exponential length} queries to the oracle.

\end{definition}
This is the preferred definition of several authors \cite{Ladner_Lynch_1976,Fortnow_94}.
For this definition of oracle machine, we realise that one can do the binary search protocol used in the proof of \cref{Theorem:PNEEXP_Containment} to get:
\begin{theorem}
	$\PNEEXP\subseteq \PSPACE^{\GSED}$.
\end{theorem}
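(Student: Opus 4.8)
The plan is to re-run the proof of \cref{Theorem:PNEEXP_Containment} essentially verbatim, but with the simulating machine $M_2$ taken to be a \PSPACE{} oracle machine in the sense of the $2^{nd}$ definition above, rather than an \EXP{} machine, with oracle access to the same fixed-Hamiltonian \GSED{} oracle (the Hamiltonian of \cref{Def:Tiling_Hamiltonian}). Given a polytime oracle machine $M_1^{\NEEXP}$ on an $n$-bit input, $M_2$ simulates $M_1$ step by step; whenever $M_1$ queries its \NEEXP{} oracle on some $q_i$, $M_2$ reduces $q_i$ in polynomial time to an instance ``does the doubly-exponential-time nondeterministic machine $M$ accept $y_i$'' with $|y_i|=O(\poly(n))$, so that the answer equals the bit $i_{m_i}$ in the expansion $\Er=\tfrac14\sum_n i_n/16^n$ of \cref{Eq:GSED_Expression}, where $m_i=O(2^{\poly(n)})$ is fixed so that $y_i$ is the $(m_i-n_0)$-th string in lexicographic order. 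Then $M_2$ extracts $i_{m_i}$ by exactly the iterative binary search on $\Er$ of \cref{Theorem:PNEEXP_Containment}: having already pinned down $i_{n_0},\dots,i_{m-1}$, the thresholds $\alpha_m,\beta_m$ of \cref{Eq:alphabeta_m} satisfy $\beta_m-\alpha_m=\Omega(16^{-m})$, so one \GSED{} query returns $i_m$; iterating up to $m=m_i$ recovers the needed bit, which is fed back to $M_1$ as the \NEEXP{} answer.

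What has to be checked is only that this fits inside the resource bounds of the $2^{nd}$-definition \PSPACE{} oracle model. The binary search visits $O(2^{\poly(n)})$ digits and hence issues $O(2^{\poly(n)})$ \GSED{} queries, which is fine because a \PSPACE{} machine may run for exponential time. Each threshold $\alpha_m,\beta_m$ has binary length $O(2^{\poly(n)})$; under the $2^{nd}$ oracle definition $M_2$ is permitted to write exponentially long strings onto the write-only oracle tape, and that tape does not count against its polynomial space bound, so the long thresholds cause no difficulty. Simulating the polytime machine $M_1$, performing the \NEEXP{}-to-\GSED{} reduction of each query, and post-processing the extracted bits all manipulate only $\poly(n)$-length objects and so fit in polynomial work space. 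Hence $M_2^{\GSED}$ decides the same language as $M_1^{\NEEXP}$, giving $\PNEEXP\subseteq\PSPACE^{\GSED}$ in this oracle model.

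The one step I expect to require genuine care — and the main obstacle — is keeping the digit-extraction loop itself within polynomial \emph{work} space: a naive implementation would want all of $i_{n_0},\dots,i_{m-1}$, an exponentially long prefix of $\Er$, present at once in order to write down $\alpha_m$. The way around this is to assemble the threshold directly on the (space-unbounded) oracle tape, using the observation from \cref{Eq:alphabeta_m} that $\alpha_k$ and $\beta_k$ are obtained from $\alpha_m,\beta_m$ by keeping only the first $k-1$ computation-dependent hexadecimal digits and then appending a fixed, computation-independent tail. Thus $M_2$ maintains on the oracle tape the (at most exponentially long) common prefix of $\alpha_m,\beta_m$ determined by the bits $i_{n_0},\dots,i_{k-1}$ found so far, reads off $i_k$ by appending the fixed tails of $\alpha_k$ and $\beta_k$ and issuing a single \GSED{} query, then overwrites that tail by the digit encoding $i_k$ and advances $k$; the only data kept in work memory is an $O(\poly(n))$-bit counter $k\le m_i$ together with $O(1)$ scratch. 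Once this bookkeeping is arranged so that the polynomial space bound is respected, the theorem follows exactly as in \cref{Theorem:PNEEXP_Containment}.
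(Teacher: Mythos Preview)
Your plan is the paper's approach: the same iterated binary search driven by the thresholds \labelcref{Eq:alphabeta_m}, with each threshold written digit-by-digit onto the exponential-length write-only oracle tape while a $\poly(n)$-bit binary counter on a work tape tracks the oracle-tape head position. One small detail you omit and the paper adds: $\alpha_k$ has no finite binary expansion, so the paper replaces it by a finitely representable $a_k>\alpha_k$ with $\beta_k-a_k=\Omega(16^{-k})$ before writing it to the tape. The obstacle you single out is exactly where the proof needs care, and the paper handles it differently from your suggestion: it keeps the string $i_1 i_2\cdots i_{k-1}$ on a \emph{work} tape (not the oracle tape), together with the head-position counter, and asserts that this fits in $\poly(n)$ space.

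Your proposed alternative---accumulating the already-determined prefix of $\Er$ on the oracle tape itself and only editing the tail between rounds---has a gap: it assumes the write-only oracle tape persists intact across queries and that the head may be moved back to overwrite earlier cells. Under the convention the paper itself adopts (``assume the oracle input tape is initially in the all-$0$ state'' before each threshold is written), the oracle tape is reset after each query and cannot serve as inter-query storage, so your scheme as written does not go through and you are forced back to the paper's work-tape bookkeeping. It is fair to note that the paper's own accounting at this step is terse---the number of bits $i_j$ needed to assemble $\beta_k$ is the lex position $k=O(2^{\poly(n)})$, not $O(\poly(n))$---so whichever route you take, this is indeed the step that deserves the care you anticipated.
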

\begin{proof}
	The proof will be similar to the proof for \cref{Theorem:PNEEXP_Containment}, except now the \PSPACE{} machine will have to make exponentially long oracle calls to the \GSED{} oracle for to extract the query results while using only polynomial space everywhere else.

	Let $M^{\GSED}$ be a $\PSPACE$ machine with (for convenience) two work tapes\footnote{This can be reduced to a single work tape by standard arguments.} (bounded by polynomial space) and one unbounded oracle tape which is read only. Let the $\GSED$ oracle be the one for the Hamiltonian of \cref{Def:Tiling_Hamiltonian}.
	Let $M^{\GSED}$ have made $(k-1)$ queries to the oracle machine with outputs $i_1, i_2 \dots i_{k-1}$, for $i_j$ as defined in \cref{Lemma:Robinson_Square_Energy}, such that it now needs to make a $k^{th}$ query.
	To do so, it needs to calculate a pair $(\alpha_k, \beta_k)$ which will allow it to extract $i_k$.
	Assume $M$ has the string $i_1 i_2 \dots i_{k-1}$ stored on one of the two work tapes. We need to write out the numbers $\alpha_k,\beta_k$ in binary as given in equation \cref{Eq:alphabeta_m}.

	Without loss of generality, assume the oracle input tape is initially in the all $0$ state.
	To write out $\beta_k$ on the input tape, $M$ take a query outcome $i_j$, then moves $4j+2$ down the tape and places $i_j$ in the $(4j+2)^{th}$ cell (corresponding to value $\frac{1}{4}\frac{i_j}{16^j}$).
	Finally in the $(4k+2)^{th}$ cell it places a~1.
	To determine where the head is on the oracle input tape, we let $M$ have a binary counter on its second work tape.
	$M$ can determine where the head is on the input tape by increment/decrementing the binary counter whenever the head moves right/left.

	$M$ cannot write out $\alpha_k$ exactly, as it does not have a finite binary expansion.
	Instead, upper bound it by a number $a_k> \alpha_k$, $\beta-a_k=\Omega(2^{-\poly(k)})$ which does have a finite expansion
	\begin{align}
		a_k =\frac{1}{4}\left( \sum_{n=1}^{k-1} \frac{i_n}{16^n} + \frac{2}{16^{k+1}}\right)  > \alpha_k.
	\end{align}
	To write this out, $M$ also places $i_j$ in the $(4j+2)^{th}$ cell, for $j\leq k-1$.
	We then place a 1 in the $(4k+3)^{th}$ cell (which is the contribution from the $2\times 16^{-k-1}$ term).
	Hence querying the oracle for $(a_k,\beta_k)$ gives the same answer as querying with $(\alpha_k,\beta_k)$.

	$M$ then continues with the computation until all the necessary queries have been extracted.
	Since only $\poly(n)$ many queries are made, the \PSPACE{} machine is capable of storing them all on its work tape.
	It can then post-process the queries and output the answer to the relevant \PNEEXP{} computation.

        Since $M$ only needs to record the number of queries $k=O(\poly(n))$ and the binary counter it uses to keep track of the TM head on the input string --- which counts up to $16^{O(\poly(k))}$ --- can be expressed in $\poly(k)=\poly(n)$ bits, we have that $M$ only uses $\poly(n)$ space on its two work tapes, as required.
      \end{proof}

This result maybe should not be too surprising given that it is known how to do binary search procedures using exponentially less space.
For example, if $\mathsf{L}$ is a logspace machine: $\mathsf{P}^{\mathsf{SAT}}=\mathsf{L}^{\mathsf{SAT}}=\mathsf{L}^{\mathsf{SAT}[\log(n)]}=\mathsf{L}^{||\mathsf{SAT}}$ \cite{Wagner_1988}

	The results from this section immediately give:
	\begin{corollary}
		$\PNEEXP\subseteq \PSPACE^{\GSED}\subseteq \PSPACE^{\NEXP}$
	\end{corollary}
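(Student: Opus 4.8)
The plan is to read off both inclusions from results already in hand, so that essentially no new argument is needed; the one thing to be careful about is using a single, consistent \PSPACE{} oracle-machine convention (one of the two introduced in \cref{Sec:PSPACE_Results}) on both sides of each inclusion.

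For the left inclusion $\PNEEXP\subseteq\PSPACE^{\GSED}$ I would simply invoke the preceding theorem ($\PNEEXP\subseteq\PSPACE^{\GSED}$). There we already showed that, under the second (exponential-length, write-only oracle tape) convention, a \PSPACE{} machine can carry out the bit-by-bit binary-search extraction of $\Er$ for the Hamiltonian of \cref{Def:Tiling_Hamiltonian}: it writes each threshold pair $(a_k,\beta_k)$ onto the oracle tape using only a polynomial-size counter to track the oracle head, makes at most $\poly(n)$ queries, stores the outcomes $i_1,\dots,i_{k-1}$ on a work tape, and post-processes to answer the underlying $\PNEEXP$ computation. That is exactly the left inclusion, so nothing further is required here.

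For the right inclusion $\PSPACE^{\GSED}\subseteq\PSPACE^{\NEXP}$ the key input is \cref{Lemma:GS_in_NEXP}, which states that \GSED{} --- as a promise problem, with its complexity measured in the length of its own input $(\alpha,\beta)$, so that the promise gap is inverse-polynomial in that length --- lies in \NEXP. I would therefore fix a \NEXP{} language $L_{\GSED}$ deciding the \GSED{} promise problem, and simulate an arbitrary $\PSPACE^{\GSED}$ machine $M$ by the $\PSPACE^{L_{\GSED}}$ machine obtained from $M$ by routing every oracle query, unchanged, to the $L_{\GSED}$ oracle instead. Since the substitution is length-preserving and $M$ respects the \GSED{} promise on every query, every answer is reproduced correctly; and since $L_{\GSED}\in\NEXP$, this machine witnesses membership in $\PSPACE^{\NEXP}$ under the same oracle convention used on the left. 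There is no running-time issue to address: an oracle call is a single step regardless of the query length, so the only constraint is that the \PSPACE{} machine be able to write out the query, which is precisely what the left inclusion already established.

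Putting the two inclusions together yields $\PNEEXP\subseteq\PSPACE^{\GSED}\subseteq\PSPACE^{\NEXP}$. The main --- and essentially only --- obstacle is bookkeeping: one must pin down the oracle-tape convention and check that it is used consistently for both the \GSED{} and the \NEXP{} oracles, so that the length-preserving pass-through in the second inclusion is legitimate. Once that is fixed, both inclusions follow immediately from, respectively, the preceding theorem and \cref{Lemma:GS_in_NEXP}.
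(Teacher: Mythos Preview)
Your proposal is correct and matches the paper's approach: the paper simply states that the corollary follows ``immediately'' from the results of the section, and you have correctly identified those results as the preceding theorem ($\PNEEXP\subseteq\PSPACE^{\GSED}$ under the second oracle convention) for the left inclusion and \cref{Lemma:GS_in_NEXP} ($\GSED\in\NEXP$) for the right inclusion. Your explicit attention to using the same oracle-tape convention on both sides is a sensible addition that the paper leaves implicit.
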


        \subsubsection{Complexity Results for \FGSED}
We show containment of the function problem version \FGSED{} of the ground state energy density problem:
\begin{theorem}
	$\FGSED\in\FP^{\GSED}\subseteq \FP^{\NEXP}$ for classical \FGSED.
\end{theorem}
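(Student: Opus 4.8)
The plan is to prove the two inclusions in turn. For $\FGSED\in\FP^{\GSED}$ I would run a binary search over $\Er$ that is robust to the \GSED{} promise being violated; for $\FP^{\GSED}\subseteq\FP^{\NEXP}$ I would invoke \cref{Lemma:GS_in_NEXP} (equivalently \cref{Theorem:Results:GSED_NEXP}), that classical $\GSED\in\NEXP$ for the fixed Hamiltonian, and replace each oracle call accordingly.

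For the first inclusion, note that $\Er$ is the energy density of a fixed, bounded, nearest-neighbour interaction, so $\Er\in[-C,C]$ for a fixed constant $C$ (a power of two larger than twice the operator norm of the local terms, since each lattice site carries $O(1)$ bonds). On input $\epsilon$ of binary length $m$, maintain a dyadic interval $[a,b]\ni\Er$, initially $[-C,C]$, and at each round query the \GSED{} oracle with $\alpha=a+(b-a)/4$ and $\beta=b-(b-a)/4$: on answer \YES{} replace $[a,b]$ by $[a,\beta]$, and on answer \NO{} replace it by $[\alpha,b]$. The delicate point -- and the step I expect to be the real obstacle -- is that these queries need not satisfy the promise, since it is quite possible that $\Er\in[\alpha,\beta]$, in which case the oracle may answer arbitrarily. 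This is precisely why I would split at the quarter points rather than bisecting: whatever the oracle answers, \YES{} only ever implies the promise-free consequence $\Er<\beta$ and \NO{} only ever implies $\Er>\alpha$, so in both cases the new interval genuinely contains $\Er$ and has width $\tfrac34(b-a)$. Hence after $O(\log(C/\epsilon))=O(m)$ rounds the interval has width $\le 2\epsilon$, and returning its midpoint gives an admissible $\tilde\Er$ with $|\Er-\tilde\Er|\le\epsilon$. All endpoints stay dyadic of bit-length $O(m)$, every query has gap $\beta-\alpha=\Omega(2^{-m})$ and so is a legal instance of \cref{Def:Decision_Problem_GSED_2}, and there are only $O(m)$ rounds with $\poly(m)$-time arithmetic in between, so the procedure is in $\FP^{\GSED}$.

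For the second inclusion, by \cref{Lemma:GS_in_NEXP} there is a total \NEXP{} language $L$ extending the \GSED{} promise problem for this Hamiltonian. By definition of a promise-oracle class, any $\FP^{\GSED}$ machine must compute its function correctly against \emph{every} oracle consistent with \GSED{} -- in particular against arbitrary answers on promise-violating queries -- so replacing its \GSED{} oracle by the language $L$ leaves the computed function unchanged, and the resulting machine runs in polynomial time with a \NEXP{} oracle. Combining this with the previous paragraph gives $\FGSED\in\FP^{\GSED}\subseteq\FP^{\NEXP}$, as claimed. (The robust search above is one concrete witness to this promise tolerance, but the containment $\FP^{\GSED}\subseteq\FP^{\NEXP}$ holds for the class as a whole for the reason just given.)
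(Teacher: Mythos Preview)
Your proposal is correct and follows essentially the same approach as the paper: binary search with \GSED{} queries to place $\FGSED$ in $\FP^{\GSED}$, then invoke \cref{Lemma:GS_in_NEXP} to pass to $\FP^{\NEXP}$. You are in fact more careful than the paper's terse proof, which simply asserts ``binary search'' without addressing promise-violating queries; your quarter-point splitting and the observation that any $\FP^{\GSED}$ machine must be correct against every oracle extending \GSED{} make explicit the robustness that the paper leaves implicit.
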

\begin{proof}
	Let $\epsilon$ be the input to \FGSED, such that $|\epsilon|=n$.
	Let $M^{\GSED}$  be a polytime TM with oracle access to \GSED.
	Then using $\poly(n)$ many $(\alpha,\beta)$ queries to \GSED, for $\beta - \alpha = \Omega(2^{-\poly(n)})$, we can use a binary search procedure to find an estimate $\tilde{\Er}$ such that $|\tilde{\Er} - \Er|=O(2^{-\poly(n)})<\epsilon$.
	Thus a $M^{\GSED}$ machine can compute $\FGSED$.
	Since $\GSED\in \NEXP$, this implies $\FGSED\in\FP^{\GSED}\subseteq \FP^{\NEXP}$.
\end{proof}

\begin{lemma}
	$\FP^{\NEEXP} \subseteq \FEXP^{\FGSED}\subseteq \FEXP^{\NEXP}$ for classical \FGSED.
\end{lemma}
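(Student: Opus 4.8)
The plan is to prove the two containments as the function-problem analogues of \cref{Theorem:PNEEXP_Containment} and \cref{Lemma:NEXP-Completeness}, reusing those arguments with only the accounting of input lengths changed, together with the fact (used throughout the paper) that \GSED{} and \FGSED{} are interreducible by binary search.

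For the lower bound $\FP^{\NEEXP}\subseteq\FEXP^{\FGSED}$, I would start from a polynomial-time oracle machine $M_1$ computing some $f\in\FP^{\NEEXP}$ and build an exponential-time machine $M_2$ with a \FGSED{} oracle for the Hamiltonian of \cref{Def:Tiling_Hamiltonian}. On an $n$-bit input $M_2$ first reproduces the preprocessing from the proof of \cref{Theorem:PNEEXP_Containment}: it computes the $O(\poly(n))$ \NEEXP{} queries that $M_1$ would make, reduces each to an instance of a canonical $\mathsf{NTIME}(2^{2^{cn}})$ language, and identifies the lexicographic index $k=O(2^{\poly(n)})$ of the largest such instance. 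By \cref{Lemma:Ground_State_Energy_Density} one has $\Er=\tfrac14\sum_{m\ge n_0} i_m 16^{-m}$ with $i_m\in\{0,1\}$, so it suffices for $M_2$ to learn $\Er$ to precision $\epsilon=\Theta(16^{-k})=2^{-O(2^{\poly(n)})}$. Since $|\epsilon|=O(2^{\poly(n)})$, an exponential-time machine can write $\epsilon$ onto the oracle tape, make a single \FGSED{} query, receive an estimate $\tilde\Er$ with $|\Er-\tilde\Er|\le\epsilon$, and then extract $i_1,\dots,i_k$ by exactly the iterative digit-recovery argument of \cref{Theorem:PNEEXP_Containment} (the tail $\sum_{m>j}\tfrac14 16^{-m}$ is $<\tfrac14 16^{-j}$, so each $i_j$ is well separated given $i_1,\dots,i_{j-1}$); equivalently, $M_2$ can simulate each of the $O(2^{\poly(n)})$ \GSED{} queries of that proof by one \FGSED{} query at half the promise gap. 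Finally $M_2$ runs $M_1$'s polynomial-time post-processing with these $i_j$ substituted for the oracle answers and outputs $f(x)$. All of this is within exponential time, giving $f\in\FEXP^{\FGSED}$.

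For the upper bound $\FEXP^{\FGSED}\subseteq\FEXP^{\NEXP}$, I would take an exponential-time oracle machine $M$ computing $g\in\FEXP^{\FGSED}$; its queries therefore have length at most $2^{\poly(n)}$. I would replace each \FGSED{} oracle call by the $\FP^{\NEXP}$ subroutine for \FGSED{} established in the theorem immediately above ($\FGSED\in\FP^{\NEXP}$), which on an input $\epsilon$ of length $m$ runs in $\poly(m)$ time making $\poly(m)$-length queries to a \NEXP{} oracle — each such query being a \GSED{} instance, which lies in \NEXP{} by \cref{Lemma:GS_in_NEXP}. Because $m\le 2^{\poly(n)}$, every subroutine invocation costs only $2^{\poly(n)}$ time and produces \NEXP{}-oracle queries of length $2^{\poly(n)}$; composing this with $M$'s own exponential running time and its exponential number of oracle calls still yields an exponential-time machine equipped with a \NEXP{} oracle, so $g\in\FEXP^{\NEXP}$.

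The only point needing care — and the main (mild) obstacle — is the length bookkeeping: one must verify that the precisions $\epsilon$ and all oracle queries that arise remain singly exponential in $n$, so the base \FEXP{} machine genuinely has the time and space to write them, and that the composition ``exponential time $\circ$ polynomial-in-exponential'' does not leave \FEXP. Everything else is a direct transcription of the decision-problem proofs.
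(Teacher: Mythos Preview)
Your proposal is correct and follows essentially the same approach as the paper's own proof: extract sufficiently many digits of $\Er$ via a single (or few) \FGSED{} query at exponential precision to answer all \NEEXP{} queries of the polytime machine, and conversely replace each \FGSED{} call by the $\FP^{\NEXP}$ binary-search subroutine established just above. You are in fact slightly more explicit than the paper about the length bookkeeping and about invoking $\FGSED\in\FP^{\NEXP}$ as a black box for the upper bound, which is fine.
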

\begin{proof}
	To show $ \FEXP^{\FGSED}\subseteq \FEXP^{\NEXP}$, consider two exponential time oracle machines $M_1^{\FGSED}$ and $M_2^{\NEXP}$.
	Let $M_1$ make $O(\exp(n))$ oracle calls to $\FGSED$, and then do some exponential time post-processing.
	$M_2$ can simulate these oracle calls by, for each oracle call $M_1$ makes, estimating using the \NEXP \ oracle $\exp(n)$ to estimate the ground state energy density produced by $\FGSED$.
	Since $M_1$ makes $\exp(n)$ queries, $M_2$ needs to make $O(\exp(n))\times O(\exp(n))=O(\exp(n))$ queries.
	It can then perform the same post-processing as $M_1$.
	Thus   $ \FEXP^{\FGSED}\subseteq \FEXP^{\NEXP}$.

	To show $\FP^{\NEEXP} \subseteq \FEXP^{\FGSED}$, consider a polytime oracle machine $M_3^{\NEEXP}$ and an exptime oracle machine $M_4^{\FGSED}$.
	$M_3$ can make at most $O(\poly(n))$ queries to the \NEEXP{} oracle of at most $O(\poly(n))$ length, and then do post-processing to output the relevant function.
	$M_4$ can simulate all of these queries by asking the \FGSED \ oracle for an estimate for $\epsilon$ such that $|\epsilon|=O(\exp(n))$, from which it can extract all the $\NEEXP$ queries.
	It can then do the relevant post-processing and output the same function as $M_3$.

\end{proof}

	\subsection{Quantum Containment in \EXPQMAEXP}
		In this section we show containment of \GSED{} for quantum Hamiltonians.
		\begin{lemma}\label{Lemma:GS_in_QMAEXP}
			\GSED{} $\in$ \QMAEXP{} for any quantum, nearest-neighbour, translationally invariant Hamiltonian, for \GSED{} as defined in \cref{Def:Decision_Problem_GSED_2}.
		\end{lemma}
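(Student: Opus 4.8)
The plan is to mimic the classical argument of \cref{Lemma:GS_in_NEXP}, replacing the deterministic \EXP{} verifier by a \QMAEXP{} verifier and the classical witness (a product state specifying each spin) by a genuine quantum witness: the purported ground state of the Hamiltonian restricted to a finite $L\times L$ region. The key new ingredient is that the standard \QMA{} verification procedure for the Local Hamiltonian problem, when applied to a Hamiltonian on exponentially many particles with inverse-exponential precision, runs in time exponential in the input length, hence lands in \QMAEXP.

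First I would reduce the thermodynamic limit to a finite lattice. Given an input $(\alpha,\beta)$ with $\beta-\alpha=\Omega(2^{-p(n)})$, set $L=2^{r(n)}$ for a polynomial $r$ to be chosen. By \cref{Lemma:E_rho_Approximation}, $|\Er(L)-\Er|=O(1/L)=O(2^{-r(n)})$. Choosing $r(n)\geq p(n)+O(1)$ makes this error at most a quarter of the promise gap, so it suffices to estimate $\Er(L)=\lambda_0(H^{\Lambda(L\times L)})/L^2$ to additive precision $\tfrac14(\beta-\alpha)$, i.e.\ to estimate the total ground state energy $\lambda_0(H^{\Lambda(L\times L)})$ to precision $\delta:=\tfrac14 L^2(\beta-\alpha)=\Omega(2^{2r(n)-p(n)})$.

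Next I would recognise the resulting task as a Local Hamiltonian problem on $N:=L^2=2^{2r(n)}$ qudits, exponentially many in the input length $n$, with fixed and known local terms; since the Hamiltonian is translationally invariant and nearest-neighbour, the verifier can write down its entire (exponentially large) description itself, the only genuine input being $(\alpha,\beta)$. The verifier then runs the standard \QMA{} protocol: the prover sends a purported ground state $\ket{\psi}$ on $N\lceil\log d\rceil$ qubits, and the verifier performs phase estimation on the propagator of $H^{\Lambda(L\times L)}$ to obtain an estimate of $\bra{\psi}H^{\Lambda(L\times L)}\ket{\psi}$. This circuit has size polynomial in $N$ and in $\|H^{\Lambda(L\times L)}\|/\delta=O(L^2)/\delta=O(2^{p(n)})$, both of which are exponential in $n$; after standard soundness/completeness amplification this is a \QMAEXP{} protocol. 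In a \YES{} instance ($\Er<\alpha$) the true ground state of $H^{\Lambda(L\times L)}$ is accepted with high probability; in a \NO{} instance ($\Er>\beta$) every state has energy density at least $\beta-\tfrac14(\beta-\alpha)$ and is rejected with high probability.

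The main obstacle is the precision/size bookkeeping: one must check that estimating the energy \emph{density} to inverse-exponential (in $n$) precision corresponds to estimating the total energy of a $2^{\Theta(n)}$-particle Hamiltonian of operator norm $O(2^{\Theta(n)})$ to a precision $\delta$ for which the ratio $\|H\|/\delta$ is still only exponential in $n$, so that the phase-estimation circuit remains of exponential — and not doubly exponential — size. One also has to confirm that the $O(1/L)$ convergence rate in \cref{Lemma:E_rho_Approximation} is uniform enough that a single polynomial $r$ depending only on $p$ suffices, and to handle in the usual way the finitely many Robinson squares too small to run the encoded machines (as in the proof of \cref{Lemma:Energy_Bounds}), whose contribution is a precomputable constant that does not affect the argument.
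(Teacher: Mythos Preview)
Your proposal is correct and follows essentially the same approach as the paper: reduce to a finite $L\times L$ lattice with $L=2^{\poly(n)}$ via \cref{Lemma:E_rho_Approximation}, take the finite-lattice ground state as the quantum witness, and run quantum phase estimation on $e^{iH^{\Lambda(L\times L)}}$ to estimate the ground state energy to the requisite precision in exponential time. Your precision bookkeeping is in fact more careful than the paper's terse version; the only superfluous part is your final remark about Robinson squares too small to run the encoded machines, which pertains to the specific hardness Hamiltonian and is irrelevant here since this lemma concerns an \emph{arbitrary} quantum, nearest-neighbour, translationally invariant Hamiltonian.
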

		\begin{proof}
		$(\alpha, \beta)$ is the input of the problem for $\beta- \alpha = \Omega(2^{-p(n)})$.
		Let $\ket{\psi}$ be the ground state an $L\times L$ section of the lattice, for $L=2^{q(n)}$, which our \QMAEXP \ machine will take as a witness.
		Perform quantum phase estimation of $e^{iH^{\Lambda(L)}}$ to $q(n)$ bits of precision, which gives an estimate $\tilde{\lambda_0}$ of $\lambda_0(H^{\Lambda(L)})$ such that $|\tilde{\lambda_0}-\lambda_0(H^{\Lambda(L)})|\leq 2^{-p(n)}$, and takes time $O(2^{q(n)})$ \cite{Nielsen_and_Chuang}.

		Since $\Er(L)= \tilde{\lambda}_0$, and by \cref{Lemma:E_rho_Approximation} that $|\Er(L) - \Er|=O(2^{-p(n)})$,
		choosing $q(n)$ to be sufficiently larger than $p(n)$ allows us to verify whether $\Er>\beta$ or $\Er < \alpha$.
		\end{proof}

		\begin{corollary}
			$\EXPGSED \subseteq  \EXPQMAEXP$ for a fixed, nearest-neighbour, translationally invariant \textbf{quantum} Hamiltonian.
		\end{corollary}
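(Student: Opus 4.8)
The plan is to mirror the proof of \cref{Lemma:NEXP-Completeness} (\EXPGSED{} $\subseteq$ \EXPNEXP{} in the classical case), replacing the \NEXP{} oracle by a \QMAEXP{} oracle and invoking \cref{Lemma:GS_in_QMAEXP} in place of \cref{Lemma:GS_in_NEXP}. Concretely, I would take an arbitrary exponential-time oracle machine $M_1^{\GSED}$ deciding a language in \EXPGSED, where the \GSED{} oracle is for the fixed quantum Hamiltonian in question. On an $n$-bit input, $M_1$ makes at most $O(\exp(n))$ queries, each a pair $(\alpha,\beta)$ of binary length $O(\exp(n))$ with $\beta-\alpha=\Omega(2^{-2^{p(n)}})$ for some polynomial $p$, followed by $O(\exp(n))$-time post-processing on the query answers.

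Next I would construct an exponential-time machine $M_2$ with oracle access to \QMAEXP{} that simulates $M_1$ step for step, except that whenever $M_1$ queries its \GSED{} oracle on some $(\alpha,\beta)$, $M_2$ instead queries the \QMAEXP{} oracle on the corresponding instance of the \QMAEXP{} language supplied by \cref{Lemma:GS_in_QMAEXP}. By that \namecref{Lemma:GS_in_QMAEXP}, deciding whether $\Er>\beta$ or $\Er<\alpha$ for a query of binary length $N$ lies in \QMAEXP: the verifier takes as witness the ground state of an $L\times L$ region with $L=2^{\Theta(N)}$ and runs quantum phase estimation of $e^{iH^{\Lambda(L\times L)}}$ to $\Theta(N)$ bits of precision, which takes time $\mathrm{poly}(L^2,N)=2^{\Theta(N)}$, i.e.\ exponential in the query size. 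Hence each \GSED{} query of $M_1$ is faithfully replaced by a single \QMAEXP{} query of $M_2$; since an oracle query counts as one step regardless of the oracle, $M_2$ still runs in exponential time in $n$ and makes $O(\exp(n))$ queries. Performing the same post-processing as $M_1$ then yields the same output, so $\EXPGSED\subseteq\EXPQMAEXP$.

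The only point requiring care — and the closest thing to an obstacle — is the bookkeeping of input sizes: the queries $M_1$ issues can themselves be exponentially long in $n$, so the \QMAEXP{} machine answering a single query runs in time doubly exponential in $n$. This is harmless, because the \QMAEXP{} oracle is invoked as a black box (one step per query) while $M_2$ itself only needs to be a genuine exponential-time-in-$n$ machine; the \QMAEXP{} verifier is merely required to be exponential time in \emph{its own} input length, which is precisely what \cref{Lemma:GS_in_QMAEXP} delivers. No new Hamiltonian construction or Robinson-robustness ingredient is needed beyond \cref{Lemma:GS_in_QMAEXP}, so the corollary follows immediately.
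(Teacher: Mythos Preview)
Your proposal is correct and matches the paper's own proof, which is the one-liner ``The proof is identical to \cref{Lemma:NEXP-Completeness}, but making use of \cref{Lemma:GS_in_QMAEXP}.'' Your discussion of the input-size bookkeeping is more explicit than the paper's, but the approach is the same.
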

		\begin{proof}
			The proof is identical to \cref{Lemma:NEXP-Completeness}, but making use of \cref{Lemma:GS_in_QMAEXP}.
		\end{proof}
	Since classical Hamiltonians are a subset of quantum Hamiltonians, the following result is an immediate corollary of \cref{Theorem:PNEEXP_Containment}:
	\begin{corollary}
		\PNEEXP$\subseteq$\EXPGSED{} for a fixed, nearest-neighbour, translationally invariant \textbf{quantum} Hamiltonian.
	\end{corollary}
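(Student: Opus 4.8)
The plan is to observe that the corollary requires no new construction: it follows directly from \cref{Theorem:PNEEXP_Containment} by the simple fact that the classical Hamiltonian of \cref{Def:Tiling_Hamiltonian} is, by the definition of ``classical Hamiltonian'' (diagonal in the standard basis), already a nearest-neighbour, translationally invariant \emph{quantum} Hamiltonian. The ground state energy density $\Er$ and the \GSED{} promise problem of \cref{Def:Decision_Problem_GSED_2} are defined identically for classical and quantum Hamiltonians, so the quantum \GSED{} problem for this particular Hamiltonian coincides exactly with the classical \GSED{} problem already analysed. Hence the exponential-time Turing reduction built in the proof of \cref{Theorem:PNEEXP_Containment} --- which takes an arbitrary $\PNEEXP$ computation, replaces each \NEEXP{} oracle query by a reduction to the output of a doubly-exponential-time nondeterministic machine $M$ on some input, and then extracts those outputs via binary search on a \GSED{} oracle for this one fixed Hamiltonian --- is verbatim a valid reduction establishing $\PNEEXP\subseteq\EXPGSED$ with the Hamiltonian now regarded as quantum.

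The only things to check are that the quantitative parameters carry over, and they plainly do: the promise-gap thresholds $\alpha_m,\beta_m$ of \cref{Eq:alphabeta_m} are fixed rationals depending only on the previously-extracted bits $i_1,\dots,i_{m-1}$, the required precision remains $\Omega(2^{-2^{\poly(n)}})$, and the number of oracle queries remains $O(2^{\poly(n)})$, all of which the simulating exponential-time machine can produce and query. None of these quantities depend on whether one interprets the fixed Hamiltonian classically or quantumly. I therefore expect no obstacle at all here: the entire mathematical content lives in \cref{Theorem:PNEEXP_Containment} and the Robinson-robustness machinery of \cref{Sec:Robinson_Robustness}, and the present corollary is a one-line consequence of the inclusion of classical Hamiltonians among quantum ones.
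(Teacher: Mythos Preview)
Your proposal is correct and takes essentially the same approach as the paper: the paper simply states that since classical Hamiltonians are a subset of quantum Hamiltonians, the result is an immediate corollary of \cref{Theorem:PNEEXP_Containment}. Your more detailed unpacking of why the reduction carries over verbatim is accurate but goes beyond what the paper bothers to spell out.
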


		\section{Discussion and Conclusions}\label{sec:conclusions}

		\paragraph{Quantum GSED}
		A natural question to ask is if tighter results can be found for \GSED{} for quantum Hamiltonians.
                As we have seen, it follows straightforwardly that \EXPGSED$\subseteq $\EXPQMAEXP, but a non-trivial quantum lower bound does not follow easily.

		Our proof of a \PNEEXP{} lower bound works as we can enumerate over \NEEXP-complete problems.
		Attempting to prove a similar quantum lower bound (e.g. \PQMAEXP) runs into the problem that, since \QMAEXP{} is a promise class, for a given \QMAEXP-complete problem there may be instances which do not satisfy the promise (so called ``invalid queries'').
		This makes it impossible to enumerate over all instances of a given \QMAEXP-complete problem without potentially including instances which do not satisfy the promise.
		It is not currently known how to avoid these instances from occurring, although some techniques exist, such as \cite{Gharibian_Yirka_2019, Gharibian_Piddock_Yirka_2020, Watson_Bausch_Gharibian_2020}.

		\paragraph{Closing the Classical Upper and Lower Bounds}
		So far we have separate lower and upper bounds \PNEEXP and \EXPNEXP.
		The containment protocol given here works via a natural binary search algorithm to determine $\Er$, and as such we believe it is optimal.
		While it is not immediately clear how the lower bound might be improved, it is not clear whether the construction presented here should give a tight lower bound.

		\paragraph{Other Precision Problems}
		As far as the authors know, this is the first complexity result about a theorem in which the only input parameter which is varied is the precision, but where the object of study is fixed.
		Furthermore, \GSED{} can be viewed as a precision version of the Local Hamiltonian problem; can similar  ``precision based'' problems be developed for other decision/promise problems?
		Is there a natural situation in which they occur?
		
		 ~\newline
		\noindent\textbf{Note Added} \\
		Whilst preparing this paper for submission, we became aware of parallel work by Irani \& Aharonov on similar topics. 
		See their paper in the same arXiv listing as this one .


	\section{Acknowledgements}
	The authors would like to recognise useful discussions with  J. Lockhart, S. Gharibian, and J. Sullivan.
	J.D.W. is supported by
	the EPSRC Centre for Doctoral Training in Delivering Quantum Technologies
	(grant EP/L015242/1).
	T.S.C. is supported by the Royal Society.
	This work has been supported in part by the EPSRC
	Prosperity Partnership in Quantum Software for Simulation and Modelling
	(grant EP/S005021/1), and by the UK Hub in Quantum Computing and
	Simulation, part of the UK National Quantum Technologies Programme with
	funding from UKRI EPSRC (grant EP/T001062/1).

	\printbibliography

\end{document}